\documentclass{sadhana}
\usepackage{graphicx,amsmath,amssymb,verbatim,bm,xcolor}
\allowdisplaybreaks

\DeclareFontFamily{U}{mathx}{\hyphenchar\font45}
\DeclareFontShape{U}{mathx}{m}{n}{
      <5> <6> <7> <8> <9> <10>
      <10.95> <12> <14.4> <17.28> <20.74> <24.88>
      mathx10
      }{}
\DeclareSymbolFont{mathx}{U}{mathx}{m}{n}
\DeclareMathSymbol{\bigplus}{1}{mathx}{"90}
\DeclareMathSymbol{\bigtimes}{1}{mathx}{"91}

\newcounter{type1error}
\newcounter{type2errorx}
\newcounter{type2errory}
\newcounter{type2errorxy}

\newcounter{propqtypicalstate}
\newcounter{propqtypicalpovm}
\newcounter{propqtypicalellone}
\newcounter{propqtypicalellinfty}
\newcounter{propqtypicaldistance}
\newcounter{propqtypicalcompleteness}
\newcounter{propqtypicalsplitting}
\newcounter{propqtypicalsoundness}

\newcounter{corqtypicalcq}
\newcounter{corqtypicaldistance}
\newcounter{corqtypicalcompleteness1}
\newcounter{corqtypicalsoundness1}

\newcounter{lemqtypicalcq}
\newcounter{lemqtypicaldistance}
\newcounter{lemqtypicalcompleteness1}
\newcounter{lemqtypicalsoundness1}

\newcounter{thmqtypicalcq}
\newcounter{thmqtypicaldistance}
\newcounter{thmqtypicalcompleteness1}
\newcounter{thmqtypicalsoundness1}

\AtEndDocument{
\addtocounter{propqtypicalstate}{-1}
\addtocounter{propqtypicalpovm}{-1}
\addtocounter{propqtypicalellone}{-1}
\addtocounter{propqtypicalellinfty}{-1}
\addtocounter{propqtypicaldistance}{-1}
\addtocounter{propqtypicalcompleteness}{-1}
\addtocounter{propqtypicalsplitting}{-1}
\addtocounter{propqtypicalsoundness}{-1}
\refstepcounter{propqtypicalstate}\label{prop@qtypicalstate}
\refstepcounter{propqtypicalpovm}\label{prop@qtypicalpovm}
\refstepcounter{propqtypicalellone}\label{prop@qtypicalellone}
\refstepcounter{propqtypicalellinfty}\label{prop@qtypicalellinfty}
\refstepcounter{propqtypicaldistance}\label{prop@qtypicaldistance}
\refstepcounter{propqtypicalcompleteness}\label{prop@qtypicalcompleteness}
\refstepcounter{propqtypicalsplitting}\label{prop@qtypicalsplitting}
\refstepcounter{propqtypicalsoundness}\label{prop@qtypicalsoundness}
}

\newtheorem{definition}{Definition}
\newtheorem{lemma}{Lemma}
\newtheorem{proposition}{Proposition}
\newtheorem{fact}{Fact}
\newtheorem{theorem}{Theorem}
\newtheorem{corollary}{Corollary}

\newenvironment{proof}{\textbf{Proof:}}{\hfill$\square$}

\newcommand{\vech}{\mathbf{h}}
\newcommand{\vecl}{\mathbf{l}}
\newcommand{\vecx}{\mathbf{x}}

\newcommand{\cC}{\mathcal{C}}
\newcommand{\cH}{\mathcal{H}}
\newcommand{\cK}{\mathcal{K}}
\newcommand{\cL}{\mathcal{L}}

\newcommand{\cT}{\mathcal{T}}
\newcommand{\cU}{\mathcal{U}}
\newcommand{\cX}{\mathcal{X}}
\newcommand{\cY}{\mathcal{Y}}
\newcommand{\cZ}{\mathcal{Z}}

\newcommand{\hcX}{\hat{\cX}}
\newcommand{\hcY}{\hat{\cY}}
\newcommand{\hcZ}{\hat{\cZ}}

\newcommand{\hA}{\hat{A}}
\newcommand{\hX}{\hat{X}}
\newcommand{\hY}{\hat{Y}}
\newcommand{\hZ}{\hat{Z}}

\newcommand{\bW}{\mathbf{W}}

\newcommand{\C}{\mathbb{C}}

\newcommand{\hPi}{\hat{\Pi}}

\newcommand{\tcH}{\tilde{\cH}}

\newcommand{\chan}{\mathfrak{C}}

\newcommand{\Hmin}{H_{\mathrm{min}}}

\DeclareMathOperator*{\E}{{\rm {\bf E}}\,}
\DeclareMathOperator*{\Tr}{{\rm Tr}\;}

\DeclareMathOperator*{\spanning}{{\rm span}\;}

\newcommand{\zero}{\leavevmode\hbox{\small l\kern-3.5pt\normalsize0}}
\newcommand{\one}{\leavevmode\hbox{\small1\kern-3.8pt\normalsize1}}

\newcommand{\cupdot}{\mathbin{\mathaccent\cdot\cup}}

\newcommand{\elltwo}[1]{\left\|{ #1 }\right\|_2}
\newcommand{\ellone}[1]{\left\|{ #1 }\right\|_1}
\newcommand{\ellinfty}[1]{\left\|{ #1 }\right\|_\infty}

\newcommand{\ket}[1]{| #1 \rangle}
\newcommand{\bra}[1]{\langle #1 |}
\newcommand{\ketbra}[1]{\ket{#1}\bra{#1}}
\newcommand{\braket}[2]{\langle {#1} \ket{#2}}

\begin{document}

\title{{\bf Unions, intersections and a 
one-shot quantum joint typicality lemma
}}

\author{Pranab Sen\textsuperscript{1,*}}
\affilOne{\textsuperscript{1}
School of Technology and Computer Science, Tata Institute of Fundamental
Research, Mumbai 400005, India.
Email: {\sf pranab.sen.73@gmail.com}
}

\twocolumn[{

\maketitle

\begin{abstract}
A fundamental tool to prove inner bounds in classical network 
information theory is the so-called `conditional joint typicality
lemma'. In addition to the lemma, one often uses unions and intersections
of typical sets in the inner bound arguments without so much as giving
them a second thought. These arguments do not work in the 
quantum setting. This bottleneck shows up in the fact that so-called
`simultaneous decoders', as opposed to `successive cancellation
decoders', are known for very few channels in quantum network information
theory. Another manifestation of this bottleneck is the  lack of
so-called `simultaneous smoothing' theorems for quantum states.

In this paper, we overcome 
the bottleneck by proving for the first time
a one-shot quantum joint typicality lemma with robust union 
and intersection
properties. To do so, we develop two novel tools in quantum information
theory which may be of independent interest. The first tool is 
a simple geometric idea called
{\em tilting}, which
increases the angles between a family of subspaces in 
orthogonal directions. 
The second tool, called {\em smoothing and augmentation}, 
is a way of perturbing a multipartite quantum state such that
the partial trace over any subset of registers does not increase the
operator norm by much.

Our joint typicality lemma allows us to construct simultaneous
quantum decoders for many multiterminal quantum channels. 
It provides a powerful tool to extend many 
results in classical network information theory to the one-shot quantum
setting. 
\end{abstract}


\keywords{quantum simultaneous decoder; joint typicality; one-shot 
inner bounds; multiple access channel; network information theory.}

}]



\markboth{Pranab Sen}{Unions, intersections and a 
one-shot quantum joint typicality lemma}

\section{Introduction}
A fundamental tool to prove inner bounds for communication channels
in classical network information theory is the so-called conditional
joint typicality lemma \cite{book:elgamalkim}. Very often, 
the joint typicality lemma is
used together with implicit {\em intersection} and {\em union} arguments
in the inner bound proofs. This is especially so in the construction of
so-called {\em simultaneous} decoders (as opposed to successive 
cancellation decoders) for communication problems. In this 
paper,
we investigate what happens when one tries to extend the classical inner 
bound proofs to the quantum setting. It turns out that the union and
intersection arguments present a huge stumbling block in this effort. 
The main result of this paper is a one-shot quantum joint typicality
lemma that takes care of these union and intersection bottlenecks,
and allows us to extend many classical inner bound proofs to the 
quantum setting.

\subsection{Information theory and typical sequences}
\label{subsec:typicalseq}
In a path breaking paper \cite{Shannon:channelcoding}, Shannon started the
field of information theory. The paper introduced several revolutionary
ideas viz:
\begin{itemize}

\item
Abstracting out a noisy communication channel with one input and one 
output, henceforth referred to as a {\em point-to-point channel},
in a very general fashion as a probability transition
matrix that maps a symbol from its input alphabet, e.g. $\{0,1\}$, 
into a probability distribution over symbols from its output alphabet,
e.g. $\{0,1\}$;

\item
Showing that it is possible to reduce the error of transmitting messages
over a noisy channel by introducing some redundancy by {\em encoding} 
the message first, inputting the encoded message to the channel, and then 
{\em decoding} the output
in an attempt to recover the sent message;

\item
Shifting the onus of reliable information transmission away from the
engineering approach of reducing the noise in a channel, which was
the one emphasised earlier, to the mathematical approach of constructing
better codes so that higher communication rate and lower error rate  
could simultaneously be achieved using the physical same channel as
before;

\item
Showing that with many independent and identical uses of the
channel, it is possible to simultaneously achieve larger rates 
of information transmission per channel use as well as smaller
information transmission error;

\item
Defining a precise mathematical quantity called {\em channel capacity}
which is purely a function of the channel's probability transition
matrix that has the the property that, in the limit of many independent
and identical uses of the channel, almost error free transmission
of messages at any rate below the capacity is possible, and any
strategy for transmitting at a rate larger than the capacity necessarily
incurs non-vanishing error. 
\end{itemize}

The last item above, viz. the channel capacity, was defined in terms
of a quantity called {\em mutual information}, which in turn 
was defined in terms of a novel quantity 
{\em information-theoretic entropy} inspired by the concept
of thermodynamic entropy of physics. Shannon's {\em noisy channel coding
theorem} that proved the above property regarding channel capacity 
again introduced a new concept called {\em typical sequence} that
formalised the famous law of large numbers from probability theory
in the following particular way: in any sequence of samples of length
$n$ drawn independently from a probability distribution $p(\cdot)$ on 
a fixed finite alphabet $\cX$, with high probability, the number of
occurences in the sequence of every symbol $x \in \cX$ is roughly
$n p(x)$. Such sequences are said to be typical with respect to the
probability distribution $p(x)$.

The concept of a typical sequence naturally leads to the concept
of a {\em jointly typical sequence} with respect to a probability 
distribution $p(\cdot)$ on a fixed finite set $\cX \times \cY$.
Thus, a sequence 
$((x_1, y_1), \ldots, (x_n, y_n))$ of length $n$ is said to be 
$\delta$-jointly typical with respect to $p(x,y)$ iff the number
of occurences of each symbol $(x, y) \in \cX \times \cY$ is in the
range $n p(x,y) (1 \pm \delta)$. Often proofs in Shannon theory 
investigate questions like,
e.g., 
\begin{quote}
Consider a sequence 
$((x_1, y_1), \ldots, (x_n, y_n))$ drawn independently from the
probability distribution $p(x) p(y)$, which is nothing but the product
of the marginal distributions on $\cX$ and $\cY$. 
What is the probability that it is $\delta$-jointly typical
with respect to $p(x,y)$? 
\end{quote}
The answers to these types of questions are
provided by so-called {\em joint typicality lemmas}. They form a
cornerstone of techniques to prove various inner bounds, 
i.e. achievability results, in
channel coding. Often these proofs require one to take intersections
and unions of sets of typical sequences. This is especially so when
one studies inner bounds for {\em multiterminal channels}, i.e.
channels with more than one input and/or more than one output. 

As long as one is dealing only with {\em classical channels} i.e. channels 
where information is carried by properties of physical systems behaving 
according to non-quantum physics, Shannon's mathematical abstraction
of channels as explained above works without any problem. Typical sets
and intersections and unions provide a well oiled machinery to attack
channel coding problems in the limit of many independent and identical
uses of the channel, called {\em asymptotic iid setting} henceforth. 
Classical
channels encompass the whole gamut of communication channels in current
use, ranging from wireless systems to optical fibres. 
The story however radically changes when one considers the quantum 
setting where 
information is carried by properties of physical systems behaving 
according to the laws of quantum physics. Examples of quantum channels
include using photon polarisation to transmit information over optical
fibre, using nuclear spins in organic molecules to store information etc.
It is this setting that we study in this paper.

More than a decade before Deutsch published the first 
mathematical model of a quantum computer \cite{Deutsch}, Holevo studied 
the information carrying capabilities of quantum systems in his 
seminal paper \cite{Holevo}. Interest in quantum computing and quantum 
information theory grew exponentially after Shor's publication of
an efficient quantum algorithm for integer factoring 
\cite{shor:factoring}. 
Barely a few years after Shor's result, Holevo \cite{holevo:capacity}
and Schumacher-Westmoreland \cite{schumacher:capacity} proved
a quantum analogue of Shannon's noisy channel coding theorem by showing 
that the quantum
version of mutual information is a lower bound on the capacity of a
quantum channel to carry classical information in the asymptotic iid
setting. In doing so, they generalised the notion of a typical set
to a {\em typical subspace} which, in hindsight, was the natural thing 
to do since decoding
operations in the quantum setting are mathematically modelled by
projections onto subspaces.

As long as one is working only with point-to-point channels,
many inner bounds in asymptotic iid classical information theory can 
be extended to
the quantum setting by replacing typical sets with typical subspaces
and classical joint typicality lemmas with quantum joint typicality
lemmas. However as soon as one starts to deal with multiterminal
channels, one encounters the issue of defining intersections and unions
of typical subspaces which becomes a significant bottleneck. 
The situation becomes worse when one has to deal with multiterminal
quantum channels in the {\em one shot setting} where the channel can 
be used
only once. Though analogues for typical sets and typical subspaces have
recently been developed for the one shot setting, see e.g 
\cite{wang:DepsH}, their properties are not as nice as those for the
asymptotic iid setting. This exacerbates the bottleneck of defining 
unions and intersections. Though some ad hoc solutions have been
found for some quantum channels in the asymptotic iid regime, see e.g.
\cite{winter:cqmac,sen:interference},
this bottleneck continues to be a 
hindrance in proving good inner bounds for multiterminal quantum channels
in general.

In the next few subsections we will explain the problem of
intersections and unions of typical subspaces in more detail
by taking up the example of the simplest multiterminal quantum channel
viz. the multiple access channel (MAC). We will first explicitly show why
an intersection argument arises while proving an inner bound for the
classical MAC in the one shot setting. We will then explain why
naive strategies fail to address the intersection issue for the quantum
MAC. After that we will intuitively describe the two new tools introduced
in this paper called {\em tilting} and {\em smoothing and augmentation},
and then explain how they
can be used to define a robust notion of intersection and
union of typical subspaces that finally overcomes the bottleneck in
the quantum setting. The tools developed in this paper all work in the 
most general one shot quantum setting.

\subsection{One-shot inner bound for the classical MAC}
\label{sec:cMAC}
Let us illustrate the need for an intersection argument together
with a joint typicality
lemma by considering the problem of proving inner bounds for arguably the 
simplest multiterminal communication channel viz. the multiple access
channel (MAC) (see Fig.~\ref{fig:CMAC}). 
We consider the one-shot classical setting.
\begin{figure*}[!t]
\centering{
\includegraphics[width=.9\textwidth]{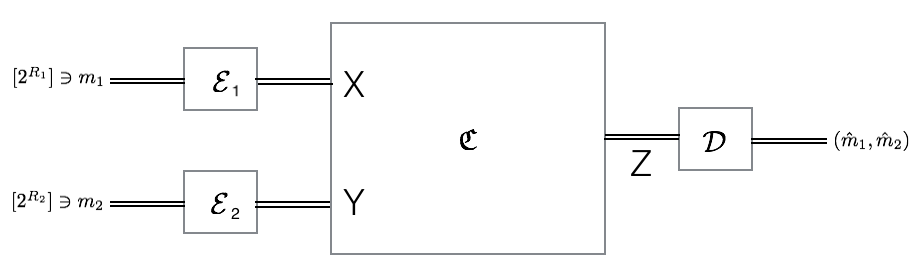}
}
\caption{A classical multiple access channel. Alice, Bob encode their
respective messages $m_1$, $m_2$ and then feed the resulting codewords
to the channel.
Charlie decodes to get his guess $(\hat{m}_1, \hat{m}_2)$ of the 
transmitted messages. $X$, $Y$
are the input alphabets and $Z$ is the output alphabet of the channel.
}
\label{fig:CMAC}
\end{figure*}
There are two senders Alice and Bob who
would like to send messages $m_1 \in [2^{R_1}]$, $m_2 \in 2^{R_2}$,
respectively to 
a receiver Charlie. There
is a communication channel $\chan$ with two inputs and one output called
the two-sender one-receiver MAC connecting Alice and Bob to Charlie. The 
two input alphabets of $\chan$ will be denoted
by $\cX$, $\cY$ and the output alphabet by $\cZ$. 
Let $0 \leq \epsilon \leq 1$.
On getting message $m_1$, Alice encodes it as a letter $x(m_1) \in \cX$
and feeds it to her channel input. 
Similarly on getting message $m_2$, Bob encodes it as a letter 
$y(m_2) \in \cY$ and feeds it to his channel input. The channel $\chan$
outputs a letter $z \in \cZ$ according to the channel probability
distribution $p(z | x(m_1), y(m_2))$. Charlie now has to try and guess 
the message pair $(m_1, m_2)$ from the channel output. 
We require that the probability of Charlie's decoding error,
where the probability arises both from Charlie's measurement as well
as classically averaging over the uniform distribution on
the set of message pairs $(m_1, m_2) \in [2^{R_1}] \times [2^{R_2}]$,
is at most $\epsilon$.

Consider the following randomised construction of a codebook $\cC$
for Alice 
and Bob. Fix probability distributions $p(x)$, $p(y)$ on sets $\cX$, $\cY$.
For $m_1 \in [2^{R_1}]$, choose $x(m_1) \in \cX$ independently according
to $p(x)$. Similarly
for $m_2 \in [2^{R_2}]$, choose $y(m_2) \in \cY$ independently according
to $p(y)$. 

Before describing the decoding strategy that Charlie follows, it
is useful to first define a concept called {\em hypothesis testing}.
Let $0 \leq \epsilon \leq 1$. Let $p$, $q$ be 
two probability distrbutions on the same sample space $\Omega$. A
`classical POVM element' or `test' on $\Omega$ is defined to be
a function $f: \Omega \rightarrow [0, 1]$. Intuitively, for a sample
point $\omega \in \Omega$, $f(\omega)$ denotes its probability 
of acceptance by the test.
For two classical POVM elements $f$, $g$ on $\Omega$, we can define the
`intersection' classical POVM element $f \cap g$ as follows: 
$(f \cap g)(\omega) := \min\{f(\omega), g(\omega)\}$. 
Similarly, we can define the `union' classical POVM element 
$f \cup g$ as follows:
$(f \cup g)(\omega) := \max\{f(\omega), g(\omega)\}$. 
Following Wang and Renner~\cite{wang:DepsH}, we define
the classical {\em hypothesis testing relative entropy} 
$D^\epsilon_H(p \| q)$ as follows:
\[
D^\epsilon_H(p \| q) := 
\max_{f: \sum_\omega f(\omega) p(\omega) \geq 1 - \epsilon} 
-\log \sum_\omega f(\omega) q(\omega),
\]
where the maximisation is over all classical POVM elements on $\Omega$
`accepting' the distribution $p$ with probability at least
$1 - \epsilon$. 
It is easy to see that the optimising POVM element $f$
attains equality in the constraint for $p(\cdot)$, as well as achieves
the maximum in objective function for $q(\cdot)$.

Define the probability distribution $p(\cdot|x)$ on $\cZ$ as 
$
p(z|x) := \sum_y p(y) p(z | x, y)
$
for all $z \in \cZ$. With a slight abuse of notation, we shall often
use $p(z|x)$ to also denote the distribution $p(\cdot|x)$ on $\cZ$.
Similarly, define probability distributions $p(\cdot|y)$, $p(\cdot)$ on 
$\cZ$ in the natural fashion.
Define the probability distributions 
\begin{eqnarray*}
p^{XYZ}(x,y,z) & := & p(x) p(y) p(z | x, y), \\
(p^{XZ} \times p^Y)(x,y,z) & := & p(x) p(y) p(z | x), \\
(p^{YZ} \times p^X)(x,y,z) & := & p(x) p(y) p(z | y), \\
(p^{X} \times p^Y \times p^Z)(x,y,z) & := & p(x) p(y) p(z) 
\end{eqnarray*}
on $\cX \times \cY \times \cZ$ in the natural manner.
Consider classical POVM elements $f^Y$, $f^X$, $f^{X,Y}$ achieving the 
respective maxima in the definitions of
$D^\epsilon_H(p^{XYZ} \| p^{XZ} \times p^Y)$,
$D^\epsilon_H(p^{XYZ} \| p^{YZ} \times p^X)$,
$D^\epsilon_H(p^{XYZ} \| p^{X} \times p^Y \times p^Z)$ respectively.
As a shorthand, we will use the hypothesis testing mutual information
quantities 
$I^\epsilon_H(Y : X Z)$,
$I^\epsilon_H(X : Y Z)$,
$I^\epsilon_H(X Y : Z)$ 
to denote the hypothesis testing relative entropy quantities
$D^\epsilon_H(p^{XYZ} \| p^{XZ} \times p^Y)$,
$D^\epsilon_H(p^{XYZ} \| p^{YZ} \times p^X)$,
$D^\epsilon_H(p^{XYZ} \| p^{X} \times p^Y \times p^Z)$
respectively. Thus,
\begin{equation}
\label{eq:cmacDeps}
\begin{array}{rcl}
\sum_{x,y,z} p(x,y,z) f^X(x,y,z) & \geq & 1 - \epsilon, \\
\sum_{x,y,z} p(x)p(y,z) f^X(x,y,z) & \leq & 2^{-I^\epsilon_H(X : Y Z)}; \\
&& \\
\sum_{x,y,z} p(x,y,z) f^Y(x,y,z) & \geq & 1 - \epsilon, \\
\sum_{x,y,z} p(y)p(y,z) f^Y(x,y,z) & \leq & 2^{-I^\epsilon_H(Y : X Z)}; \\
&& \\
\sum_{x,y,z} p(x,y,z) f^{X,Y}(x,y,z) & \geq & 1 - \epsilon, \\
\sum_{x,y,z} p(x)p(y)p(z) f^{X,Y}(x,y,z) & \leq & 2^{-I^\epsilon_H(XY:Z)}.
\end{array}
\end{equation}

We now describe the decoding strategy that Charlie follows in order to
try and guess the message pair $(m_1, m_2)$ that was actually sent. 
Consider the intersection classical POVM 
element $f := f^X \cap f^Y \cap f^{X,Y}$. 
Suppose the channel output is $z$. Then, Charlie uses the
following randomised algorithm for decoding.
\begin{quote}
For $\hat{m}_1 = 1 \mbox{ to } 2^{R_1}$ \\
\hspace*{3mm}  For $\hat{m}_2 = 1 \mbox{ to } 2^{R_2}$ \\
\hspace*{6mm}
Toss a coin with probability of HEAD \\
\hspace*{6mm}
being $f(x(\hat{m}_1), y(\hat{m}_2), z)$. \\
\ \\
\hspace*{6mm}
If the coin comes up HEAD, declare \\
\hspace*{6mm}
$(\hat{m}_1, \hat{m}_2)$ as Charlie's guess and halt. \\
\ \\
\hspace*{6mm}
If the coin comes up TAILS, go to \\
\hspace*{6mm}
next iteration.

Declare FAIL, if Charlie did not declare \\
any guess above.
\end{quote}

We now analyse the expectation, under the choice of a random codebook
$\cC$, of the error probability of Charlie's decoding algorithm.
Suppose the message pair $(m_1, m_2)$ is inputted to the channel.
Let $\prec$ denote the lexicographic order on message pairs. Let 
the channel output be denoted by $z$. Then,
a decoding error occurs only if Charlie tosses a HEAD for a pair
$(\hat{m}_1, \hat{m}_2) \prec (m_1, m_2)$ or if Charlie tosses a TAIL
for $(m_1, m_2)$. The expectation, over the choice of the random 
codebook $\cC$, of the decoding error is then upper bounded by
\begin{equation}
\label{eq:cqMAC}
\begin{array}{rcl}
\lefteqn{
\E_{\cC}[
p_e(\cC; m_1, m_2) 
]
} \\
& \leq &
2^{R_1 + R_2} 2^{-I^\epsilon_H(X Y : Z)} +
2^{R_2} 2^{-I^\epsilon_H(Y : X Z)} \\
&      &
{} +
2^{R_1} 2^{-I^\epsilon_H(X : Y Z)} +
3 \epsilon.
\end{array}
\end{equation}
A proof can be found in \ref{sec:proofcqMAC}.
Now define the average decoding error probability under a codebook
$\cC$ to be
\[
p_e(\cC) :=
2^{-(R_1 + R_2)} 
\sum_{(m_1, m_2)}
p_e(\cC; m_1, m_2).
\]
Then the expectation, under the choice of a random codebook
$\cC$, of the average decoding error probability is upper bounded by
\[
\E_{\cC}[p_e(\cC)] \leq
2^{R_1 + R_2} 2^{-I^\epsilon_H(X Y : Z)} +
2^{R_2} 2^{-I^\epsilon_H(Y : X Z)} +
2^{R_1} 2^{-I^\epsilon_H(X : Y Z)} +
3 \epsilon.
\]
Thus, for any rate pair in the region given by
\begin{eqnarray*}
R_1 & \leq & I^\epsilon_H(X : Y Z) - \log \frac{1}{\epsilon}, \\
R_2 & \leq & I^\epsilon_H(Y : X Z) - \log \frac{1}{\epsilon}, \\
R_1 + R_2 & \leq & I^\epsilon_H(X Y : Z) - \log \frac{1}{\epsilon}, \\
\end{eqnarray*}
there is a codebook $\cC$ with average decoding error probability less
than $6 \epsilon$.

\subsection{Unions and intersections in the classical setting}
We now step back and discuss the intersection classical POVM element
$f := f^X \cap f^Y \cap f^{X,Y}$ used in
the above proof. The intersection argument was crucial in constructing
a simultaneous decoder for Charlie. In the asymptotic iid setting for
the multiple access channel,
it is possible to avoid simultaneous decoding and instead use successive
cancellation decoding combined with time sharing~\cite{book:elgamalkim}. 
However, in 
the one-shot setting time sharing does not make sense and successive
cancellation gives only a finite set of achievable rate pairs. Thus,
in order to get a continuous achievable rate region, we are forced
to use simultaneous decoders only. There are also situations even in the
asymptotic iid setting, e.g.
in Marton's inner bound with common message for the broadcast channel,
where we need to use intersection arguments~\cite{book:elgamalkim}.
Similarly, union arguments crop up in some inner bound
proofs, e.g. Marton's inner bound without common message for the broadcast 
channel, even in the
asymptotic iid setting~\cite{book:elgamalkim}. In the one-shot setting,
union bounds occur more frequently e.g., in the Han-Kobayashi inner bound
for the interference channel~\cite{sen:simultaneous}. 
Thus, intersection and union arguments are indispensable
in network information theory.

Before we proceed to the quantum setting, we prove for completeness
sake a `one-shot classical joint typicality lemma'. 
In fact, it is nothing but an application of intersection and union
of classical POVM elements.
\begin{fact}[Classical joint typicality lemma]
\label{fact:ctypical}
Let $p_1, \ldots, p_t$, \\
$q_1, \ldots, q_l$ be probability 
distributions on a set $\cX$.
Let $0 \leq \{\epsilon_{ij}\}_{ij} \leq 1$, where $i \in [t]$, $j \in [l]$.
Then there is a classical POVM element $f$ on $\cX$ such that:
\begin{enumerate}

\item
For all $i \in [t]$,
$
\sum_x p_i(x) f(x) \geq 
1 - \{\sum_{j = 1}^l \epsilon_{ij}\};
$

\item
For all $j \in [l]$,
$
\sum_x q_j(x) f(x) \leq 
\sum_{i=1}^t 2^{-D^{\epsilon_{ij}}_H(p_i \| q_j)}.
$

\end{enumerate}
\end{fact}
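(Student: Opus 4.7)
The plan is to construct the desired test $f$ by combining the optimal hypothesis testing POVM elements for each pair $(i,j)$ using the intersection and union operations defined in the excerpt. Concretely, for each $i \in [t]$ and $j \in [l]$, let $f_{ij}: \cX \to [0,1]$ be the classical POVM element attaining the maximum in the definition of $D^{\epsilon_{ij}}_H(p_i \| q_j)$, so that $\sum_x p_i(x) f_{ij}(x) \geq 1 - \epsilon_{ij}$ and $\sum_x q_j(x) f_{ij}(x) = 2^{-D^{\epsilon_{ij}}_H(p_i \| q_j)}$.

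I would then first intersect over $j$ for each fixed $i$, setting $g_i := f_{i1} \cap \cdots \cap f_{il} = \min_j f_{ij}$, and then take the union over $i$, defining $f := g_1 \cup \cdots \cup g_t = \max_i g_i$. Since pointwise maxima and minima of $[0,1]$-valued functions stay in $[0,1]$, $f$ is a valid classical POVM element on $\cX$.

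For property 1, I use that $1 - g_i(x) = \max_j (1 - f_{ij}(x)) \leq \sum_j (1 - f_{ij}(x))$, so $\sum_x p_i(x) g_i(x) \geq 1 - \sum_j \epsilon_{ij}$. Since $f \geq g_i$ pointwise, the same bound holds for $f$. For property 2, I use that $f(x) = \max_i g_i(x) \leq \sum_i g_i(x) \leq \sum_i f_{ij}(x)$ for any fixed $j$, which gives
\[
\sum_x q_j(x) f(x) \leq \sum_{i=1}^t \sum_x q_j(x) f_{ij}(x) = \sum_{i=1}^t 2^{-D^{\epsilon_{ij}}_H(p_i \| q_j)}.
\]

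There is no real obstacle here; the only thing to be careful about is the direction of the inequalities when replacing $\min$ by a sum of complements and $\max$ by a sum. The whole argument is essentially a bookkeeping exercise that makes transparent why the quantum analogue, where POVM elements do not commute and cannot be combined by pointwise $\min$/$\max$, is the genuinely hard problem addressed in the rest of the paper.
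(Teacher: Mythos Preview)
Your proof is correct and follows essentially the same approach as the paper: both construct $f = \bigcup_{i=1}^t \bigcap_{j=1}^l f_{ij}$ from the optimal tests $f_{ij}$, and both derive the two claimed bounds from the pointwise inequalities $1 - f(x) \leq \sum_j (1 - f_{ij}(x))$ for each $i$ and $f(x) \leq \sum_i f_{ij}(x)$ for each $j$. Your presentation, with the intermediate $g_i$ and the explicit observation $f \geq g_i$, is just a slightly more unpacked version of the paper's argument.
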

\begin{proof}
For $i \in [t]$, $j \in [l]$, let $f_{ij}$ be the classical POVM 
element achieving the
minimum in the definition of $D^{\epsilon_{ij}}_H(p_i \| q_j)$. Define
the classical POVM element
$f := \cup_{i=1}^t \cap_{j=1}^l f_{ij}$. 
Observe that for any $x \in \cX$,
\[
1 - f(x) 
 \leq 
\min_{i: i \in [t]} \left\{\sum_{j=1}^l (1 - f_{ij}(x))\right\}, 
~~~
f(x) 
 \leq 
\sum_{i=1}^t \min_{j: j \in [l]} \{f_{ij}(x)\}.
\]
It is now easy to see that 
$f$ satisfies the properties claimed above.
\end{proof}

\subsection{Extending unions and intersections to the quantum setting}
\label{sec:quantunionintersection}
We now ponder what is required to extend the above inner bound proof for
the classical MAC to the setting of the one-shot classical-quantum 
multiple access channel (cq-MAC).
In the cq-MAC, there are two senders Alice and Bob who
would like to send messages $m_1 \in [2^{R_1}]$, $m_2 \in 2^{R_2}$,
respectively to a receiver Charlie. There
is a communication channel $\chan$ with two classical inputs and one 
quantum output connecting Alice and Bob to Charlie. The 
two input alphabets of $\chan$ will be denoted
by $\cX$, $\cY$ and the output Hilbert space by $\cZ$. 
If the pair $(x,y)$ is fed into the channel inputs, the output of
the channel is a density matrix $\rho_{x,y}$ in $\cZ$.
Let $0 \leq \epsilon \leq 1$. 
On getting message $m_1$, Alice encodes it as a letter $x(m_1) \in \cX$
and feeds it to her channel input. 
Similarly on getting message $m_2$, Bob encodes it as a letter 
$y(m_2) \in \cY$ and feeds it to his channel input. 
Charlie now has to try and guess 
the message pair $(m_1, m_2)$ from the channel output state
$\rho_{x(m_1),y(m_2)}$.
We require that the probability of Charlie's decoding error averaged
over the uniform distribution on
the set of message pairs $(m_1, m_2) \in [2^{R_1}] \times [2^{R_2}]$ 
is at most $\epsilon$.

One can do a similar randomised construction of a codebook $\cC$ for
Alice and Bob as before. One can make use of the {\em hypothesis
testing relative entropy} \cite{wang:DepsH} for a pair of 
quantum states $\rho$, $\sigma$ 
in the same Hilbert space $\cH$, which is
defined as follows:
\[
D^\epsilon_H(\rho \| \sigma) := 
\max_{\Pi: \Tr [\Pi \rho] \geq 1 - \epsilon} 
-\log \Tr [\Pi \sigma],
\]
where the maximisation is over all POVM elements $\Pi$ on $\cH$ 
(i.e. positive semidefinite operators $\Pi$ such that 
$\Pi \leq \one_{\cH}$)
`accepting' the state $\rho$ with probability at least
$1 - \epsilon$. 
Again, it is easy to see that the optimising POVM element $\Pi$
attains equality in the constraint for $\rho$, as well as achieves
the maximum in objective function for $\sigma$.
One can define the analogous quantum state
\[
\rho^{XYZ} := 
\sum_{x,y} p(x) p(y) \ketbra{x}^X \otimes \ketbra{y}^Y \otimes
           \rho_{x,y}^Z
\]
and the tensor products of the marginals
$\rho^{XZ} \otimes \rho^Y$,
$\rho^{YZ} \otimes \rho^X$,
$\rho^X \otimes \rho^Y \otimes \rho^Z$,
as well as the hypothesis testing mutual informations 
\begin{eqnarray*}
I^\epsilon_H(X:YZ)
& := &
D^\epsilon_H(\rho^{XYZ} \| \rho^X \otimes \rho^{YZ}), \\
I^\epsilon_H(Y:XZ)
& := &
D^\epsilon_H(\rho^{XYZ} \| \rho^Y \otimes \rho^{XZ}), \\
I^\epsilon_H(XY:Z)
& := &
D^\epsilon_H(\rho^{XYZ} \| \rho^X \otimes \rho^Y \otimes \rho^{Z})
\end{eqnarray*}
in the quantum setting too.
Thus, we would like to prove  that any rate pair in the region given by
\begin{equation}
\label{eq:MACregion}
\begin{array}{rcl}
R_1 & \leq & I^\epsilon_H(X : Y Z) - \log \frac{1}{\epsilon}, \\
& & \\
R_2 & \leq & I^\epsilon_H(Y : X Z) - \log \frac{1}{\epsilon}, \\
& & \\
R_1 + R_2 & \leq & I^\epsilon_H(X Y : Z) - \log \frac{1}{\epsilon} 
\end{array}
\end{equation}
is achievable with small 
average decoding error probability.

Suppose there exists a {\bf single} POVM element $\Pi$ on the Hilbert
space $\cX \otimes \cY \otimes \cZ$ that simultaneously
satisfies the following properties:
\begin{quote}
\centerline{
{\bf
Desirable properties of the single POVM element $\Pi$
}
}
\begin{enumerate}

\item
\setcounter{type1error}{\value{enumi}}
$\Tr [\Pi \rho^{XYZ}] \geq 1 - 3 \epsilon$;

\item
\setcounter{type2errorx}{\value{enumi}}
$
\Tr [\Pi (\rho^{YZ} \otimes \rho^X)] \leq 
2^{-I^\epsilon_H(X:YZ)}
$;

\item
\setcounter{type2errory}{\value{enumi}}
$
\Tr [\Pi (\rho^{XZ} \otimes \rho^Y)] \leq 
2^{-I^\epsilon_H(Y:XZ)}
$;

\item
\setcounter{type2errorxy}{\value{enumi}}
$
\Tr [\Pi (\rho^X \otimes \rho^Y \otimes \rho^Z)] \leq 
2^{-I^\epsilon_H(XY:Z)}
$.

\end{enumerate}
\end{quote}
In the classical setting, such a POVM element $f$ was constructed by
taking the intersections of the three POVM elements 
$f^X$, $f^Y$, $f^{X,Y}$ achieving the respective maxima
in the definitions of the three entropic quantities
$I^\epsilon_H(X:YZ)$, 
$I^\epsilon_H(Y:XZ)$, 
$I^\epsilon_H(XY:Z)$.
In the quantum setting, if such an `intersection' POVM element 
$\Pi$ exists it is
indeed possible to construct a decoding algorithm for Charlie with
average error probability at most $O(\epsilon)$ using
the `pretty good measurement' 
\cite{belavkin:pgm1, belavkin:pgm2, holevo:capacity, schumacher:capacity},
the Hayashi-Nagaoka operator inequality
\cite{HayashiNagaoka}
and mimicking the classical analysis given above for the decoding error.

\begin{figure*}[!t]
\begin{center}
\begin{minipage}[c]{0.35\textwidth}
\centering{
\includegraphics[width=.9\textwidth]{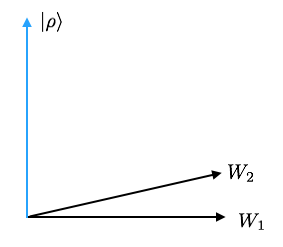} 
}

\centering{{\bf Pathology:} Span can be entire space}
\end{minipage}
~
\begin{minipage}[c]{0.55\textwidth}
\centering{
\includegraphics[width=.9\textwidth]{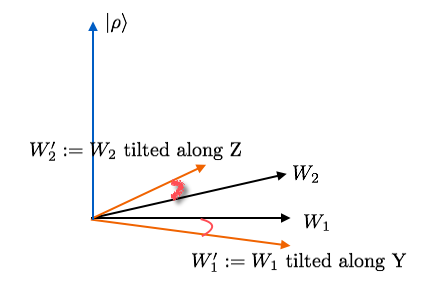}
}

\centering{{\bf Tilting:} The pathology is destroyed now!}
\end{minipage}
\end{center}
\caption{Span versus tilted span. The left figure shows how a quantum
state $\rho$ can have small projections on two subspaces $W_1$ and
$W_2$ but large projection on their span $W_1 + W_2$. The 
pathology can be severe, in that the state and the 
subspaces can all be one
dimensional and the state can lie fully in the span of the two subspaces.
On the right, we see how tilting the two subspaces along two orthogonal
directions (which we call Y and Z in the figure) destroys the pathology. 
The quantum state $\rho$ now has small projection onto the span 
$W'_1 + W'_2$ of the two tilted subspaces $W'_1$ and $W'_2$.
}
\label{fig:SpanTilting}
\end{figure*}
Let $\Pi^X$, $\Pi^Y$, $\Pi^{X,Y}$ be the three POVM elements achieving
the maxima in the definitions of the three entropic quantities
$I^\epsilon_H(X:YZ)$, 
$I^\epsilon_H(Y:XZ)$, 
$I^\epsilon_H(XY:Z)$.
Let us now look at various simple ideas to generalise intersection of
POVM elements to the quantum setting. If the POVM elements were 
projectors, which can be ensured without loss of generality by
the Gelfand-Naimark technique of embedding
the quantum states into a larger Hilbert space 
$\cX \otimes \cY \otimes (\cZ \otimes \C^2)$,
one can try taking the projector $\Pi$ onto the intersection of 
the supports
of $\Pi^X$, $\Pi^Y$, $\Pi^{X,Y}$. This indeed ensures that
Properties~\arabic{type2errorx}, \arabic{type2errory}, 
\arabic{type2errorxy}
described above hold for $\Pi$. Unfortunately, the intersection can 
easily be the zero
subspace which kills all hope of satisfying Property~\arabic{type1error}
even approximately.
The next simple idea to define the `intersection' POVM element would be
to take the product 
$\Pi := \Pi^X \Pi^Y \Pi^{X,Y} \Pi^{X,Y} \Pi^Y \Pi^X$.
With this definition, $\Pi$ can be shown to satisfy 
Property~\arabic{type1error} 
with lower bound $1 - O(\epsilon)$ using the 
non-commutative union bound~\cite{gao:union, oskouei:union}.
However, it is not at all clear that the remaining three properties
can be simultaneously satisfied, even approximately, because 
$\Pi^X$, $\Pi^Y$, $\Pi^{X,Y}$ do not commute in general. To get an
idea of the difficulty, consider the expression
\[
\Tr [\Pi (\rho^{YZ} \otimes \rho^X)] =
\Tr [\Pi^X \Pi^Y \Pi^{X,Y} \Pi^{X,Y} \Pi^Y \Pi^X 
     (\rho^{YZ} \otimes \rho^X)]
\]
that arises if one were to attempt to prove
Property~\arabic{type2errorx}.
It is true that 
$\Tr [\Pi^Y (\rho^{YZ} \otimes \rho^X)] = 2^{-I(X:YZ)}$,
but it is also possible that
\begin{eqnarray*}
\lefteqn{
\Tr [\Pi^X \Pi^Y \Pi^{X,Y} \Pi^{X,Y} \Pi^Y \Pi^X 
     (\rho^{YZ} \otimes \rho^X)] 
} \\
&   &
~~~~~
{} - \Tr [\Pi^Y (\rho^{YZ} \otimes \rho^X)] \\
& \gg &
2^{-I(X:YZ)}.
\end{eqnarray*}
This makes it impossible to show the 
achievability of the desired rate region (Equation~\ref{eq:MACregion}) 
with this definition of
$\Pi$. In fact, one of the main technical contributions of this paper is a 
robust novel notion of {\em intersection} of non-commuting POVM elements
achieving the maxima in the definitions of the appropriate
hypothesis testing mutual information quantities.

As a first step towards defining a 
robust notion of intersection of non-commuting POVM elements, we
address the complementary problem of defining a
robust notion of union of non-commuting POVM elements. The `intersection'
POVM element can then be defined to be simply the complement of the
`union' of the complements. Note that the complement of a POVM
element $\Pi$ in a Hilbert space $\cH$ is defined to be
$\one_{\cH} - \Pi$. The POVM elements are projectors
without loss of generality.
One can then naively
define the `union' of a family of projectors to be the projector onto 
the span of the supports (the union of supports is not a vector space
in general). However, this does not give us anything new
as the span can indeed be the entire space and so 
Property~\arabic{type1error} can be totally useless, that is, the 
lower bound in Property~\arabic{type1error} can be as low as zero! 
The problem with the
span idea is captured by the following example.
Consider the two-dimensional Hilbert space $\C^2$. Let 
$W_1$ be the one-dimensional space spanned by $\ket{0}$ and
$W_2$ the one-dimensional space spanned by 
$\sqrt{1-\epsilon}\ket{0} + \sqrt{\epsilon}\ket{1}$. Consider the quantum
state $\rho := \ketbra{1}$. Now the probability of $\rho$ being
accepted by $W_1$, $W_2$ is $0$ and $\epsilon$ respectively. However,
the probability of $\rho$ being accepted by the span of $W_1$ and
$W_2$ is one.

Notice however that the above pathological phenomenon with the span
occurs only because the
subspaces $W_1$ and $W_2$ have `small angles' between them. 
We overcome this problem by `tilting' $W_1$, $W_2$ in orthogonal directions
to form new spaces $W'_1$, $W'_2$ (see Fig.~\ref{fig:SpanTilting}). 
The Hilbert space has to be
enlarged sufficiently to allow this tilting to be possible. 
The process 
of tilting increases the `angles' between the subspaces in {\em orthogonal
directions} allowing one to recover an upper bound for the span very
close to the sum of acceptance probabilities, just like in the 
classical setting. This `tilting' idea is formalised in 
Proposition~\ref{prop:tiltedspan}. Thus, the `tilted span' is best 
thought of as a robust notion of {\em union of subspaces} satisfying a 
well behaved union bound.

We now describe the tilting process for the above `pathological' example
of Fig.~\ref{fig:SpanTilting} in detail.
Let $A_1 \cong \C^2$ be the original Hilbert space. Consider the 
extended Hilbert space
$\hA := A_1 \oplus A_2 \oplus A_3$, where each $A_i \cong \C^2$. 
The subspaces
$W_1$, $W_2$ are the one-dimensional subspaces spanned by the vectors
$\ket{0}^{A_1}$ and 
$\sqrt{1-\epsilon} \ket{0}^{A_1} + \sqrt{\epsilon} \ket{1}^{A_1}$ 
respectively, in $\hA$. 
Fix $0 < \alpha < 1/2$. 
The tilted subspaces are defined by
\begin{eqnarray*}
W'_1 
& := &
\spanning\{
\sqrt{1-\alpha} \ket{0}^{A_1} + \sqrt{\alpha} \ket{0}^{A_2}
\}, \\
W'_2
& := &
\spanning\{
\sqrt{(1-\alpha)(1-\epsilon)} \ket{0}^{A_1} +
\sqrt{\alpha(1-\epsilon)} \ket{0}^{A_3} \\
&    &
~~~~~~~~~~
{} + \sqrt{(1-\alpha)\epsilon} \ket{1}^{A_1} +
\sqrt{\alpha \epsilon} \ket{1}^{A_3}
\},
\end{eqnarray*}
where the span is taken in $\hA$. Consider now the span 
$W' := W'_1 + W'_2$. We refer to $W'$ as the `tilted span' of $W_1$ and
$W_2$. It has the following orthonormal basis:
\begin{eqnarray*}
W'
& = & 
\spanning\{
\sqrt{1-\alpha} \ket{0}^{A_1} + \sqrt{\alpha} \ket{0}^{A_2}, \\
&   &
~~~~~~~
\left.
\frac{
\begin{array}{l}
\alpha \sqrt{(1-\alpha)(1-\epsilon)} \ket{0}^{A_1} 
-
(1-\alpha)\sqrt{\alpha (1-\epsilon)} \ket{0}^{A_2} \\
{} +
\sqrt{\alpha(1-\epsilon)} \ket{0}^{A_3} 
+ \sqrt{(1-\alpha)\epsilon} \ket{1}^{A_1} 
+ \sqrt{\alpha \epsilon} \ket{1}^{A_3}
\end{array}
}{\sqrt{\epsilon + \alpha(1-\epsilon)(2-\alpha)}} 
\right\}.
\end{eqnarray*}
Thus tilting destroys the pathology since the projection of 
$\ket{\rho}$ onto $W'$ has length
\[
\frac{\sqrt{(1-\alpha)\epsilon}}
     {\sqrt{\epsilon + \alpha(1-\epsilon)(2-\alpha)}} 
< \frac{1}{\sqrt{\alpha}} \sqrt{\epsilon}.
\]
Suppose $\epsilon < 1/4$.
Setting $\alpha = \sqrt{\epsilon}$, we can finally conclude that
the tilted subspaces are $(2^{1/2} \epsilon^{1/4})$-close to their
respective originals in the $\ell_2$-norm, whereas $\ket{\rho}$ has
a projection of length less than $\epsilon^{1/4}$ onto $W'$,
much less than one. In terms of probabilities, the probability of
measuring $W'$ in the state $\ket{\rho}$ is less than 
$\sqrt{\epsilon}$. The classical union bound on probabilities would 
give $\epsilon$.

Let us define the `intersection' of subspaces to be the complement of the
tilted span of the complementary subspaces.
Applying this recipe to projectors $\Pi^X$, $\Pi^Y$, $\Pi^{X,Y}$ 
achieving the maxima in the definitions of the three entropic quantities
$I^\epsilon_H(X:YZ)$, 
$I^\epsilon_H(Y:XZ)$, 
$I^\epsilon_H(XY:Z)$
gives us a 
projector $\Pi$ that satisfies Property~\arabic{type1error} with
lower bound $1 - O(\sqrt{\epsilon})$ by setting $\alpha = \sqrt{\epsilon}$
in the upper bound 
of Proposition~\ref{prop:tiltedspan}. However, it is not clear whether
$\Pi$ satisfies the Properties~\arabic{type2errorx},
\arabic{type2errory},
\arabic{type2errorxy} even approximately.
This is because in order to prove, say, Property~\arabic{type2errorx},
one has to use the lower bound of Proposition~\ref{prop:tiltedspan} 
with $\alpha = \sqrt{\epsilon}$, which would give an upper bound of at 
least $\sqrt{\epsilon}$ for Property~\arabic{type2errorx}!

Hence we need another new idea in order to obtain a robust notion of
intersection of subspaces. The new idea comes from a phenomenon that
we will call {\em smoothing and augmentation}. 
Let us see a simple example of this. Consider a pure state
$\psi^{XY}$ on the bipartite Hilbert space $\cX \otimes \cY$. Expressed
in the computational basis, the state looks like
$
\ket{\psi}^{XY} = 
\sum_{xy} s_{xy} \ket{x}^X \otimes \ket{y}^Y.
$
Suppose $\ket{\psi}^{XY}$ is isometrically tilted in the following 
fashion to make a
state $\ket{\psi'}^{X'Y'}$ in a larger Hilbert space $\cX' \otimes \cY'$
as follows:
\begin{eqnarray*}
\lefteqn{\ket{\psi'}^{X'Y'}} \\
& := &
\sum_{xy} s_{xy} 
(\sqrt{1-2\epsilon} \ket{x}^X + \sqrt{\epsilon} \ket{x}^{X_x} +
 \sqrt{\epsilon} \ket{x}^{X_y}) \\
&   &
~~~~~~~~~
{} \otimes
(\sqrt{1-2\epsilon} \ket{y}^Y + \sqrt{\epsilon} \ket{y}^{Y_x} +
 \sqrt{\epsilon} \ket{y}^{Y_y}),
\end{eqnarray*}
where
\[
\cX' :=
\cX \oplus \bigoplus \cX_x \oplus \bigoplus \cX_y,
\]
\[
\cY' :=
\cY \oplus \bigoplus \cY_x \oplus \bigoplus \cY_y,
\]
$\cX_x$, $\cX_y$ are orthogonal Hilbert spaces of the same dimension as
$\cX$ indexed by the computational basis states $\ket{x}^X$, $\ket{y}^Y$
and
$\cY_x$, $\cY_y$ are orthogonal Hilbert spaces of the same dimension as
$\cY$ indexed by the computational basis states $\ket{x}^X$, $\ket{y}^Y$.
Then,
\begin{eqnarray*}
\lefteqn{(\psi')^{X'}} \\
& = &
\Tr_{Y'} [
\sum_{xx'} \sum_{yy'} s_{xy} s^*_{x'y'}
(\sqrt{1-2\epsilon} \ket{x}^X + \sqrt{\epsilon} \ket{x}^{X_x} +
 \sqrt{\epsilon} \ket{x}^{X_y}) \\
&   &
~~~~~~~~~~~~~~~~~~~~~~~~~~~~~
(\sqrt{1-2\epsilon} \bra{x'}^X + \sqrt{\epsilon} \bra{x'}^{X_{x'}} +
\sqrt{\epsilon} \bra{x'}^{X_{y'}}) \\
&   &
~~~~~~~~~~~~~~~~~~~~~
{} \otimes
(\sqrt{1-2\epsilon} \ket{y}^Y + \sqrt{\epsilon} \ket{y}^{Y_x} +
 \sqrt{\epsilon} \ket{y}^{Y_y}) \\
&   &
~~~~~~~~~~~~~~~~~~~~~~~~~~~~~
(\sqrt{1-2\epsilon} \bra{y'}^Y + \sqrt{\epsilon} \bra{y'}^{Y_{x'}} +
\sqrt{\epsilon} \bra{y'}^{Y_{y'}}) \\
& = &
\sum_{xx'} \sum_y s_{xy} s^*_{x'y}
(\sqrt{1-2\epsilon} \ket{x}^X + \sqrt{\epsilon} \ket{x}^{X_x} +
 \sqrt{\epsilon} \ket{x}^{X_y}) \\ 
&   &
~~~~~~~~~~~~~~~~~~~~~~~~~~~~~
(\sqrt{1-2\epsilon} \bra{x'}^X + \sqrt{\epsilon} \bra{x'}^{X_x} +
 \sqrt{\epsilon} \bra{x'}^{X_y}) \\
& = &
(\psi'')^{X'} + \phi^{X'},
\end{eqnarray*}
where 
\begin{eqnarray*}
\lefteqn{(\psi'')^{X'}} \\ 
& := &
\sum_{xx'} (\sum_y s_{xy} s^*_{x'y}) 
(\sqrt{1-2\epsilon} \ket{x}^X + \sqrt{\epsilon} \ket{x}^{X_x}) \\
&   &
~~~~~~~~~~~~~~~~~~~~~~~~
(\sqrt{1-2\epsilon} \bra{x'}^X + \sqrt{\epsilon} \bra{x'}^{X_x})
\end{eqnarray*}
is an isometric tilt of $\psi^X$ and
\begin{eqnarray*}
\lefteqn{\phi^{X'}} \\
& := &
\sum_{xx'} \sum_y s_{xy} s^*_{x'y}
(
(\sqrt{1-2\epsilon} \ket{x}^X + \sqrt{\epsilon} \ket{x}^{X_x}) 
(\sqrt{\epsilon} \bra{x'}^{X_y})  \\
&   &
~~~~~~~~~~~~~~~~~~~~~~~~
{} +
(\sqrt{\epsilon} \ket{x}^{X_y}) 
(\sqrt{1-2\epsilon} \bra{x'}^X + \sqrt{\epsilon} \bra{x'}^{X_x}) \\
&   &
~~~~~~~~~~~~~~~~~~~~~~~~
{} +
\epsilon \ket{x}^{X_y}\bra{x'}
)
\end{eqnarray*}
is a Hermitian matrix.
Observe that the Hilbert spaces $\cX_y$, as $\ket{y}$ runs over the
computational basis of $\cY$, are orthogonal. 
Let $\ket{\psi_y}^X := \sum_x s_{xy} \ket{x}^X$ be the unnormalised
state that results when $Y$ is measured in the computational basis and
the outcome is $y$. 
Define the following scaled isometric embeddings from $\cX$ into 
$\cX'$:
\begin{eqnarray*}
\cT_X(\ket{x}^X) 
& := &
\sqrt{1-2\epsilon} \ket{x}^X + \sqrt{\epsilon} \ket{x}^{X_x}, \\
\cT_y(\ket{x}^X) 
& := &
\ket{x}^{X_y}.
\end{eqnarray*}
It is now easy to see that 
\begin{eqnarray*}
\lefteqn{
\ellinfty{\phi^{X'}}
} \\
& \leq &
\ellinfty{
\sqrt{\epsilon} \sum_y \cT_X \ket{\psi_y}^X \bra{\psi_y} \cT_y^\dag
} 
+
\ellinfty{
\sqrt{\epsilon} \sum_y \cT_y \ket{\psi_y}^X \bra{\psi_y} \cT_X^\dag 
} \\
&   &
{} +
\epsilon \ellinfty{\sum_y \cT_y \ket{\psi_y}^X \bra{\psi_y} \cT_y^\dag} \\
& \leq &
2 \sqrt{\epsilon |\cY|} 
(\max_y \ellinfty{\cT_X \ket{\psi_y}^X \bra{\psi_y}}) 
+
\epsilon (\max_y \ellinfty{\ket{\psi_y}^X \bra{\psi_y}}) \\
&   =  &
2 \sqrt{\epsilon |\cY|} (\max_y \elltwo{\ket{\psi_y}^X}^2) +
\epsilon (\max_y \elltwo{\ket{\psi_y}^X}^2) \\
& \leq &
3 \sqrt{\epsilon |\cY|} (\max_y \elltwo{\ket{\psi_y}^X}^2).
\end{eqnarray*}
Recall that $\elltwo{\ket{\psi_y}^X}^2$ is the 
probability of obtaining the outcome $y$. Thus, if the probability
distribution of the outcomes obtained by measuring $Y$ has low
$\ell_\infty$-norm or in other words is `smooth', then 
$\ellinfty{(\psi')^{X'} - (\psi'')^{X'}}$ is
small. The import of this statement is that though $(\psi')^{X'Y'}$ 
is tilted along both `x' and `y' directions, $(\psi')^{X'}$ is primarily
tilted along `x' direction only since it is close to
$(\psi'')^{X'}$. This desired property is what we mean by 
{\em smoothing $\Tr_{Y'}[\cdot]$}. 

Note however that smoothing
$\Tr_{Y'}[\cdot]$ in the above example is possible only if 
the probability distribution obtained
by measuring $Y$ is `smooth'. This may not be the case for an arbitrary
state $\psi^{XY}$. So we {\em augment} $\psi^{XY}$ to a state
$\hat{\psi}^{\hX \hY}$ defined on the Hilbert space $\hcX \otimes \hcY$ as
follows:
\[
\hat{\psi}^{\hX \hY} :=
|\cL|^{-2} \psi^{XY} \otimes \one^{L_X} \otimes \one^{L_Y},
\]
where $\cL_X$, $\cL_Y$ are new Hilbert spaces of the same dimension
$|\cL|$, $\hcX := \cX \otimes \cL_X$, $\hcY := \cY \otimes \cL_Y$.
We can now tilt $\hat{\psi}^{\hX \hY}$ to a state 
$(\hat{\psi}')^{\hX' \hY'}$ in the same fashion as above, where tilting
a mixed state means tilting its constituent pure states. If the dimension
$|\cL|$ of the augmenting system is large enough, the
operation of $\Tr_{Y'}[\cdot]$ indeed achieves smoothing in the sense
described above since
\[
\ellinfty{(\hat{\psi}')^{\hX'} - (\hat{\psi}'')^{\hX'}} \leq
3 \sqrt{\frac{\epsilon |\cY|}{|\cL|}}.
\]
We refer to this entire process as
{\em smoothing and augmentation}.

The above paragraph can be summarised in the observation that though
partial trace can increase the $\ell_\infty$-norm
of a quantum state in general, the increase is negative or only slightly 
positive if the state is highly correlated between the traced out part and
the untraced out part. 
As we will see later on, in general after augmentation it is possible 
to tilt a multipartite quantum state, say
$\rho^{X_1 \cdots X_k}$, along several directions with carefully
chosen amplitudes so as to get a state $(\rho')^{X_1 \cdots X_k}$ with 
the property it is highly correlated across all partitions of the
systems $X_1, \ldots, X_k$. As a consequence, for example, for any
subset $\{\} \neq S \subset [k]$, 
$(\rho')^{X_S} \otimes (\rho')^{X_{\bar{S}}}$ is tilted along
$(S, \bar{S})$ for all practical purposes even though
$(\rho')^{X_{[k]}} := (\rho')^{X_1 \cdots X_k}$ is tilted along $[k]$.
The formal statement and proof of this result, for which we have to
define a more nuanced tilting operation called {\em matrix tilt}, is 
technically intricate
and can be found in Proposition~\ref{prop:cqtypical} below.

The idea of tilting along both `x' and `y' directions combined with
smoothing and augmentation can be used as a 
starting point and further refined, leading finally to 
a proof of the desired inner bound (Equation~\ref{eq:MACregion}) 
for the cq-MAC. We do so with
the following sequence of steps. A full proof is given in 
Section~\ref{sec:cqMAC} below.
\begin{enumerate}

\item
Enlarge the Hilbert space $\cX \otimes \cY \otimes \cZ$ suitably and
consider a `perturbed' version $\cC'$ of the channel $\cC$. The channel
$\cC'$ maps an input pair $(x,y)$ to a state $\rho'_{x,y}$ close to
$\rho_{x,y}$. This gives a state
$(\rho')^{XYZ}$ close to $\rho^{XYZ}$. A codebook achieving a certain
rate point for channel $\cC'$ with a certain error achieves the same 
rate point for channel $\cC$ with only slightly more error. The state
$\rho'_{x,y}$ is obtained by augmenting $\rho_{x,y}$ and then tilting
it along `x' and `y' directions;

\item
Due to smoothing and augmentation described above,
$(\rho')^{XZ} \otimes (\rho')^Y$ and
$(\rho')^{YZ} \otimes (\rho')^X$
are extremely close to the states 
$\rho^{XZ} \otimes \rho^Y$ and
$\rho^{YZ} \otimes \rho^X$
tilted only along `x' and only along `y' respectively. The state
$(\rho')^X \otimes (\rho')^Y \otimes (\rho')^Z$ is extremely close
to the untilted state
$\rho^X \otimes \rho^Y \otimes \rho^Z$;

\item
We now define the `intersection' POVM element $\Pi$
to be the projector onto the complement of the tilted span of the 
complements of the supports of $\Pi^X$, $\Pi^Y$, $\Pi^{X,Y}$. 
Using Proposition~\ref{prop:tiltedspan}, we can show that
Property~\arabic{type1error} holds
with lower bound $1 - O(\epsilon)$ under the projector $\Pi$;

\item
Finally, we notice from the construction of $\Pi$ that 
Properties~\arabic{type2errorx}, 
\arabic{type2errory}, \arabic{type2errorxy} are easily
satisfied by $\Pi$ for 
$\rho^{XZ} \otimes \rho^Y$ tilted only along `x',
$\rho^{YZ} \otimes \rho^X$ tilted only along `y' and
untilted $\rho^X \otimes \rho^Y \otimes \rho^Z$,
with upper bounds exactly the same as in the ideal case. Since the states
$(\rho')^{XZ} \otimes (\rho')^Y$,
$(\rho')^{YZ} \otimes (\rho')^X$,
$(\rho')^X \otimes (\rho')^Y \otimes (\rho')^Z$
are extremely close to the tilted versions of
$\rho^{XZ} \otimes \rho^Y$,
$\rho^{YZ} \otimes \rho^X$,
$\rho^X \otimes \rho^Y \otimes \rho^Z$ described above,
we finally conclude that they satisfy 
Properties~\arabic{type2errorx}, 
\arabic{type2errory}, \arabic{type2errorxy} under $\Pi$
with upper bounds almost as good as in the ideal case;

\item
Thus, $\Pi$ can be used by Charlie in order to
construct a decoding algorithm for the original channel $\cC$ 
that achieves any rate pair in the 
region described in Equation~\ref{eq:MACregion} with average 
error probablity 
at most $O(\epsilon)$.

\end{enumerate}

The strategy outlined above enables us to prove the following 
{\em one-shot quantum joint typicality lemma}.
\begin{quote}
{\em 
Let $A_1 \ldots A_k$ be a $k$-partite quantum system with each 
$A_i$ isomorphic to a Hilbert space $\cH$.
Let $\rho^{A_1 \ldots A_k}$ be a
quantum state in $A_1 \ldots A_k$. 
For a subset $S \subseteq [k]$, let 
$A_S$ denote the systems $\{A_s: s \in S\}$.
Let $\rho^{A_S}$ denote the marginal state on $A_S$ obtained by tracing
out the systems in $\bar{S} := [k] \setminus S$ from 
$\rho^{A_1 \ldots A_k}$.
Let  $0 < \epsilon < 1$. Let $\cK$ be a Hilbert space of dimension
\[
\frac{2^{13(k+1)} (2 |\cH|)^{6k(k+1)}}{(1 - \epsilon)^{6(k+1)}}.
\]
There exists 
a state $\tau^{\cK^{\otimes [k]}}$ independent of $\rho^{A_{[k]}}$,
a state $(\rho')^{A'_{[k]}}$, and
a POVM element $\Pi'^{A'_{[k]}}$ on $A'_1 \ldots A'_k$ where
$A'_i \cong A_i \otimes \cK$, 
with the following properties:
\begin{enumerate}
\item
$
\ellone{
(\rho')^{A'_{[k]}} - \rho^{A_{[k]}} \otimes 
\tau^{\cK^{\otimes [k]}}
} \leq 
2^{\frac{k}{2}+1} \epsilon^{\frac{1}{4k}};
$

\item
$
\Tr [(\Pi')^{A'_{[k]}} (\rho')^{A'_{[k]}}] \geq 
1 - 2^{8 (k+1)^k} \epsilon^{\frac{1}{4}} 
 - 2^{\frac{k}{2} + 1} \epsilon^{\frac{1}{4k}};
$

\item
For every set $S$, $\{\} \neq S \subset [k]$, 
\[
\Tr [(\Pi')^{A'_{[k]}} 
     ((\rho')^{A'_{S}} \otimes (\rho')^{A'_{\bar{S}}})
    ] \leq 
2^{-D_H^\epsilon(\rho^{A_{[k]}} \| 
                 \rho^{A_S} \otimes \rho^{A_{\bar{S}}}
                ).
  }
\]
\end{enumerate}
}
\end{quote}

\subsection{Related work}
The bottleneck of simultaneous decoding was first pointed out
by Fawzi {\it et al\/}~\cite{fawzi:interference} in their paper on the
classical quantum interference channel. 
The authors obtained a simultaneous decoder for the two sender
cq-MAC in the asymptotic iid setting. At around the same time,
Xu and Wilde~\cite{xu:qmac} obtained a simultaneous decoder for 
sending classical
information over a two sender entanglement assisted quantum MAC in
the asymptotic iid setting.
Subsequently, Sen~\cite{sen:interference} 
obtained simultaneous decoders for two sender and a restricted type
of three sender cq-MACs in the asymptotic iid setting.
The simultaneous decoder for the restricted three sender cq-MAC sufficed 
to obtain the
so-called Chong-Motani-Garg-El Gamal inner bound for the classical
quantum interference channel in the asymptotic iid setting, matching
the best classical result for this channel.
None of these constructions 
extended to general three sender MACs, nor to the asymptotic non-iid
or the one shot settings. 
Simultaneous decoders have been recently used in several papers 
e.g. \cite{qi:simultaneous}, but the inner bounds obtained there are
suboptimal compared to known inner bounds when restricted to the
asymptotic iid setting. In a companion paper \cite{sen:simultaneous},
we apply our quantum joint typicality lemma to obtain one shot
simultaneous decoders for several multiterminal classical quantum
channels like broadcast and interference channels, matching the best
inner bounds known for these channels in the classical asymptotic iid
setting.

On a different note Dutil~\cite{dutil:phd}
pointed out a related bottleneck of `simultaneous smoothing',
which was further discussed by Drescher and 
Fawzi~\cite{drescher:simultaneous}. 
The aim of simultaneous smoothing is to replace 
a multipartite quantum state, e.g. $\rho^{AB}$, by another 
`perturbed' state
$(\rho')^{AB}$ which is $\delta \equiv \delta(\epsilon)$-close to 
it in trace distance such that 
\begin{eqnarray*}
\Hmin(A)_{\rho'} & \leq & c(\delta) \Hmin^\epsilon(A)_{\rho'}, \\
\Hmin(B)_{\rho'} & \leq & c(\delta) \Hmin^\epsilon(B)_{\rho'}, \\
\Hmin(AB)_{\rho'} & \leq & c(\delta) \Hmin^\epsilon(AB)_{\rho'},
\end{eqnarray*}
where $\delta(\epsilon)$ is a dimension independent function of 
$\epsilon$, 
$c(\delta)$ is a dimension independent function of $\delta$,
$\Hmin(\rho) := -\log \ellinfty{\rho}$,
$
\Hmin^\epsilon(\rho) := 
\max_{\rho': \ellone{\rho' - \rho} \leq \delta}
\Hmin(\rho').
$
The main difference between simultaneous smoothing and our notion of 
joint typicality is that joint typicality requires the existence of
a single perturbed state as well as a single POVM satisfying certain
properties, whereas there is no corresponding notion of POVM for 
simultaneous smoothing. Simultaneous smoothing seems more relevant
for so-called `covering type' problems whereas joint typicality seems 
more relevant for so-called `packing type' problems in Shannon theory.

Our joint typicality lemma can be viewed through the lens of quantum
hypothesis testing as a problem where there is one positive and
a set of negative hypotheses, each negative hypothesis begin 
a convex combination of tensor 
products of marginals of the positive hypothesis.
Related work has been done by 
Berta, Brand\~{a}o and Hirche~\cite{berta:hypothesis}. The main 
difference between their
work and ours is that they work in the asymptotic iid setting and try 
to find a single POVM that minimises the largest probability
of accepting a negative hypothesis from the negative hypothesis set 
while accepting all 
positive hypotheses from the positive hypothesis set with probability 
close to one. On the other hand, we work
in the one shot setting and
find a single POVM that accepts the positive hypothesis
with probability close to one and rejects all the negative hypotheses
with probabality at least as much as the optimal POVM for 
{\em each individual negative hypothesis}. Viewed in this fashion, our
result may seem to be more general. But actually the two results are
incomparable because the negative hypotheses in our result are of 
the form of tensor products of marginals of a state close to the positive
hypothesis state, wheares the negative hypotheses in Berta {\it et al\/} 
are of the form of convex combinations of iid states from any a priori 
given set.

\subsection{Organisation of the paper}
In the next section, we state some preliminary facts which will be
useful throughout the paper. Section~\ref{sec:tiltedspan} defines
and proves the union properties of the tilted span and $A$-tilted
span of subspaces. Section~\ref{sec:cqMACBaby} gives a self-contained
proof of a simplified one shot inner bound for the quantum multiple
access channel with two senders.
In Section~\ref{sec:qtypical} we prove the so-called
one-shot quantum joint typicality lemma, which manages to construct
a robust notion of intersection of POVM elements achieving a given set 
of hypothesis testing mutual information quantitites. The essential part of
the proof of the one-shot quantum joint typicality lemma is 
encapsulated into a proposition which is proved in 
\ref{sec:proofcqtypical}
We formally prove the achievability of the rate region described
by Equation~\ref{eq:MACregion} in Section~\ref{sec:cqMAC}.
Finally, we make some concluding remarks and list some open problems
in Section~\ref{sec:conclusions}.

\section{Preliminaries}
\label{sec:preliminaries}
All Hilbert spaces in this paper are finite dimensional. 
The symbol $\oplus$ always denotes the orthogonal direct
sum of Hilbert spaces.
For a subspace $X$ of a Hilbert space $\cH$, 
let $\Pi^{\cH}_X$ denote the orthogonal projection in $\cH$ 
onto $X$. When clear from the context, we may use $\Pi_X$ instead of
$\Pi^{\cH}_X$ for brevity of notation.

By a quantum 
state or a density matrix in a Hilbert space $\cH$, we mean a Hermitian,
positive semidefinite linear operator on $\cH$ with trace equal to one.
By a POVM
element $\Pi$ in $\cH$, we mean a Hermitian
positive semidefinite linear operator on $\cH$ with eigenvalues between
$0$ and $1$. Stated in terms of inequalities on Hermitian operators,
$\zero \leq \Pi \leq \one$, where $\zero$, $\one$ denote the zero and
identity operators on $\cH$.
In what follows, we shall use several times the Gelfand-Naimark theorem
which is stated below for completeness.
\begin{fact}[Gelfand-Naimark]
\label{fact:gelfandnaimark}
Let $\Pi$ be a POVM element in a Hilbert space $\cH$. 
For any integer $d \geq 2$,
there exists an orthogonal 
projection $\Pi'$ in $\cH \otimes \C^d$ such that 
\[
\Tr [\Pi A] = 
\Tr [\Pi' (A \otimes \ketbra{0}^{\C^d})]
\]
for all linear operators $A$ acting on $\cH$. Above, $\ket{0}$ is a
fixed vector, independent of $\Pi$ and $A$, in $\C^d$.
\end{fact}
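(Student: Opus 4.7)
The plan is to produce $\Pi'$ explicitly as a block matrix, essentially performing the standard Naimark dilation. First I reduce to the case $d=2$: once we have a rank-two construction $\Pi'_0$ on $\cH\otimes\C^2$ that does the job for the fixed vector $\ket{0}\in\C^2$, we can view $\cH\otimes\C^d=(\cH\otimes\C^2)\oplus(\cH\otimes\C^{d-2})$ and set $\Pi' := \Pi'_0 \oplus \zero$. Identifying the fixed vector $\ket{0}\in\C^d$ with $\ket{0}\in\C^2$ under this embedding, both properties (being an orthogonal projection and matching traces on operators of the form $A\otimes\ketbra{0}$) are preserved, because appending a zero block to a projector yields a projector, and the operator $A\otimes\ketbra{0}^{\C^d}$ is supported entirely in the first summand.

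For the $d=2$ case, let $Q := \sqrt{\Pi(\one-\Pi)}$, which is well defined and self-adjoint because $\Pi$ and $\one-\Pi$ are commuting positive semidefinite contractions with $0\leq \Pi(\one-\Pi)\leq \one$. I then define
\[
\Pi'_0 \,:=\, \begin{pmatrix} \Pi & Q \\ Q & \one-\Pi \end{pmatrix},
\]
viewed as an operator on $\cH\otimes\C^2 \cong \cH\oplus\cH$ with the standard basis $\{\ket{0},\ket{1}\}$ of $\C^2$. The matrix $\Pi'_0$ is manifestly Hermitian. To check idempotency I compute the four blocks of $(\Pi'_0)^2$: the diagonal blocks become $\Pi^2+Q^2 = \Pi^2+\Pi(\one-\Pi)=\Pi$ and $Q^2+(\one-\Pi)^2 = \Pi(\one-\Pi)+(\one-\Pi)^2 = \one-\Pi$, while the off-diagonal blocks equal $\Pi Q + Q(\one-\Pi) = Q(\Pi+\one-\Pi)=Q$, using throughout that $\Pi$, $\one-\Pi$, and $Q$ pairwise commute (they are all functions of $\Pi$). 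Hence $(\Pi'_0)^2=\Pi'_0$, so $\Pi'_0$ is an orthogonal projection.

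Finally I verify the trace identity. For any linear operator $A$ on $\cH$, the operator $A\otimes\ketbra{0}^{\C^2}$ has the block form $\bigl(\begin{smallmatrix} A & 0 \\ 0 & 0 \end{smallmatrix}\bigr)$, so
\[
\Pi'_0\,(A\otimes\ketbra{0}^{\C^2}) \,=\, \begin{pmatrix} \Pi A & 0 \\ Q A & 0 \end{pmatrix},
\]
whose trace equals $\Tr[\Pi A]$ as claimed. Combining this with the direct-sum reduction above yields the statement for arbitrary $d\geq 2$.

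I do not anticipate any real obstacle here: the only point that requires care is ensuring the commutation relations used in the $(\Pi'_0)^2=\Pi'_0$ check, which is handled by taking $Q$ to be the nonnegative square root of $\Pi(\one-\Pi)$ via the functional calculus of $\Pi$, so that $\Pi$, $\one-\Pi$, and $Q$ all lie in the abelian algebra generated by $\Pi$ and hence commute.
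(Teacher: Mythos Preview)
Your proof is correct: the explicit $2\times 2$ block dilation with $Q=\sqrt{\Pi(\one-\Pi)}$ is the standard Naimark construction, the idempotency check goes through exactly as you wrote because all three operators are functions of $\Pi$ and hence commute, and the reduction from general $d$ to $d=2$ by padding with a zero block is sound.

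As for comparison with the paper: the paper does not actually supply a proof of this fact. It is stated in the preliminaries as a known result (attributed to Gelfand--Naimark) and used as a black box thereafter. So there is no approach to compare against; your construction is the usual textbook one and would be an appropriate way to fill in the omitted argument.
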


Since quantum probability is a generalisation of classical probability,
one can talk of a so-called `classical POVM element', as done in the
introduction. Suppose we have a 
probability distribution $p(x)$, $x \in \cX$. A classical POVM element
on $\cX$ is a function $f: \cX \rightarrow [0,1]$. The probability of
accepting the POVM element $f$ is then $\sum_{x: x \in \cX} p(x) f(x)$.
One can continue to use the operator formalism for classical probablity
with the understanding that density matrices and POVM elements are now
diagonal matrices.

Let $\elltwo{v}$ denote the $\ell_2$-norm of a vector $v \in \cH$.
For an operator $A$ on $\cH$, we use $\ellone{A}$ to denote the
Schatten $\ell_1$-norm, also known as trace norm, of $A$, 
which is nothing but
the sum of singular values of $A$. We use $\ellinfty{A}$ to denote
the Schatten $\ell_\infty$-norm, also known as operator norm, 
of $A$, which is 
nothing but the largest singular value of $A$. For operators $A$, $B$
on $\cH$, we have the inequality 
\[
|\Tr [A B]| \leq \ellone{A B} \leq 
\min\{\ellone{A} \ellinfty{B}, \ellinfty{A} \ellone{B}\}.
\]

Let $\cX$ be a finite set. By a {\em classical-quantum} state on
$\cX \cH$ we mean a quantum state of the form
$
\rho^{\cX \cH} =
\sum_{x \in \cX} p_x \ketbra{x}^{\cX} \otimes \rho_x^{\cH},
$
where $x$ ranges over computational basis vectors of $\cX$ viewed
as a Hilbert space, $\{p_x\}_{x \in \cX}$ is a probability distribution
on $\cX$ and the operators $\rho_x$, $x \in \cX$ are quantum states in
$\cH$. We will also use the terminology that $\rho$ is classical on
$\cX$ and quantum on $\cH$.

For a positive integer $k$, we use $[k]$ to denote the set 
$\{1, 2, \ldots, k\}$. If $k = 0$, we define $[k] := \{\}$.
Let $c$ be a non-negative integer and $k$ a positive integer.
We shall study systems that are classical on $\cX^{\otimes [c]}$ and
quantum on $\cH^{\otimes [k]}$, where $[c]$ and $[k]$ are
treated as disjoint sets. We will use the notation
$[c] \cupdot [k]$ to denote the union $[c] \cup [k]$ keeping in mind
that $[c]$, $[k]$ are disjoint.
A {\em subpartition} $(S_1, \ldots, S_l)$ of $[k]$, denoted by
$(S_1, \ldots, S_l) \vdash [k]$, is a collection of 
non-empty, pairwise disjoint subsets of $[k]$. Note that the
order of subsets does not matter in defining a subpartition 
$(S_1, \ldots, S_l)$. 
For a subset $T \subseteq [c] \cupdot [k]$ satisfying 
$T \cap [k] \neq \{\}$,
we use the notation $(S_1, \ldots, S_l) \vdash \vdash T$ to
denote a so-called {\em pseudosubpartition} of $T$ intersecting 
$[k]$ non-trivially i.e. for
$i \in [l]$, $S_i \subseteq T$, 
$S_i \cap [k] \neq \{\}$ and
for $i \neq j \in [l]$, $S_i \cap S_j \cap [k] = \{\}$. 
There is a natural
partial order called the {\em refinement partial order} on the 
pseudosubpartitions of $T$ intersecting $[k]$ non-trivially, where 
$(S_1, \ldots, S_l) \preceq (R_1, \ldots, R_m)$ if
there is a function $f: [l] \rightarrow [m]$ such that
$S_i \subseteq R_{f(i)}$ for all $i \in [l]$. Under this partial order,
the pseudosubpartitions form a lattice with minimum element being the
{\em empty pseudosubpartition} with $l = 0$, and maximum element being the
{\em full block} with $l = 1$ and $S_1 = T$. It is easy to see that
the number of pseudosubpartitions of $T$ is at most
$2^{|T \cap [k]| |T \cap [c]|} (|T \cap [k]|+1)^{|T \cap [k]|}$, and
the number of pseudosubpartitions of $T$ with $l$ blocks is at most
$\frac{2^{l |T \cap [c]|} (l+1)^{|T \cap [k]|}}{l!}$.

Suppose we have a $k$-partite quantum system $A_1 \cdots A_k$.
For a non-empty set $S \subseteq [k]$, let 
$A_S$ denote the systems $\{A_s: s \in S\}$. 
Similarly, if $\rho$ is a quantum state
in $A_{[k]}$, $\rho^{A_S}$ will denote its marginal in $A_S$. Thus,
$\rho \equiv \rho^{A_{[k]}}$. 
If $\vecx$ is a computational basis vector of $\cX^{\otimes [c]}$,
for a subset $S \subseteq [c]$,
$\vecx_S$ will denote its restriction to the system $\cX^{\otimes S}$. 
Thus, $\vecx \equiv \vecx_{[c]}$. We also use $\vecx_S$ to denote
computational basis vectors of $\cX^{\otimes S}$ without reference to
the systems in $[c] \setminus S$. For a subset 
$S \subseteq [c] \cupdot [k]$ and a computational basis vector
$\vecl$ of $\cL^{\otimes [c] \cupdot [k]}$, the notation
$\vecl_S$ has the analogous meaning.
The notation
$(\cdot)^{\otimes S}$ denotes a tensor product only for the coordinates
in $S$. 

We recall the definition of the {\em hypothesis testing relative
entropy} as given by Wang and Renner~\cite{wang:DepsH}. Very similar
quantities were defined and used in earlier 
works~\cite{buscemi:qchannel, brandao:entanglement}.
\begin{definition}
\label{def:hyptestingrelentropy}
Let $\alpha$, $\beta$ be two quantum states in the same Hilbert space.
Let $0 \leq \epsilon < 1$.
Then the {\em hypothesis testing relative entropy} of $\alpha$ with respect
to $\beta$ is defined by
\[
D^{\epsilon}_H(\alpha \| \beta) :=
\max_{\Pi: \Tr [\Pi \alpha] \geq 1 - \epsilon}
-\log \Tr [\Pi \beta],
\]
where the maximisation is over all POVM elements $\Pi$ acting on 
the Hilbert space.
\end{definition}
The definition quantifies the minimum probability of `accepting' $\beta$
by a POVM element $\Pi$ that `accepts' $\alpha$ with probability
at least $1 - \epsilon$.
From the definition, it is
easy to see that if $\epsilon < \epsilon'$,
$D^{\epsilon}_H(\alpha \| \beta) < D^{\epsilon'}_H(\alpha \| \beta)$.
We will require the following property of the so-called
{\em hypothesis testing mutual information}.
\begin{proposition}
\label{prop:maxIH}
Let $0 \leq \epsilon < 1$.
Let $\rho^{AB}$ be a quantum state in a bipartite system $AB$. Define
the {\em hypothesis testing mutual information} 
$
I^\epsilon_H(A : B)_\rho := 
D^\epsilon_H(\rho^{AB} \| \rho^A \otimes \rho^B).
$
Then,
\[
I^\epsilon_H(A : B)_\rho \leq
2 \log \min\{|A|, |B|\} + 3 \log \frac{1}{1-\epsilon} + 6 \log 3 - 4.
\]
\end{proposition}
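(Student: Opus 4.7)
The plan is to upper-bound $I^\epsilon_H(A:B)_\rho$ by exhibiting, for every POVM element $\Pi$ satisfying $\Tr[\Pi\rho^{AB}]\geq 1-\epsilon$, a lower bound on $\Tr[\Pi(\rho^A\otimes\rho^B)]$ of order $(1-\epsilon)^3/\min\{|A|,|B|\}^2$, with the precise constant $16/729$ so that taking $-\log$ reproduces the stated bound. Without loss of generality I would assume $|A|\leq|B|$ and set $M:=|A|=\min\{|A|,|B|\}$; the problem is then to extract a factor $1/M^2$ even though $\rho^B$ may live in the much larger space $\C^{|B|}$.

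The cornerstone of the argument is the operator inequality $\rho^{AB}\leq M\cdot\rho^A\otimes I^B$. To prove it I would consider the positive semidefinite operator $\omega:=((\rho^A)^{-1/2}\otimes I^B)\,\rho^{AB}\,((\rho^A)^{-1/2}\otimes I^B)$ with the inverse taken on $\mathrm{supp}(\rho^A)$; its partial trace over $B$ equals the projector onto $\mathrm{supp}(\rho^A)$, so $\Tr\omega=\mathrm{rank}(\rho^A)\leq M$, and as $\omega$ is PSD we have $\|\omega\|_\infty\leq\Tr\omega\leq M$, i.e.\ $\omega\leq M\cdot I^{AB}$; conjugating by $(\rho^A)^{1/2}\otimes I^B$ gives the claim. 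This at once yields $\Tr[\Pi(\rho^A\otimes I^B)]\geq(1-\epsilon)/M$.

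The second step converts $I^B$ into $\rho^B$ via a typical-subspace truncation. Let $P^B:=\sum_{j:\,q_j\geq\theta/M}|j\rangle\langle j|$, for a parameter $\theta$ of order $(1-\epsilon)^2/9$, where $\{q_j\}$ are the eigenvalues of $\rho^B$; then $\rho^B\geq(\theta/M)P^B$ and $\mathrm{rank}(P^B)\leq M/\theta$. The inequality $\rho^A\otimes\rho^B\geq(\theta/M)\,\rho^A\otimes P^B$ reduces the problem to lower-bounding $\Tr[\Pi(\rho^A\otimes P^B)]$. For this I would apply the same operator-inequality argument as above, now to the restricted state $(I^A\otimes P^B)\rho^{AB}(I^A\otimes P^B)$ against $\rho^A\otimes P^B$, combined with the Gentle Measurement Lemma to control the loss incurred by replacing $\rho^{AB}$ by its projection onto the typical subspace. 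This should deliver $\Tr[\Pi(\rho^A\otimes P^B)]$ of order $(1-\epsilon)^2/M$, and hence $\Tr[\Pi(\rho^A\otimes\rho^B)]$ of order $(1-\epsilon)^3/M^2$, as desired.

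The main obstacle is precisely this truncation-transfer step. Because $\rho^B$ can be supported on up to $|B|\gg M$ dimensions, the discarded mass $\Tr[\rho^B(I^B-P^B)]$ may a priori be close to $1$, and a naive Gentle-Measurement bound applied to $\rho^B$ alone would cost a factor of $|B|$ rather than $M$. The resolution is to exploit the joint structure of $\rho^{AB}$, via the operator inequality $\rho^{AB}\leq M\cdot\rho^A\otimes I^B$ applied to the block $(I^A\otimes P^B)\rho^{AB}(I^A\otimes P^B)$, to argue that the overlap of $\Pi$ with the discarded part of $B$, as seen through $\rho^{AB}$, is controlled by $M$ rather than $|B|$. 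Carefully balancing the three $(1-\epsilon)$-factors produced by the gentle-measurement applications against the cutoff $\theta=(1-\epsilon)^2/9$ then yields the precise constants $3\log\frac{1}{1-\epsilon}$ and $6\log 3-4$ in the bound.
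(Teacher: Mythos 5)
There is a genuine gap at the truncation step, and it is precisely the obstacle you flag at the end of your write-up. Working directly with $\rho^B$ gives you no control over its spectrum: $\rho^B$ may have rank $|B| \gg |A|$ with every eigenvalue equal to $1/|B| \ll \theta/M$, in which case $P^B = 0$ and the reduction $\rho^B \geq (\theta/M)P^B$ is vacuous. Concretely, take $|A|=2$ and $\rho^{AB}=\frac{1}{2}\sum_{i=1}^{2}\ketbra{i}^{A}\otimes\sigma_i^{B}$ with $\sigma_1,\sigma_2$ maximally mixed on orthogonal subspaces of dimension $|B|/2$: then $\rho^B=\one^B/|B|$, every eigenvalue is below the cutoff once $|B|>M/\theta$, and the discarded mass $\Tr[\rho^B(\one-P^B)]$ equals $1$, so the Gentle Measurement Lemma returns nothing. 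Your proposed fix — applying the operator inequality $\rho^{AB}\leq M\,\rho^A\otimes\one^B$ to the block $(\one^A\otimes P^B)\rho^{AB}(\one^A\otimes P^B)$ — cannot repair this, because that inequality only controls the $A$-marginal and says nothing about how much of $\rho^{AB}$ (or of $\Pi$) sits in the low-weight sector of $\rho^B$. Your first step, namely $\rho^{AB}\leq\mathrm{rank}(\rho^A)\,\rho^A\otimes\one^B$, is correct and does give $\Tr[\Pi(\rho^A\otimes\one^B)]\geq(1-\epsilon)/M$; but the passage from $\one^B$ to $\rho^B$ is exactly where the (uncontrolled) rank of $\rho^B$ must enter, and without further input the argument cannot proceed.

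The paper avoids the obstacle by purifying. With $\ket{\rho}^{ABC}$ a purification, the Schmidt decomposition across the cut $(A,BC)$ has rank at most $|A|$, and — this is the second key point — the reduced states $\rho^A$ and $\rho^{BC}$ then share the same spectrum $\{p_a\}$ with paired eigenvectors $\ket{x_a}^A$, $\ket{y_a}^{BC}$. One truncates the Schmidt coefficients (not the eigenvalues of $\rho^B$) to $p_a\geq\delta(1-\epsilon)/|A|$; since there are only $|A|$ of them, the discarded mass is at most $\delta(1-\epsilon)$, so a constant fraction survives. Then the ``diagonal'' bound $\rho^A\otimes\rho^{BC}\geq\sum_{a\in\hat A}p_a^2\,\ketbra{x_a}^A\otimes\ketbra{y_a}^{BC}$, whose trace against $\Pi\otimes\one^C$ equals $\Tr[\Pi(\rho^A\otimes\rho^B)]$, delivers the $(1-\epsilon)^3/|A|^2$ lower bound after a Cauchy--Schwarz step. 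Both the finite Schmidt rank and the matching of eigenvectors with identical coefficients are what your route lacks; neither is available for the pair $(\rho^A,\rho^B)$ without the purifying register.
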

\begin{proof}
Let $C$ be a third system and $\rho^{ABC}$ a purification of $\rho^{AB}$.
Let $\Pi^{AB}$ be the optimising POVM element for 
$
D^\epsilon_H(\rho^{AB} \| \rho^A \otimes \rho^B).
$
Define $\Pi^{ABC} := \Pi^{AB} \otimes \one^C$. Let $|A| \leq |B|$ without
loss of generality.
We will show that 
\[
\Tr [\Pi^{ABC} (\rho^A \otimes \rho^{BC})] \geq 
\frac{2^4 (1-\epsilon)^3}{3^6 |A|^2} 
\]
which would prove the proposition.

Let $\ket{\rho}^{ABC}$ denote the pure state $\rho^{ABC}$ in ket vector
form. Let
\[
\ket{\rho}^{ABC} =
\sum_{a=1}^{|A|} \sqrt{p_a} \, \ket{x_a}^A \otimes \ket{y_a}^{BC}
\]
be a Schmidt decomposition for the bipartition $(A, BC)$, where
$\sum_{a=1}^{|A|} p_a = 1$. 
Fix $0 < \delta < 1$.
Define
\[
\hat{A} := 
\left\{a \in [|A|]:
p_a \geq \frac{\delta (1-\epsilon)}{|A|}
\right\}.
\]
Using the triangle inequality, we get
\begin{eqnarray*}
\sqrt{1-\epsilon} 
& \leq &
\elltwo{\Pi^{ABC} \ket{\rho}^{ABC}}  \\
& \leq &
\elltwo{
\Pi^{ABC} 
(
\sum_{a \in \hat{A}} \sqrt{p_a} \ \,
\ket{x_a}^A \otimes \ket{y_a}^{BC}
)
} \\
&     &
{} +
\elltwo{
\Pi^{ABC} 
(
\sum_{a \not \in \hat{A}} \sqrt{p_a} \
\ket{x_a}^A \otimes \ket{y_a}^{BC}
)
} \\
& \leq &
\sum_{a \in \hat{A}} \sqrt{p_a} \
\elltwo{\Pi^{ABC} (\ket{x_a}^A \otimes \ket{y_a}^{BC})} \\
&      &
{} +
\elltwo{
\sum_{a \not \in \hat{A}} \sqrt{p_a} \
\ket{x_a}^A \otimes \ket{y_a}^{BC}
} \\
&   =  &
\sum_{a \in \hat{A}} \sqrt{p_a} \ 
\elltwo{\Pi^{ABC} (\ket{x_a}^A \otimes \ket{y_a}^{BC})} +
\sqrt{\sum_{a \not \in \hat{A}} p_a} \\
&   <  &
\sum_{a \in \hat{A}} \sqrt{p_a} \
\elltwo{\Pi^{ABC} (\ket{x_a}^A \otimes \ket{y_a}^{BC})} +
\sqrt{\delta (1 - \epsilon)},
\end{eqnarray*}
which implies that 
\[
\sum_{a \in \hat{A}} \sqrt{p_a} \
\elltwo{\Pi^{ABC} (\ket{x_a}^A \otimes \ket{y_a}^{BC})} >
\sqrt{1-\epsilon} (1 - \sqrt{\delta}).
\]
Using the Cauchy-Schwarz inequality, we get
\[
\sqrt{1-\epsilon} (1 - \sqrt{\delta}) <
\sqrt{\sum_{a \in \hat{A}} p_a} \,
\sqrt{
\sum_{a \in \hat{A}} 
\elltwo{\Pi^{ABC} (\ket{x_a}^A \otimes \ket{y_a}^{BC})}^2
} 
\]
which implies that
\[
\sqrt{
\sum_{a \in \hat{A}} 
\elltwo{\Pi^{ABC} (\ket{x_a}^A \otimes \ket{y_a}^{BC})}^2
} >
\sqrt{1-\epsilon} (1 - \sqrt{\delta}).
\]

Now,
\begin{eqnarray*}
\lefteqn{
\Tr [\Pi^{ABC} (\rho^A \otimes \rho^{BC})]
} \\
& = &
\Tr \left[
\Pi^{ABC}
\left(\sum_{a=1}^{|A|} p_a \ketbra{x_a}\right)^A \otimes
\left(\sum_{a'=1}^{|A|} p_{a'} \ketbra{y_{a'}}\right)^{BC}
\right] \\
& \geq &
\sum_{a \in \hat{A}} p_a^2
\Tr [
\Pi^{ABC}
(\ketbra{x_a}^A \otimes \ketbra{y_a}^{BC})
] \\
&  =   &
\sum_{a \in \hat{A}} p_a^2
\elltwo{
\Pi^{ABC} (\ket{x_a}^A \otimes \ket{y_a}^{BC})
}^2 \\
& \geq &
\frac{\delta^2 (1 - \sqrt{\delta})^2 (1 - \epsilon)^3}{|A|^2}.
\end{eqnarray*}
Taking $\delta = 4/9$ gives the largest lower bound above, and
completes the proof of the proposition.
\end{proof}

We will also need the so-called {\em non-commutative union bound} first
proved by Sen~\cite{sen:interference}, and later improved by
Gao~\cite{gao:union} and Oskouei {\it et al\/}~\cite{oskouei:union}.
\begin{fact}[Noncommutative union bound]
\label{fact:noncommutativeunionbound}
Let $\rho$ be a positive semidefinite operator on a Hilbert 
space $\cH$ and $\Tr \rho \leq 1$. 
Let $\Pi'_1, \ldots, \Pi'_k$ be
orthogonal projectors in $\cH$. Define $\Pi_i := \one - \Pi'_i$. Then,
\[
\Tr [\Pi'_k \cdots \Pi'_1 \rho \Pi'_1 \cdots \Pi'_k] \geq 
\Tr \rho - 4 \sum_{i=1}^k \Tr [\rho \Pi_i].
\]
\end{fact}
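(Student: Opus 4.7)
The plan is to pass to a purification of $\rho$, turning the trace inequality into a squared-norm inequality for a pure state, and then to exploit the recursive structure of sequential projections via telescoping.

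Let $|\psi\rangle \in \cH \otimes \cH_E$ be a purification of $\rho$ with $\||\psi\rangle\|^2 = \Tr\rho$, and write $\tilde{\Pi}'_i := \Pi'_i \otimes \one_E$, $\tilde{\Pi}_i := \Pi_i \otimes \one_E = \one - \tilde{\Pi}'_i$, and $|\psi_j\rangle := \tilde{\Pi}'_j \cdots \tilde{\Pi}'_1 |\psi\rangle$. Then $\Tr[\rho \Pi_i] = \|\tilde{\Pi}_i |\psi\rangle\|^2$ and $\Tr[\Pi'_k \cdots \Pi'_1 \rho \Pi'_1 \cdots \Pi'_k] = \||\psi_k\rangle\|^2$. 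Since each $\tilde{\Pi}'_j$ is an orthogonal projection, the Pythagorean identity gives $\||\psi_{j-1}\rangle\|^2 - \||\psi_j\rangle\|^2 = \|\tilde{\Pi}_j |\psi_{j-1}\rangle\|^2$, which telescopes to $\Tr\rho - \||\psi_k\rangle\|^2 = \sum_{j=1}^{k} \|\tilde{\Pi}_j|\psi_{j-1}\rangle\|^2$. Thus the fact is equivalent to
\[
\sum_{j=1}^{k} \|\tilde{\Pi}_j |\psi_{j-1}\rangle\|^2 \;\leq\; 4\sum_{j=1}^{k} \|\tilde{\Pi}_j |\psi\rangle\|^2.
\]

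I would then decompose each $|\psi_{j-1}\rangle$ around the reference vector $|\psi\rangle$. The algebraic identity $\one - \tilde{\Pi}'_{j-1} \cdots \tilde{\Pi}'_1 = \sum_{l=1}^{j-1} \tilde{\Pi}_l \tilde{\Pi}'_{l-1} \cdots \tilde{\Pi}'_1$, proved by the telescoping $A_j - A_{j-1} = -\tilde{\Pi}_j A_{j-1}$ for $A_j := \tilde{\Pi}'_j \cdots \tilde{\Pi}'_1$, gives $|\psi\rangle - |\psi_{j-1}\rangle = \sum_{l<j} \tilde{\Pi}_l |\psi_{l-1}\rangle$, so
\[
\tilde{\Pi}_j |\psi_{j-1}\rangle \;=\; \tilde{\Pi}_j |\psi\rangle \;-\; \sum_{l<j} \tilde{\Pi}_j \tilde{\Pi}_l |\psi_{l-1}\rangle.
\]
Applying $\|a - b\|^2 \leq 2\|a\|^2 + 2\|b\|^2$ term by term and summing reduces the target to showing that the off-diagonal residual $\sum_{j=1}^k \bigl\| \sum_{l<j} \tilde{\Pi}_j \tilde{\Pi}_l |\psi_{l-1}\rangle \bigr\|^2$ is itself bounded by $\sum_j \|\tilde{\Pi}_j|\psi\rangle\|^2$.

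The main obstacle is precisely this off-diagonal estimate, since a naive triangle inequality on the inner sum propagates multiplicatively and blows up exponentially in $k$. Following Gao, I would expand the outer square and interchange the order of summation to write the residual as a quadratic form $\sum_{l,l'} \langle \tilde{\Pi}_l |\psi_{l-1}\rangle,\, C_{l,l'} \tilde{\Pi}_{l'}|\psi_{l'-1}\rangle\rangle$ where $C_{l,l'} := \sum_{j > \max(l,l')} \tilde{\Pi}_j$. The cross terms are then controlled by Cauchy--Schwarz, using $\|\tilde{\Pi}_j\|_\infty \leq 1$ and the recursive bound $\|\tilde{\Pi}_l |\psi_{l-1}\rangle\| \leq \|\tilde{\Pi}_l |\psi\rangle\| + \||\psi\rangle - |\psi_{l-1}\rangle\|$, allowing the residual to be self-absorbed back into the diagonal contribution. (One may dispose of the trivial regime $\sum_j \|\tilde{\Pi}_j |\psi\rangle\|^2 \geq \Tr\rho/4$ separately, since there the claimed bound is vacuous.) Substituting the resulting estimate back yields $\sum_j \|\tilde{\Pi}_j |\psi_{j-1}\rangle\|^2 \leq 4 \sum_j \|\tilde{\Pi}_j |\psi\rangle\|^2$, completing the proof.
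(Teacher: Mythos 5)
The paper does not prove this statement; it is presented as a Fact cited from Gao~\cite{gao:union}, so there is no in-paper argument to compare against. Your reduction via purification, the Pythagorean telescoping to $\sum_{j}\elltwo{\tilde{\Pi}_j\ket{\psi_{j-1}}}^2\leq 4\sum_{j}\elltwo{\tilde{\Pi}_j\ket{\psi}}^2$, and the identity $\ket{\psi}-\ket{\psi_{j-1}}=\sum_{l<j}\tilde{\Pi}_l\ket{\psi_{l-1}}$ are all correct. The gap is in the closure. After applying $\elltwo{a-b}^2\leq 2\elltwo{a}^2+2\elltwo{b}^2$ you need $R:=\sum_j\elltwo{\sum_{l<j}\tilde{\Pi}_j a_l}^2\leq E:=\sum_j\elltwo{b_j}^2$, where $a_j:=\tilde{\Pi}_j\ket{\psi_{j-1}}$ and $b_j:=\tilde{\Pi}_j\ket{\psi}$, but the route you sketch does not deliver it: the operator $C_{l,l'}=\sum_{j>\max(l,l')}\tilde{\Pi}_j$ has operator norm as large as $k$, so a naive Cauchy--Schwarz on the quadratic form gives a bound of order $k\bigl(\sum_l\elltwo{a_l}\bigr)^2$ rather than $E$; the recursive triangle bound you mention reintroduces precisely the multiplicative blow-up you flagged; the promised ``self-absorption'' is asserted but not exhibited; and the threshold case split $E\geq\Tr\rho/4$ does not change any of this, since $R\leq E$ is still needed in the complementary regime.

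What actually makes $R\leq E$ true is an exact identity, not an estimate, and it hinges on the structural fact that $a_j$ lies in the range of $\tilde{\Pi}_j$, so $\braket{b_j}{a_j}=\bra{\psi}\tilde{\Pi}_j\ket{a_j}=\braket{\psi}{a_j}$. This gives $\elltwo{b_j}^2-\elltwo{b_j-a_j}^2=2\Re\braket{\psi}{a_j}-\elltwo{a_j}^2$, and summing over $j$, using $\sum_j a_j=\ket{\psi}-\ket{\psi_k}$ together with the Pythagorean telescoping $\sum_j\elltwo{a_j}^2=\elltwo{\ket{\psi}}^2-\elltwo{\ket{\psi_k}}^2$, yields $E-R=\elltwo{\ket{\psi}-\ket{\psi_k}}^2\geq 0$. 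The same observation $\tilde{\Pi}_j a_j=a_j$ is the key to Gao's cleaner direct closure, which avoids the lossy $\elltwo{a-b}^2$ step entirely: since $\braket{a_j}{\tilde{\Pi}_j a_l}=\braket{a_j}{a_l}$ one has the exact expansion $\elltwo{a_j}^2=\Re\braket{a_j}{b_j}-\sum_{l<j}\Re\braket{a_j}{a_l}$; summing and substituting $\sum_{j>l}\Re\braket{a_j}{a_l}=\tfrac{1}{2}\elltwo{\sum_j a_j}^2-\tfrac{1}{2}\sum_j\elltwo{a_j}^2$ gives $\tfrac{1}{2}\sum_j\elltwo{a_j}^2\leq\sum_j\Re\braket{a_j}{b_j}$, and a single Cauchy--Schwarz over the index $j$ produces the factor $4$. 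The ingredient missing from your sketch is exactly this exploitation of $\tilde{\Pi}_j a_j=a_j$, which turns the dangerous cross terms into an exact square rather than something to be estimated away.
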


\section{Tilted span of subspaces}
\label{sec:tiltedspan}
Let $\cH$ be a Hilbert space. Consider Hilbert spaces $\cH_j$, $j \in [l]$,
each of dimension $\dim \cH$.
Define $\tcH := \cH \oplus \cH_1 \oplus \cdots \oplus \cH_l$.
Let $\cT_j$, $j \in [l]$ be linear maps mapping $\cH$ isometrically onto
$\cH_j$. 
For $j \in [l]$, let $0 < \alpha_j < 1$ and
define linear maps  
$\cT_{j, \alpha_j}: \cH \rightarrow \tcH$ as
\[
\cT_{j, \alpha_j} := 
\sqrt{1- \alpha_j} \one_{\cH} + 
\sqrt{\alpha_j} \cT_j,
\]
where $\one_{\cH}$ is the identity embedding of $\cH$ into $\tcH$.
Observe that each $\cT_{j, \alpha_j}$ is an isometric embedding of 
$\cH$ into $\tcH$.

We now define the tilted span of a collection of subspaces, formalising
the intuitive description in the introduction. 
\begin{definition}[Tilted span]
\label{def:tiltedspan}
Let $W_1, \ldots, W_l$ be subspaces of $\cH$.
Let $0 < \alpha_j < 1$, $j \in [l]$.
The subspace $\cT_{j, \alpha_j}(W_{j}) \leq \tcH$ will be called the
{\em $\alpha_j$-tilt} of $W_j$ along the $j$th direction. 
The subspace $\bW_{(\alpha_1, \ldots, \alpha_l)}$ of $\tcH$ defined by 
\[
\bW_{(\alpha_1, \ldots, \alpha_l)} := 
\bigplus_{j=1}^{l} 
\cT_{j, \alpha_j}(W_{j}).
\]
is called the {\em  $(\alpha_1, \ldots, \alpha_l)$-tilted span} of 
$W_1, \ldots, W_l$. If $\alpha_j = \alpha$ for all $j \in [l]$, we
write $\bW_\alpha$ and call it the {\em $\alpha$-tilted span} of
$W_1, \ldots, W_l$.
\end{definition}

The effect of {\em tilting} the subspaces  $W_1, \ldots, W_l$ along $l$
orthogonal directions is to increase the separation between them,
in a sense made precise by the following proposition.
\begin{proposition}
\label{prop:tiltedspan}
Let $0 < \alpha \leq 1$.
Let $\ket{h} \in \cH$ be a unit vector. Let $W_1, \ldots, W_l$ be
subspaces of $\cH$.
For $j \in [l]$, define
$
\epsilon_j := \elltwo{\Pi_{W_j} \ket{h}}^2
$
and 
let $0 < \alpha_j < 1$. Define $\alpha := \min_{j: j \in [l]} \alpha_j$.
Then,
\[
\max_{j: j \in [l]} (1 - \alpha_j) \epsilon_j 
\leq
\elltwo{\Pi_{\bW_{(\alpha_1, \ldots, \alpha_l)}} \ket{h}}^2 
\leq 
\frac{1-\alpha}{\alpha} \sum_{j=1}^l \epsilon_j.
\]
\end{proposition}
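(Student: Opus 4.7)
The plan is to handle the two bounds separately. The lower bound is essentially immediate from the definition of tilted span as a containing subspace, while the upper bound is where the geometric ``tilting'' idea earns its keep: because the different $\cH_j$'s are mutually orthogonal and orthogonal to $\cH$, the $\alpha_j$-portion of each tilted vector contributes to $\|b\|^2$ without contributing anything to $\langle h, b\rangle$. This creates exactly the slack needed for a Cauchy--Schwarz argument to close.

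For the lower bound, for each fixed $j \in [l]$ I would construct an explicit witness vector in $\bW_{(\alpha_1,\ldots,\alpha_l)}$. Take $\ket{b_j} := \cT_{j,\alpha_j}(\Pi_{W_j}\ket{h}) = \sqrt{1-\alpha_j}\,\Pi_{W_j}\ket{h} + \sqrt{\alpha_j}\,\cT_j \Pi_{W_j}\ket{h}$, which lies in $\cT_{j,\alpha_j}(W_j) \subseteq \bW_{(\alpha_1,\ldots,\alpha_l)}$. Since $\cT_{j,\alpha_j}$ is an isometry we have $\|b_j\|^2 = \epsilon_j$, and because $\cT_j \Pi_{W_j}\ket{h} \in \cH_j$ is orthogonal to $\ket{h} \in \cH$, the inner product is $\langle h | b_j\rangle = \sqrt{1-\alpha_j}\,\epsilon_j$. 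Hence
\[
\elltwo{\Pi_{\bW_{(\alpha_1,\ldots,\alpha_l)}}\ket{h}}^2 \;\geq\; \frac{|\langle h|b_j\rangle|^2}{\|b_j\|^2} \;=\; (1-\alpha_j)\,\epsilon_j,
\]
and maximizing over $j$ gives the lower bound.

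For the upper bound, I would take an arbitrary nonzero $\ket{b} \in \bW_{(\alpha_1,\ldots,\alpha_l)}$ and write $\ket{b} = \sum_{j=1}^l \cT_{j,\alpha_j}(\ket{w_j})$ with $\ket{w_j} \in W_j$. Expanding the definition of $\cT_{j,\alpha_j}$ splits $\ket{b}$ into its components in $\cH$ and in $\bigoplus_j \cH_j$:
\[
\ket{b} \;=\; \underbrace{\sum_{j=1}^l \sqrt{1-\alpha_j}\,\ket{w_j}}_{\ket{c}\,\in\,\cH} \;+\; \sum_{j=1}^l \sqrt{\alpha_j}\,\cT_j\ket{w_j}.
\]
Since the $\cH_j$'s are mutually orthogonal and each $\cT_j$ is an isometry, Pythagoras gives $\|b\|^2 = \|c\|^2 + \sum_j \alpha_j \|w_j\|^2 \geq \alpha \sum_j \|w_j\|^2$. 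On the other hand $\langle h | b\rangle = \langle h | c\rangle$ because $\ket{h} \in \cH$ is orthogonal to every $\cH_j$. Using $|\langle h|w_j\rangle| = |\langle \Pi_{W_j}h|w_j\rangle| \leq \sqrt{\epsilon_j}\,\|w_j\|$ together with Cauchy--Schwarz,
\[
|\langle h|b\rangle|^2 \;\leq\; \Bigl(\sum_{j=1}^l \sqrt{1-\alpha_j}\,\sqrt{\epsilon_j}\,\|w_j\|\Bigr)^2 \;\leq\; \Bigl(\sum_{j=1}^l (1-\alpha_j)\epsilon_j\Bigr)\Bigl(\sum_{j=1}^l \|w_j\|^2\Bigr).
\]
Using $1-\alpha_j \leq 1-\alpha$ in the first factor, dividing by $\|b\|^2 \geq \alpha \sum_j \|w_j\|^2$, and taking the supremum over $\ket{b}$ yields $\elltwo{\Pi_{\bW_{(\alpha_1,\ldots,\alpha_l)}}\ket{h}}^2 \leq \frac{1-\alpha}{\alpha}\sum_j \epsilon_j$.

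The only real subtlety is the upper bound, and more specifically the decision to bound $\|b\|^2$ from below by the tilt part $\sum_j \alpha_j \|w_j\|^2$ rather than fighting with the $\cH$-part $\|c\|^2$ (which could involve cancellations between the $\ket{w_j}$'s and is hard to control). The orthogonality of the tilt directions is exactly what makes this ``$\alpha$-budget'' lower bound clean, and it is the formal incarnation of the pathology-destroying mechanism illustrated in Figure~\ref{fig:SpanTilting}: without tilting (i.e.\ $\alpha = 0$) the denominator vanishes and the bound collapses, matching the fact that $\spanning\{W_1,\ldots,W_l\}$ can swallow $\ket{h}$ entirely.
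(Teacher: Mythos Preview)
Your proof is correct and follows essentially the same approach as the paper: both arguments exploit the orthogonality of the tilt directions $\cH_j$ to lower-bound $\|b\|^2$ by $\alpha\sum_j \|w_j\|^2$, and then close with Cauchy--Schwarz together with $|\langle h|w_j\rangle| \leq \sqrt{\epsilon_j}\,\|w_j\|$. The only cosmetic difference is that the paper works directly with the specific vector $h' := \Pi_{\bW}\ket{h}$ and uses the identity $\|h'\|^2 = |\langle h|h'\rangle|$, whereas you use the equivalent variational characterization $\|\Pi_{\bW}h\|^2 = \sup_{b \in \bW,\,b\neq 0} |\langle h|b\rangle|^2/\|b\|^2$; your formulation is arguably a bit cleaner since it avoids the intermediate chain of inequalities for $\sqrt{\sum_j|\lambda_j|^2}$.
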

\begin{proof}
The lower bound follows trivially since
the projection of $\ket{h}$ onto the $j$th summand space in
the definition of $\bW_{(\alpha_1, \ldots, \alpha_l)}$ is of length 
$\sqrt{\epsilon_j (1- \alpha_j)}$. 

We now prove the upper bound.
Define $h' := \Pi_{\bW_{(\alpha_1, \ldots, \alpha_l)}} \ket{h}$. 
Since $h' \in \bW_{(\alpha_1, \ldots, \alpha_l)}$, let
$h' = \sum_{j=1}^l \lambda_j \ket{x_j}$ where 
$\lambda_j \in \C$, and $\ket{x_j}$ is a unit vector in 
$\cT_{j, \alpha_j}(W_j)$ for $j \in [l]$.
Let $\ket{x_j} =  \cT_{j,\alpha_j}(\ket{y_j})$,
where $\ket{y_j}$ is a unit vector in $W_j$ and is
uniquely determined by $\ket{x_j}$. 
Then,
\begin{eqnarray*}
\elltwo{h'}^2 
& = &
\elltwo{
\sum_{j=1}^l \sqrt{1 - \alpha_j} \lambda_j \ket{y_j} + 
\sum_{j=1}^l \sqrt{\alpha_j} \lambda_j \cT_j(\ket{y_j})
}^2 \\
& \geq &
\elltwo{
\sum_{j=1}^l \sqrt{\alpha_j} \lambda_j \cT_j(\ket{y_j})
}^2 \\
& \geq &
\alpha \sum_{j=1}^l |\lambda_j|^2.
\end{eqnarray*}

We also have
\[
\elltwo{h'}^2 =  
|\braket{h}{h'}| = 
\left|\sum_{j=1}^l \lambda_j \braket{h}{x_j}\right| \leq 
\sqrt{\sum_{j=1}^l |\lambda_j|^2} \cdot
\sqrt{\sum_{j=1}^l |\braket{h}{x_j}|^2}.
\]
This implies that
\begin{eqnarray*}
\alpha
\sqrt{
\displaystyle
\sum_{j=1}^l 
|\lambda_j|^2 
} 
& \leq &
\sqrt{\sum_{j=1}^l |\braket{h}{x_j}|^2} 
\; = \;
\sqrt{
\sum_{j=1}^l (1-\alpha_j) |\braket{h}{y_j}|^2
} \\
&   =  &
\sqrt{
\sum_{j=1}^l (1-\alpha_j) |\bra{h}\Pi_{W_j} \ket {y_j}|^2
} \\
& \leq &
\sqrt{
\sum_{j=1}^l (1-\alpha_j) \elltwo{\Pi_{W_j} \ket{h}}^2
} \\
& \leq &
\sqrt{1-\alpha}
\sqrt{\sum_{j=1}^l \epsilon_j}.
\end{eqnarray*}
Thus,
\begin{eqnarray*}
\elltwo{h'}^2 
& \leq  &
\sqrt{ \sum_{j=1}^l |\lambda_j|^2} \cdot
\sqrt{\sum_{j=1}^l |\braket{h}{x_j}|^2} 
\;\leq\;
\frac{1}{\alpha} \sum_{j=1}^l |\braket{h}{x_j}|^2 \\
& \leq &
\frac{1-\alpha}{\alpha} \sum_{j=1}^l \epsilon_j.
\end{eqnarray*}
This proves the desired upper bound on $\elltwo{h'}^2$
and completes the proof of the proposition.
\end{proof}

As a corollary, we now prove a small extension of 
Proposition~\ref{prop:tiltedspan}.
\begin{corollary}
\label{cor:tiltedspan}
Let $\ket{h} \in \cH$ be a unit vector. Let $W_0, W_1, \ldots, W_l$ be
subspaces of $\cH$.
For $0 \leq j \leq l$, define
$
\epsilon_j := \elltwo{\Pi_{W_j} \ket{h}}^2.
$
Let $0 < \alpha < 1/3$.
Define the subspace $\bW$ of $\tcH$ as
$\bW := \bW_{\alpha} + W_0$.
Define $\epsilon := \frac{1-\alpha}{\alpha} \sum_{j=1}^l \epsilon_j$.
Then,
\[
\max\{\epsilon_0, (1-\alpha)(\max_{j: 1 \leq j \leq l} \epsilon_j)\}
\leq
\elltwo{\Pi_{\bW} \ket{h}}^2 
\leq 
\frac{3l}{\alpha} (\epsilon_0 + \epsilon). 
\]
\end{corollary}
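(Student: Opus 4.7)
The lower bound is immediate. Since $W_0\subseteq\bW$ we get $\elltwo{\Pi_\bW\ket{h}}^2\ge\epsilon_0$. For each $j\in[l]$, the single-subspace version of Proposition~\ref{prop:tiltedspan} (or a direct computation) gives $\elltwo{\Pi_{\cT_{j,\alpha}(W_j)}\ket{h}}^2\ge(1-\alpha)\epsilon_j$, and since $\cT_{j,\alpha}(W_j)\subseteq\bW$ this lower bound passes to $\Pi_\bW$. Taking the maximum gives the claimed lower bound.

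For the upper bound I would mimic the proof of Proposition~\ref{prop:tiltedspan}, carrying $W_0$ along as an extra \emph{untilted} summand. Set $\ket{h'}:=\Pi_\bW\ket{h}$. Since $\ket{h'}\in\bW=W_0+\bW_\alpha$, fix any decomposition $\ket{h'}=\ket{w_0}+\sum_{j=1}^l\lambda_j\ket{x_j}$ with $\ket{w_0}\in W_0$ and each $\ket{x_j}=\cT_{j,\alpha}(\ket{y_j})$ a unit vector in $\cT_{j,\alpha}(W_j)$ (with $\ket{y_j}\in W_j$ of unit length). Projecting $\ket{h'}$ onto the orthogonal summand $\cH_1\oplus\cdots\oplus\cH_l$ of $\tcH$ yields $\sqrt\alpha\sum_j\lambda_j\cT_j\ket{y_j}$, whose squared norm is exactly $\alpha\sum_j|\lambda_j|^2$ because the $\cT_j\ket{y_j}$ live in mutually orthogonal subspaces $\cH_j$. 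Hence $\alpha\sum_j|\lambda_j|^2\le\elltwo{h'}^2$, and Cauchy-Schwarz then gives $\sum_j|\lambda_j|\le\sqrt{l/\alpha}\,\elltwo{h'}$.

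Now use $\elltwo{h'}^2=\langle h,h'\rangle$ together with $\langle h,\cT_j\ket{y_j}\rangle=0$ (since $\ket{h}\in\cH$ is orthogonal to $\cH_j$) to write $\elltwo{h'}^2=\langle h,w_0\rangle+\sqrt{1-\alpha}\sum_j\lambda_j\langle h,y_j\rangle$. The second (tilted-span) term is bounded by $\elltwo{h'}\sqrt\epsilon$ by the same Cauchy-Schwarz argument used in the proof of Proposition~\ref{prop:tiltedspan}, using $|\langle h,y_j\rangle|^2\le\epsilon_j$ and the bound on $\sum|\lambda_j|^2$. For the first term, $|\langle h,w_0\rangle|\le\sqrt{\epsilon_0}\,\elltwo{w_0}$ since $\ket{w_0}\in W_0$, and the triangle inequality combined with the bound on $\sum_j|\lambda_j|$ gives $\elltwo{w_0}=\bigl\|\ket{h'}-\sum_j\lambda_j\ket{x_j}\bigr\|_2\le\elltwo{h'}(1+\sqrt{l/\alpha})$. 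Combining yields $\elltwo{h'}\le\sqrt{\epsilon_0}\bigl(1+\sqrt{l/\alpha}\bigr)+\sqrt\epsilon$; squaring, using the elementary inequality $(1+\sqrt{l/\alpha})^2\le 3l/\alpha$ (routine for $l/\alpha\ge 2$, which holds since $\alpha<1/3$, $l\ge 1$), plus a convexity bound on $(a+b)^2$, delivers the claimed upper bound $\frac{3l}{\alpha}(\epsilon_0+\epsilon)$ after a routine absorption of constants that uses $l/\alpha\ge 3$.

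The main technical obstacle is that $W_0$ and $\bW_\alpha$ may overlap inside $\tcH$, so the decomposition $\ket{h'}=\ket{w_0}+\sum_j\lambda_j\ket{x_j}$ is \emph{not} unique and a priori $\elltwo{w_0}$ could be enormous, decoupled from $\elltwo{h'}$. The saving observation is that the coefficients $\lambda_j$ are controlled by $\alpha\sum_j|\lambda_j|^2\le\elltwo{h'}^2$ for \emph{any} valid decomposition (this inequality comes purely from projecting onto the orthogonal $\bigoplus_j\cH_j$-part, which sees only the tilted summands), so the triangle inequality then forces $\elltwo{w_0}=O(\sqrt{l/\alpha})\,\elltwo{h'}$, which is exactly what is needed to close the argument.
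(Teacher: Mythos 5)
Your proof is correct, and it takes a genuinely different route from the paper's. The paper decomposes $\ket{h'} = \ket{h'_0} + \ket{h'_\alpha}$ with $\ket{h'_0}\in W_0$ and $\ket{h'_\alpha}\in\bW_\alpha$, then explicitly computes the largest possible cosine between $\cH$ (which contains $W_0$) and $\bigplus_{j=1}^l\cT_{j,\alpha}(\cH)$ (which contains $\bW_\alpha$), showing it is at most $1-\alpha/(3l)$. That angle bound, via $2xy\le x^2+y^2$, yields $\elltwo{h'}^2\ge\frac{\alpha}{3l}\bigl(\elltwo{h'_0}^2+\elltwo{h'_\alpha}^2\bigr)$; combining with $\elltwo{h'}^2\le\sqrt{\elltwo{h'_0}^2+\elltwo{h'_\alpha}^2}\sqrt{\epsilon_0+\epsilon}$ closes the argument. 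You instead reuse the projection trick from the proof of Proposition~\ref{prop:tiltedspan} directly: project onto $\bigoplus_j\cH_j$ to control $\sum_j|\lambda_j|^2$, infer a bound on $\elltwo{w_0}$ by the triangle inequality, and then compute $\langle h,h'\rangle$ termwise. The paper's route isolates the underlying geometric quantity (the principal angle between $W_0$ and $\bW_\alpha$) and could in principle be tightened whenever that angle is larger than the worst case; your route is more elementary, avoids the need to identify the extremising vector $\ket{w}$, and gives constants of the same order.

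The only soft spot in your write-up is the final ``routine absorption of constants.'' The clean way to finish is Cauchy--Schwarz on the two-term sum: with $a := 1+\sqrt{l/\alpha}$,
\[
\elltwo{h'}^2 \le \bigl(a\sqrt{\epsilon_0}+\sqrt{\epsilon}\bigr)^2 \le (a^2+1)(\epsilon_0+\epsilon),
\]
and one checks $(1+\sqrt{x})^2+1\le 3x$ for all $x\ge\phi^2\approx 2.618$, which holds here since $l/\alpha>3$. This delivers $\elltwo{h'}^2\le\frac{3l}{\alpha}(\epsilon_0+\epsilon)$ without the case analysis your sketch gestures at. One more remark you may as well make explicit: $W_0\cap\bW_\alpha=\{0\}$ (projecting any vector of $\bW_\alpha\cap\cH$ onto each $\cH_j$ kills every $\lambda_j$), so your decomposition $\ket{h'}=\ket{w_0}+\sum_j\lambda_j\ket{x_j}$ is in fact unique; this is not needed for the bound to hold, but it removes any worry about which decomposition you fixed.
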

\begin{proof}
As also remarked in the proof of Proposition~\ref{prop:tiltedspan}, the
lower bound is immediate. Hence, we concentrate on proving the upper
bound.

It is easy to see that the largest inner product between a vector 
$\ket{v} \in \cH$ and a 
vector $\ket{w} \in \bigplus_{j=1}^l \cT_{j,\alpha}(\cH)$ is given by
\[
\ket{w} =
\frac{1}{\sqrt{l^2 (1 - \alpha) + l \alpha}}
(
l \sqrt{1-\alpha} \ket{v} +
\sqrt{\alpha} \sum_{j=1}^l \cT_j(\ket{v})
),
\]
with inner product equal to
\[
\frac{\sqrt{l(1-\alpha)}}{\sqrt{l(1-\alpha) + \alpha}} \leq
1 - \frac{\alpha}{3 l}.
\]

Define $h' := \Pi_{\bW} \ket{h}$. Since $h' \in \bW$, let 
$h' = h'_0 + h'_\alpha$, where $h'_0 \in W_0$ and
$h'_\alpha \in \bW_\alpha$. Then,
\begin{eqnarray*}
\elltwo{h'}^2 
&   =  &
\elltwo{h'_0}^2 + \elltwo{h'_\alpha}^2 +
2 \Re (\braket{h'_0}{h'_\alpha}) \\
& \geq &
\elltwo{h'_0}^2 + \elltwo{h'_\alpha}^2 -
2 \elltwo{h'_0} \elltwo{h'_\alpha} 
\left(
1 - \frac{\alpha}{3 l}
\right) \\
& \geq &
\left(\elltwo{h'_0}^2 + \elltwo{h'_\alpha}^2\right)
\frac{\alpha}{3 l},
\end{eqnarray*}
where, in the last inequality, we used the fact that 
$2 x y \leq x^2 + y^2$ for real numbers $x$, $y$.
On the other hand,
\begin{eqnarray*}
\elltwo{h'}^2 
& = &
|\braket{h}{h'}| \\
& \leq &
|\braket{h}{h'_0}| + |\braket{h}{h'_\alpha}| \\
& \leq &
\elltwo{h'_0} \sqrt{\epsilon_0} + \elltwo{h'_\alpha} \sqrt{\epsilon} \\
& \leq &
\sqrt{\elltwo{h'_0}^2 + \elltwo{h'_\alpha}^2} \cdot
\sqrt{\epsilon_0 + \epsilon}.
\end{eqnarray*}
In the second inequality above, we used the properties that 
$\elltwo{\Pi_{\bW_\alpha} \ket{h}}^2 \leq \epsilon$ and 
$\elltwo{\Pi_{W_0} \ket{h}}^2 = \epsilon_0$.
We thus get
\[
\sqrt{\elltwo{h'_0}^2 + \elltwo{h'_\alpha}^2} \leq
\frac{3l}{\alpha} \sqrt{\epsilon_0 + \epsilon},  
\]
which means
\[
\elltwo{h'}^2 \leq 
\frac{3l}{\alpha} (\epsilon_0 + \epsilon). 
\]
This finishes the proof of the corollary.
\end{proof}

\noindent
{\bf Remark:} 

\noindent
1.\ An easy corollary of Proposition~\ref{prop:tiltedspan} 
is a union bound for
projectors that improves upon the union bound proved in Anshu, Jain and
Warsi's paper \cite[Lemma~3]{anshu:compound} on the 
entanglement assisted compound quantum channel.

\noindent
2.\ Proposition~\ref{prop:tiltedspan} also gives an easier proof of
the main result of Harrow, Lin and Montanaro's paper 
\cite[Corollary~11]{HarrowLinMontanaro} which they and earlier authors have
referred to as a `Quantum OR bound' for projectors. The `OR projector'
constructed by Proposition~\ref{prop:tiltedspan} improves the parameters
of Harrow {\it et al\/}'s construction and is more 
efficient to implement.

\begin{corollary}
\label{cor:union}
Let $\epsilon, \alpha > 0$. For $i \in [l]$, let $\rho_i$ be a
quantum state and $\Pi_i$ be an orthogonal projector in $\cH$  such that
$\Tr [\Pi_i \rho_i] \geq 1 - \epsilon$. Let $\tcH$ be defined as in
Definition~\ref{def:tiltedspan}. Then there is a projector 
$\hPi \in \tcH$ such that $\Tr [\hPi \rho_i] \geq 1 - \epsilon - \alpha$
for all $i \in [l]$, and, for all states $\sigma \in \cH$,
\[
\Tr [\hPi \sigma] \leq 
\frac{1-\alpha}{\alpha}
\sum_{i \in [l]} \Tr [\Pi_i \sigma].
\]
\end{corollary}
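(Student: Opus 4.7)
The plan is to take $\hPi$ to be the orthogonal projector in $\tcH$ onto the $\alpha$-tilted span $\bW_\alpha$ of the supports $W_i := \mathrm{supp}(\Pi_i)$, $i \in [l]$, in the sense of Definition~\ref{def:tiltedspan}, and then to derive both conclusions by applying Proposition~\ref{prop:tiltedspan} pointwise to eigenvectors of $\rho_i$ and $\sigma$ and averaging. All states originally living in $\cH$ are identified with their canonical images in $\tcH$ via the inclusion $\one_\cH$.

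For the first conclusion, I would fix $i \in [l]$ and spectrally decompose $\rho_i = \sum_a p_a \ketbra{h_a}$, with $p_a \geq 0$ and $\sum_a p_a = 1$. Applied to each unit vector $\ket{h_a}$, the lower bound of Proposition~\ref{prop:tiltedspan} (instantiated with uniform tilting parameter $\alpha$ and the single subspace $W_i$ appearing as one of the summands in $\bW_\alpha$) gives
\[
\elltwo{\Pi_{\bW_\alpha}\ket{h_a}}^2 \;\geq\; (1-\alpha)\,\elltwo{\Pi_i \ket{h_a}}^2.
\]
Averaging over $a$ with weights $p_a$ yields $\Tr[\hPi \rho_i] \geq (1-\alpha)\Tr[\Pi_i \rho_i] \geq (1-\alpha)(1-\epsilon) \geq 1-\epsilon-\alpha$.

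For the second conclusion, I would spectrally decompose $\sigma = \sum_a q_a \ketbra{h_a}$ and apply the upper bound of Proposition~\ref{prop:tiltedspan} to each $\ket{h_a}$, obtaining
\[
\elltwo{\Pi_{\bW_\alpha}\ket{h_a}}^2 \;\leq\; \frac{1-\alpha}{\alpha}\sum_{i=1}^{l}\elltwo{\Pi_i\ket{h_a}}^2.
\]
Averaging over $a$ with weights $q_a$ and reading the $\hPi_i$ in the statement as the given projector $\Pi_i$ gives the claimed bound $\Tr[\hPi \sigma] \leq \frac{1-\alpha}{\alpha}\sum_{i=1}^{l} \Tr[\Pi_i \sigma]$.

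There is no substantial obstacle: the corollary is essentially a repackaging of Proposition~\ref{prop:tiltedspan} from pure-vector to mixed-state form. The only minor bookkeeping is making sure that $\rho_i$ and $\sigma$ are regarded as operators on $\tcH$ (supported on the $\cH$-summand) so that traces against $\hPi$ make sense, and that Proposition~\ref{prop:tiltedspan} is applied with the single subspace $W_i$ in its hypothesis (for the lower bound) and with the full family $W_1,\dots,W_l$ (for the upper bound), both of which are immediate.
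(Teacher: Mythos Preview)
Your proposal is correct and matches the paper's intent: the paper states the result as ``an easy corollary of Proposition~\ref{prop:tiltedspan}'' without spelling out a proof, and your construction (take $\hPi$ to be the projector onto the $\alpha$-tilted span of the supports $W_i$, then average the pointwise bounds of Proposition~\ref{prop:tiltedspan} over eigenvectors) is exactly the intended argument. The $\hPi_i$ appearing in the displayed upper bound is indeed a typo for $\Pi_i$, as you inferred.
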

Suppose $ \Tr [\Pi_i \sigma] \leq \theta$ for all $i \in [l]$. Then
the upper bound promised by the above corollary 
is $\frac{l \theta}{\alpha}$ whereas the upper bound given by
Lemma~3 of \cite{anshu:compound} is only 
$l \theta (\frac{2}{\alpha^2})^{\log (2l)}$.
Thus, the above corollary leads to corresponding improvements in the 
achievable rates for the various settings considered in
\cite{anshu:compound}.

It is interesting to consider the 
analogoue of Proposition~\ref{prop:tiltedspan} in classical 
probability. One can 
think of the subspaces $W_j$, $j \in [l]$ as a set of $l$ events, 
the vector $\ket{h}$ as the classical state of a system, and 
$\epsilon_j$ to be the probablity of the $j$th event occuring. Then
the probability of at least one of the events occuring is lower bounded by
$\max_{j: j \in [l]} \epsilon_j$ and upper bounded by
$\sum_{j=1}^l \epsilon_j$. This is nothing but the fundamental
{\em union bound}  of classical probability. Thus, the above proposition
almost recovers the classical performance of the union bound in the
context of quantum probability.

However the above proposition still falls short of defining an
intersection projector satisfying 
Properties \arabic{type1error}, \arabic{type2errorx}, \arabic{type2errory},
\arabic{type2errorxy} for the cq-MAC.
For $j \in [l]$, let $\Pi_j$ be an orthogonal projection.
We would like to define a non-trivial intersection of the projectors
$\Pi_j$, $j \in [l]$.
Define a new projector $\hat{\Pi}$ as the projection onto the
orthogonal complement of 
the tilted span $\bW_\alpha$ of $W_j$, $j \in [l]$, where $W_j$ is 
taken to be the support of $\Pi_j$. Let $\rho$ be a quantum state
and $\epsilon_j := \Tr [\Pi_j \rho]$.
Then it is easy to see that
\[
\Tr [\hat{\Pi} \rho] \geq 
1 - \frac{1-\alpha}{\alpha} \sum_{j=1}^l \epsilon_j.
\]
This is tolerable for most applications of the joint typicality lemma.
In fact, it allows us to prove a good enough version of 
Property~\arabic{type1error} for the cq-MAC taking $\rho = \rho^{XYZ}$,
$l = 3$, $\Pi_1 := \Pi^X$, $\Pi_2 := \Pi^Y$, $\Pi_3 := \Pi^{X,Y}$.

Now consider states $\sigma_i$, $i \in [l]$. Let 
$2^{-k_i} := \Tr [\Pi_i \sigma_i]$. 
For the cq-MAC, we have $l = 3$, 
$\sigma_1 := \rho^{YZ} \otimes \rho^X$,
$\sigma_2 := \rho^{XZ} \otimes \rho^Y$,
$\sigma_3 := \rho^X \otimes \rho^Y \otimes \rho^Z$.
For the upper bound, the best that one can prove is
\[
\Tr [\hat{\Pi} \sigma_i] \leq 
\alpha (1 - 2^{-k_i}) + 2^{-k_i}.
\]
The additive term of nearly $\alpha$ makes the upper bound
insufficient for applicatioins of the joint typicality lemma.
In particular, it removes any hope of proving even approximate versions of
Properties \arabic{type2errorx}, \arabic{type2errory},
\arabic{type2errorxy} for the cq-MAC.

However, for a $k$-partite state $\rho^{A_1 \cdots A_k}$, and a 
collection of $k$-partite states 
$\sigma_i := \rho^{A_{S_i}} \otimes \rho^{A_{\bar{S_i}}}$, 
$\{\} \neq S_i \subset [k]$, $i \in [l]$,
it turns out that we can do better and
indeed come up with a notion of intersection projector
that is strong enough to prove a quantum joint typicality lemma.
For this, as discussed in the introduction, we have to do a 
{\em smoothing} of $\rho$ to $\rho'$. The smoothing is achieved by
a carefully chosen tilt of $\rho$. This makes 
$\sigma'_i := (\rho')^{A_{S_i}} \otimes (\rho')^{A_{\bar{S_i}}}$ 
extremely close to a certain tilted version of $\sigma_i$. 
However, it turns out that
the tilt of $\sigma_i$ is
not only along the so-called $(S_i, \bar{S_i})$th direction but 
also along the
directions corresponding to subpartitions refining $(S_i, \bar{S_i})$. 
Thus, by taking a linear extension of the partial order of subpartitions
of $[k]$ under refinement, we are led to consider a
tilting scheme where $\sigma_i$ is tilted along the directions 
$1, \ldots, i$. This tilting scheme is described by an upper triangular
tilting matrix $A$ whose $(ij)$th entry $\alpha_{ij}$
denotes the tilt along the $i$th direction for $\sigma_j$ when
$i \leq j$. Hence, we have to formally define the 
$A$-tilted span
of $W_1, \ldots, W_l$, and prove a union bound for it.

Let $0 \leq \alpha_{ij} \leq 1$, where $i, j \in [l]$. 
Define the $l \times l$ matrix $A := (\alpha_{ij})$.
Suppose $A$ is {\em upper triangular} and {\em diagonal dominated}, that 
is, $\alpha_{ij} = 0$ for $i > j$ and
$\alpha_{ii} \geq \alpha_{ij}$ for all $i \leq j$.
Furthermore, suppose $A$ is {\em substochastic}, that is, 
for all $j$, $\sum_{i=1}^j \alpha_{ij} \leq 1$. If the last inequality
holds with equality, we say that $A$ is {\em stochastic}.
For $j \in [l]$, define linear maps  
$\cT_{j, A}: \cH \rightarrow \tcH$ as
\[
\cT_{j, A} := 
\sqrt{1- \sum_{i=1}^j \alpha_{i j}} \; \one_{\cH} + 
\sum_{i=1}^j \sqrt{\alpha_{i j}} \, \cT_{i}.
\]
Observe that each $\cT_{j, A}$ is an isometric 
embedding of $\cH$ into $\tcH$, which we shall refer to as the $A$-tilt
along the $j$th direction. We shall call $A$ as the {\em tilting
matrix}. A tilting matrix is always assumed to be upper triangular,
diagonal dominated, and substochastic.
\begin{definition}[$A$-tilted span]
\label{def:generalisedtiltedspan}
Let $W_1, \ldots, W_l$ be subspaces of $\cH$.
The subspace $\cT_{j, A}(W_{j}) \leq \tcH$ will be called the
{\em $A$-tilt} of $W_j$ along the directions $1, \ldots, j$. 
The subspace $\bW_{A}$ of $\tcH$ defined by 
\[
\bW_{A} := 
\bigplus_{j=1}^{l} 
\cT_{j, A}(W_{j}).
\]
is called the {\em  $A$-tilted span} of $W_1, \ldots, W_l$.
\end{definition}

The effect of doing $A$-tilting of the subspaces  
$W_1, \ldots, W_l$ along $l$
orthogonal directions is to increase the separation between them,
in a sense made precise by the following proposition.
\begin{proposition}
\label{prop:generalisedtiltedspan}
Let $A$ be an upper triangular diagonal domniated substochastic matrix.
Let $\ket{h} \in \cH$ be a unit vector. 
For $j \in [l]$, define
$
\epsilon_j := \elltwo{\Pi_{W_j} \ket{h}}^2.
$
Then,
\begin{eqnarray*}
\max_{j: j \in [l]} \epsilon_j 
\left(1- \sum_{i=1}^j \alpha_{ij}\right) 
& \leq &
\elltwo{\Pi_{\bW_{A}} \ket{h}}^2 \\
& \leq &
\left(
\sum_{j=1}^l \sqrt{\epsilon_j} 
\left(\sum_{k=j}^l 2^{k-j} \alpha_{kk}^{-1/2} \right)
\right)^2.
\end{eqnarray*}
In particular, if $\alpha_{jj} \geq \alpha$
for all $j \in [l]$, we can obtain an upper bound of
\[
\elltwo{\Pi_{\bW_{A}} \ket{h}}^2 \leq
\frac{2^{2l+1}}{\alpha} \sum_{j=1}^l \epsilon_j.
\]
\end{proposition}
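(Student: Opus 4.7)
The plan is to mirror the two-step argument behind Proposition~\ref{prop:tiltedspan}, adapting it to the richer structure imposed by the tilting matrix $A$.

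For the lower bound, observe that $\cT_{j,A}(W_j)$ is a summand of $\bW_A$, so $\elltwo{\Pi_{\bW_A}\ket{h}}^2$ is at least the squared length of the projection of $\ket{h}$ onto $\cT_{j,A}(W_j)$. Since $\ket{h}\in\cH$ is orthogonal to every $\cH_i$, only the $\cH$-piece $\sqrt{1-\sum_{i=1}^j\alpha_{ij}}\,\ket{y}$ of $\cT_{j,A}(\ket{y})$ contributes to $\braket{h}{\cT_{j,A}(\ket{y})}$; a direct calculation using $\Pi_{W_j}\ket{h}/\sqrt{\epsilon_j}$ as the seed unit vector in $W_j$ then shows the projected squared length equals $(1-\sum_{i=1}^j\alpha_{ij})\epsilon_j$. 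Taking the max over $j$ yields the lower bound.

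For the upper bound, write $h':=\Pi_{\bW_A}\ket{h}=\sum_{j=1}^l\lambda_j\cT_{j,A}(\ket{y_j})$ with $\ket{y_j}$ a unit vector in $W_j$. The same orthogonality argument, together with $|\braket{h}{y_j}|=|\bra{h}\Pi_{W_j}\ket{y_j}|\leq\sqrt{\epsilon_j}$, gives
$$\elltwo{h'}^2 = |\braket{h}{h'}| \leq \sum_{j=1}^l |\lambda_j|\sqrt{\epsilon_j}.$$
To control the $|\lambda_j|$'s, I would apply Pythagoras to isolate the $\cH_j$-component of $h'$, which equals $\cT_j\bigl(\sum_{k\geq j}\lambda_k\sqrt{\alpha_{jk}}\,\ket{y_k}\bigr)$. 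Since $\cT_j$ is an isometry, the reverse triangle inequality yields $\sqrt{\alpha_{jj}}\,|\lambda_j|\leq\elltwo{h'}+\sum_{k>j}\sqrt{\alpha_{jk}}\,|\lambda_k|$, and diagonal dominance ($\alpha_{jk}\leq\alpha_{jj}$ for $k\geq j$) simplifies this to
$$|\lambda_j|\leq\alpha_{jj}^{-1/2}\elltwo{h'}+\sum_{k>j}|\lambda_k|.$$
A downward induction from $j=l$ then establishes $|\lambda_j|\leq C_j\elltwo{h'}$ with $C_j:=\sum_{k=j}^l 2^{k-j}\alpha_{kk}^{-1/2}$, once one checks the purely arithmetic inequality $\alpha_{jj}^{-1/2}+\sum_{k>j}C_k\leq C_j$ by swapping the order of summation. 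Substituting these bounds into the displayed inequality and cancelling one factor of $\elltwo{h'}$ delivers the general upper bound.

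The ``in particular'' clause follows by replacing each $\alpha_{kk}^{-1/2}$ with $\alpha^{-1/2}$, summing $\sum_{k\geq j}2^{k-j}\leq 2^{l-j+1}$, and applying Cauchy--Schwarz as $\bigl(\sum_j\sqrt{\epsilon_j}\cdot 2^{l-j+1}\bigr)^2\leq\bigl(\sum_j\epsilon_j\bigr)\bigl(\sum_j 4^{l-j+1}\bigr)$, where the geometric sum $\sum_{j=1}^l 4^{l-j+1}\leq\tfrac{4}{3}\cdot 4^l$, combined with the $\alpha^{-1}$ produced by squaring, gives the advertised constant $2^{2l+1}/\alpha$. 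The main technical obstacle is the backward induction on the $|\lambda_j|$'s: one has to exploit diagonal dominance to shed the off-diagonal $\alpha_{jk}$'s entirely, ending up with a clean recurrence that depends only on the diagonal entries $\alpha_{kk}$ and solves transparently into the formula for $C_j$.
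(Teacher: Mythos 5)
Your proposal is correct and takes essentially the same route as the paper: decompose $h'=\Pi_{\bW_A}\ket{h}$ as $\sum_j\lambda_j\cT_{j,A}(\ket{y_j})$, isolate the $\cH_i$-components to set up a reverse-triangle-inequality recurrence $\sqrt{\alpha_{jj}}|\lambda_j|\leq\elltwo{h'}+\sum_{k>j}\sqrt{\alpha_{jk}}|\lambda_k|$, use diagonal dominance to drop the off-diagonal weights, solve by downward induction to get $|\lambda_j|\leq(\sum_{k\geq j}2^{k-j}\alpha_{kk}^{-1/2})\elltwo{h'}$, and combine with $\elltwo{h'}^2\leq\sum_j|\lambda_j|\sqrt{\epsilon_j}$. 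Your explicit Cauchy--Schwarz accounting for the ``in particular'' clause actually yields the slightly sharper constant $\tfrac{4^{l+1}-4}{3\alpha}\leq\tfrac{2^{2l+1}}{\alpha}$, which the paper simply subsumes into the stated bound.
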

\begin{proof}
The lower bound follows trivially since
the projection of $\ket{h}$ onto the $j$th summand space in
the definition of $\bW_{A}$ is of length 
$
\sqrt{\epsilon_j \left(1- \sum_{i=1}^j \alpha_{ij}\right)}.
$

We now prove the upper bound.
Define $h' := \Pi_{\bW_{A}} \ket{h}$. 
Since $h' \in \bW_{A}$, let
$h' = \sum_{j=1}^l \lambda_j \ket{x_j}$ where 
$\lambda_j \in \C$, and $\ket{x_j}$ is a unit vector in 
$\cT_{j, A}(W_j)$ for $j \in [l]$.
Let $\ket{x_j} =  \cT_{j,A}(\ket{y_j})$,
where $\ket{y_j}$ is a unit vector in $W_j$ and is
uniquely determined by $\ket{x_j}$. 
Since the spaces $\cH_j$, $j \in [l]$ are orthogonal to $\cH$, we have
\begin{eqnarray*}
\elltwo{h'} 
& = &
\elltwo{
\sum_{j=1}^l \sqrt{1 - \sum_{i=1}^j \alpha_{ij}} \,
\lambda_j \ket{y_j} + 
\sum_{j=1}^l \sum_{i=1}^{j} \sqrt{\alpha_{ij}} 
\lambda_j \cT_i(\ket{y_j})
} \\
& \geq  &
\elltwo{
\sum_{j=1}^l \sum_{i=1}^{j} \sqrt{\alpha_{ij}} 
\lambda_j \cT_i(\ket{y_j})
}.
\end{eqnarray*}

We now prove by backward induction on $i$ that 
\begin{equation}
\label{eq:lambdai}
|\lambda_i| \leq 
\elltwo{h'} 
\left(\sum_{j=i}^l 2^{j-i} \alpha_{jj}^{-1/2}\right).
\end{equation}
The base case of $i = l+1$ is vacuously true taking $\lambda_{l+1} = 0$.
Suppose the claim is true for $i+1$. We now prove it for $i$. Observe that
\begin{eqnarray*}
\sqrt{\alpha_{ii}} |\lambda_i| -
\sum_{j=i+1}^l \sqrt{\alpha_{ij}} |\lambda_j|
& \leq &
\elltwo{
\sum_{j=i}^l \sqrt{\alpha_{ij}} \lambda_j \cT_i(\ket{y_j})
}    \\
& \leq &
\elltwo{
\sum_{i=1}^{l} \sum_{j=i}^l \sqrt{\alpha_{ij}} 
\lambda_j \cT_i(\ket{y_j})
}    \\
& \leq &
\elltwo{h'},
\end{eqnarray*}
where we used the fact that $\cH_i$ is orthogonal to $\cH_{i'}$ for all
$i' \neq i$ in the second inequality.
Using the induction hypothesis and the diagonal dominated property, we get
\begin{eqnarray*}
|\lambda_i| 
& \leq &
\frac{\elltwo{h'}}{\sqrt{\alpha_{ii}}} + 
\sum_{j=i+1}^l \frac{\sqrt{\alpha_{ij}}}{\sqrt{\alpha_{ii}}} 
|\lambda_j| \\
& \leq &
\frac{\elltwo{h'}}{\sqrt{\alpha_{ii}}} + 
\sum_{j=i+1}^l |\lambda_j| \\
& \leq &
\elltwo{h'} 
\left(
\alpha_{ii}^{-1/2} 
+ 
\sum_{j=i+1}^l \sum_{k=j}^l 2^{k-j} \alpha_{kk}^{-1/2}
\right).
\end{eqnarray*}
Rearranging, we get
\begin{eqnarray*}
|\lambda_i| 
& \leq &
\elltwo{h'} 
\left(
\alpha_{ii}^{-1/2} 
+ 
\sum_{k=i+1}^l \sum_{j=i+1}^{k} 2^{k-j} \alpha_{kk}^{-1/2} 
\right) \\
& \leq &
\elltwo{h'} 
\left(
\alpha_{ii}^{-1/2} 
+ 
\sum_{k=i+1}^l 2^{k-i} \alpha_{kk}^{-1/2} 
\right),
\end{eqnarray*}
which completes the proof of the inequality in (\ref{eq:lambdai}) 
by induction.

We also have
\begin{eqnarray*}
\elltwo{h'}^2 
&   =  &  
|\braket{h}{h'}| 
\;  = \; 
\left|\sum_{j=1}^l \lambda_j \braket{h}{x_j}\right| \\
& \leq & 
\sum_{j=1}^l |\lambda_j| |\braket{h}{x_j}|
\;  = \;
\sum_{j=1}^l |\lambda_j| \sqrt{1-\sum_{i=1}^j \alpha_{ij}} \; 
|\braket{h}{y_j}| \\
& \leq &
\sum_{j=1}^l |\lambda_j| |\bra{h}\Pi_{W_j} \ket {y_j}| 
\;\leq\;
\sum_{j=1}^l |\lambda_j| \elltwo{\Pi_{W_j} \ket {y_j}} \\
&   =  &
\sum_{j=1}^l |\lambda_j| \sqrt{\epsilon_j}. 
\end{eqnarray*}
Combining the above inequality with the inequality in 
(Equation~\ref{eq:lambdai}), we get
\[
\elltwo{h'} \leq  
\sum_{j=1}^l \left(\sum_{k=j}^l 2^{k-j} \alpha_{kk}^{-1/2}\right) 
\sqrt{\epsilon_j},
\]
which completes the proof of the proposition. The special case
where $\alpha_{jj} \geq \alpha$ for all $j \in [l]$ follows via
Cauchy-Schwarz inequality.
\end{proof}
\begin{figure*}[!t]
\centering{
\includegraphics[width=.9\textwidth]{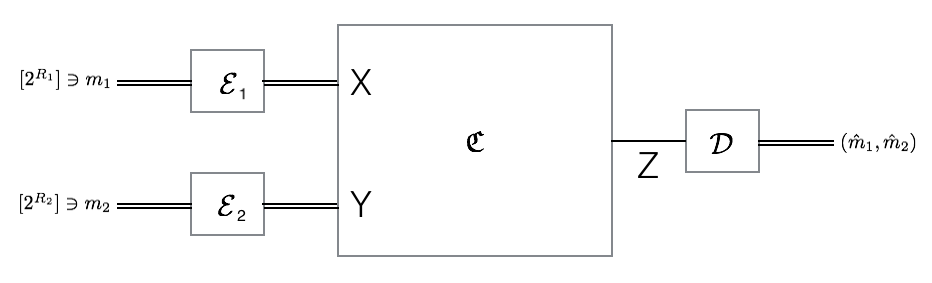}
}
\caption{A classical quantum multiple access channel. Alice, Bob 
encode their
respective messages $m_1$, $m_2$ into classical codewords and then 
feed them to the channel.
Charlie applies a decoding superoperator to the quantum output of 
the channel to get his guess $(\hat{m}_1, \hat{m}_2)$ for the transmitted
messages. $X$, $Y$
are the input alphabets and $Z$ is the output Hilbert space of the 
channel.
}
\label{fig:QMACa}
\end{figure*}

\medskip

\noindent
{\bf Remark:} 

\noindent
It is interesting to contrast the claims of
Propositions~\ref{prop:tiltedspan} and \ref{prop:generalisedtiltedspan}.
In the case where $\alpha_{jj} = \alpha$
for all $j \in [l]$ and $\alpha_{jj'} = 0$ for $j \neq j'$, 
Proposition~\ref{prop:tiltedspan} gives an upper bound of
$\frac{1-\alpha}{\alpha}\sum_{j=1}^l \epsilon_j$.
On the other hand, Proposition~\ref{prop:generalisedtiltedspan} gives a
worse upper bound of $\frac{2^{2l+1}}{\alpha}\sum_{j=1}^l \epsilon_j$. 
The power of
Proposition~\ref{prop:generalisedtiltedspan} lies in its generality.
As discussed earlier, it turns out that we need the general setting of
Proposition~\ref{prop:generalisedtiltedspan} in order to construct
an intersection projector strong enough to prove 
the one-shot quantum joint typicality lemma.
In other words, Proposition~\ref{prop:generalisedtiltedspan}
can be thought of as a way to do intersection of hypothesis tests in the
quantum setting. Proposition~\ref{prop:tiltedspan} will
be used in situations where one needs to consider union of quantum
hypothesis
tests in the quantum setting. Such situations arise in several network
information theoretic tasks on top of joint typicality requirements.
Some concrete applications where both intersection and union of
hypothesis tests are required can be found in the companion 
paper~\cite{sen:simultaneous}.

\section{Simplified one-shot inner bound for the classical quantum MAC}
\label{sec:cqMACBaby}
As a warmup, we show how tilting, and smoothing and augmentation 
can be used to prove a simplified version of our one-shot inner bound for
the multiple access channel with classical inputs and quantum output,
called cq-MAC henceforth (see Fig.~\ref{fig:QMACa}).
In the asymptotic iid setting, Winter~\cite{winter:cqmac} used a standard
{\em successive cancellation} argument to reduce the problem of finding
inner bounds for the cq-MAC to the problem of finding inner bounds
for the classical-quantum point-to-point channel. This allowed him
to prove an optimal inner bound in the asymptotic iid setting.
However in the one-shot setting, successive cancellation arguments give
only a finite set of achievable rate tuples. In order to get a continuous
rate region, we need to look for a {\em simultaneous} decoder
for the cq-MAC. Fawzi {\it et al\/}~\cite{fawzi:interference} and 
Sen~\cite{sen:interference} did construct a simultaneous decoder for
the two sender cq-MAC but their constructions, which were given in the 
asymptotic iid setting, are not known to work in the one-shot setting.
Qi, Wang and Wilde~\cite{qi:simultaneous} constructed a one-shot 
simultaneous
decoder for the cq-MAC with an arbitrary number of senders, but their
achievable rates restricted to the asymptotic iid setting are inferior
to the optimal rates obtained by Winter.
Thus, for more than two senders a 
simultaneous decoder for the cq-MAC achieving optimal rates was
hitherto unknown even in the asymptotic
iid setting. 

In this section, we construct a 
simultaneous decoder for the cq-MAC for two senders.
The simplified one shot inner bound given in this section does not
use a so-called `time sharing random variable'. The full version with
the time sharing random variable can be found in Section~\ref{sec:cqMAC}.
We obtain an inner bound which turns out to be the natural
one-shot analogue of Winter's inner bound, as well as the natural
quantum analogue of the one-shot classical inner bound described in
Section~\ref{sec:cMAC}. In other words, our one-shot inner bound
reduces to Winter's optimal inner bound in the asymptotic iid setting.
A one-shot outer bound nearly matching our 
inner bound was shown recently by Anshu, Jain and Warsi~\cite{anshu:qmac}.
That paper also gives another inner bound for the cq-MAC which, 
though nearly optimal in the one-shot setting, is not known to reduce
to the standard asymptotic iid inner bound of Winter unlike our 
one-shot inner bound.

There are two senders Alice and Bob who
would like to send classical messages 
$m_1 \in [2^{R_1}]$, $m_2 \in 2^{R_2}$ to 
a receiver Charlie. There
is a communication channel $\chan$ with two classical inputs and one 
quantum output called a cq-MAC
connecting Alice and Bob to Charlie. The 
two input alphabets of $\chan$ will be denoted
by $\cX$, $\cY$ and the output Hilbert space by $\cZ$. 
Let $0 \leq \epsilon \leq 1$.
On getting message $m_1$, Alice encodes it as a letter $x(m_1) \in \cX$
and feeds it to her channel input. 
Similarly on getting message $m_2$, Bob encodes it as a letter 
$y(m_2) \in \cY$ and feeds it to his channel input. The channel $\chan$
outputs a quantum state $\rho^Z_{x(m_1),y(m_2)}$ in $\cZ$. 
Charlie now has to try and guess 
the message pair $(m_1, m_2)$ from the channel output. 
We require that the probability of Charlie's decoding error 
averaged over the uniform distribution on
the set of message pairs $(m_1, m_2) \in [2^{R_1}] \times [2^{R_2}]$ 
is at most $\epsilon$.

Fix independent probability distributions $p(x)$, $p(y)$ 
on sets $\cX$, $\cY$.
Consider the classical-quantum state
\[
\rho^{XYZ} :=
\sum_{x,y} p(x) p(y) 
\ketbra{x,y}^{XY} \otimes
\rho^Z_{x,y}.
\]
This state `controls' the encoding and decoding performance for the
channel $\chan$. Define
\begin{eqnarray*}
\rho^Z_{x} 
& := &
\sum_y p(y) \rho^Z_{x,y}, \\
\rho^Z_{y} 
& := &
\sum_x p(x) \rho^Z_{x,y}, \\
\rho^Z
& := &
\sum_{x,y} p(x) p(y) \rho^Z_{x,y}.
\end{eqnarray*}
Then,
\begin{eqnarray*}
\rho^{XZ} 
& = &
\sum_x p(x) \ketbra{x} \otimes \rho^Z_{x}, \\
\rho^{YZ} 
& := &
\sum_y p(y) \ketbra{y} \otimes \rho^Z_{y}.
\end{eqnarray*}

Consider a new alphabet, as well as Hilbert space, $\cL$ and define
the {\em augmented} systems 
$\cX' := \cX \otimes \cL$,
$\cY' := \cY \otimes \cL$, 
and the extended system $\cZ'$ as follows: 
\[
\cZ' := 
(\cZ \otimes \C^2) \oplus
(\cZ \otimes \C^2 \otimes \cL^X) \oplus
(\cZ \otimes \C^2 \otimes \cL^Y),
\]
where the symbol $\oplus$ denotes orthogonal direct sum of spaces,
$\cL^X$, $\cL^Y$ are distinct orthogonal spaces isomorphic to $\cL$
labelled with mnemonic superscripts $X$, $Y$.
The role of the mnemonics
will become clear shortly when we described the tilted state. The
aim of {\em augmentation} is to ensure a so-called {\em smoothness} 
property of states obtained by averaging over $\cX'$ or $\cY'$ or both.

Consider a `variant' cq-MAC 
$\chan'$ with input alphabets $\cX'$, $\cY'$
and output Hilbert space $\cZ'$. On input $(x,l_x)$, 
$(y,l_y)$ to
$\chan'$, the output of $\chan'$ is the state
$
\rho^Z_{x,y} \otimes \ketbra{0}^{\C^2}.
$
The output of
$\chan'$ is taken to embed into the first summand in the definition of 
$\cZ'$ above.
The classical
quantum state `controlling' the encoding and decoding for $\chan'$ is
nothing but
$
\rho^{XYZ} \otimes 
\ketbra{0}^{\C^2} \otimes
\frac{\one^{\cL^{\otimes 2}}}{|\cL|^2}.
$
The channel $\chan'$ can be trivally obtained from channel $\chan$.
The expected average decoding error for $\chan'$ is the same as 
the expected average decoding error for $\chan$ for the same
rate pair $(R_1, R_2)$. In fact, an encoding-decoding scheme for
$\chan'$ immediately gives an encoding-decoding scheme for $\chan$
with the same rate pair $(R_1, R_2)$ and the same decoding error.

Let $0 \leq \delta \leq 1/10$.
Define $\hcZ := \cZ \otimes \C^2$. 
Let $l_x \in \cL^X$, $l_y \in \cL^Y$.
Consider the {\em tilting} map
$\cT_{X;l_x,\delta}: \hcZ \rightarrow \hcZ \oplus (\hcZ \otimes \cL^X)$ 
defined as
\[
\cT_{X;l_x,\delta}: \ket{h} \mapsto 
\frac{1}{\sqrt{1 + \delta^2}} 
(\ket{h} + \delta \ket{h}\ket{l_x}). 
\]
Define the tilting map $\cT_{Y:l_y,\delta}$ analogously. 
Define the tilting map
$\cT_{XY;l_x,l_y,\delta}: \hcZ \rightarrow \cZ'$ as
\[
\cT_{XY;l_x,l_y,\delta}: \ket{h} \mapsto 
\frac{1}{\sqrt{1 + 2 \delta^2}} 
(\ket{h} + \delta \ket{h}\ket{l_x} + \delta \ket{h}\ket{l_y}). 
\]
Next, consider a `perturbed' cq-MAC $\chan''$ with the same input alphabets
and output Hilbert space as in $\chan'$ (see Fig.~\ref{fig:QMACb}).
\begin{figure*}[!t]
\centering{
\includegraphics[width=.9\textwidth]{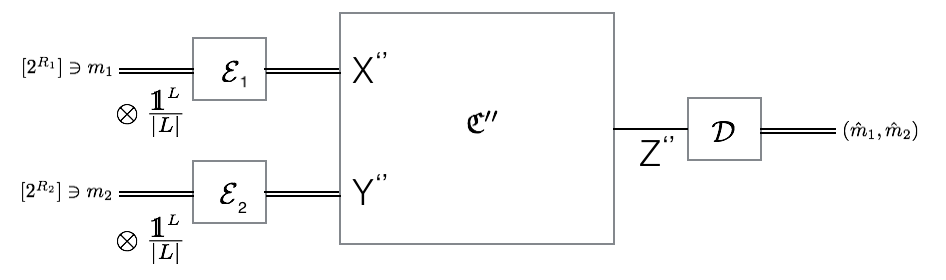}
}
\caption{The perturbation of the classical quantum multiple access 
channel $\chan$ of Fig.~\ref{fig:QMACa}. Alice, Bob 
encode their
respective messages $m_1$, $m_2$ into classical codewords, tensor
the codewords with completely mixed states over $L$, and then 
feed them to the perturbed channel $\chan''$.
Charlie applies a decoding superoperator to the quantum output of 
the channel to get his guess $(\hat{m}_1, \hat{m}_2)$ 
for the transmitted messages. $X''$, $Y''$
are the input alphabets and $Z''$ is the output Hilbert space of the 
channel. A rate pair that is achievable for the perturbed channel $\chan''$
is also achievable for the original channel $\chan$ with at most a
slight increase in error probability.
}
\label{fig:QMACb}
\end{figure*}
However, on input $(x,l_x)$, $(y,l_y)$ the output is the state
\[
(\rho')_{(x,l_x), (y, l_y), \delta}^{Z'}
 := 
\cT_{XY;l_x,l_y,\delta}(\rho^Z_{x,y} \otimes \ketbra{0}^{\C^2}),
\]
where the map $\cT_{XY;l_x,l_y,\delta}$ acts on a mixed
state by acting on each pure state in the mixture individually.
Consider the classical quantum state
\begin{eqnarray*}
\lefteqn{(\rho')^{X'Y'Z'}} \\
& := &
|\cL|^{-2}
\sum_{x, y, l_x, l_y} p(x) p(y)
\ketbra{x,l_x}^{X'} \otimes 
\ketbra{y,l_y}^{Y'} \\
&     &
~~~~~~~~~~~~~~~~~~~~~~~~~~~~~
{} \otimes 
(\rho')^{Z'}_{(x,l_x),(y,l_y),\delta}.
\end{eqnarray*}
This state `controls' the encoding and decoding performance for the
channel $\chan''$.

It is now easy to see that
\begin{equation}
\label{eq:cqmacperturbation}
\begin{array}{rcl}
\ellone{
(\rho')^{Z'}_{(x,l_x),(y,l_y),\delta} -
\rho^{Z}_{x,y} \otimes \ketbra{0}^{\C^2}
} 
& \leq &
4 \delta^2, \\
\ellone{
(\rho')^{X'Y'Z'} -
\rho^{XYZ} \otimes 
\ketbra{0}^{\C^2} \otimes
\frac{\one^{\cL^{\otimes 2}}}{|\cL|^2}
} 
& \leq &
4 \delta^2.
\end{array}
\end{equation}
Thus, the expected average decoding error for $\chan''$ is at most
the expected average decoding error for $\chan'$, which is also the same
as the expected average decoding error for $\chan$,  plus
$2 \delta^2$, for the same
rate pair $(R_1, R_2)$, and the same decoding strategy.

Consider the following randomised construction of a codebook $\cC$
for Alice and Bob for communication over the channel $\chan''$. 
Fix probability distributions $p(x)$, $p(y)$ on sets 
$\cX$, $\cY$. 
For all $m_1 \in [2^{R_1}]$, choose $(x,l_x)(m_1) \in \cX \times \cL$ 
independently according to the product of the distribution $p(x)$ on
$\cX$ and the uniform distribution on $\cL$. Similarly
for all $m_2 \in [2^{R_2}]$, choose $(y,l_y)(m_2) \in \cY \times \cL$ 
independently according to the product of the distribution $p(y)$ on
$\cY$ and the uniform distribution on $\cL$.

Let $\epsilon > 0$. 
Consider the optimising POVM element 
$(\Pi'')^{XYZ}_X$ in the Hilbert space $\cX \otimes \cY \otimes \cZ$
arising in the definition of $I^\epsilon_H(Y : XZ)$. 
Without loss of generality $\Pi^{XYZ}_X$
is of the form
\[
(\Pi'')^{XYZ}_X =
\sum_{x,y} \ketbra{x}^X \otimes \ketbra{y}^Y \otimes (\Pi'')^Z_{X;x,y}.
\]
Similarly we can define, for each $(x,y) \in \cX \times \cY$, POVM
elements $(\Pi'')^Z_{Y;x,y}$, $(\Pi'')^Z_{x,y}$ arising in the 
definitions of
$I^\epsilon_H(X : YZ)$, $I^\epsilon_H(XY : Z)$. 
Then
\begin{equation}
\label{eq:cqmacoptimising}
\begin{array}{rcl}
\Tr [(\Pi'')^{XYZ}_{X} \rho^{XYZ}] 
& \geq & 
1 - \epsilon, \\
\Tr [(\Pi'')^{XYZ}_{X} (\rho^{Y} \otimes \rho^{XZ})] 
& \leq & 
2^{-I^\epsilon_H(X:YZ)}, \\
\Tr [(\Pi'')^{XYZ}_{Y} \rho^{XYZ}] 
& \geq & 
1 - \epsilon, \\
\Tr [(\Pi'')^{XYZ}_{Y} (\rho^{X} \otimes \rho^{YZ})] 
& \leq & 
2^{-I^\epsilon_H(Y:XZ)}, \\
\Tr [(\Pi'')^{XYZ} \rho^{XYZ}] 
& \geq & 
1 - \epsilon, \\
\Tr [(\Pi'')^{XYZ} (\rho^{XY} \otimes \rho^Z)] 
& \leq & 
2^{-I^\epsilon_H(XY:Z)}.
\end{array}
\end{equation}

By Fact~\ref{fact:gelfandnaimark}, there are orthogonal projections 
$\Pi^{\hZ}_{X;x,y}$, $\Pi^{\hZ}_{Y;x,y}$, $\Pi^{\hZ}_{x,y}$ in $\cZ'$
that give the same measurement probability on states 
$\sigma^Z \otimes \ketbra{0}^{\C^2}$ that POVM elements
$(\Pi'')^{Z}_{X;x,y}$, $(\Pi'')^{Z}_{Y;x,y}$, $(\Pi'')^{Z}_{x,y}$ 
give on
states $\sigma^Z$. Let $W_{X;x,y}$ denote the orthogonal complement of
the support of $\Pi^{\hZ}_{X;x,y}$ in $\hcZ$. Define $W_{Y;x,y}$, 
$W_{x,y}$ analogously. 
Define the tilted spaces 
\[
W'_{X;(x,l_x),(y,l_y),\delta} := 
\cT_{X;l_x,\delta}(W_{X;x,y}),
~
W'_{Y;(x,l_x),(y,l_y),\delta} := 
\cT_{Y;l_y,\delta}(W_{Y;x,y}),
\]
residing in the Hilbert space $\cZ'$. Define the subspace
\[
W'_{(x,l_x),(y,l_y),\delta} :=
W'_{X;(x,l_x),(y,l_y),\delta} + 
W'_{Y;(x,l_x),(y,l_y),\delta} + 
W_{x,y},
\]
and
$
(\Pi')^{Z'}_{W'_{(x,l_x),(y,l_y),\delta}}
$
to be the orthogonal projection in $\cZ'$ onto 
$W'_{(x,l_x),(y,l_y),\delta}$. Let $\Pi^{Z'}_{\hcZ}$ 
be the orthogonal projection in $\cZ'$ onto $\hcZ$. Then define 
the POVM element $(\Pi')^{Z'}_{(x,l_x),(y,l_y),\delta}$ in $\cZ'$ to be
\[
(\Pi')^{Z'}_{(x,l_x),(y,l_y),\delta} :=
(\one^{\cZ'} - (\Pi')^{Z'}_{W'_{(x,l_x),(y,l_y),\delta}})
\Pi^{Z'}_{\hcZ}
(\one^{\cZ'} - (\Pi')^{Z'}_{W'_{(x,l_x),(y,l_y),\delta}}).
\]

Define the POVM element
\begin{eqnarray*}
\lefteqn{(\Pi')^{X' Y' Z'}} \\
& := &
\sum_{x, y, l_x, l_y}
\ketbra{x, l_x}^{X'} \otimes
\ketbra{y, l_y}^{Y'} \otimes
(\Pi')_{(x,l_x), (y,l_y), \delta}^{Z'}.
\end{eqnarray*}
By Equations~\ref{eq:cqmacperturbation}, \ref{eq:cqmacoptimising}, 
Fact~\ref{fact:noncommutativeunionbound} and 
Corollary~\ref{cor:tiltedspan}, we get
\begin{eqnarray*}
\lefteqn{
\Tr [(\one^{X'Y'Z'} - (\Pi')^{X' Y' Z'}) (\rho')^{X'Y'Z'}]
} \\
& = &
\Tr (\rho')^{X'Y'Z'} -
\Tr [(\Pi')^{X' Y' Z'}) (\rho')^{X'Y'Z'}] \\
& = &
|\cL|^{-2} \sum_{x,y,l_x,l_y} p(x) p(y) 
\left(
\Tr (\rho')^{Z'}_{(x,l_x),(y,l_y),\delta} 
\right. \\
&   &
~~~~~~~~~~~
\left.
{} -
\Tr [
(\Pi')^{Z'})_{(x,l_x),(y,l_y),\delta} 
(\rho')^{Z'}_{(x,l_x),(y,l_y),\delta}
]
\right) \\
& = &
|\cL|^{-2} \sum_{x,y,l_x,l_y} p(x) p(y) 
\left(
\Tr (\rho')^{Z'}_{(x,l_x),(y,l_y),\delta} 
\right. \\
&   &
~~~~~~~~~~~
{} -
\Tr [
\Pi^{Z'}_{\hcZ}
(\one^{\cZ'} - (\Pi')^{Z'}_{W'_{(x,l_x),(y,l_y),\delta}}) \\
&   &
~~~~~~~~~~~~~~~~~~~~~~
(\rho')^{Z'}_{(x,l_x),(y,l_y),\delta} \\
&   &
~~~~~~~~~~~~~~~~~~~~~~
\left.
(\one^{\cZ'} - (\Pi')^{Z'}_{W'_{(x,l_x),(y,l_y),\delta}})
\Pi^{Z'}_{\hcZ}
]
\right) \\
& \leq &
|\cL|^{-2} \sum_{x,y,l_x,l_y} p(x) p(y) 
\left(
\Tr [
(\one^{\cZ'} - \Pi^{Z'}_{\hcZ})
(\rho')^{Z'}_{(x,l_x),(y,l_y),\delta}
] 
\right. \\
&      &
~~~~~~~~~~~~~~~~~~~~~~~~
\left.
{} +
\Tr [
(\Pi')^{Z'}_{W'_{(x,l_x),(y,l_y),\delta}} 
(\rho')^{Z'}_{(x,l_x),(y,l_y),\delta} 
]
\right) \\
& \leq &
4 \delta^2 \\
&      &
{} +
|\cL|^{-2} \sum_{x,y,l_x,l_y} p(x) p(y) 
\left(
\Tr [
(\one^{\cZ'} - \Pi^{Z'}_{\hcZ})
(\rho^{Z}_{x,y} \otimes \ketbra{0}^{\C^2})
] 
\right. \\
&      &
~~~~~~~~~~~~~~~~~~~~~~~~
\left.
{} +
\Tr [
(\Pi')^{Z'}_{W'_{(x,l_x),(y,l_y),\delta}} 
(\rho^{Z}_{x,y} \otimes \ketbra{0}^{\C^2})
]
\right) \\
& \leq &
\frac{6}{\delta^2}
|\cL|^{-2} \sum_{x,y,l_x,l_y} p(x) p(y) 
\left( 
\right. \\
&   &
~~~~~~~~~~~~~~~~~~~~~~~~~~~~~~
(1 -
\Tr [
\Pi^{\hZ}_{X;x,y} 
(\rho^{Z}_{x,y} \otimes \ketbra{0}^{\C^2})
]
) \\
&   &
~~~~~~~~~~~~~~~~~~~~~~~~~~~~~~
{} +
(1 -
\Tr [
\Pi^{\hZ}_{Y;x,y} 
(\rho^{Z}_{x,y} \otimes \ketbra{0}^{\C^2})
]
) \\
&   &
~~~~~~~~~~~~~~~~~~~~~~~~~~~~~~
\left.
{} +
(1 -
\Tr [
\Pi^{\hZ}_{x,y} 
(\rho^{Z}_{x,y} \otimes \ketbra{0}^{\C^2})
]
)
\right) \\
&   &
{} +
4 \delta^2 \\
& \leq &
\frac{6}{\delta^2} 
\left(
(1 - \Tr [(\Pi'')^{XYZ}_{X} \rho^{XYZ}]) +
(1 - \Tr [(\Pi'')^{XYZ}_{Y} \rho^{XYZ}]) 
\right. \\
&   &
~~~~~~~
\left.
{} +
(1 - \Tr [(\Pi'')^{XYZ} \rho^{XYZ}])
\right) +
4 \delta^2 \\
& \leq &
\frac{18 \epsilon}{\delta^2} +
4 \delta^2.
\end{eqnarray*}
We choose $\delta := \epsilon^{1/4}$ to get
\begin{equation}
\label{eq:cqmactype1error}
\Tr [(\one^{X'Y'Z'} - (\Pi')^{X' Y' Z'}) (\rho')^{X'Y'Z'}] \leq
22 \epsilon^{1/2}.
\end{equation}

We now illustrate how the {\em augmentation} of $\cX$ to $\cX'$ and
$\cY$ to $\cY'$ helps in the so-called  {\em smoothing} of the 
states
\begin{eqnarray*}
(\rho')^{Z'}_{(x,l_x),\delta} 
& := &
|\cL|^{-1} \sum_{y,l_y} p(y)
(\rho')^{Z'}_{(x,l_x),(y,l_y),\delta}, \\
(\rho')^{Z'}_{(y,l_y),\delta} 
& := &
|\cL|^{-1} \sum_{x,l_x} p(x)
(\rho')^{Z'}_{(x,l_x),(y,l_y),\delta}, \\
(\rho')^{Z'}_{\delta} 
& := &
|\cL|^{-2} \sum_{x,l_x,y,l_y} p(x) p(y)
(\rho')^{Z'}_{(x,l_x),(y,l_y),\delta}.
\end{eqnarray*}
As will become clear below, the tilting map $\cT_{XY;l_x,l_y,\delta}$ is
defined in such a way that 
$(\rho')^{Z'}_{(x,l_x),\delta}$ is very
close to $\cT_{X;l_x,\delta}(\rho^Z_{x} \otimes \ketbra{0}^{\C^2})$ 
in the $\ell_\infty$-norm,
$(\rho')^{Z'}_{(y,l_y),\delta}$ is very
close to $\cT_{Y;l_y,\delta}(\rho^Z_{y} \otimes \ketbra{0}^{\C^2})$ 
in the $\ell_\infty$-norm, and
$(\rho')^{Z'}_{\delta}$ is very
close to $\rho^Z \otimes \ketbra{0}^{\C^2}$ 
in the $\ell_\infty$-norm. This closeness is what we mean by 
{\em smoothing}, and {\em augmentation} is required to ensure proper
smoothing.
Notice now that
\begin{eqnarray*}
\lefteqn{
(\rho')^{Y'} \otimes (\rho')^{X'Z'} 
} \\
&  =  &
\left(|\cL|^{-1} \sum_{y,l_y} p(y) \ketbra{y,l_y}\right) \\
&    &
{} 
\otimes
\left(
|\cL|^{-1} \sum_{x,l_x} p(x) \ketbra{x,l_x} \otimes 
 (\rho')^{Z'}_{(x,l_x),\delta}
\right), \\
\lefteqn{
(\rho')^{X'} \otimes (\rho')^{Y'Z'} 
} \\
&  =  &
\left(|\cL|^{-1} \sum_{x,l_x} p(x) \ketbra{x,l_x}\right) \\
&    &
{} 
\otimes
\left(
|\cL|^{-1} \sum_{y,l_y} p(y) \ketbra{y,l_y} \otimes 
 (\rho')^{Z'}_{(y,l_y),\delta}
\right), \\
\lefteqn{
(\rho')^{X'Y'} \otimes (\rho')^{Y'Z'} 
} \\
&  =  &
\left(|\cL|^{-2} \sum_{x,l_x,y,l_y} p(x) p(y) \ketbra{x,l_x,y,l_y}\right) 
\otimes
(\rho')^{Z'}_{\delta}.
\end{eqnarray*}

Observe that
\begin{eqnarray*}
\lefteqn{
|\cL|^{-1} \sum_{l_y} \cT_{XY;l_x,l_y}(\ketbra{h})
} \\
& = &
\frac{|\cL|^{-1}}{1 + 2\delta^2} 
\sum_{l_y} 
\left(
\right. \\
&   &
~~~~
(\ket{h} + \delta \ket{h} \ket{l_x})
(\bra{h} + \delta \bra{h} \bra{l_x}) \\
&   &
~~~~
{} +
\delta (\ket{h} + \delta \ket{h} \ket{l_x}) \bra{h}\bra{l_y} \\
&   &
~~~~
{} +
\delta \ket{h}\ket{l_y} (\bra{h} + \delta \bra{h} \bra{l_x}) \\
&   &
~~~~
\left.
{} +
\delta^2 \ketbra{h}\ketbra{l_y} 
\right) \\
& = &
\frac{1+\delta^2}{1 + 2\delta^2} \cT_{X;l_x,\delta}(\ketbra{h}) +
N_{X;l_x,\delta}(\ketbra{h}),
\end{eqnarray*}
where
\begin{eqnarray*}
\lefteqn{N_{X;l_x,\delta}(\ketbra{h})} \\
& := &
\frac{|\cL|^{-1}}{1 + 2\delta^2} 
\sum_{l_y} 
\left(
\delta (\ket{h} + \delta \ket{h} \ket{l_x}) \bra{h}\bra{l_y} 
\right. \\
&   &
~~~~~~~~~~~~~~~~~~~~
{} +
\delta \ket{h}\ket{l_y} (\bra{h} + \delta \bra{h} \bra{l_x}) \\
&   &
~~~~~~~~~~~~~~~~~~~~
\left.
{} +
\delta^2 \ketbra{h}\ketbra{l_y} 
\right).
\end{eqnarray*}
The vectors $\ket{l_y}$ are orthogonal as $l_y$ runs through the
computational basis vectors of $\cL^Y$. From this, it is easy to see that
\begin{eqnarray*}
\lefteqn{\ellinfty{N_{X;l_x,\delta}(\ketbra{h})}} \\
& \leq &
\frac{|\cL|^{-1}}{1 + 2\delta^2} 
\left(
\ellinfty{
\sum_{l_y} 
\delta (\ket{h} + \delta \ket{h} \ket{l_x}) \bra{h}\bra{l_y} 
} 
\right. \\
&    &
~~~~~~~~~~~~~~~~~~
{} +
\ellinfty{
\sum_{l_y} 
\delta \ket{h}\ket{l_y} (\bra{h} + \delta \bra{h} \bra{l_x}) 
} \\
&    &
~~~~~~~~~~~~~~~~~~
\left.
{} +
\ellinfty{
\sum_{l_y} 
\delta^2 \ketbra{h}\ketbra{l_y} 
}
\right) \\
&   =  &
\frac{|\cL|^{-1}}{1 + 2\delta^2} 
\left(
\delta \elltwo{\ket{h} + \delta \ket{h} \ket{l_x}} \cdot
\elltwo{\sum_{l_y} \ket{l_y}} 
\right. \\
&    &
~~~~~~~~~~~~~~~~~~
\left.
{} +
\delta 
\elltwo{\ket{h} + \delta \ket{h} \ket{l_x}} \cdot
\elltwo{\sum_{l_y} \ket{l_y}} +
\delta^2 
\right) \\
& \leq &
\frac{3 \delta}{\sqrt{|\cL|}}.
\end{eqnarray*}
Similarly, one can define
\begin{eqnarray*}
N_{Y;l_y,\delta}(\ketbra{h}) 
& := &
\frac{|\cL|^{-1}}{1 + 2\delta^2} 
\sum_{l_x} 
\left(
\delta (\ket{h} + \delta \ket{h} \ket{l_y}) \bra{h}\bra{l_x} 
\right. \\
&   &
~~~~~~~~~~~~~~~~~~~~
{} +
\delta \ket{h}\ket{l_x} (\bra{h} + \delta \bra{h} \bra{l_y}) \\
&   &
~~~~~~~~~~~~~~~~~~~~
\left.
{} +
\delta^2 \ketbra{h}\ketbra{l_x} 
\right),
\end{eqnarray*}
\begin{eqnarray*}
N_{\delta}(\ketbra{h}) 
& := &
\frac{|\cL|^{-2}}{1 + 2\delta^2} 
\sum_{l_x,l_y} 
\left(
\delta \ket{h} (\bra{h}\bra{l_x} + \bra{h}\bra{l_y}) 
\right. \\
&   &
~~~~~~~~~~~~~~~~~~~~
{} +
\delta (\ket{h}\ket{l_x} + \ket{h}\ket{l_y}) \bra{h}  \\
&   &
~~~~~~~~~~~~~~~~~~~~
\left.
{} +
\delta^2 \ketbra{h}(\ket{l_x} + \ket{l_y}) (\bra{l_x} + \bra{l_y})
\right),
\end{eqnarray*}
and show that their $\ell_\infty$-norms are 
upper bounded by $\frac{3 \delta}{\sqrt{|\cL|}}$ each. 
We can also extend the maps $N_{X;l_x,\delta}$,
$N_{Y;l_y,\delta}$, $N_{\delta}$ to mixed states in the natural manner.

We can thus write
\begin{eqnarray*}
(\rho')^{Z'}_{(x,l_x),\delta} 
& = &
\frac{1+\delta^2}{1 + 2\delta^2} 
\cT_{X;l_x,\delta}(\rho^Z_x \otimes \ketbra{0}^{\C^2}) \\
&   &
{} +
N_{X;l_x,\delta}(\rho^Z_x \otimes \ketbra{0}^{\C^2}), \\
(\rho')^{Z'}_{(y,l_y),\delta} 
& = &
\frac{1+\delta^2}{1 + 2\delta^2} 
\cT_{Y;l_y,\delta}(\rho^Z_y \otimes \ketbra{0}^{\C^2}) \\
&   &
{} +
N_{Y;l_y,\delta}(\rho^Z_y \otimes \ketbra{0}^{\C^2}), \\
(\rho')^{Z'}_{\delta} 
& = &
\frac{1}{1 + 2\delta^2} (\rho^Z \otimes \ketbra{0}^{\C^2}) \\
&   &
{} +
N_{\delta}(\rho^Z \otimes \ketbra{0}^{\C^2}), \\
\end{eqnarray*}
with
\begin{eqnarray*}
\ellinfty{N_{X;l_x,\delta}(\rho^Z_x \otimes \ketbra{0}^{\C^2})}
& \leq & 
\frac{3 \delta}{\sqrt{|\cL|}}, \\
\ellinfty{N_{Y;l_y,\delta}(\rho^Z_y \otimes \ketbra{0}^{\C^2})} 
& \leq & 
\frac{3 \delta}{\sqrt{|\cL|}}, \\
\ellinfty{N_{\delta}(\rho^Z \otimes \ketbra{0}^{\C^2})} 
& \leq & 
\frac{3 \delta}{\sqrt{|\cL|}}. 
\end{eqnarray*}
Observe now that
\begin{eqnarray*}
\ellone{(\Pi')^{Z'}_{(x,l_x),(y,l_y),\delta}} 
& \leq &
\ellinfty{
(\one^{\cZ'} - (\Pi')^{Z'}_{W'_{(x,l_x),(y,l_y),\delta}})
}^2
\ellone{\Pi^{Z'}_{\hcZ}}  \\
&  =   &
2 |\cZ|.
\end{eqnarray*}
Hence, we get the upper bound
\begin{eqnarray*}
\lefteqn{
\Tr [
(\Pi')^{Z'}_{(x,l_x),(y,l_y),\delta}  
(\rho')^{Z'}_{(x,l_x),\delta} 
]
} \\
& \leq &
\Tr [
(\Pi')^{Z'}_{(x,l_x),(y,l_y),\delta}  
(\cT_{X;l_x,\delta}(\rho^Z_x \otimes \ketbra{0}^{\C^2}))
] \\
&      &
{} +
\ellone{(\Pi')^{Z'}_{(x,l_x),(y,l_y),\delta}} \cdot
\ellinfty{N_{X;l_x,\delta}(\rho^Z_x \otimes \ketbra{0}^{\C^2})} \\
& \leq &
\Tr [
(\one^{\cZ'} - (\Pi')^{Z'}_{W'_{(x,l_x),(y,l_y),\delta}})
(\cT_{X;l_x,\delta}(\rho^Z_x \otimes \ketbra{0}^{\C^2}))
] \\
&      &
{} +
\frac{6 \delta |\cZ|}{\sqrt{|\cL|}} \\ 
& \leq &
\Tr [
(\one^{\cZ'} - (\Pi')^{Z'}_{W'_{X;(x,l_x),(y,l_y),\delta}})
(\cT_{X;l_x,\delta}(\rho^Z_x \otimes \ketbra{0}^{\C^2}))
] \\
&      &
{} +
\frac{6 \delta |\cZ|}{\sqrt{|\cL|}} \\ 
&   =  &
\Tr [
(\one^{\cZ'} - (\Pi')^{Z'}_{W'_{X;(x,l_x),(y,l_y),\delta}})
\Pi^{Z'}_{\cT_{X;l_x,\delta}(\hcZ)}
(\cT_{X;l_x,\delta}(\rho^Z_x \otimes \ketbra{0}^{\C^2}))
] \\
&      &
{} +
\frac{6 \delta |\cZ|}{\sqrt{|\cL|}} \\ 
&   =  &
\Tr [
(
\one^{\cT_{X;l_x,\delta}(\hcZ)} - 
(\Pi')^{\cT_{X;l_x,\delta}(\hcZ)}_{W'_{X;(x,l_x),(y,l_y),\delta}}
)
(\cT_{X;l_x,\delta}(\rho^Z_x \otimes \ketbra{0}^{\C^2}))
] \\
&      &
{} +
\frac{6 \delta |\cZ|}{\sqrt{|\cL|}} \\ 
&   =  &
\Tr [
(
\one^{\hcZ} - 
\Pi^{\hcZ}_{W_{X;x,y}}
)
(\rho^Z_x \otimes \ketbra{0}^{\C^2})
] +
\frac{6 \delta |\cZ|}{\sqrt{|\cL|}} \\ 
&   =  &
\Tr [
\Pi^{\hcZ}_{X;x,y}
(\rho^Z_x \otimes \ketbra{0}^{\C^2})
] +
\frac{6 \delta |\cZ|}{\sqrt{|\cL|}} \\ 
&   =  &
\Tr [
(\Pi'')^{Z}_{X;x,y}
\rho^Z_x 
] +
\frac{6 \delta |\cZ|}{\sqrt{|\cL|}}.
\end{eqnarray*}
Above, we used the facts that 
$W'_{X;(x,l_x),(y,l_y),\delta} \leq W'_{(x,l_x),(y,l_y),\delta}$, 
$W'_{X;(x,l_x),(y,l_y),\delta} \leq \cT_{X;l_x,\delta}(\hcZ)$,
and that $\cT_{X;l_x,\delta}$ is an isometry.
Using Equation~\ref{eq:cqmacoptimising}, we now get
\begin{eqnarray*}
\lefteqn{
\Tr [
(\Pi')^{X'Y'Z'}  
((\rho')^{Y'} \otimes (\rho')^{X'Z'})
]
} \\
&  =  &
|\cL|^{-2} \sum_{y,l_y} p(y) \sum_{x,l_x} p(x)  
\Tr [
(\Pi')^{Z'}_{(x,l_x),(y,l_y),\delta}  
(\rho')^{Z'}_{(x,l_x),\delta} 
] \\
& \leq &
|\cL|^{-2} \sum_{y,l_y} p(y) \sum_{x,l_x} p(x)  
\Tr [
(\Pi'')^{Z}_{X;x,y}
\rho^Z_x 
] +
\frac{6 \delta |\cZ|}{\sqrt{|\cL|}} \\
&   =  &
\sum_{x,y} p(x) p(y)
\Tr [
(\Pi'')^{Z}_{X;x,y}
\rho^Z_x 
] +
\frac{6 \delta |\cZ|}{\sqrt{|\cL|}} \\
&   =  &
\Tr [
(\Pi'')^{XYZ}_{X}
(\rho^Y \otimes \rho^{XZ})
] +
\frac{6 \delta |\cZ|}{\sqrt{|\cL|}} \\
& \leq &
2^{-I^\epsilon_H(Y:XZ)} +
\frac{6 \delta |\cZ|}{\sqrt{|\cL|}}.
\end{eqnarray*}
We choose $|\cL|$ large enough so that the second term in the
last inequality is less than 
$
\min\{
2^{-I^\epsilon_H(Y:XZ)}, 
2^{-I^\epsilon_H(X:YZ)}, 
2^{-I^\epsilon_H(XY:Z)} 
\}.
$
Thus, we have the inequalities
\begin{equation}
\label{eq:cqmactype2error}
\begin{array}{rcl}
\Tr [
(\Pi')^{X'Y'Z'}  
((\rho')^{Y'} \otimes (\rho')^{X'Z'})
]
& \leq &
2^{-I^\epsilon_H(Y:XZ) + 1}, \\ 
\Tr [
(\Pi')^{X'Y'Z'}  
((\rho')^{X'} \otimes (\rho')^{Y'Z'})
]
& \leq &
2^{-I^\epsilon_H(X:YZ) + 1}, \\ 
\Tr [
(\Pi')^{X'Y'Z'}  
((\rho')^{X'Y'} \otimes (\rho')^{Z'})
]
& \leq &
2^{-I^\epsilon_H(XY:Z) + 1}.
\end{array}
\end{equation}

We now describe the decoding strategy that Charlie follows in order to
try and guess the message pair $(m_1, m_2)$ that was actually sent,
given the output of $\chan''$. 
Charlie uses the {\em pretty good measurement} 
\cite{belavkin:pgm1, belavkin:pgm2, holevo:capacity, schumacher:capacity}
constructed from the POVM elements 
$
(\Pi')_{(x,l_x)(m_1), (y,l_y)(m_2), \delta}^{Z'},
$
where $(m_1,m_2) \times [2^{R_1}] \times [2^{R_2}]$.
We now analyse the expectation, under the choice of a random codebook
$\cC$, of the error probability of Charlie's decoding algorithm.
Suppose the message pair $(m_1, m_2)$ is inputted to $\chan''$.
The output of $\chan''$ is the state
$(\rho')_{(x,l_x)(m_1), (y, l_y)(m_2), \delta}^{Z'}$.
Let $\Lambda_{\hat{m}_1, \hat{m}_2}^{Z'}$ be the POVM element corresponding
to decoded output $(\hat{m}_1, \hat{m}_2)$ arising from the
pretty good measurement. By the Hayashi-Nagaoka 
inequality~\cite{HayashiNagaoka},
the decoding error for $(m_1, m_2)$ is upper bounded by
\begin{eqnarray*}
\lefteqn{
\Tr [
(\one^{Z'} - \Lambda_{m_1, m_2}^{Z'})
(\rho')_{(x,l_x)(m_1), (y, l_y)(m_2), \delta}^{Z'} 
]
} \\
& \leq &
2 \Tr [
(\one^{Z'} - (\Pi')_{(x,l_x)(m_1), (y, l_y)(m_2), \delta}^{Z'}) \\
&      &
~~~~~~~~~~~~~
(\rho')_{(x,l_x)(m_1), (y, l_y)(m_2), \delta}^{Z'} 
] \\
&      &
{} +
4 \sum_{(\hat{m}_1, \hat{m}_2) \neq (m_1, m_2)} 
\Tr [
(\Pi')_{(x,l_x)(\hat{m}_1), (y, l_y)(\hat{m}_2), \delta}^{Z'} \\
&      &
~~~~~~~~~~~~~~~~~~~~~~~~~~~~~~~~~~~~~~~
(\rho')_{(x,l_x)(m_1), (y, l_y)(m_2), \delta}^{Z'} 
] \\
&   =  &
2 \Tr [
(\one^{Z'} - (\Pi')_{(x,l_x)(m_1), (y, l_y)(m_2), \delta}^{Z'}) \\
&      &
~~~~~~~~~~~~~
(\rho')_{(x,l_x)(m_1), (y, l_y)(m_2), \delta}^{Z'} 
] \\
&      &
{} +
4 \sum_{\hat{m}_1 \neq m_1} 
\Tr [
(\Pi')_{(x,l_x)(\hat{m}_1), (y, l_y)(m_2), \delta}^{Z'} \\
&      &
~~~~~~~~~~~~~~~~~~~~~~~~~~~
(\rho')_{(x,l_x)(m_1), (y, l_y)(m_2), \delta}^{Z'} 
] \\
&      &
{} +
4 \sum_{\hat{m}_2 \neq m_2} 
\Tr [
(\Pi')_{(x,l_x)(m_1), (y, l_y)(\hat{m}_2), \delta}^{Z'} \\
&      &
~~~~~~~~~~~~~~~~~~~~~~~~~~~
(\rho')_{(x,l_x)(m_1), (y, l_y)(m_2), \delta}^{Z'} 
] \\
&      &
{} +
4 \sum_{\hat{m}_1 \neq m_1, \hat{m}_2 \neq m_2} 
\Tr [
(\Pi')_{(x,l_x)(\hat{m}_1), (y, l_y)(\hat{m}_2), \delta}^{Z'} \\
&      &
~~~~~~~~~~~~~~~~~~~~~~~~~~~~~~~~~~~~
(\rho')_{(x,l_x)(m_1), (y, l_y)(m_2), \delta}^{Z'} 
].
\end{eqnarray*}
The expectation, over the choice of the random codebook $\cC$, of the
decoding error for $(m_1, m_2)$ is upper bounded by
\begin{eqnarray*}
\lefteqn{
\E_{\cC} [
\Tr [
(\one^{Z'} - \Lambda_{m_1, m_2}^{Z'})
(\rho')_{(x,l_x)(m_1), (y, l_y)(m_2), \delta}^{Z'} 
]
]
} \\
& \leq &
2 
\E_{\cC} [
\Tr [
(\one^{Z'} - (\Pi')_{(x,l_x)(m_1), (y, l_y)(m_2), \delta}^{Z'}) \\
&     &
~~~~~~~~~~~~~~~~~~~~
(\rho')_{(x,l_x)(m_1), (y, l_y)(m_2), \delta}^{Z'} 
]
] \\
&      &
{} +
4 \sum_{\hat{m}_1 \neq m_1} 
\E_{\cC} [
\Tr [
(\Pi')_{(x,l_x)(\hat{m}_1), (y, l_y)(m_2), \delta}^{Z'} \\
&     &
~~~~~~~~~~~~~~~~~~~~~~~~~~~~~~~~
(\rho')_{(x,l_x)(m_1), (y, l_y)(m_2), \delta}^{Z'} 
]
] \\
&      &
{} +
4 \sum_{\hat{m}_2 \neq m_2} 
\E_{\cC} [
\Tr [
(\Pi')_{(x,l_x)(m_1), (y, l_y)(\hat{m}_2), \delta}^{Z'} \\
&     &
~~~~~~~~~~~~~~~~~~~~~~~~~~~~~~~~
(\rho')_{(x,l_x)(m_1), (y, l_y)(m_2), \delta}^{Z'} 
]
] \\
&      &
{} +
4 \sum_{\hat{m}_1 \neq m_1, \hat{m}_2 \neq m_2} 
\E_{\cC} [
\Tr [
(\Pi')_{(x,l_x)(\hat{m}_1), (y, l_y)(\hat{m}_2), \delta}^{Z'} \\
&     &
~~~~~~~~~~~~~~~~~~~~~~~~~~~~~~~~~~~~~~~~~~
(\rho')_{(x,l_x)(m_1), (y, l_y)(m_2), \delta}^{Z'} 
]
] \\
&   =  &
2 |\cL|^{-3} \sum_{x, y, l_x, l_y} p(x) p(y) \\
&      &
~~~~~~~~~~~~~~~~~~~~~~~~~~~~
 \Tr [
(\one^{Z'} - (\Pi')_{(x,l_x), (y, l_y), \delta}^{Z'}) \\
&       &
~~~~~~~~~~~~~~~~~~~~~~~~~~~~~~~~~~~~~~~~
(\rho')_{(x,l_x), (y, l_y), \delta}^{Z'} 
] \\
&      &
{} +
4 (2^{R_1} - 1)
|\cL|^{-4} \sum_{x, l_x, x', l'_x, y, l_y}
p(x) p(x') p(y) \\
&     &
~~~~~~~~~~
\Tr [
(\Pi')_{(x',l'_x), (y, l_y), \delta}^{Z'})
(\rho')_{(x,l_x), (y, l_y), \delta}^{Z'} 
] \\
&  &
{} +
4 (2^{R_2} - 1)
|\cL|^{-4} \sum_{x, l_x, x', l'_x, y, l_y}
p(x) p(y) p(y') \\
&     &
~~~~~~~~~~
\Tr [
(\Pi')_{(x,l_x), (y', l'_y), \delta}^{Z'})
(\rho')_{(x,l_x), (y, l_y), \delta}^{Z'} 
] \\
&      &
{} +
4 (2^{R_1} - 1) (2^{R_2} - 1)
|\cL|^{-5} \\
&     &
~~~~~~
\sum_{x, l_x, x', l'_x, y, l_y, y', l'_y}
p(x) p(x') p(y) p(y') \\
&     &
~~~~~~~~~~~~~~~~~~~~~~~
\Tr [
(\Pi')_{(x',l'_x), (y', l'_y), \delta}^{Z'})
(\rho')_{(x,l_x), (y, l_y), \delta}^{Z'} 
] \\
&   =  &
2 \Tr [
(\one^{X' Y' Z'} - (\Pi')^{X' Y' Z'})
(\rho')^{X' Y' Z'}
] \\
&      &
{} +
4 (2^{R_1} - 1)
\Tr [
(\Pi')^{X' Y' Z'}
((\rho')^{X'} \otimes (\rho')^{Y'Z'})
] \\
&      &
{} +
4 (2^{R_2} - 1)
\Tr [
(\Pi')^{X' Y' Z'}
((\rho')^{Y'} \otimes (\rho')^{X'Z'})
] \\
&      &
{} +
4 (2^{R_1} - 1) (2^{R_2} - 1)
\Tr [
(\Pi')^{X' Y' Z'}
((\rho')^{X'Y'} \otimes (\rho')^{Z'})
] \\
& \leq &
44 \epsilon^{1/2} +
2^{R_1 + 1 - I^\epsilon_H(X : Y Z)_\rho} +
2^{R_2 + 1 - I^\epsilon_H(Y : X Z)_\rho} \\
&      &
{} +
2^{R_1 + R_2 + 1 -  I^\epsilon_H(XY : Z)_\rho},
\end{eqnarray*}
where we used Equations~\ref{eq:cqmactype1error}, \ref{eq:cqmactype2error}
in the last inequality above.

Choosing a rate pair $(R_1, R_2)$ satisfying
\begin{eqnarray*}
R_1 
& \leq & 
I^\epsilon_H(X : Y Z)_\rho - 1 - \log \frac{1}{\epsilon}, \\
R_2 
& \leq & 
I^\epsilon_H(Y : X Z)_\rho - 1 - \log \frac{1}{\epsilon}, \\
R_1 + R_2 
& \leq & 
I^\epsilon_H(XY : Z)_\rho - 1 - \log \frac{1}{\epsilon}, 
\end{eqnarray*} 
ensures that the expected average decoding error for channel
$\chan''$ is at most $47 \epsilon^{1/2}$. This implies that
the expected average decoding error for the original channel $\chan$
is at most $49 \epsilon^{1/2}$. Thus there exists a codebook
$\cC$ with average decoding error for $\chan$ at most 
$49 \epsilon^{1/2}$. By a standard technique of taking maps from
classical symbols to arbitrary quantum states, we can then prove the 
following theorem.

\noindent
{\bf Theorem~\ref{thm:cqMAC}' \;}
{\em
Let $\chan: X' Y' \rightarrow Z$ be a quantum multiple access channel.
Let $\cX$, $\cY$ be two new sample spaces.
For every element $x \in \cX$, let 
$\sigma_{x}^{X'}$ be a quantum state in the input Hilbert space 
$X'$ of $\chan$.
Similarly, for every element $y \in \cY$, let 
$\sigma_{y}^{Y'}$ be a quantum state in the input Hilbert space 
$Y'$ of $\chan$.
Let $p(x) p(y)$ be a probability distribution on 
$\cX \times \cY$.
Consider the classical quantum state
\[
\rho^{X Y Z} := 
\sum_{x, y}
p(x) p(y)
\ketbra{x, y}^{X Y} \otimes
(\chan(\sigma_{x}^{X'} \otimes \sigma_{y}^{Y'}))^{Z}.
\]
Let $R_1$, $R_2$, $\epsilon$, 
be such that
\begin{eqnarray*}
R_1 
& \leq & 
I^\epsilon_H(X : Y Z)_\rho - 1 - \log \frac{1}{\epsilon}, \\
R_2 
& \leq & 
I^\epsilon_H(Y : X Z)_\rho - 1 - \log \frac{1}{\epsilon}, \\
R_1 + R_2 
& \leq & 
I^\epsilon_H(XY : Z)_\rho - 1 - \log \frac{1}{\epsilon}. 
\end{eqnarray*} 
Then there exists an $(R_1, R_2, 49 \epsilon^{1/2})$-quantum 
MAC code for sending classical information through $\chan$.
}

\section{The one-shot classical quantum joint typicality lemma}
\label{sec:qtypical}
In this section, we state various versions of our one-shot
classical quantum joint typicality lemmas. 
We first state the `intersection case' in Lemma~\ref{lem:cqtypical},
where we only have
to take a so-called `intersection' of POVM elements. This can be viewed
as the classical quantum version of Fact~\ref{fact:ctypical} when 
$t = 1$ i.e. when only intersection of classical POVM elements needs to
be taken in Fact~\ref{fact:ctypical}, not the union.
In fact many channel coding
applications in quantum Shannon theory, e.g. the `pentagonal' inner bound 
for the cq-MAC in this paper,
use classical quantum states with
only one quantum register where only `intersection' of POVM elements
needs to be taken. For these applications, we state an even simpler
joint typicality lemma in Corollary~\ref{cor:cqtypical}.
Nevertheless there are applications that require `interesection' POVM
elements acting on classical quantum states with 
more than one quantum register e.g. Marton's inner bound with common
message for sending classical information over an entanglement assisted
quantum broadcast channel \cite{sen:simultaneous}. These applications
require the extra strength of Lemma~\ref{lem:cqtypical}.
For comprehensiveness, in the special case where there 
is no classical system
i.e. $c = 0$, we get a simplification of Lemma~\ref{lem:cqtypical} which
we state in Corollary~\ref{cor:qtypical}.

After stating and proving Lemma~\ref{lem:cqtypical} and
Corollaries~\ref{cor:qtypical}, \ref{cor:cqtypical}, we state and
prove the one-shot classical quantum conditional joint typicality
lemma, general case in Theorem~\ref{thm:cqtypical}. 
Its statement takes care of the setting when one has to take a `union of 
intersection' of
POVM elements. This can be viewed as the classical quantum version of
Fact~\ref{fact:ctypical}. We call this statement as the general 
case because it is the strongest statement that can be proved with 
the techniques developed in this paper. 
Even though most channel
coding applications use tensor products of bipartitions as the 
negative hypotheses, we nevertheless state our general case with 
tensor products of arbitrary subpartitions as the negative hypotheses.
This is for generality as well as keeping an eye on potential applications
like generalised quantum Slepian-Wolf \cite{anshu:slepianwolf} where
the negative hypothesis is a tensor product of three marginals, though
the entropic quantity in Slepian-Wolf is of the covering type where
our joint typicality lemmas do not seem to apply.
Many applications, e.g.
the Chong-Motani-Garg-El Gamal inner bound for sending
classical information over a classical quantum interference channel
\cite{sen:simultaneous}, use classical quantum states with
only one quantum register but 
involve an `union of intersection' of POVM elements. For these 
applications, we state a simpler version of Theorem~\ref{thm:cqtypical}
in Corollary~\ref{cor:gencqtypical}.

The essential part of the proofs of all the  joint typicality results
obtained in this paper is
encapsulated into a technical statement in 
Proposition~\ref{prop:cqtypical}, which should be of independent interest.
Its proof is deferred to \ref{sec:proofcqtypical}

We now state our one-shot classical quantum joint typicality lemma,
intersection case.
\begin{lemma}[cq joint typicality lem., intersec.  case]
\label{lem:cqtypical}
Let $\cH$, $\cL$ be Hilbert spaces and $\cX$ be a finite set. We will
also use $\cX$ to denote the Hilbert space with computational basis
elements indexed by the set $\cX$. Let $c$ be a non-negative and
$k$ a positive integer.
Let  $A_1 \cdots A_k$ be a $k$-partite
system where each $A_i$ is isomorphic to $\cH$.
For every $\vecx \in \cX^c$, 
let $\rho_\vecx$ be a quantum state in $A_{[k]}$.
Consider the 
{\em augmented} $k$-partite system $A'_1 \cdots A'_k$ where
each $A'_i \cong A''_i \otimes \cL$, and each 
$A''_i$ is defined as
\[
A''_i := 
(\cH \otimes \C^2) \oplus 
\bigoplus_{S: i \in S \subseteq [c] \cupdot [k]}
(\cH \otimes \C^2) \otimes \cL^{\otimes |S|}.
\]
Also define $\cX' := \cX \otimes \cL$.

Below, $\vecx$, $\vecl$ denote computational basis vectors of
$\cX^{[c]}$, $\cL^{\otimes ([c] \cupdot [k])}$.
Let $p(\cdot)$ be a probability distribution on the vectors $\vecx$.
Define the classical quantum state
\[
\rho^{\cX_{[c]} A_{[k]}} :=
\sum_\vecx 
p(\vecx) \ketbra{\vecx}^{\cX_{[c]}} \otimes
\rho_\vecx^{A_{[k]}}.
\]
Let $\frac{\one^{\cL^{\otimes (c+k)}}}{|\cL|^{c+k}}$ denote the completely
mixed state on $(c+k)$ tensor copies of $\cL$. 
View 
$
\rho_\vecx^{A_{[k]}} 
\otimes (\ketbra{0})^{(\C^2)^{\otimes k}}
\otimes \frac{\one^{\cL^{\otimes (c+k)}}}{|\cL|^{c+k}} 
$
as a state in $A'_{[k]}$ under the natural embedding viz. the
embedding in the $i$th system is into the first summand of $A''_i$
defined above tensored with $\cL$. Similarly, view 
$
\rho^{\cX_{[c]} A_{[k]}} 
\otimes (\ketbra{0})^{(\C^2)^{\otimes k}}
\otimes \frac{\one^{\cL^{\otimes (c+k)}}}{|\cL|^{c+k}} 
$
as a state in $\cX'_{[c]} A'_{[k]}$ under the natural 
embedding.

Let $0 \leq \delta \leq 1$. For each pseudosubpartition
$(S_1, \ldots, S_l) \vdash \vdash [c] \cupdot [k]$, 
let $0 \leq \epsilon_{(S_1, \ldots, S_l)} \leq 1$.
Then, there is a state $\rho'$ and a POVM element $\Pi'$ in 
$\cX'_{[c]} A'_{[k]}$ such that:
\begin{enumerate}
\item
\setcounter{lemqtypicalcq}{\value{enumi}}
The state $\rho'$ and POVM element $\Pi'$ are classical 
on $\cX^{\otimes [c]} \otimes \cL^{[c] \cupdot [k]}$ and quantum on 
$A''_{[k]}$. 
More precisely, $\rho'$, $\Pi'$ can be expressed as
\begin{eqnarray*}
(\rho')^{\cX'_{[c]} A'_{[k]}} 
& = &
|\cL|^{-(c+k)} 
\sum_{\vecx, \vecl}
p(\vecx) 
\ketbra{\vecx}^{\cX_{[c]}} \otimes
\ketbra{\vecl}^{\cL_{[c] \cupdot [k]}} \\
&   &
~~~~~~~~~~~~~~~~~~~~~~~~~~
{} \otimes
(\rho')_{\vecx, \vecl, \delta}^{A''_{[k]}}, \\
(\Pi')^{\cX'_{[c]} A'_{[k]}} 
& = &
\sum_{\vecx, \vecl}
\ketbra{\vecx}^{\cX_{[c]}} \otimes
\ketbra{\vecl}^{\cL_{[c] \cupdot [k]}} \otimes
(\Pi')_{\vecx,\vecl,\delta}^{A''_{[k]}}, \\
\end{eqnarray*}
where 
$(\rho')_{\vecx,\vecl,\delta}^{A''_{[k]}}$, 
$(\Pi')_{\vecx,\vecl,\delta}^{A''_{[k]}}$ are 
quantum states and POVM elements respectively
for all computational basis vectors 
$\vecx \in \cX^{\otimes [c]}$,
$\vecl \in \cL^{\otimes ([c] \cupdot [k])}$;

\item
\setcounter{lemqtypicaldistance}{\value{enumi}}
\[
\ellone{
(\rho')^{\cX'_{[c]} A'_{[k]}} - 
\rho^{\cX_{[c]} A_{[k]}} 
\otimes (\ketbra{0})^{(\C^2)^{\otimes k}}
\otimes \frac{\one^{\cL^{\otimes (c+k)}}}{|\cL|^{c+k}} 
} \leq
2^{\frac{c+k}{2} +1} \delta;
\]

\item
\setcounter{lemqtypicalcompleteness1}{\value{enumi}}
\begin{eqnarray*}
\lefteqn{
\Tr [(\Pi')^{\cX'_{[c]} A'_{[k]}} (\rho')^{\cX'_{[c]} A'_{[k]}}]
} \\ 
& \geq &
1 - 
\delta^{-2k} 2^{2^{ck+4} (k+1)^k} 
\sum_{(S_1, \ldots, S_l) \vdash \vdash [c] \cupdot [k]} 
\epsilon_{(S_1, \ldots, S_l)} -
2^{\frac{c+k}{2}+1} \delta;
\end{eqnarray*}

\item
\setcounter{lemqtypicalsoundness1}{\value{enumi}}
Let $(S_1, \ldots, S_l) \vdash \vdash [c] \cupdot [k]$, $l > 0$. 
Define $T := [k] \setminus (S_1 \cup \cdots \cup S_l)$.
Let $\sigma_\vecx^{A_{T}}$ be a state in $A_T$.
Let $S \subseteq [c] \cupdot [k]$, $S \cap [k] \neq \{\}$,
Let $\vecx_{[c] \cap S}$, $\vecl_S$ be computational basis vectors in 
$\cX^{\otimes ([c] \cap S}$, $\cL^{\otimes S}$.
Let $p_{[c] \setminus S}(\cdot)$ be a probability distribution on
$\cX^{\otimes ([c] \setminus S)}$.
In the following definition, let $\vecx'_{[c] \setminus S}$,
$\vecl'_{\bar{S}}$ range over all
computational basis vectors of $\cX^{\otimes ([c] \setminus S)}$,
$\cL^{\otimes \bar{S}}$.
Define a state in $A''_{S \cap [k]}$,
\begin{eqnarray*}
\lefteqn{
(\rho')_{\vecx_{S \cap [c]}, \vecl_{S}, \delta}^{A''_{S \cap [k]}}
} \\
& :=  &
|\cL|^{-|\bar{S}|} 
\sum_{\vecx'_{[c] \setminus S}, \vecl'_{\bar{S}}} 
p_{[c] \setminus S}(\vecx'_{[c] \setminus S})
\Tr_{A''_{\bar{S} \cap [k]}} [
(\rho')_{\vecx_{S \cap [c]} \vecx'_{[c] \setminus S}, 
         \vecl_{S} \vecl'_{\bar{S}}, \delta
        }^{A''_{[k]}}
].
\end{eqnarray*}
Analogously define 
\[
\rho_{\vecx_{S \cap [c]}}^{A_{S \cap [k]}} :=
\sum_{\vecx'_{[c] \setminus S}}
p_{[c] \setminus S}(\vecx'_{[c] \setminus S})
\Tr_{A_{\bar{S} \cap [k]}} [
\rho_{\vecx_{S \cap [c]} \vecx'_{[c] \setminus S}}^{A_{[k]}}
].
\]
Define 
\begin{eqnarray*}
\lefteqn{
(\rho')_{\vecx,\vecl,(S_1, \ldots, S_l),\delta}^{A''_{[k]}} 
} \\
& := &
(\rho'_{\vecx_{S_1 \cap [c]}, \vecl_{S_1},\delta})^{
A''_{S_1 \cap [k]}
} \\
&   &
{} \otimes \cdots \otimes \\
&   &
(\rho'_{\vecx_{S_l \cap [c])}, \vecl_{S_l},\delta})^{
A''_{S_l \cap [k]}
} \\
&   &
{} \otimes
(\sigma_\vecx^{A_T} \otimes (\ketbra{0}^{\C^2})^{\otimes |T|}), \\
\rho_{\vecx,(S_1, \ldots, S_l)}^{A_{[k]}} 
& := &
\rho_{\vecx_{S_1 \cap [c]}}^{A_{S_1 \cap [k]}} 
\otimes \cdots \otimes 
\rho_{\vecx_{S_l \cap [c]}}^{A_{S_l \cap [k]}} \otimes
\sigma_\vecx^{A_T}.
\end{eqnarray*}
Let $q_{(S_1, \ldots, S_l)}(\cdot)$ be a probability distribution over
vectors $\vecx$. Define 
\begin{eqnarray*}
\lefteqn{
(\rho')_{(S_1, \ldots, S_l)}^{\cX'_{[c]} A'_{[k]}} 
} \\
& := &
|\cL|^{-(c+k)} 
\sum_{\vecx, \vecl}
q_{(S_1, \ldots, S_l)}(\vecx) \\
&    &
~~~~~~~~~~~~~~
\ketbra{\vecx}^{\cX_{[c]}} \otimes
\ketbra{\vecl}^{\cL_{[c] \cupdot [k]}} \\
&     &
~~~~~~~~~~~~~~
{} \otimes
(\rho')_{\vecx,\vecl,(S_1, \ldots, S_l),\delta}^{A''_{[k]}}, \\
\rho_{(S_1, \ldots, S_l)}^{\cX_{[c]} A_{[k]}} 
& := &
\sum_{\vecx}
q_{(S_1, \ldots, S_l)}(\vecx) 
\ketbra{\vecx}^{\cX_{[c]}} 
\rho_{\vecx,(S_1, \ldots, S_l)}^{A_{[k]}}.
\end{eqnarray*}

Then,
\begin{eqnarray*}
\lefteqn{
\Tr [
(\Pi')^{\cX'_{[c]} A'_{[k]}} 
(\rho')_{(S_1, \ldots, S_l)}^{\cX'_{[c]} A'_{[k]}}
]
} \\
& \leq &
\max\left\{
2^{
    -D_H^{\epsilon_{(S_1, \ldots, S_l)}}
     (
      \rho^{\cX_{[c]} A_{[k]}} \| 
      \rho^{\cX_{[c]} A_{[k]}}_{(S_1, \ldots, S_l)}
     )
  }, 
\right.\\
&    &
~~~~~~~~~~~
\left.
 \frac{3 (2 |\cH|)^k}{\sqrt{|\cL|}}
\right\}.
\end{eqnarray*}
\end{enumerate}
\end{lemma}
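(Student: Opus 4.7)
The plan is to lift Proposition~\ref{prop:cqtypical} from the per-$\vecx$ setting to the averaged classical-quantum setting by appropriate choice of the parameters $\epsilon_{\vecx,(S_1,\ldots,S_l)}$ of the proposition, then convex-combine the resulting states and POVM elements over $\vecx$ with weight $p(\vecx)$ and over $\vecl$ uniformly. The first and main ingredient is to select $\epsilon_{\vecx,(S_1,\ldots,S_l)}$ so that the averaged version of the individual soundness guarantees (Property~\arabic{propqtypicalsoundness} of the proposition) telescopes into the joint hypothesis testing quantity appearing on the right hand side. For each pseudosubpartition, I would let $\Pi_{(S_1,\ldots,S_l)}$ be an optimising POVM for $D_H^{\epsilon_{(S_1,\ldots,S_l)}}(\rho^{\cX_{[c]} A_{[k]}} \| \rho_{(S_1,\ldots,S_l)}^{\cX_{[c]} A_{[k]}})$. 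Since both states are classical on $\cX^{\otimes [c]}$, dephasing $\Pi_{(S_1,\ldots,S_l)}$ along that register does not change its acceptance probabilities on either state, so without loss I may take $\Pi_{(S_1,\ldots,S_l)} = \sum_\vecx \ketbra{\vecx} \otimes \Pi_{(S_1,\ldots,S_l),\vecx}$. Setting $\epsilon_{\vecx,(S_1,\ldots,S_l)} := 1 - \Tr[\Pi_{(S_1,\ldots,S_l),\vecx}\rho_\vecx]$, one has $\sum_\vecx p(\vecx)\,\epsilon_{\vecx,(S_1,\ldots,S_l)} \leq \epsilon_{(S_1,\ldots,S_l)}$ and, by the very definition of hypothesis testing entropy, $2^{-D_H^{\epsilon_{\vecx,(S_1,\ldots,S_l)}}(\rho_\vecx \| \rho_{\vecx,(S_1,\ldots,S_l)})} \leq \Tr[\Pi_{(S_1,\ldots,S_l),\vecx}\rho_{\vecx,(S_1,\ldots,S_l)}]$.

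With these parameters fixed, I apply Proposition~\ref{prop:cqtypical} to obtain $(\rho')_{\vecx,\vecl,\delta}^{A''_{[k]}}$ and $(\Pi')_{\vecx,\vecl,\delta}^{A''_{[k]}}$ for every $\vecx,\vecl$, and define the classical-quantum state and POVM element of the lemma by the displayed formulas. Property~\arabic{lemqtypicalcq} then holds by fiat. For Property~\arabic{lemqtypicaldistance}, the triangle inequality for the convex combination combined with Property~\arabic{propqtypicaldistance} of the proposition yields the distance bound $2^{(c+k)/2+1}\delta$ immediately. For Property~\arabic{lemqtypicalcompleteness1}, Property~\arabic{propqtypicalcompleteness} of the proposition gives a per-$\vecx$ acceptance lower bound involving $\epsilon_\vecx = \sum_{(S_1,\ldots,S_l)} \epsilon_{\vecx,(S_1,\ldots,S_l)}$; averaging over $\vecx$ with weight $p(\vecx)$ swaps the sum with the expectation and, using the parameter bound established above, produces $\sum_{(S_1,\ldots,S_l)} \epsilon_{(S_1,\ldots,S_l)}$. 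The final transfer from the embedded state $\rho^{\cX_{[c]} A_{[k]}} \otimes (\ketbra{0}^{\C^2})^{\otimes k} \otimes |\cL|^{-(c+k)}\one$ to the actual $(\rho')^{\cX'_{[c]} A'_{[k]}}$ costs an additional $2^{(c+k)/2+1}\delta$ via Property~\arabic{lemqtypicaldistance}.

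The hard step is Property~\arabic{lemqtypicalsoundness1}. Here I invoke the three-term decomposition of Property~\arabic{propqtypicalsplitting} of the proposition, writing $(\rho')_{\vecx,\vecl,(S_1,\ldots,S_l),\delta}^{A''_{[k]}}$ as the sum of $\alpha_{(S_1,\ldots,S_l),\delta}$ times the tilted state $\cT_{(S_1,\ldots,S_l),\vecl,\delta}(\rho_{\vecx,(S_1,\ldots,S_l)} \otimes (\ketbra{0}^{\C^2})^{\otimes k})$, plus $\beta_{(S_1,\ldots,S_l),\delta}\,N_{(S_1,\ldots,S_l),\vecx,\vecl,\delta}$, plus $(1 - \alpha_{(S_1,\ldots,S_l),\delta} - \beta_{(S_1,\ldots,S_l),\delta})\,M_{(S_1,\ldots,S_l),\vecx,\vecl,\delta}$. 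For the tilted piece, Property~\arabic{propqtypicalsoundness} of the proposition together with the inequality displayed at the end of the previous paragraph shows that the per-$\vecx$ contribution is at most $\Tr[\Pi_{(S_1,\ldots,S_l),\vecx}\,\rho_{\vecx,(S_1,\ldots,S_l)}]$. For the perturbation $\beta N$ and residual $M$ terms, Hölder's inequality combined with the $\ell_1$-bound $\ellone{(\Pi')_{\vecx,\vecl,\delta}} \leq (2|\cH|)^k$ of Property~\arabic{propqtypicalellone} and the $\ell_\infty$-bounds of Property~\arabic{propqtypicalellinfty} gives contributions at most $3(2|\cH|)^k/\sqrt{|\cL|}$ and $(2|\cH|)^k/|\cL|$ respectively.

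It remains to average over $\vecx$ with weight $q_{(S_1,\ldots,S_l)}(\vecx)$ and over $\vecl$ uniformly. The tilted-part contributions aggregate into
\[
\sum_\vecx q_{(S_1,\ldots,S_l)}(\vecx)\,\Tr[\Pi_{(S_1,\ldots,S_l),\vecx}\,\rho_{\vecx,(S_1,\ldots,S_l)}] = \Tr[\Pi_{(S_1,\ldots,S_l)}\,\rho_{(S_1,\ldots,S_l)}^{\cX_{[c]} A_{[k]}}] = 2^{-D_H^{\epsilon_{(S_1,\ldots,S_l)}}(\rho^{\cX_{[c]} A_{[k]}} \| \rho_{(S_1,\ldots,S_l)}^{\cX_{[c]} A_{[k]}})},
\]
while the remaining two contributions remain uniformly bounded by $O((2|\cH|)^k/\sqrt{|\cL|})$ and hence, up to doubling a constant, can be swallowed into the $\max$ appearing in the lemma statement. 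The classical-register dephasing step in the choice of $\Pi_{(S_1,\ldots,S_l)}$ is the conceptual linchpin making this averaging collapse cleanly; without it, one would be stuck with a weaker bound expressed as an average of single-$\vecx$ hypothesis testing entropies rather than the joint one.
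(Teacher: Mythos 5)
Your proposal is correct and follows essentially the same route as the paper: you apply Proposition~\ref{prop:cqtypical} per $(\vecx,\vecl)$, choosing the proposition's parameters $\epsilon_{\vecx,(S_1,\ldots,S_l)}$ to be the per-$\vecx$ type-I errors of a block-diagonal (dephased) optimiser for the joint hypothesis test, and then convex-combine. The one small divergence is in the telescoping step for soundness: you chain $2^{-D_H^{\epsilon_{\vecx,(S_1,\ldots,S_l)}}(\rho_\vecx\|\rho_{\vecx,(S_1,\ldots,S_l)})}\leq\Tr[\Pi_{\vecx,(S_1,\ldots,S_l)}\rho_{\vecx,(S_1,\ldots,S_l)}]$ (by feasibility of $\Pi_{\vecx,(S_1,\ldots,S_l)}$) and then use only the joint optimality of $\Pi_{(S_1,\ldots,S_l)}$ to recover $2^{-D_H^{\epsilon_{(S_1,\ldots,S_l)}}(\rho^{\cX_{[c]}A_{[k]}}\|\rho_{(S_1,\ldots,S_l)}^{\cX_{[c]}A_{[k]}})}$, whereas the paper asserts directly (Equation~\ref{eq:optcqPOVM}) that each $\Pi_{\vecx,(S_1,\ldots,S_l)}$ is per-$\vecx$ optimal so that the weighted sum of exponentials equals the joint exponential. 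Both yield the same bound; your version only needs feasibility plus the joint optimality and is, if anything, a slightly more economical justification of the same fact.
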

\begin{proof}
The lemma is a direct consequence of Proposition~\ref{prop:cqtypical}.
Consider first the augmented $k$-partite system $A''_1 \cdots A''_k$. 
For computational basis vectors $\vecx$, $\vecl$ 
let $(\rho'_{\vecx,\vecl,\delta})^{A''_{[k]}}$ be the quantum state and
$(\Pi')_{\vecx,\vecl,\delta}^{A''_{[k]}}$ be the POVM element in 
$A''_{[k]}$ 
guaranteed by Existence Statements~\ref{prop@qtypicalstate} and
\ref{prop@qtypicalpovm} of Proposition~\ref{prop:cqtypical}. 
These quantities are used to define the state $\rho'$ and
POVM element $\Pi'$ in $\cX'_{[c]} A'_{[k]}$ as in
Claim~\arabic{lemqtypicalcq} of the lemma.

From Claim~\ref{prop@qtypicaldistance} of 
Proposition~\ref{prop:cqtypical}, 
\begin{eqnarray*}
\lefteqn{
\ellone{
(\rho')^{\cX'_{[c]} A'_{[k]}} - \rho^{\cX_{[c]} A_{[k]}} 
\otimes (\ketbra{0}^{\C^2})^{\otimes k}
\otimes \frac{\one^{\cL^{\otimes k}}}{|\cL|^{k}} 
}
} \\
&   =  &
\left\|
\left(
|\cL|^{-(c+k)} 
\sum_{\vecx,\vecl} 
p(\vecx)
\ketbra{\vecx}^{\cX_{[c]}} \otimes
\ketbra{\vecl}^{\cL^{\otimes (k+c)}} 
\right.
\right. \\
&    &
~~~~~~
{} \otimes 
\left.
\left.
\left(
(\rho')_{\vecx,\vecl,\delta}^{A''_{[k]}} - 
\rho^{A_{[k]}} 
\otimes (\ketbra{0}^{\C^2})^{\otimes k}
\right)
\right) 
\right\|_1 \\
& \leq &
|\cL|^{-(c+k)} 
\sum_{\vecx,\vecl} 
p(\vecx)
\ellone{
(\rho')_{\vecx,\vecl,\delta}^{A''_{[k]}} - 
\rho^{A_{[k]}} \otimes (\ketbra{0}^{\C^2})^{\otimes k}
} \\ 
& \leq &
2^{\frac{c+k}{2} +1} \delta,
\end{eqnarray*}
which proves Claim~\arabic{lemqtypicaldistance} of the lemma.

For each pseudosubpartition 
$(S_1, \ldots, S_l) \vdash \vdash [c] \cupdot [k]$, let
$
\Pi_{(S_1, \ldots, S_l)}^{\cX_{[c]} A_{[k]}}
$
be the optimising POVM element in the definition of
$
D_H^{\epsilon_{(S_1, \ldots, S_l)}}(
 \rho^{\cX_{[c]} A_{[k]}} \| 
 \rho^{\cX_{[c]} A_{[k]}}_{(S_1, \ldots, S_l)}
).
$
Without loss of generality, it is of the form
\[
\Pi_{\vecx, (S_1, \ldots, S_l)}^{\cX_{[c]} A_{[k]}} =
\sum_\vecx \ketbra{\vecx}^{\cX_{[c]}} \otimes
\Pi_{\vecx, (S_1, \ldots, S_l)}^{A_{[k]}},
\]
where for each $\vecx$, $\Pi_{\vecx, (S_1, \ldots, S_l)}$ is a POVM
element in $A_{[k]}$.
Define
$
\epsilon_{\vecx, (S_1, \ldots, S_l)} := 
1 -
\Tr [
\Pi_{\vecx, (S_1, \ldots, S_l)}^{A_{[k]}} \rho_\vecx^{A_{[k]}}
].
$
Then, 
$
\Pi_{\vecx, (S_1, \ldots, S_l)}^{A_{[k]}}
$
is the optimising POVM element in the definition of
\[
D_H^{\epsilon_{\vecx, (S_1, \ldots, S_l)}}(
 \rho_\vecx^{A_{[k]}} \| 
 \rho^{A_{[k]}}_{\vecx,(S_1, \ldots, S_l)}
).
\]
This implies that
\begin{equation}
\label{eq:optcqPOVM}
\begin{array}{rcl}
\lefteqn{
\sum_\vecx p(\vecx) 
\epsilon_{\vecx, (S_1, \ldots, S_l)} 
} \\
& = &
1 - \epsilon_{(S_1, \ldots, S_l)}, \\
&   & \\
\lefteqn{
\sum_\vecx q_{(S_1, \ldots, S_l)}(\vecx) 
2^{
-D_H^{\epsilon_{\vecx, (S_1, \ldots, S_l)}}(
 \rho_\vecx^{A_{[k]}} \| 
 \rho^{A_{[k]}}_{\vecx, (S_1, \ldots, S_l)}
)
} 
} \\
&   =  &
2^{
-D_H^{\epsilon_{(S_1, \ldots, S_l)}}(
 \rho^{\cX_{[c]} A_{[k]}} \| 
 \rho^{\cX_{[c]} A_{[k]}}_{(S_1, \ldots, S_l)}
)
}.
\end{array}
\end{equation}
From Claims~\ref{prop@qtypicalcompleteness}, 
\ref{prop@qtypicaldistance} of 
Proposition~\ref{prop:cqtypical},
\begin{eqnarray*}
\lefteqn{
\Tr [(\Pi')^{\cX'_{[c]} A'_{[k]}} (\rho')^{\cX'_{[c]} A'_{[k]}}]
} \\
&   =  &
|\cL|^{-(c+k)} \sum_{\vecx,\vecl} p(\vecx)
\Tr [
(\Pi')_{\vecx,\vecl,\delta}^{A''_{[k]}}
(\rho')_{\vecx,\vecl,\delta}^{A''_{[k]}} 
] \\
& \geq &
|\cL|^{-(c+k)} \sum_{\vecx,\vecl} p(\vecx)
\left(
\Tr [
(\Pi')_{\vecx,\vecl,\delta}^{A''_{[k]}}
(\rho^{A_{[k]}} \otimes (\ketbra{0}^{\C^2})^{\otimes k})
] 
\right. \\
&      &
\left.
{} -
\ellone{
(\rho')_{\vecx,\vecl,\delta}^{A''_{[k]}} - 
\rho^{A_{[k]}} \otimes (\ketbra{0}^{\C^2})^{\otimes k}
}
\right) \\
& \geq &
1 - 
\sum_\vecx p(\vecx) 
\delta^{-2k} 2^{2^{ck+4} (k+1)^k} 
\sum_{(S_1, \ldots, S_l) \vdash \vdash [c] \cupdot [k]}
\epsilon_{\vecx, (S_1, \ldots, S_l)} \\
&      &
{} -
2^{\frac{k+c}{2} + 1} \delta \\
& \geq &
1 - 
\delta^{-2k} 2^{2^{ck+4} (k+1)^k} 
\sum_{(S_1, \ldots, S_l) \vdash \vdash [c] \cupdot [k]}
\epsilon_{(S_1, \ldots, S_l)} -
2^{\frac{k+c}{2} + 1} \delta,
\end{eqnarray*}
which proves Claim~\arabic{lemqtypicalcompleteness1}
of the lemma.

Using Claims~\ref{prop@qtypicalsplitting}, 
\ref{prop@qtypicalsoundness},
\ref{prop@qtypicalellone}, \ref{prop@qtypicalellinfty}
of Proposition~\ref{prop:cqtypical} and Equation~\ref{eq:optcqPOVM}, 
we get
\begin{eqnarray*}
\lefteqn{
\Tr[(\Pi')^{\cX'_{[c]} A'_{[k]}} 
(\rho')_{(S_1, \ldots, S_{l})}^{\cX'_{[c]} A'_{[k]}}
]
} \\
&    =  &
|\cL|^{-(c+k)} \sum_{\vecx,\vecl} q_{(S_1, \ldots, S_l)}(\vecx)
\Tr [
(\Pi')_{\vecx,\vecl,\delta}^{A''_{[k]}}
(\rho')_{\vecx,\vecl,(S_1, \ldots, S_l),\delta}^{A''_{[k]}}
] \\
&   =   &
\alpha_{(S_1, \ldots, S_l), \delta} \\
&       &
|\cL|^{-(c+k)} \sum_{\vecx,\vecl} q_{(S_1, \ldots, S_l)}(\vecx) \\
&       &
\Tr [
(\Pi')_{\vecx,\vecl,\delta}^{A''_{[k]}}
(
\cT_{(S_1, \ldots S_l), \vecl, \delta}(
\rho_{\vecx,(S_1, \ldots, S_l)}^{A_{[k]}}
\otimes (\ketbra{0}^{\C^{2}})^{\otimes k} 
)
)^{A''_{[k]}}
] \\
&      &
{} +
\beta_{(S_1, \ldots, S_l), \delta} \\
&      &
~~~~~~~~~
|\cL|^{-(c+k)} \sum_{\vecx,\vecl} q_{(S_1, \ldots, S_l)}(\vecx)
\Tr [
(\Pi')_{\vecx,\vecl,\delta}^{A''_{[k]}}
N_{(S_1, \ldots S_l), \vecx, \vecl, \delta}^{A''_{[k]}}
] \\
&      &
{} +
(
1 
- \alpha_{(S_1, \ldots, S_l), \delta} 
- \beta_{(S_1, \ldots, S_l), \delta}
) \\
&      &
|\cL|^{-(c+k)} \sum_{\vecx,\vecl} q_{(S_1, \ldots, S_l)}(\vecx)
\Tr [
(\Pi')_{\vecx,\vecl,\delta}^{A''_{[k]}}
M_{(S_1, \ldots S_l), \vecx, \vecl, \delta}^{A''_{[k]}}
] \\ 
& \leq  &
\alpha_{(S_1, \ldots, S_l), \delta}
|\cL|^{-(c+k)} \sum_{\vecx,\vecl} q_{(S_1, \ldots, S_l)}(\vecx)
2^{
   -D_H^{\epsilon_{\vecx,(S_1,\ldots,S_l)}}
    (
     \rho_\vecx^{A_{[k]}} \| 
     \rho_{\vecx, (S_1, \ldots, S_l)}^{A_{[k]}}
    )
  } \\
&       &
{} + 
\beta_{(S_1, \ldots, S_l), \delta} |\cL|^{-(c+k)} \\
&       &
~~~~~~~~
\sum_{\vecx,\vecl} q_{(S_1, \ldots, S_l)}(\vecx)
\ellone{(\Pi')_{\vecx, \vecl, \delta}^{A''_{[k]}}}
\ellinfty{N_{(S_1, \ldots S_l), \vecx, \vecl, \delta}^{A''_{[k]}}}\\
&       &
{} + 
(
1 
- \alpha_{(S_1, \ldots, S_l), \delta} 
- \beta_{(S_1, \ldots, S_l), \delta}
) \\
&     &
|\cL|^{-(c+k)} \sum_{\vecx,\vecl} q_{(S_1, \ldots, S_l)}(\vecx)
\ellone{(\Pi')_{\vecx,\vecl, \delta}^{A''_{[k]}}}
\ellinfty{M_{(S_1, \ldots S_l), \vecx, \vecl, \delta}^{A''_{[k]}}} \\
& \leq  &
\alpha_{(S_1, \ldots, S_l), \delta}
2^{
   -D_H^{\epsilon_{(S_1, \ldots, S_l)}}
    (
     \rho_\vecx^{A_{[k]}} \| 
     \rho_{\vecx, (S_1, \ldots, S_l)}^{A_{[k]}}
    )
  } \\
&       &
{} + 
\beta_{(S_1, \ldots, S_l), \delta}
\frac{3 (2 |\cH|)^k}{\sqrt{|\cL|}} \\
&       &
{} +
(
1 
- \alpha_{(S_1, \ldots, S_l), \delta} 
- \beta_{(S_1, \ldots, S_l), \delta}
) 
\frac{(2 |\cH|)^k}{|\cL|} \\
&  \leq &
\max\left\{
2^{
   -D_H^{\epsilon_{(S_1, \ldots, S_l)}}
    (
     \rho^{A_{[k]}} \| 
     \rho_{(S_1, \ldots, S_l)}^{A_{[k]}}
    )
  },
\frac{3 (2 |\cH|)^k}{|\cL|}
\right\}.
\end{eqnarray*}
This proves Claim~\arabic{lemqtypicalsoundness1} of the lemma.

The proof of the one-shot classical quantum joint typicality
lemma, intersection case is finally complete.
\end{proof}

\medskip

\noindent
{\bf Remark:} 

\noindent
Often in inner bound proofs in classical network 
information theory, one has an `ideal' probability distribution
$p(\vecx_{[c]} \vecx_{[k]})$ on an alphabet $\cX^{c+k}$ which has to 
be accepted with
probability close to one, and for each subset $S \subseteq [k]$,
a `false' probability distribution 
\[
q_S(\vecx_{[c]} \vecx_S \vecx_{[k] \setminus S}) :=
p(\vecx_{[c]}) p(\vecx_S | \vecx_{[c]}) 
p(\vecx_{[k] \setminus S} | \vecx_{[c]}),
\]
which should be accepted with as low probability as possible. The
distribution $q_S(\cdot)$ is a product of the marginals on
$\cX_S$ and $\cX_{[k] \setminus S}$ conditioned on $\cX_{[c]}$,
which gives rise to the name of conditional joint typicality lemma.
For each $S \subseteq [k]$, one can take the optimal test for the 
above task, and then take the intersection of all the tests. The
preceding lemma is a generalisation of the classical 
conditional joint typicality lemma with intersection, where the 
conditioning system continues to be classical but the remaining systems
are allowed to be quantum. That is why we call it the one-shot
classical quantum joint typicality lemma, intersection case.

When there is no classical system,
i.e. $c = 0$, Lemma~\ref{lem:cqtypical} simplifies to the following
statement.
\begin{corollary}[Quant. joint typ. lem., intersec.  case]
\label{cor:qtypical}
Let $\cH$, $\cL$ be Hilbert spaces. 
Let $k$ be a positive integer.
Let  $A_1 \cdots A_k$ be a $k$-partite
system where each $A_i$ is isomorphic to $\cH$.
Let $\rho$ be a quantum state in $A_{[k]}$.
Consider the 
{\em augmented} $k$-partite system $A'_1 \cdots A'_k$ where
each $A'_i \cong A''_i \otimes \cL$, and each 
$A''_i$ is defined as
\[
A''_i := 
(\cH \otimes \C^2) \oplus 
\bigoplus_{S: i \in S \subseteq [k]}
(\cH \otimes \C^2) \otimes \cL^{\otimes |S|}.
\]

Below, $\vecl$ denotes a computational basis vector of
$\cL^{\otimes [k]}$.
Let $\frac{\one^{\cL^{\otimes k}}}{|\cL|^{k}}$ denote the completely
mixed state on $k$ tensor copies of $\cL$. 
View 
$
\rho^{A_{[k]}} 
\otimes (\ketbra{0})^{(\C^2)^{\otimes k}}
\otimes \frac{\one^{\cL^{\otimes k}}}{|\cL|^{k}} 
$
as a state in $A'_{[k]}$ under the natural embedding viz. the
embedding in the $i$th system is into the first summand of $A''_i$
defined above tensored with $\cL$. 

Let $0 \leq \epsilon, \delta \leq 1$.
Then, there is a state $\rho'$ and a POVM element $\Pi'$ in 
$A'_{[k]}$ such that:
\begin{enumerate}
\item
\setcounter{corqtypicalcq}{\value{enumi}}
The state $\rho'$ and POVM element $\Pi'$ are classical on 
$\cL^{[k]}$ and quantum on $A''_{[k]}$. 
More precisely, $\rho'$, $\Pi'$ can be expressed as
\begin{eqnarray*}
(\rho')^{A'_{[k]}} 
& = &
|\cL|^{-k} 
\sum_{\vecl}
\ketbra{\vecl}^{\cL_{[k]}} \otimes
(\rho')_{\vecl, \delta}^{A''_{[k]}}, \\
(\Pi')^{A'_{[k]}} 
& = &
\sum_{\vecl}
\ketbra{\vecl}^{\cL_{[k]}} \otimes
(\Pi')_{\vecl,\delta}^{A''_{[k]}}, \\
\end{eqnarray*}
where 
$(\rho')_{\vecl,\delta}^{A''_{[k]}}$, 
$(\Pi')_{\vecl,\delta}^{A''_{[k]}}$ are 
quantum states and POVM elements respectively
for all computational basis vectors 
$\vecl \in \cL^{\otimes [k]}$;

\item
\setcounter{corqtypicaldistance}{\value{enumi}}
\[
\ellone{
(\rho')^{A'_{[k]}} - 
\rho^{A_{[k]}} 
\otimes (\ketbra{0})^{(\C^2)^{\otimes k}}
\otimes \frac{\one^{\cL^{\otimes k}}}{|\cL|^{k}} 
} \leq
2^{\frac{k}{2} +1} \delta;
\]

\item
\setcounter{corqtypicalcompleteness1}{\value{enumi}}
\[
\Tr [(\Pi')^{A'_{[k]}} (\rho')^{A'_{[k]}}] \geq
1 - 
\delta^{-2k} 2^{17 (k+1)^k} \epsilon -
2^{\frac{k}{2}+1} \delta;
\]

\item
\setcounter{corqtypicalsoundness1}{\value{enumi}}
Let $(S_1, \ldots, S_l) \vdash [k]$, $l > 0$. 
Define $T := [k] \setminus (S_1 \cup \cdots \cup S_l)$.
Let $\sigma^{A_{T}}$ be a state in $A_T$.
Let $\{\} \neq S \subseteq [k]$.
Let $\vecl_S$ be computational basis vectors in 
$\cL^{\otimes S}$.
In the following definition, let 
$\vecl'_{\bar{S}}$ range over all
computational basis vectors of $\cL^{\otimes \bar{S}}$.
Define a state in $A''_{S}$,
\[
(\rho')_{\vecl_{S}, \delta}^{A''_{S}} := 
|\cL|^{-|\bar{S}|} 
\sum_{\vecl'_{\bar{S}}} 
\Tr_{A''_{\bar{S}}} [
(\rho')_{\vecl_{S} \vecl'_{\bar{S}}, \delta}^{A''_{[k]}}
].
\]
Analogously define 
$
\rho^{A_{S}} :=
\Tr_{A_{\bar{S}}} [
\rho^{A_{[k]}}
].
$
Define 
\begin{eqnarray*}
\lefteqn{
(\rho')_{\vecl,(S_1, \ldots, S_l),\delta}^{A''_{[k]}} 
} \\
& := &
(\rho'_{\vecl_{S_1},\delta})^{
A''_{S_1}
} 
\otimes \cdots \otimes 
(\rho'_{\vecl_{S_l},\delta})^{
A''_{S_l}
} \\
&    &
{} \otimes
(\sigma^{A_T} \otimes (\ketbra{0}^{\C^2})^{\otimes |T|}), \\
\rho_{(S_1, \ldots, S_l)}^{A_{[k]}} 
& := &
\rho^{A_{S_1}} 
\otimes \cdots \otimes 
\rho^{A_{S_l}} \otimes
\sigma^{A_T}.
\end{eqnarray*}
Define 
\begin{eqnarray*}
(\rho')_{(S_1, \ldots, S_l)}^{A'_{[k]}} 
& := &
|\cL|^{-k} 
\sum_{\vecx, \vecl}
\ketbra{\vecl}^{\cL_{[k]}} \otimes
(\rho')_{\vecl,(S_1, \ldots, S_l),\delta}^{A''_{[k]}}.
\end{eqnarray*}

Then,
\begin{eqnarray*}
\lefteqn{
\Tr [
(\Pi')^{A'_{[k]}} 
(\rho')_{(S_1, \ldots, S_l)}^{A'_{[k]}}
]
} \\
& \leq &
\max\left\{
2^{
    -D_H^\epsilon
     (
      \rho^{A_{[k]}} \| 
      \rho^{A_{[k]}}_{(S_1, \ldots, S_l)}
     )
  },
 \frac{3 (2 \cH)^k}{\sqrt{|\cL|}}
\right\}.
\end{eqnarray*}
\end{enumerate}
\end{corollary}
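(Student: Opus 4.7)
The plan is to derive Corollary~\ref{cor:qtypical} as a direct specialization of Lemma~\ref{lem:cqtypical} with $c = 0$. When there are no classical inputs, I take $\cX$ to be a singleton alphabet with the trivial probability distribution $p$; all $\vecx$-dependence in the lemma then drops out. The augmented systems $A'_i = A''_i \otimes \cL$ of the lemma agree with those in the corollary, since for $c = 0$ we have $[c] \cupdot [k] = [k]$, and pseudosubpartitions $\vdash \vdash [k]$ reduce to ordinary subpartitions $\vdash [k]$.

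First I would apply Lemma~\ref{lem:cqtypical} with this setup and the uniform choice $\epsilon_{(S_1,\ldots,S_l)} := \epsilon$ for every subpartition $(S_1,\ldots,S_l) \vdash [k]$. This immediately delivers the state $\rho'$ and POVM element $\Pi'$ of the corollary, with the block-diagonal form on $\cL^{[k]}$ (Claim~\arabic{corqtypicalcq}) and the trace-distance bound $2^{k/2+1}\delta$ (Claim~\arabic{corqtypicaldistance}) following by plugging $c=0$ into Claims~\arabic{lemqtypicalcq} and \arabic{lemqtypicaldistance} of the lemma.

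For the completeness bound in Claim~\arabic{corqtypicalcompleteness1}, I would invoke the counting fact from the preliminaries that the number of pseudosubpartitions of $[c] \cupdot [k]$ with $c = 0$ is at most $(k+1)^k$. Substituting $c=0$ into Claim~\arabic{lemqtypicalcompleteness1} gives the factor $2^{2^{ck+4}(k+1)^k} = 2^{16(k+1)^k}$, and summing a uniform $\epsilon$ over all subpartitions contributes a further multiplicative $(k+1)^k$. Using the elementary bound $(k+1)^k \cdot 2^{16(k+1)^k} \leq 2^{17(k+1)^k}$ for all $k \geq 1$ recovers exactly the coefficient stated. The soundness bound in Claim~\arabic{corqtypicalsoundness1} will be immediate: with the (trivial) distribution $q_{(S_1,\ldots,S_l)}$ on a singleton, Claim~\arabic{lemqtypicalsoundness1} of the lemma specializes verbatim to the claimed maximum of $2^{-D_H^\epsilon(\rho^{A_{[k]}} \| \rho^{A_{[k]}}_{(S_1,\ldots,S_l)})}$ and $3(2|\cH|)^k / \sqrt{|\cL|}$.

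There is no genuine obstacle at this level; all the work has been absorbed into Lemma~\ref{lem:cqtypical} and, through it, into Proposition~\ref{prop:cqtypical}. The $c=0$ case is in fact the cleanest instance of the proposition, since the last clause of Claim~\arabic{propqtypicalsplitting} forces $\beta_{(S_1,\ldots,S_l),\delta} = 0$ throughout, so the $N_{(\cdot)}$ error terms never appear and the soundness decomposition reduces to a single $\cT$-tilted contribution. Thus the only substantive bookkeeping is the arithmetic that absorbs the $(k+1)^k$ subpartition count into the exponent $17(k+1)^k$.
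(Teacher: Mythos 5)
Your proposal is correct and coincides with what the paper does: Corollary~\ref{cor:qtypical} is presented in the paper as the $c=0$ specialization of Lemma~\ref{lem:cqtypical}, and your bookkeeping — setting $\epsilon_{(S_1,\ldots,S_l)}=\epsilon$ uniformly, counting at most $(k+1)^k$ pseudosubpartitions of $[k]$, and absorbing this factor via $(k+1)^k\cdot 2^{16(k+1)^k}\le 2^{17(k+1)^k}$ — is exactly the arithmetic that converts the lemma's $2^{2^{ck+4}(k+1)^k}$ constant into the corollary's $2^{17(k+1)^k}$. Your observation that $\beta_{(S_1,\ldots,S_l),\delta}=0$ when $c=0$, so the $N$-terms drop out of the splitting in Claim~\arabic{propqtypicalsplitting} of Proposition~\ref{prop:cqtypical}, is likewise accurate and explains why the soundness bound carries over verbatim.
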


\medskip

\noindent
{\bf Remark:} 

\noindent
The statement of the joint typicality lemma given
at the end of Section~\ref{sec:quantunionintersection} easily follows 
from the more general statement in 
Corollary~\ref{cor:qtypical} using Proposition~\ref{prop:maxIH}.
We first observe by Proposition~\ref{prop:maxIH} that for any state
$\rho^{A_{[k]}}$ and any subset $\{\} \neq S \subset [k]$, 
\[
D_H^\epsilon 
 (
  \rho^{A_{[k]}} \| 
  \rho^{A_S} \otimes \rho^{A_{\bar{S}}}
 ) \leq 
k \log |\cH| + 3 \log \frac{1}{1-\epsilon} + 6 \log 3 - 4.
\]
Choose $\cL$ of dimension 
$\frac{3^{13} (2 |\cH|)^{4k}}{2^8 (1-\epsilon)^6}$ 
in Corollary~\ref{cor:qtypical}.
Choose the Hilbert space $\cK$ to be of dimension 
\[
2 |\cL|^{k+1} <
\frac{2^{13(k+1)} (2 |\cH|)^{4k(k+1)}}{(1 - \epsilon)^{6(k+1)}}.
\]
This makes the dimension of $\cH \otimes \cK$ large enough to contain
$A'_i$. Choose $\delta := \epsilon^{\frac{1}{4k}}$. Define 
$
\tau^{\cK^{\otimes [k]}} := 
(\ketbra{0})^{(\C^2)^{\otimes [k]}} \otimes
\frac{\one^{\cL^{\otimes [k]}}}{|\cL|^k}.
$
The statement 
at the end of Section~\ref{sec:quantunionintersection}
now follows easily from Corollary~\ref{cor:qtypical}.

In Lemma~\ref{lem:cqtypical},
in the important special case of $k = 1$, it is not necessary to 
augment the register $A_1$ with $\cL$. This is because a
pseudosubpartition of $[c] \cupdot [1]$ consists of only one subset
which must contain the register $A_1$. This simplifies the proof of 
Lemma~\ref{lem:cqtypical}. In fact,
we can prove the following joint typicality lemma in
Corollary~\ref{cor:cqtypical} which is 
quite useful
in applications to network information theory. The lemmas
allow the `negative hypotheses' to be classical quantum states where
the quantum register is either the average over some classical systems
of the positive hypothesis or a fixed quantum state. The flexibility
of allowing a  fixed quantum state is with an eye to potential applications
in Shannon theory
like generalised quantum Slepian-Wolf \cite{anshu:slepianwolf}, though
the entropic quantity in Slepian-Wolf is of the covering type where
our joint typicality lemmas do not seem to apply.
\begin{corollary}[Useful joint typ. lem., intersec. case]
\label{cor:cqtypical}
Let $\cH$, $\cL$ be Hilbert spaces and $\cX$ be a finite set. 
We will also use $\cX$ to denote the Hilbert space with computational basis
elements indexed by the set $\cX$. Let $c$ be a non-negative integer.
Let $A$ denote a quantum register with Hilbert space $\cH$.
For every $\vecx \in \cX^c$, 
let $\rho_\vecx$ be a quantum state in $A$.
Consider the extended quantum system 
\[
A' := 
(\cH \otimes \C^2) \oplus 
\bigoplus_{S: \{\} \neq S \subseteq [c]}
(\cH \otimes \C^2) \otimes \cL^{\otimes |S|}.
\]
Also define the {\em augmented} classical system 
$\cX' := \cX \otimes \cL$.

Below, $\vecx$, $\vecl$ denote computational basis vectors of
$\cX^{[c]}$, $\cL^{\otimes [c]}$.
Let $p(\cdot)$ be a probability distribution on the vectors $\vecx$.
Define the classical quantum state
\[
\rho^{\cX_{[c]} A} :=
\sum_\vecx 
p(\vecx) \ketbra{\vecx}^{\cX_{[c]}} \otimes
\rho_\vecx^{A}.
\]
Let $\frac{\one^{\cL^{\otimes c}}}{|\cL|^{c}}$ denote the completely
mixed state on $c$ tensor copies of $\cL$. 
View 
$
\rho_\vecx^{A} 
\otimes (\ketbra{0})^{\C^2}
$
as a state in $A'$ under the natural embedding viz. the
embedding is into the first summand of $A'$
defined above. Similarly, view 
$
\rho^{\cX_{[c]} A} 
\otimes (\ketbra{0})^{(\C^2)}
\otimes \frac{\one^{\cL^{\otimes c}}}{|\cL|^{c}} 
$
as a state in $\cX'_{[c]} A'$ under the natural 
embedding.

Let $0 \leq \epsilon, \delta \leq 1$. 
Choose $\cL$ to have dimension 
$|\cL| = \frac{3^{13} |\cH|^4}{2^4 (1 - \epsilon)^6}$.
Then, there is a state $\rho'$ and a POVM element $\Pi'$ in 
$\cX'_{[c]} A'$ such that:
\begin{enumerate}
\item
\setcounter{lemqtypicalcq}{\value{enumi}}
The state $\rho'$ and POVM element $\Pi'$ are classical 
on $\cX^{\otimes [c]} \otimes \cL^{[c]}$ and quantum on 
$A'$. 
More precisely, $\rho'$, $\Pi'$ can be expressed as
\begin{eqnarray*}
(\rho')^{\cX'_{[c]} A'} 
& = &
|\cL|^{-c} 
\sum_{\vecx, \vecl}
p(\vecx) 
\ketbra{\vecx}^{\cX_{[c]}} \otimes
\ketbra{\vecl}^{\cL_{[c]}} \otimes
(\rho')_{\vecx, \vecl, \delta}^{A'}, \\
(\Pi')^{\cX'_{[c]} A'} 
& = &
\sum_{\vecx, \vecl}
\ketbra{\vecx}^{\cX_{[c]}} \otimes
\ketbra{\vecl}^{\cL_{[c]}} \otimes
(\Pi')_{\vecx,\vecl,\delta}^{A'}, \\
\end{eqnarray*}
where 
$(\rho')_{\vecx,\vecl,\delta}^{A'}$, 
$(\Pi')_{\vecx,\vecl,\delta}^{A'}$ are 
quantum states and POVM elements respectively
for all computational basis vectors 
$\vecx \in \cX^{\otimes [c]}$,
$\vecl \in \cL^{\otimes [c]}$;

\item
\setcounter{lemqtypicaldistance}{\value{enumi}}
\[
\ellone{
(\rho')^{\cX'_{[c]} A'} - 
\rho^{\cX_{[c]} A} 
\otimes (\ketbra{0})^{\C^2}
\otimes \frac{\one^{\cL^{\otimes c}}}{|\cL|^c} 
} \leq
2^{\frac{c+1}{2} +1} \delta;
\]

\item
\setcounter{lemqtypicalcompleteness1}{\value{enumi}}
\[
\Tr [(\Pi')^{\cX'_{[c]} A'} (\rho')^{\cX'_{[c]} A'}] \geq
1 - 
\delta^{-2} 2^{2^{c+5}} 3^c \epsilon -
2^{\frac{c+1}{2}+1} \delta;
\]

\item
\setcounter{lemqtypicalsoundness1}{\value{enumi}}
Let $S \subseteq [c]$.
Let $\vecx_S$, $\vecl_S$ be computational basis vectors in 
$\cX^{\otimes S}$, $\cL^{\otimes S}$.
In the following definition, let $\vecx'_{\bar{S}}$,
$\vecl'_{\bar{S}}$ range over all
computational basis vectors of $\cX^{\otimes ([c] \setminus S)}$,
$\cL^{\otimes ([c] \setminus S)}$.
Define a state in $A'$,
\[
(\rho')_{\vecx_S, \vecl_S, \delta}^{A'} := 
|\cL|^{-|\bar{S}|} 
\sum_{\vecx'_{\bar{S}}, \vecl'_{\bar{S}}} 
p(\vecx'_{\bar{S}} | \vecx_S)
(\rho')_{\vecx_S \vecx'_{\bar{S}}, 
         \vecl_S \vecl'_{\bar{S}}, \delta
        }^{A'}.
\]
Analogously define 
\[
\rho_{\vecx_S}^A :=
\sum_{\vecx'_{\bar{S}}}
p(\vecx'_{\bar{S}} | \vecx_S)
\rho_{\vecx_S \vecx'_{\bar{S}}}^A.
\]
Let $(S_1, S_2, S_3) \dashv [c]$. Let $\sigma^A$ be a fixed
quantum state in $A$.
If $S_1, S_3 = \{\}$ and $S_2 = [c]$, define
\begin{eqnarray*}
\lefteqn{
(\rho')_{(\{\}, [c], \{\})}^{\cX'_{[c]} A'} 
} \\
& := &
|\cL|^{-c} 
\sum_{\vecx}
p(\vecx)
\ketbra{\vecx}^{\cX_{[c]}} \otimes
\ketbra{\vecl}^{\cL_{[c]}} \\
&   &
~~~~
{} \otimes
(\sigma^A \otimes \ketbra{0}^{\C^2}), \\
\rho_{(\{\}, [c], \{\})}^{\cX_{[c]} A} 
& := &
\sum_{\vecx}
p(\vecx)
\ketbra{\vecx}^{\cX_{[c]}} \otimes \sigma^A.
\end{eqnarray*}
Otherwise, define 
\begin{eqnarray*}
\lefteqn{
(\rho')_{(S_1, S_2, S_3)}^{\cX'_{[c]} A'} 
} \\
& := &
|\cL|^{-c} 
\sum_{\vecx_{S_1}}
p(\vecx_{S_1}) 
\ketbra{\vecx_{S_1}}^{\cX_{S_1}} \\
&   &
~~~~~~~~~~~~~~~~~
{} \otimes
\ketbra{\vecl_{S_1}}^{\cL_{S_1}} \\
&    &
~~~~~~~~~~~~~~
{} \otimes
\left(
\sum_{\vecx_{S_2}}
p(\vecx_{S_2} | \vecx_{S_1})
\ketbra{\vecx_{S_2}}^{\cX_{S_2}} 
\right. \\
&    &
~~~~~~~~~~~~~~~~~~~~~
\left.
{} \otimes
\ketbra{\vecl_{S_2}}^{\cL_{S_2}} 
\right) \\
&    &
~~~~~~~~~~~~~~
{} \otimes
\left(
\sum_{\vecx_{S_3}}
p(\vecx_{S_3} | \vecx_{S_1})
\ketbra{\vecx_{S_3}}^{\cX_{S_3}} 
\right. \\
&    &
~~~~~~~~~~~~~~~~~~~~~
{} \otimes
\ketbra{\vecl_{S_3}}^{\cL_{S_3}} \\
&   &
~~~~~~~~~~~~~~~~~~~~~
\left.
{} \otimes 
(\rho')_{\vecx_{S_1 \cup S_3}, \vecl_{S_1 \cup S_3}, \delta}^{A'}
\right), \\
\lefteqn{
\rho_{(S_1, S_2, S_3)}^{\cX_{[c]} A} 
} \\
& := &
\sum_{\vecx_{S_1}}
p(\vecx_{S_1}) 
\ketbra{\vecx_{S_1}}^{\cX_{S_1}} \\
&    &
~~~~~~~~~~~
{} \otimes
\left(
\sum_{\vecx_{S_2}}
p(\vecx_{S_2} | \vecx_{S_1})
\ketbra{\vecx_{S_2}}^{\cX_{S_2}} 
\right) \\
&    &
~~~~~~~~~~~
{} \otimes
\left(
\sum_{\vecx_{S_3}}
p(\vecx_{S_3} | \vecx_{S_1})
\ketbra{\vecx_{S_3}}^{\cX_{S_3}} 
\right. \\
&   &
~~~~~~~~~~~~~~~~
\left.
{} \otimes
\rho_{\vecx_{S_1 \cup S_3}}^{A}
\right).
\end{eqnarray*}

Then,
\[
\Tr [
(\Pi')^{\cX'_{[c]} A'} 
(\rho')_{(S_1, S_2, S_3)}^{\cX'_{[c]} A'}
]
\leq
2^{-I_H^{\epsilon}(X_{S_2} : A X_{S_3} | X_{S_1})_\rho},
\]
where 
$
I_H^{\epsilon}(X_{S_2} : A X_{S_3} | X_{S_1})_\rho :=
D_H^{\epsilon}
    (
      \rho^{\cX_{[c]} A} \| 
      \rho^{\cX_{[c]} A}_{(S_1, S_2, S_3)}
    ).
$
\end{enumerate}
\end{corollary}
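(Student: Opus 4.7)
My plan is to derive Corollary~\ref{cor:cqtypical} by specialising Lemma~\ref{lem:cqtypical} to $k=1$, with a mild enrichment of the collection of false states being tested against. The first simplification comes from the structure of pseudosubpartitions of $[c]\cupdot[1]$: every block must intersect $[k]=\{1\}$ while the blocks are pairwise disjoint on $[k]$, forcing $l=1$ and a unique block of the form $S'\cup\{1\}$ for some $S'\subseteq[c]$. Consequently $T=\{\}$, so the auxiliary state $\sigma_\vecx^{A_T}$ of Proposition~\ref{prop:cqtypical} plays no role in the generic case, and because there is no second quantum register over which to partial-trace, the $\cL$-augmentation of the quantum register used in Lemma~\ref{lem:cqtypical} becomes redundant and may be suppressed, identifying the $A'$ of the corollary with the $A''_1$ of the lemma. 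With this identification, the structural form (Claim~\arabic{lemqtypicalcq}), the closeness bound (Claim~\arabic{lemqtypicaldistance}) and the completeness bound (Claim~\arabic{lemqtypicalcompleteness1}) of the corollary are directly inherited from the corresponding claims of Lemma~\ref{lem:cqtypical}.

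For the soundness (Claim~\arabic{lemqtypicalsoundness1}) the real content is that the corollary parameterises false states by ordered partitions $(S_1,S_2,S_3)\dashv[c]$, of which there are $3^c$, rather than by pseudosubpartitions of $[c]\cupdot[1]$, of which there are only $2^c$. Two triples $(S_1,S_2,S_3)$ and $(S_1',S_2',S_3')$ with $S_1\cup S_3=S_1'\cup S_3'$ yield the same block $S_1\cup S_3\cup\{1\}$ in the lemma but generically distinct false states, because the marginalisation over $X_{S_2}$ uses the conditional $p(\vecx_{S_2}\mid\vecx_{S_1})$, which depends on which coordinates sit in $S_1$ versus $S_3$. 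The plan is therefore to run the tilted-span intersection machinery of Proposition~\ref{prop:cqtypical} with one optimiser of $D_H^\epsilon(\rho^{\cX_{[c]}A}\|\rho^{\cX_{[c]}A}_{(S_1,S_2,S_3)})$ per triple rather than per pseudosubpartition; this accounts for the factor $3^c$ in Claim~\arabic{lemqtypicalcompleteness1} of the corollary and gives the clean bound $2^{-I_H^\epsilon(X_{S_2}:AX_{S_3}\mid X_{S_1})_\rho}$ for each triple. The degenerate case $(\{\},[c],\{\})$ with arbitrary fixed $\sigma^A$ fits into the same framework via Claim~\arabic{propqtypicalsplitting} of Proposition~\ref{prop:cqtypical}, which already permits a user-supplied $\sigma_\vecx^{A_T}$ on the untouched quantum part; plugging in $\sigma^A$ places it as one of the intersected tests. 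Finally, the prescribed $|\cL|=3^{13}|\cH|^4/(2^4(1-\epsilon)^6)$ together with Proposition~\ref{prop:maxIH} makes the residual $O(|\cH|/\sqrt{|\cL|})$ noise term carried over from Lemma~\ref{lem:cqtypical} dominated by $2^{-I_H^\epsilon(\ldots)_\rho}$, so the $\max$ appearing in the lemma's conclusion collapses to the clean bound stated in the corollary.

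The main obstacle, and the place where some care is needed, is the conditional marginalisation: Lemma~\ref{lem:cqtypical} is phrased in terms of an unconditional distribution $p_{[c]\setminus S}$, whereas the corollary needs $p(\vecx_{S_2}\mid\vecx_{S_1})$, which depends on $\vecx_{S_1}\subseteq\vecx_{S\cap[c]}$. Because the POVM $(\Pi')_{\vecx,\vecl,\delta}^{A''}$ supplied by Proposition~\ref{prop:cqtypical} is constructed per-$\vecx$ and the averaging over $\vecx$ is deferred to the final trace, permitting $p_{[c]\setminus S}$ to depend on $\vecx_{S\cap[c]}$ costs essentially nothing: the tilted-span intersection argument goes through slice-by-slice in $\vecx_{S_1}$, and linearity of the trace reassembles the conditional bounds into the unconditional $I_H^\epsilon(X_{S_2}:AX_{S_3}\mid X_{S_1})_\rho$. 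Aside from this extension and the arithmetic verification that the stated $|\cL|$ absorbs the $O(|\cH|/\sqrt{|\cL|})$ tail into $2^{-I_H^\epsilon}$, no genuinely new ideas beyond those already in Lemma~\ref{lem:cqtypical} and Proposition~\ref{prop:cqtypical} are required.
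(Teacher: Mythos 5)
Your overall plan — specialise to $k=1$, observe that every nonempty pseudosubpartition of $[c]\cupdot[1]$ has a unique block containing the register, drop the extra $\cL$ factor on the quantum register, and then re-run the Proposition~\ref{prop:cqtypical} machinery — matches the paper's intent as sketched in the remark preceding the corollary. However, two specific claims in your write-up do not hold up.

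First, the degenerate triple $(\{\},[c],\{\})$ with arbitrary $\sigma^A$ does \emph{not} "fit into the same framework via Claim~\arabic{propqtypicalsplitting} of Proposition~\ref{prop:cqtypical}''. That claim requires $l>0$, and for $k=1$ every block of a nonempty pseudosubpartition contains the single quantum register, so $T=[k]\setminus(S_1\cup\cdots\cup S_l)=\{\}$ and $\sigma_\vecx^{A_T}$ is a state in a trivial system — it cannot carry $\sigma^A$. You observe this yourself ("$T=\{\}$, so $\sigma_\vecx^{A_T}$ plays no role'') one sentence earlier, so the two statements are in tension. What the corollary actually needs is a genuine extension of the tilting construction: one additional hypothesis test, the optimiser of $D_H^\epsilon(\rho^{\cX_{[c]}A}\,\|\,\rho^{\cX_{[c]}}\otimes\sigma^A)$, adjoined to the collection of tilted subspaces and given its own (orthogonal) tilting direction. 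The paper flags this explicitly as a new feature ("stated in a form that allows the negative hypotheses to be an arbitrary fixed quantum state in some cases''), not an instance of the $\sigma_\vecx^{A_T}$ parameter of the existing proposition.

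Second, your plan to "run the tilted-span intersection machinery with one optimiser per triple, $3^c$ rather than $2^c$ of them'' is the right idea but is not actually engineered. The tilting directions in Proposition~\ref{prop:cqtypical}, restricted to $k=1$, are indexed by the subsets $S'\subseteq[c]$ (equivalently, nonempty pseudosubpartitions), giving roughly $2^c$ orthogonal directions per $\vecl$ — and indeed the corollary's $A'$ has only the $2^c-1$ extra summands $(\cH\otimes\C^2)\otimes\cL^{\otimes|S'|}$, $\{\}\neq S'\subseteq[c]$. You want to tilt one test per ordered triple $(S_1,S_2,S_3)$, i.e. $3^c$ tests, because the global hypothesis state $\rho^{\cX_{[c]}A}_{(S_1,S_2,S_3)}$ (and hence the optimiser POVM, which is chosen for the \emph{global} $D_H^\epsilon$ and not just the per-$\vecx$ marginal) genuinely depends on the $(S_1,S_3)$ split through the averaging weight $q$. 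But $3^c$ tests cannot be tilted into $2^c-1$ directions with the stated Hilbert-space structure; one needs either a larger $A'$ or an argument that multiple tests sharing a direction still admit a usable union bound. In addition, even granting $3^c$ directions, Proposition~\ref{prop:generalisedtiltedspan}'s factor $2^{2l+1}$ with $l=3^c$ overwhelms the $2^{2^{c+5}}$ of Claim~\arabic{lemqtypicalcompleteness1} once $c$ is moderately large. Neither issue is addressed in your sketch, so the derivation of the stated constants is not established.
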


We now state our most general one-shot classical quantum joint typicality
lemma.
\begin{theorem}[cq joint typ. lem., gen. case]
\label{thm:cqtypical}
Let $\cH$, $\cL$ be Hilbert spaces and $\cX$ be a finite set. We will
also use $\cX$ to denote the Hilbert space with computational basis
elements indexed by the set $\cX$. Let $c$ be a non-negative and
$k$ a positive integer.
Let  $A_1 \cdots A_k$ be a $k$-partite
system where each $A_i$ is isomorphic to $\cH$.
Let $t$ be a positive integer.
Let $\vecx^{t}$ denote a $t$-tuple of elements of $\cX^c$;
we shall denote its $i$th element by $\vecx^{t}(i)$.
Consider the 
{\em extended} $k$-partite system $\hat{A}_1 \cdots \hat{A}_k$ where
each $\hat{A}_i \cong A'_i \otimes \C^2 \otimes \C^{t+1}$, 
$A'_i \cong A''_i \otimes \cL$, and each 
$A''_i$ is defined as
\[
A''_i := 
(\cH \otimes \C^2) \oplus 
\bigoplus_{S: i \in S \subseteq [c] \cupdot [k]}
(\cH \otimes \C^2) \otimes \cL^{\otimes |S|}.
\]
Also define $\hat{\cX} := \cX \otimes \cL$.

Below, $\vecx$ denotes computational basis vectors of
$\cX^{[c]}$, and $\vecl$ denotes computational basis vectors of
$\cL^{\otimes s}$ where $s$ will be clear from the context.
Let $p(\cdot)$ denote a probability distribution on 
the vectors $\vecx$. Let $p(1; \cdot), \ldots, p(t; \cdot)$ denote
probability distributions on $\vecx^t$ such that the marginal
of $p(i; \vecx^t)$ on the $i$th element is $p(\vecx^t(i))$.
For $i \in [t]$, define the classical quantum states
\[
\rho^{(\cX_{[c]})^t A_{[k]}}(i) :=
\sum_{\vecx^t}
p(i; \vecx^t) \ketbra{\vecx^t}^{(\cX_{[c]})^t} \otimes
\rho_{\vecx^t(i)}^{A_{[k]}}.
\]
Let $\frac{\one^{\cL^{\otimes k}}}{|\cL|^{k}}$ denote the completely
mixed state on $(c+k)$ tensor copies of $\cL$. 
View 
$
\rho_\vecx^{A_{[k]}}(i) 
\otimes (\ketbra{0})^{(\C^2)^{\otimes k}}
\otimes \frac{\one^{\cL^{\otimes k}}}{|\cL|^{k}} 
\otimes (\ketbra{0})^{(\C^2)^{\otimes k}}
\otimes (\ketbra{0})^{(\C^{t+1})^{\otimes k}}
$
as a state in $\hat{A}_{[k]}$ under the natural embedding viz. the
embedding in the $j$th system is into the first summand of $A''_j$
defined above tensored with $\cL \otimes \C^2 \otimes \C^{t+1}$. 
Similarly, view 
$
(\rho(i))^{(\cX_{[c]})^t A_{[k]}} 
\otimes (\ketbra{0})^{(\C^2)^{\otimes k}}
\otimes \frac{\one^{\cL^{\otimes (ct+k)}}}{|\cL|^{ct+k}} 
\otimes (\ketbra{0})^{(\C^2)^{\otimes k}}
\otimes (\ketbra{0})^{(\C^{t+1})^{\otimes k}}
$
as a state in $(\hat{\cX}_{[c]})^t \hat{A}_{[k]}$ under the natural 
embedding.

Let $0 \leq \alpha, \epsilon, \delta \leq 1$. For each pseudosubpartition\\
$(S_1, \ldots, S_l) \vdash \vdash [c] \cupdot [k]$ and each $i \in [t]$,
let $0 \leq \epsilon_{i, (S_1, \ldots, S_l)} \leq 1$.
Then, there are states $\rho'(1), \ldots, \rho'(t)$ and a  
POVM element $\hat{\Pi}$ in 
$(\hat{\cX}_{[c]})^t \hat{A}_{[k]}$ such that:
\begin{enumerate}
\item
\setcounter{thmqtypicalcq}{\value{enumi}}
The states $\rho'(1), \ldots, \rho'(t)$ and 
POVM element $\hat{\Pi}$ are classical 
on $\cX^{\otimes [ct]} \otimes \cL^{[ct] \cupdot [k]}$ and quantum on 
$A''_{[k]}  \otimes (\C^2)^{\otimes [k]} \otimes (\C^{t+1})^{\otimes [k]}$.
More precisely, $\rho'(i)$, $i \in [t]$, 
$\hat{\Pi}$ can be expressed as
\begin{eqnarray*}
\lefteqn{
(\rho'(i))^{(\hat{\cX}_{[c]})^t \hat{A}_{[k]}} 
} \\
& = &
|\cL|^{-(ct+k)} 
\sum_{\vecx^t, \vecl}
p(i; \vecx^t) 
\ketbra{\vecx^t}^{(\cX_{[c]})^t} \otimes
\ketbra{\vecl}^{\cL_{[tc] \cupdot [k]}} \\
&   &
~~~~~~~~~~~~~~~~~~~~~~
{} \otimes
(\rho')_{\vecx^t(i), \vecl_{[ct]}(i) \vecl_{[k]}, \delta}^{A''_{[k]}} \\
&   &
~~~~~~~~~~~~~~~~~~~~~~~~~
{} \otimes (\ketbra{0})^{(\C^2)^{\otimes k}}
\otimes (\ketbra{0})^{(\C^{t+1})^{\otimes k}}, \\
\lefteqn{
(\Pi')^{(\hat{\cX}_{[c]})^t \hat{A}_{[k]}} 
} \\
& = &
\sum_{\vecx, \vecl}
\ketbra{\vecx^t}^{(\cX_{[c]})^t} \otimes
\ketbra{\vecl}^{\cL_{[ct] \cupdot [k]}} \\
&   &
~~~~~~~~
{} \otimes
(\hat{\Pi})_{\vecx^t,\vecl,\delta}^{
A''_{[k]} 
\otimes (\C^2)^{\otimes [k]}
\otimes (\C^{t+1})^{\otimes [k]}
}, \\
\end{eqnarray*}
where 
$(\rho')_{\vecx,\vecl,\delta}^{A''_{[k]}}$
are quantum states 
for all computational basis vectors 
$\vecx \in \cX^{\otimes [c]}$,
$\vecl \in \cL^{\otimes ([c] \cupdot [k])}$ and
\[
(\Pi')_{\vecx^t,\vecl,\delta}^{
A''_{[k]}
\otimes (\C^2)^{\otimes [k]}
\otimes (\C^{t+1})^{\otimes [k]}
}
\]
are POVM elements
for all computational basis vectors 
$\vecx^t \in \cX^{\otimes [ct]}$,
$\vecl \in \cL^{\otimes ([ct] \cupdot [k])}$;

\item
\setcounter{thmqtypicaldistance}{\value{enumi}}
For all $i \in [t]$,
\begin{eqnarray*}
\lefteqn{
\left\|
(\rho'(i))^{(\hat{\cX}_{[c]})^t \hat{A}_{[k]}} 
\right.
} \\
&   &
{} -
(\rho(i))^{(\cX_{[c]})^t A_{[k]}} 
\otimes (\ketbra{0})^{(\C^2)^{\otimes k}}
\otimes \frac{\one^{\cL^{\otimes (ct+k)}}}{|\cL|^{tc+k}} \\
&   &
~~~~~~
\left.
{} \otimes (\ketbra{0})^{(\C^2)^{\otimes k}}
\otimes (\ketbra{0})^{(\C^{t+1})^{\otimes k}}
\right\|_1 \\
& \leq &
2^{\frac{c+k}{2} +1} \delta;
\end{eqnarray*}

\item
\setcounter{thmqtypicalcompleteness1}{\value{enumi}}
For all $i \in [t]$,
\begin{eqnarray*}
\lefteqn{
\Tr [
(\hat{\Pi})^{(\hat{\cX}_{[c]})^t \hat{A}_{[k]}} 
(\rho'(i))^{(\hat{\cX}_{[c]})^t \hat{A}_{[k]}}
] 
} \\
& \geq &
1 - 
\delta^{-2k} 2^{2^{ck+4} (k+1)^k} 
\sum_{(S_1, \ldots, S_l) \vdash \vdash [c] \cupdot [k]}
\epsilon_{i, (S_1, \ldots, S_l)} \\
&    &
{} -
2^{\frac{c+k}{2}+1} \delta - \alpha;
\end{eqnarray*}

\item
\setcounter{thmqtypicalsoundness1}{\value{enumi}}
Let $(S_1, \ldots, S_l) \vdash \vdash [c] \cupdot [k]$, $l > 0$. 
Define $T := [k] \setminus (S_1 \cup \cdots \cup S_l)$.
Let $\sigma_\vecx^{A_{T}}$ be a state in $A_T$.
Let $S \subseteq [c] \cupdot [k]$, $S \cap [k] \neq \{\}$,
Let $\vecx_{[c] \cap S}$, $\vecl_S$ be computational basis vectors in 
$\cX^{\otimes ([c] \cap S}$, $\cL^{\otimes S}$.
Let $p_{[c] \setminus S}(\cdot)$ be a probability distribution on
$\cX^{\otimes ([c] \setminus S)}$.
In the following definition, let $\vecx'_{[c] \setminus S}$,
$\vecl'_{\bar{S}}$ range over all
computational basis vectors of $\cX^{\otimes ([c] \setminus S)}$,
$\cL^{\otimes \bar{S}}$.
Define a state in $A''_{S \cap [k]}$,
\begin{eqnarray*}
\lefteqn{
(\rho')_{\vecx_{S \cap [c]}, \vecl_{S}, \delta}^{A''_{S \cap [k]}} 
} \\
& := &
|\cL|^{-|\bar{S}|} 
\sum_{\vecx'_{[c] \setminus S}, \vecl'_{\bar{S}}} 
p_{[c] \setminus S}(\vecx'_{[c] \setminus S})
\Tr_{A''_{\bar{S} \cap [k]}} [
(\rho')_{\vecx_{S \cap [c]} \vecx'_{[c] \setminus S}, 
         \vecl_{S} \vecl'_{\bar{S}}, \delta
        }^{A''_{[k]}})
].
\end{eqnarray*}
Analogously define 
\[
\rho_{\vecx_{S \cap [c]}}^{A_{S \cap [k]}} :=
\sum_{\vecx'_{[c] \setminus S}}
p_{[c] \setminus S}(\vecx'_{[c] \setminus S})
\Tr_{A_{\bar{S} \cap [k]}} [
\rho_{\vecx_{S \cap [c]} \vecx'_{[c] \setminus S}}^{A_{[k]}}
].
\]
Define 
\begin{eqnarray*}
\lefteqn{
(\rho')_{\vecx,\vecl,(S_1, \ldots, S_l),\delta}^{A''_{[k]}} 
} \\
& := &
(\rho'_{\vecx_{S_1 \cap [c]}, \vecl_{S_1},\delta})^{
A''_{S_1 \cap [k]}
} \\
&    &
{} \otimes \cdots \otimes \\
&    &
(\rho'_{\vecx_{S_l \cap [c])}, \vecl_{S_l},\delta})^{
A''_{S_l \cap [k]}
} \\
&    &
{} \otimes
(\sigma_\vecx^{A_T} \otimes (\ketbra{0}^{\C^2})^{\otimes |T|}), \\
\rho_{\vecx,(S_1, \ldots, S_l)}^{A_{[k]}} 
& := &
\rho_{\vecx_{S_1 \cap [c]}}^{A_{S_1 \cap [k]}} 
\otimes \cdots \otimes 
\rho_{\vecx_{S_l \cap [c]}}^{A_{S_l \cap [k]}} \otimes
\sigma_\vecx^{A_T}.
\end{eqnarray*}

For $i \in [t]$, let $q_{i; (S_1, \ldots, S_l)}(\cdot)$ be 
a probability distribution over $\vecx^t$. Define 
\begin{eqnarray*}
\lefteqn{
(\rho')_{i; (S_1, \ldots, S_l)}^{(\hat{\cX}_{[c]})^t \hat{A}_{[k]}} 
} \\
& := &
|\cL|^{-(ct+k)} 
\sum_{\vecx^t, \vecl}
q_{i; (S_1, \ldots, S_l)}(\vecx^t) 
\ketbra{\vecx^t}^{(\cX_{[c]})^t} \\
&    &
~~~~
{} \otimes
\ketbra{\vecl}^{\cL_{[ct] \cupdot [k]}} \\
&   &
~~~~~
{} \otimes
(\rho')_{\vecx^t(i),\vecl_{[ct]}(i),(S_1, \ldots, S_l),\delta}^{A''_{[k]}}
\\
&   &
~~~~~
{} \otimes (\ketbra{0})^{(\C^2)^{\otimes k}}
\otimes (\ketbra{0})^{(\C^{t+1})^{\otimes k}}, \\
\lefteqn{
\rho_{i; (S_1, \ldots, S_l)}^{(\cX_{[c]})^t A_{[k]}} 
} \\
& := &
\sum_{\vecx}
q_{i; (S_1, \ldots, S_l)}(\vecx^t) 
\ketbra{\vecx^t}^{(\cX_{[c]})^t} \otimes 
\rho_{\vecx^t(i),(S_1, \ldots, S_l)}^{A_{[k]}}.
\end{eqnarray*}

Then,
\begin{eqnarray*}
\lefteqn{
\Tr [
(\hat{\Pi})^{(\hat{\cX}_{[c]})^t \hat{A}_{[k]}} 
(\rho')_{i; (S_1, \ldots, S_l)}^{(\hat{\cX}_{[c]})^t \hat{A}_{[k]}}
]
} \\
& \leq &
\frac{1-\alpha}{\alpha}
\sum_{j=1}^t \\
&    &
\max\left\{
2^{
    -D_H^{\epsilon_{j, (S_1, \ldots, S_l)}}
     (
      \rho(j)^{(\cX_{[c]})^t A_{[k]}} \| 
      \rho^{(\cX_{[c]})^t A_{[k]}}_{i; (S_1, \ldots, S_l)}
     )
  },
\right. \\
&   &
~~~~~~~~~~~~~~~
\left.
 \frac{3 (2 |\cH|)^k}{\sqrt{|\cL|}}
\right\}.
\end{eqnarray*}
\end{enumerate}
\end{theorem}
\begin{proof}
For each $\vecx \in \cX_{[c]}$, 
$\vecl \in \cL_{[c] \cupdot [k]}$, construct the state 
$(\rho')_{\vecx,\vecl,\delta}^{A''_{[k]}}$ and POVM element
$(\Pi')_{\vecx,\vecl,\delta}^{A''_{[k]}}$ 
as in Claim~\arabic{lemqtypicalcq} of Lemma~\ref{lem:cqtypical}. 
For each $\vecx^t \in (\cX_{[c]})^t$, $\vecl \in \cL_{[ct] \cupdot [k]}$,
$i \in [t]$ define the POVM element
\[
(\Pi'(i))_{\vecx^t, \vecl,\delta}^{A''_{[k]}} :=
(\Pi')_{\vecx^t(i), \vecl_{[ct]}(i), \delta}^{A''_{[k]}}.
\]  
By Fact~\ref{fact:gelfandnaimark},
there exists an orthogonal projector
$
(\hat{\Pi}(i))_{\vecx^t,\vecl,\delta}^{
A''_{[k]} \otimes (\C^2)^{\otimes [k]}
}
$
such that 
\[
\Tr [
(\hat{\Pi}(i))_{\vecx^t,\vecl,\delta}^{
A''_{[k]} \otimes (\C^2)^{\otimes [k]}
}
(
\tau^{A''_{[k]}} 
\otimes (\ketbra{0})^{(\C^2)^{\otimes [k]}}
)
] =
\Tr [
(\Pi'(i))_{\vecx^t,\vecl,\delta}^{A''_{[k]}}
\;
\tau^{A''_{[k]}} 
]
\]
for all states 
$
\tau^{A''_{[k]}}.
$
The projector 
$
(\hat{\Pi})_{\vecx^t,\vecl,\delta}^{
A''_{[k]} \otimes (\C^2)^{\otimes [k]} \otimes (\C^{t+1})^{\otimes [k]}
}
$
is now constructed from the projectors
$
(\hat{\Pi}(i))_{\vecx^t,\vecl,\delta}^{
A''_{[k]} \otimes (\C^2)^{\otimes [k]}
},
$
$i \in [t]$
using Proposition~\ref{prop:tiltedspan}.
This settles Claim~\arabic{thmqtypicalcq} of the theorem.

Claims~\arabic{thmqtypicaldistance} and \arabic{thmqtypicalcompleteness1}
of the theorem follow from the corresponding 
Claims~\arabic{lemqtypicaldistance} and \arabic{lemqtypicalcompleteness1}
of Lemma~\ref{lem:cqtypical}.
Claim~\arabic{thmqtypicalsoundness1} of the theorem follows from
Claim~\arabic{lemqtypicalsoundness1} of Lemma~\ref{lem:cqtypical} 
and Proposition~\ref{prop:tiltedspan}, combined with the observation
that averaging over $\vecx \in \cX_{[c]}$, 
$\vecl \in \cL_{[c] \cupdot [k]}$ does not affect the 
$(S_1, \ldots, S_l)$ `structure' of the state
$(\rho')_{\vecx,\vecl,(S_1, \ldots, S_l),\delta}^{A''_{[k]}}$.
\end{proof}

\medskip

\noindent
{\bf Remark:} 

\noindent
The above theorem is a generalisation of
Fact~\ref{fact:ctypical} to the classical quantum setting involving
`union of intersection of POVM elements'. However, it has the shortcoming
that it can only handle classical quantum `$q(\cdot)$' states 
corresponding to pseudosubpartitions of $[c] \cupdot [k]$, unlike 
Fact~\ref{fact:ctypical} which can handle any classical `$q(\cdot)$' 
state. Overcoming this shortcoming remains an important open 
problem.

As remarked before Corollary~\ref{cor:cqtypical},
in the important special case of $k = 1$, it is not necessary to 
augment the register $A_1$ with $\cL$. Thus,
we can prove the following joint typicality lemma which is
very useful in applications to network information theory. 
\begin{corollary}[Useful joint typicality lemma, general case]
\label{cor:gencqtypical}
Let $\cH$, $\cL$ be Hilbert spaces and $\cX$ be a finite set. We will
also use $\cX$ to denote the Hilbert space with computational basis
elements indexed by the set $\cX$. Let $c$ be a non-negative integer.
Let  $A$ denote a quantum register with Hilbert space $\cH$.
For every $\vecx \in \cX^c$, 
let $\rho_\vecx$ be a quantum state in $A$.
Let $t$ be a positive integer.
Let $\vecx^{t}$ denote a $t$-tuple of elements of $\cX^c$;
we shall denote its $i$th element by $\vecx^{t}(i)$.
Consider the 
extended quantum system $\hat{A}$ where
$\hat{A} \cong A' \otimes \C^2 \otimes \C^{t+1}$, and
$A'$ is defined as
\[
A' := 
(\cH \otimes \C^2) \oplus 
\bigoplus_{S: i \in S \subseteq [c] \cupdot [k]}
(\cH \otimes \C^2) \otimes \cL^{\otimes |S|}.
\]
Also define the {\em augmented} classical system 
$\hat{\cX} := \cX \otimes \cL$.

Below, $\vecx$, $\vecl$ denote computational basis vectors of
$\cX^{[c]}$, $\cL^{\otimes [c]}$.
Let $p(\cdot)$ denote a probability distribution on 
the vectors $\vecx$. Let $p(1; \cdot), \ldots, p(t; \cdot)$ denote
probability distributions on $\vecx^t$ such that the marginal
of $p(i; \vecx^t)$ on the $i$th element is $p(\vecx^t(i))$.
For $i \in [t]$, define the classical quantum states
\[
\rho^{(\cX_{[c]})^t A}(i) :=
\sum_{\vecx^{t}} 
p(i; \vecx^t) \ketbra{\vecx^t}^{(\cX_{[c]})^t} \otimes
\rho_{\vecx^t(i)}^{A}.
\]
Let $\frac{\one^{\cL^{\otimes c}}}{|\cL|^{c}}$ denote the completely
mixed state on $c$ tensor copies of $\cL$. 
View 
$
\rho_\vecx^{A} 
\otimes (\ketbra{0})^{\C^2}
\otimes (\ketbra{0})^{\C^2}
\otimes (\ketbra{0})^{\C^{t+1}}
$
as a state in $\hat{A}$ under the natural embedding viz. the
embedding is into the first summand of $A'$
defined above tensored with $\C^2 \otimes \C^{t+1}$. 
Similarly, view 
$
\rho^{(\cX_{[c]})^t A}(i)
\otimes (\ketbra{0})^{\C^2}
\otimes \frac{\one^{\cL^{\otimes ct}}}{|\cL|^{ct}} 
\otimes (\ketbra{0})^{\C^2}
\otimes (\ketbra{0})^{\C^{t+1}}
$
as a state in $(\hat{\cX}_{[c]})^{\otimes t} \hat{A}$ under the natural 
embedding.

Let $0 \leq \alpha, \epsilon, \delta \leq 1$. 
Choose $\cL$ to have dimension 
$|\cL| = \frac{3^{13} |\cH|^4}{2^4 (1 - \epsilon)^6}$.
Then, there are states $\rho'(1), \ldots, \rho'(t)$ and a  
POVM element $\hat{\Pi}$ in 
$(\hat{\cX}_{[c]})^{\otimes t} \hat{A}$ such that:
\begin{enumerate}
\item
\setcounter{thmqtypicalcq}{\value{enumi}}
The states $\rho'(1), \ldots, \rho'(t)$ and 
POVM element $\hat{\Pi}$ are classical 
on $\cX^{\otimes [ct]} \otimes \cL^{[ct]}$ and quantum on 
$\hat{A}$.
More precisely, $\rho'(i)$, $i \in [t]$, 
$\hat{\Pi}$ can be expressed as
\begin{eqnarray*}
\lefteqn{
(\rho'(i))^{(\hat{\cX}_{[c]})^t \hat{A}} 
} \\
& = &
|\cL|^{-ct} 
\sum_{\vecx^t, \vecl^t}
p(i; \vecx^t) 
\ketbra{\vecx^t}^{(\cX_{[c]})^{\otimes t}} \otimes
\ketbra{\vecl^t}^{(\cL_{[c]})^{\otimes t}} \\
&  &
~~~~~
{} \otimes
(\rho')_{\vecx^t(i), \vecl^t(i), \delta}^{A'}
\otimes (\ketbra{0})^{\C^2}
\otimes (\ketbra{0})^{\C^{t+1}}, \\
\lefteqn{
(\hPi)^{(\hat{\cX}_{[c]})^t \hat{A}} 
} \\
& = &
\sum_{\vecx^t, \vecl^t}
\ketbra{\vecx^t}^{(\cX_{[c]})^{\otimes t}} \otimes
\ketbra{\vecl^t}^{(\cL_{[c]})^{\otimes t}} \otimes
(\hat{\Pi})_{\vecx^t,\vecl^t,\delta}^{\hat{A}}, \\
\end{eqnarray*}
where 
$(\rho')_{\vecx,\vecl,\delta}^{A'}$ 
are quantum states for all computational basis vectors 
$\vecx \in \cX^{\otimes [c]}$, $\vecl \in \cL^{\otimes [c]}$ and
$
(\hPi)_{\vecx^t,\vecl^t,\delta}^{\hat{A}}
$ 
are POVM elements for all computational basis vectors 
$\vecx^t \in \cX^{\otimes [ct]}$,
$\vecl^t \in \cL^{\otimes [ct]}$;

\item
\setcounter{thmqtypicaldistance}{\value{enumi}}
For all $i \in [t]$,
\begin{eqnarray*}
\lefteqn{
\left\|
(\rho'(i))^{(\hat{\cX}_{[c]})^t \hat{A}}
\right.
} \\
&   &
{} - 
(\rho(i))^{(\cX_{[c]})^t A} 
\otimes (\ketbra{0})^{\C^2}
\otimes \frac{\one^{\cL^{\otimes ct}}}{|\cL|^{ct}} \\
&   &
~~~~~~~~~
\left.
{} \otimes (\ketbra{0})^{\C^2}
\otimes (\ketbra{0})^{\C^{t+1}}
\right\|_1 \\
& \leq &
2^{\frac{c+1}{2} +1} \delta;
\end{eqnarray*}

\item
\setcounter{thmqtypicalcompleteness1}{\value{enumi}}
For all $i \in [t]$,
\[
\Tr [
(\hat{\Pi})^{(\hat{\cX}_{[c]})^t \hat{A}} 
(\rho'(i))^{(\hat{\cX}_{[c]})^t \hat{A}}
] \geq
1 - 
\delta^{-2} 2^{2^{c+5}} 3^c \epsilon -
2^{\frac{c+1}{2}+1} \delta - \alpha;
\]

\item
\setcounter{thmqtypicalsoundness1}{\value{enumi}}
Let $S \subseteq [c]$.
Let $\vecx_{S}$, $\vecl_S$ be computational basis vectors in 
$\cX^{\otimes S}$, $\cL^{\otimes S}$.
In the following definition, let $\vecx'_{\bar{S}}$,
$\vecl'_{\bar{S}}$ range over all
computational basis vectors of $\cX^{\otimes ([c] \setminus S)}$,
$\cL^{\otimes ([c] \setminus S)}$.
For $S \neq \{\}$, define states in $A'$,
\[
(\rho')_{\vecx_{S}, \vecl_{S}, \delta}^{A'} := 
|\cL|^{-|\bar{S}|} 
\sum_{\vecx'_{\bar{S}}, \vecl'_{\bar{S}}} 
p(\vecx'_{\bar{S}} | \vecx_S)
(\rho')_{\vecx_{S} \vecx'_{\bar{S}}, 
         \vecl_{S} \vecl'_{\bar{S}}, \delta
        }^{A'}.
\]
Analogously define 
\[
\rho_{\vecx_S}^{A} :=
\sum_{\vecx'_{\bar{S}}}
p(\vecx'_{\bar{S}} | \vecx_S)
\rho_{\vecx_{S} \vecx'_{\bar{S}}}^{A}.
\]
Let $\sigma^A$ be a fixed quantum state in $A$. For $S = \{\}$, define
\[
(\rho')_{\{\}, \{\}, \delta}^{A'} := \sigma^A \otimes \ketbra{0}^{\C^2},
~~~~
\rho_{\{\}}^{A} := \sigma^A.
\]

For $i \in [t]$, $S \subseteq [c]$, let $q_{i; S}(\cdot)$ be a
probability distribution on $\vecx^t$. 
Define 
\begin{eqnarray*}
\lefteqn{
(\rho')_{i; S}^{(\hat{\cX}_{[c]})^t \hat{A}}
} \\
& := &
|\cL|^{-ct} 
\sum_{\vecx^t}
q_{i; S}(\vecx^t)
\ketbra{\vecx^t}^{\cX^{\otimes [ct]}} \otimes
\ketbra{\vecl^t}^{\cL^{\otimes [ct]}} \\
&   &
~~~~~~~~~~~~~~
{} \otimes
(\rho')_{\vecx^t(i)_{S}, \vecl^t(i)_{S}, \delta}^{A'}, \\
\rho_{i; S}^{(\cX_{[c]})^t A}
& := &
\sum_{\vecx^t}
q_{i; S}(\vecx^t)
\ketbra{\vecx^t}^{\cX^{\otimes [ct]}} \otimes
\rho_{\vecx^t(i)_{S}}^{A}.
\end{eqnarray*}

Then,
\[
\Tr [
(\hat{\Pi})^{(\hat{\cX}_{[c]})^t \hat{A}} 
(\rho')_{i; S}^{(\hat{\cX}_{[c]})^t \hat{A}}
]
\leq
\frac{1-\alpha}{\alpha}
\sum_{j=1}^t
2^{
    -D_H^{\epsilon}
     (
      \rho(j)^{(\cX_{[c]})^t A} \| 
      \rho_{i; S}^{(\cX_{[c]})^t A}
     )
  }.
\]
\end{enumerate}
\end{corollary}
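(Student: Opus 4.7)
The plan is to mimic the proof of Theorem~\ref{thm:cqtypical} but specialised to $k=1$, which makes the per-register augmentation by $\cL$ inside $A$ unnecessary, so the role of $A''_{[k]} \otimes \cL^{\otimes [k]}$ collapses to just $A'$ as defined in the statement. Concretely, for each pair of computational basis vectors $\vecx \in \cX^{[c]}$ and $\vecl \in \cL^{[c]}$, I first invoke Corollary~\ref{cor:cqtypical} with the same $\delta$ and $\epsilon$ to obtain a quantum state $(\rho')_{\vecx,\vecl,\delta}^{A'}$ and an ``intersection'' POVM element $(\Pi')_{\vecx,\vecl,\delta}^{A'}$ that take care of simultaneously testing all the $2^c$ subset-hypotheses for a single copy of the codebook. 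The classical-quantum state $(\rho'(i))^{(\hat{\cX}_{[c]})^t\hat{A}}$ is then defined by taking the uniform mixture over $\vecl^t \in \cL^{[ct]}$ and averaging over $\vecx^t$ with weight $p(i;\vecx^t)$, embedding $(\rho')_{\vecx^t(i),\vecl^t(i),\delta}^{A'}$ into $\hat{A}$ with the ancillary $\C^2 \otimes \C^{t+1}$ registers set to $\ketbra{0}$.

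Next, for each $i \in [t]$ I define the POVM element
\[
(\Pi'(i))_{\vecx^t,\vecl^t,\delta}^{A'} := (\Pi')_{\vecx^t(i),\vecl^t(i),\delta}^{A'}
\]
and apply Fact~\ref{fact:gelfandnaimark} to lift it to an orthogonal projector $(\hat{\Pi}(i))_{\vecx^t,\vecl^t,\delta}^{A' \otimes \C^2}$ (using the extra $\C^2$ factor included in $\hat{A}$) whose measurement statistics on states of the form $\tau^{A'} \otimes \ketbra{0}^{\C^2}$ agree with those of $(\Pi'(i))_{\vecx^t,\vecl^t,\delta}^{A'}$ on $\tau^{A'}$. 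With these $t$ projectors in hand, I invoke Proposition~\ref{prop:tiltedspan} inside the $\C^{t+1}$ tilting space, taking each $\alpha_j = \alpha$, to construct a single projector $(\hat{\Pi})_{\vecx^t,\vecl^t,\delta}^{\hat{A}}$ whose support is the $\alpha$-tilted span of the supports of $(\hat{\Pi}(i))$, $i \in [t]$. Finally $\hat{\Pi}$ is assembled by attaching the classical labels $\vecx^t,\vecl^t$, which gives Claim~\arabic{thmqtypicalcq}.

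Claim~\arabic{thmqtypicaldistance} is then immediate since $(\rho'(i))^{(\hat{\cX}_{[c]})^t\hat{A}}$ differs from the natural embedding of $\rho(i)^{(\cX_{[c]})^t A} \otimes \ketbra{0}^{\C^2} \otimes \frac{\one^{\cL^{\otimes ct}}}{|\cL|^{ct}} \otimes \ketbra{0}^{\C^2} \otimes \ketbra{0}^{\C^{t+1}}$ only through the per-letter replacement of $\rho_{\vecx^t(i)}^A \otimes \ketbra{0}^{\C^2}$ by $(\rho')_{\vecx^t(i),\vecl^t(i),\delta}^{A'}$, and Claim~\arabic{lemqtypicaldistance} of Corollary~\ref{cor:cqtypical} controls each such replacement by $2^{(c+1)/2+1}\delta$; convexity of $\ell_1$ under the classical averaging preserves this bound. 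For the completeness Claim~\arabic{thmqtypicalcompleteness1}, the lower bound of Proposition~\ref{prop:tiltedspan} gives $\Tr[\hat{\Pi}\,\rho'(i)] \geq (1-\alpha)\Tr[\hat{\Pi}(i)\,\rho'(i)]$ for the appropriate embedded state, which combined with Claim~\arabic{lemqtypicalcompleteness1} of Corollary~\ref{cor:cqtypical} yields the claimed bound with the extra additive $\alpha$ loss. For the soundness Claim~\arabic{thmqtypicalsoundness1}, the upper bound in Proposition~\ref{prop:tiltedspan} (together with Gelfand-Naimark compatibility) gives $\Tr[\hat{\Pi}\,(\rho')_{i;S}] \leq \frac{1-\alpha}{\alpha}\sum_{j=1}^t \Tr[\hat{\Pi}(j)\,(\rho')_{i;S}]$, and then Claim~\arabic{lemqtypicalsoundness1} of Corollary~\ref{cor:cqtypical} applied to each $j$ produces the exponential factor $2^{-D_H^\epsilon(\rho(j)\|\rho_{i;S})}$ as desired, using that the $3(2|\cH|)^k/\sqrt{|\cL|}$ alternative in Corollary~\ref{cor:cqtypical} is subsumed by the hypothesis-testing term once $|\cL|$ is chosen as prescribed (this is exactly why the particular dimension of $\cL$ is specified). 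The main point requiring care is that the classical averaging over $\vecx^t$ against the possibly adversarial $q_{i;S}$ distribution commutes cleanly with the per-letter soundness inequality because the $(S_1,S_2,S_3)$-structure employed in Corollary~\ref{cor:cqtypical} is preserved under such averaging; once this is observed the remainder of the argument is routine bookkeeping.
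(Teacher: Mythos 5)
Your proposal is correct and follows essentially the same route the paper intends: specialise the proof of Theorem~\ref{thm:cqtypical} to $k=1$ (dropping the per-quantum-register $\cL$-augmentation, as the remark following Theorem~\ref{thm:cqtypical} explains), obtain the per-$(\vecx,\vecl)$ states and intersection POVMs from the $k=1$ case, lift to projectors via Fact~\ref{fact:gelfandnaimark}, and union the $t$ of them with the tilted span of Proposition~\ref{prop:tiltedspan}. One small imprecision worth noting: the soundness claim of Corollary~\ref{cor:cqtypical} as written fixes the classical averaging distribution to have the structure $p(\vecx_{S_1})p(\vecx_{S_2}\mid\vecx_{S_1})p(\vecx_{S_3}\mid\vecx_{S_1})$, whereas the $q_{i;S}$ in Corollary~\ref{cor:gencqtypical} is arbitrary, so in that step you really need the more general formulation of Claim~\arabic{lemqtypicalsoundness1} in Lemma~\ref{lem:cqtypical} (specialised to $k=1$) which permits an arbitrary classical distribution $q_{(S_1,\ldots,S_l)}(\cdot)$; with that substitution the argument goes through as you describe.
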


We now state our main technical proposition.
\begin{proposition}
\label{prop:cqtypical}
Let $\cH$, $\cL$ be Hilbert spaces and $\cX$ be a finite set. We will
also use $\cX$ to denote the Hilbert space with computational basis
elements indexed by the set $\cX$. Let $c$ be a non-negative and
$k$ a positive integer.
Let  $A_1 \cdots A_k$ be a $k$-partite
system where each $A_i$ is isomorphic to $\cH$.
For every $\vecx \in \cX^c$, 
let $\rho_\vecx$ be a quantum state in $A_{[k]}$.
Consider the 
{\em augmented} $k$-partite system $A''_1 \cdots A''_k$ where
each $A''_i$ is isomorphic to 
\[
(\cH \otimes \C^2) \oplus 
\bigoplus_{S: i \in S \subseteq [c] \cupdot [k]}
(\cH \otimes \C^2) \otimes \cL^{\otimes |S|}.
\]
View 
$
\rho_\vecx^{A_{[k]}} 
\otimes (\ketbra{0})^{(\C^2)^{\otimes k}}
$
as a state in $A''_{[k]}$ under the natural embedding viz. the
embedding in the $i$th system is into the first summand of $A''_i$
defined above.

Below, $\vecx$, $\vecl$ denote computational basis vectors of
$\cX^{[c]}$, $\cL^{\otimes ([c] \cupdot [k])}$.
Let $0 \leq \delta \leq 1$. For each $\vecx \in \cX^c$ and 
each pseudosubpartition
$(S_1, \ldots, S_l) \vdash \vdash [c] \cupdot [k]$, let 
$0 \leq \epsilon_{\vecx, (S_1, \ldots, S_l)} \leq 1$. Let 
$
\epsilon_\vecx := 
\sum_{(S_1, \ldots, S_l) \vdash \vdash [c] \cupdot [k]}
\epsilon_{\vecx, (S_1, \ldots, S_l)}.
$
Then there exist:
\begin{itemize}
\item 
\setcounter{propqtypicalstate}{\value{enumi}}
States $\rho'_{\vecx, \vecl, \delta}$ in $A''_{[k]}$ for every
$\vecx$, $\vecl$;

\item 
\setcounter{propqtypicalpovm}{\value{enumi}}
POVM elements $\Pi'_{\vecx, \vecl, \delta}$ in $A''_{[k]}$ for every
$\vecx$, $\vecl$; 

\item
For every $\vecl$ and every 
$(S_1, \ldots, S_l) \vdash \vdash [c] \cupdot [k]$, 
$l > 0$,
numbers 
$
0 \leq 
\alpha_{(S_1, \ldots, S_l), \delta}, \beta_{(S_1, \ldots, S_l), \delta}
\leq 1
$ 
and isometric embeddings 
$\cT_{(S_1, \ldots, S_l), \vecl, \delta}$ of 
$(\cH \otimes \C^2)^{\otimes k}$ into 
$A''_{[k]}$;

\item
For every $\vecx$, $\vecl$ and every 
$(S_1, \ldots, S_l) \vdash \vdash [c] \cupdot [k]$, $l > 0$,
quantum states 
$M_{(S_1, \ldots, S_l), \vecx, \vecl, \delta}$ 
and unit trace Hermitian operators 
$N_{(S_1, \ldots, S_l), \vecx, \vecl, \delta}$ 
in $A''_{[k]}$;
\end{itemize}
such that:
\begin{enumerate}
\item
\setcounter{propqtypicalellone}{\value{enumi}}
$
\ellone{(\Pi')_{\vecx,\vecl, \delta}^{A''_{[k]}}} \leq (2 |\cH|)^k;
$

\item
\setcounter{propqtypicalellinfty}{\value{enumi}}
$
\ellinfty{M_{(S_1, \ldots, S_l), \vecx, \vecl, \delta}^{A''_{[k]}}} 
\leq 
\frac{1}{|\cL|},
$
$
\beta_{(S_1, \ldots, S_l), \delta}
\ellinfty{N_{(S_1, \ldots, S_l), \vecx, \vecl, \delta}^{A''_{[k]}}} 
\leq 
\frac{3}{\sqrt{|\cL|}};
$

\item
\setcounter{propqtypicaldistance}{\value{enumi}}
$
\ellone{
(\rho'_{\vecx, \vecl, \delta})^{A''_{[k]}} - 
\rho_\vecx^{A_{[k]}} \otimes (\ketbra{0}^{\C^2})^{\otimes k}
} \leq
2^{\frac{k+c}{2} + 1} \delta;
$

\item
\setcounter{propqtypicalcompleteness}{\value{enumi}}
$
\Tr [
(\Pi'_{\vecx, \vecl, \delta})^{A''_{[k]}} 
(\rho^{A_{[k]}} \otimes (\ketbra{0}^{\C^2})^{\otimes k})
] \geq 
1 - 
\delta^{-2k} 2^{2^{ck+4} (k+1)^k} \epsilon_{\vecx};
$

\item
\setcounter{propqtypicalsplitting}{\value{enumi}}
Let $(S_1, \ldots, S_l) \vdash \vdash [c] \cupdot [k]$, $l > 0$.
Define $T := [k] \setminus (S_1 \cup \cdots \cup S_l)$.
Let $\sigma_\vecx^{A_T}$ be a state in $A_T$.
Let $S \subseteq [c] \cupdot [k]$, $S \cap [k] \neq \{\}$,
Let $\vecx_{[c] \cap S}$, $\vecl_S$ be computational basis vectors in 
$\cX^{\otimes ([c] \cap S}$, $\cL^{\otimes S}$.
Let $p_{[c] \setminus S}$ be a probability distribution on
$\cX^{\otimes ([c] \setminus S)}$.
In the following definition, let $\vecx'_{[c] \setminus S}$,
$\vecl'_{\bar{S}}$ range over all
computational basis vectors of $\cX^{\otimes ([c] \setminus S)}$,
$\cL^{\otimes \bar{S}}$.
Define a state in $A''_{S \cap [k]}$,
\begin{eqnarray*}
\lefteqn{
(\rho')_{\vecx_{S \cap [c]}, \vecl_{S}, \delta}^{A''_{S \cap [k]}}
} \\
& := &
|\cL|^{-|\bar{S}|} 
\sum_{\vecx'_{[c] \setminus S}, \vecl'_{\bar{S}}} 
p_{[c] \setminus S}(\vecx'_{[c] \setminus S})
\Tr_{A''_{\bar{S} \cap [k]}} [
(\rho')_{\vecx_{S \cap [c]} \vecx'_{[c] \setminus S}, 
         \vecl_{S} \vecl'_{\bar{S}}, \delta
        }^{A''_{[k]}}
].
\end{eqnarray*}
Analogously define 
\[
\rho_{\vecx_{S \cap [c]}}^{A_{S \cap [k]}} :=
\sum_{\vecx'_{[c] \setminus S}}
p_{[c] \setminus S}(\vecx'_{[c] \setminus S})
\Tr_{A_{\bar{S} \cap [k]}} [
\rho_{\vecx_{S \cap [c]} \vecx'_{[c] \setminus S}}^{A_{[k]}}
].
\]
Define 
\begin{eqnarray*}
\lefteqn{
(\rho')_{\vecx,\vecl,(S_1, \ldots, S_l),\delta}^{A''_{[k]}} 
} \\
& := &
(\rho'_{\vecx_{S_1 \cap [c]}, \vecl_{S_1},\delta})^{
A''_{S_1 \cap [k]}
} \\
&     &
{} \otimes \cdots \otimes \\
&     &
(\rho'_{\vecx_{S_l \cap [c])}, \vecl_{S_l},\delta})^{
A''_{S_l \cap [k]}
} \\
&     &
{} \otimes
(\sigma_\vecx^{A_T} \otimes (\ketbra{0}^{\C^2})^{\otimes |T|}), \\
\rho_{\vecx,(S_1, \ldots, S_l)}^{A_{[k]}} 
& := &
\rho_{\vecx_{S_1 \cap [c]}}^{A_{S_1 \cap [k]}} 
\otimes \cdots \otimes 
\rho_{\vecx_{S_l \cap [c]}}^{A_{S_l \cap [k]}} \otimes
\sigma_\vecx^{A_T}.
\end{eqnarray*}
Then,
\begin{eqnarray*}
\lefteqn{
(\rho')_{\vecx,\vecl,(S_1, \ldots, S_l),\delta}^{A''_{[k]}}
} \\
& = &
\alpha_{(S_1, \ldots, S_l), \delta}
(
\cT_{(S_1, \ldots S_l), \vecl, \delta}
(
\rho_{\vecx,(S_1, \ldots, S_l)}^{A_{[k]}} 
\otimes (\ketbra{0}^{\C^2})^{\otimes k}
)
)^{A''_{[k]}} \\
&   &
{}
+
\beta_{(S_1, \ldots, S_l), \delta}
N_{(S_1, \ldots S_l), \vecx, \vecl, \delta}^{A''_{[k]}} \\
&   &
{}
+ 
(
1 
- \alpha_{(S_1, \ldots, S_l), \delta} 
- \beta_{(S_1, \ldots, S_l), \delta}
)
M_{(S_1, \ldots S_l), \vecx, \vecl, \delta}^{A''_{[k]}};
\end{eqnarray*}
Moreover, the support of 
$M_{(S_1, \ldots S_l), \vecx, \vecl, \delta}^{A''_{[k]}}$ is orthogonal
to the support of the sum of the first two terms in the above
equation, and $\beta_{(S_1, \ldots, S_l), \delta} = 0$ if $c = 0$ or
$[c] \subseteq S_i$ for some $i \in [l]$;

\item
\setcounter{propqtypicalsoundness}{\value{enumi}}
$
\Tr [(\Pi')_{\vecx, \vecl,\delta}^{A''_{[k]}}
(
\cT_{(S_1, \ldots S_l), \vecl, \delta}
(
\rho_{\vecx,(S_1, \ldots, S_l)}^{A_{[k]}} 
\otimes (\ketbra{0}^{\C^2})^{\otimes k}
)
)^{A''_{[k]}}
] \leq 
2^{
   -D_H^{\epsilon_{\vecx,(S_1, \ldots, S_l)}}(
    \rho_\vecx^{A_{[k]}} \|
    \rho_{\vecx,(S_1, \ldots, S_l)}^{A_{[k]}} 
    )
  }.
$
\end{enumerate}
\end{proposition}
\begin{proof}
Proved in \ref{sec:proofcqtypical}
\end{proof}

\section{One-shot inner bound for the classical quantum MAC}
\label{sec:cqMAC}
In this section, we prove our general one-shot inner bound for the
cq-MAC for any number of senders. We illustrate our method by considering
two senders only, but the method works for any number of senders. 
We will use Corollary~\ref{cor:cqtypical} as our main technical
workhorse in order to obtain the inner bound. The difference between
the inner bound in this section and the inner bound in
Section~\ref{sec:cqMACBaby} is that we use a so-called `time-sharing'
random variable $U$ now.

Define a new sample space $\cU$ and put on it a probability distribution
$p(u)$. The resulting random variable $U$ plays the role, in the one-shot
setting, of the so-called
`time sharing random variable' familiar from classical iid network
information theory.
Fix conditional probability distributions $p(x | u)$, $p(y | u)$ 
on sets $\cX$, $\cY$.
Consider the classical-quantum state
\[
\rho^{UXYZ} :=
\sum_{u,x,y} p(u) p(x|u) p(y|u) 
\ketbra{u,x,y}^{UXY} \otimes
\rho^Z_{xy}.
\]
This state `controls' the encoding and decoding performance for the
channel $\chan$.

Consider a new alphabet, as well as Hilbert space, $\cL$ and define
the augmented systems 
$\cU' := \cU \otimes \cL$,
$\cX' := \cX \otimes \cL$,
$\cY' := \cY \otimes \cL$, 
and the extended system
$\cZ'$ defined in Corollary~\ref{cor:cqtypical} with $c = 3$.
Consider a `variant' cq-MAC 
$\chan'$ with input alphabets $\cU'$, $\cX'$, $\cY'$
and output Hilbert space $\cZ'$. On input $(u,l_u)$, $(x,l_x)$, 
$(y,l_y)$ to
$\chan'$, the output of $\chan'$ is the state
$
\rho^Z_{x,y} \otimes \ketbra{0}^{\C^2}.
$
The output of
$\chan'$ is taken to embed into the first summand in the definition of 
$\cZ'$ in Corollary~\ref{cor:cqtypical}.
The classical
quantum state `controlling' the encoding and decoding for $\chan'$ is
nothing but
$
\rho^{UXYZ} \otimes 
\ketbra{0}^{\C^2} \otimes
\frac{\one^{\cL^{\otimes 3}}}{|\cL|^3}.
$
The channel $\chan'$ can be trivally obtained from channel $\chan$.
The expected average decoding error for $\chan'$ is the same as 
the expected average decoding error for $\chan$ for the same
rate pair $(R_1, R_2)$. In fact, an encoding-decoding scheme for
$\chan'$ immediately gives an encoding-decoding scheme for $\chan$
with the same rate pair $(R_1, R_2)$ and the same decoding error.

Let $0 \leq \delta \leq 1$.
Next, consider a `perturbed' cq-MAC $\chan''$ with the same input alphabets
and output Hilbert space as in $\chan'$.
However, on input $(u,l_u)$, $(x,l_x)$, $(y,l_y)$ the output is the state
$(\rho')_{(u,l_u), (x,l_x), (y, l_y)}^{Z'}$
provided by setting
$c = 3$ in Corollay~\ref{cor:cqtypical}. 
Consider the classical quantum state
\begin{eqnarray*}
\lefteqn{(\rho')^{U'X'Y'Z'}} \\
& := &
|\cL|^{-3}
\sum_{u, x, y, l_u, l_x, l_y} p(u) p(x|u) p(y|u)
\ketbra{u,l_u}^{U'} \\
&     &
~~~~~~~~~~~~~~~~~~~~~~~~~
{} \otimes 
\ketbra{x,l_x}^{X'} \otimes 
\ketbra{y,l_y}^{Y'} \\
&     &
~~~~~~~~~~~~~~~~~~~~~~~~~~~~~
{} \otimes 
(\rho')^{Z'}_{u,x,y,l_u, l_x l_y,\delta}.
\end{eqnarray*}
This state `controls' the encoding and decoding performance for the
channel $\chan''$.
By Claim~\arabic{lemqtypicaldistance} of  Corollary~\ref{cor:cqtypical},
\[
\ellone{
(\rho')^{U'X'Y'Z'} -
\rho^{XYZ} \otimes 
\ketbra{0}^{(\C^2)^{\otimes 3}} \otimes
\frac{\one^{\cL^{\otimes 3}}}{|\cL|^3}
} \leq 
2^{3} \delta.
\] 
Thus, the expected average decoding error for $\chan''$ is at most
the expected average decoding error for $\chan'$, which is also the same
as the expected average decoding error for $\chan$,  plus
$2^{3} \delta$, for the same
rate pair $(R_1, R_2)$, and the same decoding strategy.

Consider the following randomised construction of a codebook $\cC$
for Alice and Bob for communication over the channel $\chan''$. 
Fix probability distributions $p(u)$, $p(x|u)$, $p(y|u)$ on sets 
$\cU$, $\cX$, $\cY$. Choose a sample $(u,l_u)$ according to the
product of the distribution $p(u)$ on $\cU$ and the uniform distribution
on computational basis vectors of $\cL$.
For all $m_1 \in [2^{R_1}]$, choose $(x,l_x)(m_1) \in \cX \times \cL$ 
independently according to the product of the distribution $p(x|u)$ on
$\cX$ and the uniform distribution on $\cL$. Similarly
for all $m_2 \in [2^{R_2}]$, choose $(y,l_y)(m_2) \in \cY \times \cL$ 
independently according to the product of the distribution $p(y|u)$ on
$\cY$ and the uniform distribution on $\cL$.

We now describe the decoding strategy that Charlie follows in order to
try and guess the message pair $(m_1, m_2)$ that was actually sent,
given the output of $\chan''$.
Let $(\Pi')_{u,x,y, l_u, l_x, l_y, \delta}^{Z'}$
be the POVM elements provided by Corollary~\ref{cor:cqtypical}. 
Charlie uses the {\em pretty good measurement} 
constructed from the POVM elements 
$
(\Pi')_{(u,l_u), (x,l_x)(m_1), (y,l_y)(m_2), \delta}^{Z'},
$
where $(m_1,m_2) \times [2^{R_1}] \times [2^{R_2}]$.
We now analyse the expectation, under the choice of a random codebook
$\cC$, of the error probability of Charlie's decoding algorithm.
Suppose the message pair $(m_1, m_2)$ is inputted to $\chan''$.
The output of $\chan''$ is the state
$(\rho')_{(u,l_u), (x,l_x)(m_1), (y, l_y)(m_2)}^{Z'}$.
Let $\Lambda_{\hat{m}_1, \hat{m}_2}^{Z'}$ be the POVM element corresponding
to decoded output $(\hat{m}_1, \hat{m}_2)$ arising from the
pretty good measurement. By the Hayashi-Nagaoka 
inequality~\cite{HayashiNagaoka},
the decoding error for $(m_1, m_2)$ is upper bounded by
\begin{eqnarray*}
\lefteqn{
\Tr [
(\one^{Z'} - \Lambda_{m_1, m_2}^{Z'})
(\rho')_{(u,l_u), (x,l_x)(m_1), (y, l_y)(m_2)}^{Z'} 
]
} \\
& \leq &
2 \Tr [
(\one^{Z'} - (\Pi')_{(u,l_u),(x,l_x)(m_1), (y, l_y)(m_2)}^{Z'}) \\
&      &
~~~~~~~~~~~~~
(\rho')_{(u,l_u),(x,l_x)(m_1), (y, l_y)(m_2)}^{Z'} 
] \\
&      &
{} +
4 \sum_{(\hat{m}_1, \hat{m}_2) \neq (m_1, m_2)} 
\Tr [
(\Pi')_{(u,l_u),(x,l_x)(\hat{m}_1), (y, l_y)(\hat{m}_2)}^{Z'} \\
&      &
~~~~~~~~~~~~~~~~~~~~~~~~~~~~~~~~~~~~~~~
(\rho')_{(u,l_u), (x,l_x)(m_1), (y, l_y)(m_2)}^{Z'} 
] \\
&   =  &
2 \Tr [
(\one^{Z'} - (\Pi')_{(u,l_u), (x,l_x)(m_1), (y, l_y)(m_2)}^{Z'}) \\
&      &
~~~~~~~~~~~~~
(\rho')_{(u,l_u), (x,l_x)(m_1), (y, l_y)(m_2)}^{Z'} 
] \\
&      &
{} +
4 \sum_{\hat{m}_1 \neq m_1} 
\Tr [
(\Pi')_{(u,l_u), (x,l_x)(\hat{m}_1), (y, l_y)(m_2)}^{Z'} \\
&      &
~~~~~~~~~~~~~~~~~~~~~~~~~~~
(\rho')_{(u,l_u), (x,l_x)(m_1), (y, l_y)(m_2)}^{Z'} 
] \\
&      &
{} +
4 \sum_{\hat{m}_2 \neq m_2} 
\Tr [
(\Pi')_{(u,l_u), (x,l_x)(m_1), (y, l_y)(\hat{m}_2)}^{Z'} \\
&      &
~~~~~~~~~~~~~~~~~~~~~~~~~~~
(\rho')_{(u,l_u), (x,l_x)(m_1), (y, l_y)(m_2)}^{Z'} 
] \\
&      &
{} +
4 \sum_{\hat{m}_1 \neq m_1, \hat{m}_2 \neq m_2} 
\Tr [
(\Pi')_{(u,l_u), (x,l_x)(\hat{m}_1), (y, l_y)(\hat{m}_2)}^{Z'} \\
&      &
~~~~~~~~~~~~~~~~~~~~~~~~~~~~~~~~~~~~
(\rho')_{(u,l_u), (x,l_x)(m_1), (y, l_y)(m_2)}^{Z'} 
].
\end{eqnarray*}
Define the POVM element
\begin{eqnarray*}
\lefteqn{(\Pi')^{U' X' Y' Z'}} \\
& := &
\sum_{u, x, y, l_u, l_x, l_y}
\ketbra{u, l_u}^{U'} \otimes
\ketbra{x, l_x}^{X'} \otimes
\ketbra{y, l_y}^{Y'} \\
&     &
~~~~~~~~~~~~~~~~~
{} \otimes
(\Pi')_{(u,l_u), (x,l_x), (y,l_y), \delta}^{Z'}
\end{eqnarray*}
as in Corollary~\ref{cor:cqtypical}.
From Corollary~\ref{cor:cqtypical}, recall the 
hypothesis testing conditional mutual information quantitites
$I^\epsilon_H(X : Z | U Y)$,
$I^\epsilon_H(Y : Z | U X)$,
$I^\epsilon_H(XY : Z | U)$
calculated with respect to the state $\rho^{UXYZ}$ corresponding to the
original channel $\chan$.
The expectation, over the choice of the random codebook $\cC$, of the
decoding error for $(m_1, m_2)$ is upper bounded by
\begin{eqnarray*}
\lefteqn{
\E_{\cC} [
\Tr [
(\one^{Z'} - \Lambda_{m_1, m_2}^{Z'})
(\rho')_{(u,l_u), (x,l_x)(m_1), (y, l_y)(m_2)}^{Z'} 
]
]
} \\
& \leq &
2 
\E_{\cC} [
\Tr [
(\one^{Z'} - (\Pi')_{(u, l_u), (x,l_x)(m_1), (y, l_y)(m_2)}^{Z'}) \\
&     &
~~~~~~~~~~~~~~~~~~~~
(\rho')_{(u, l_u), (x,l_x)(m_1), (y, l_y)(m_2)}^{Z'} 
]
] \\
&      &
{} +
4 \sum_{\hat{m}_1 \neq m_1} 
\E_{\cC} [
\Tr [
(\Pi')_{(u,l_u), (x,l_x)(\hat{m}_1), (y, l_y)(m_2)}^{Z'} \\
&     &
~~~~~~~~~~~~~~~~~~~~~~~~~~~~~~~~
(\rho')_{(u,l_u), (x,l_x)(m_1), (y, l_y)(m_2)}^{Z'} 
]
] \\
&      &
{} +
4 \sum_{\hat{m}_2 \neq m_2} 
\E_{\cC} [
\Tr [
(\Pi')_{(u,l_u), (x,l_x)(m_1), (y, l_y)(\hat{m}_2)}^{Z'} \\
&     &
~~~~~~~~~~~~~~~~~~~~~~~~~~~~~~~~
(\rho')_{(u,l_u), (x,l_x)(m_1), (y, l_y)(m_2)}^{Z'} 
]
] \\
&      &
{} +
4 \sum_{\hat{m}_1 \neq m_1, \hat{m}_2 \neq m_2} 
\E_{\cC} [
\Tr [
(\Pi')_{(u,l_u), (x,l_x)(\hat{m}_1), (y, l_y)(\hat{m}_2)}^{Z'} \\
&     &
~~~~~~~~~~~~~~~~~~~~~~~~~~~~~~~~~~~~~~~~~~
(\rho')_{(u,l_u), (x,l_x)(m_1), (y, l_y)(m_2)}^{Z'} 
]
] \\
&   =  &
2 |\cL|^{-3} \sum_{u, x, y, l_u, l_x, l_y} p(u) p(x|u) p(y|u) \\
&      &
~~~~~~~~~~~~~~~~~~~~~~~~~~~~
 \Tr [
(\one^{Z'} - (\Pi')_{(u,l_u), (x,l_x), (y, l_y)}^{Z'}) \\
&       &
~~~~~~~~~~~~~~~~~~~~~~~~~~~~~~~~~~~~~~~~
(\rho')_{(u,l_u), (x,l_x), (y, l_y)}^{Z'} 
] \\
&      &
{} +
4 (2^{R_1} - 1)
|\cL|^{-4} \sum_{u, l_u, x, l_x, x', l'_x, y, l_y}
p(u) p(x|u) p(x'|u) p(y|u) \\
&     &
~~~~~~~~~~
\Tr [
(\Pi')_{(u,l_u), (x',l'_x), (y, l_y)}^{Z'})
(\rho')_{(u,l_u), (x,l_x), (y, l_y)}^{Z'} 
] \\
&  &
{} +
4 (2^{R_2} - 1)
|\cL|^{-4} \sum_{u, l_u, x, l_x, x', l'_x, y, l_y}
p(u) p(x|u) p(y|u) p(y'|u) \\
&     &
~~~~~~~~~~
\Tr [
(\Pi')_{(u,l_u), (x,l_x), (y', l'_y)}^{Z'})
(\rho')_{(u,l_u), (x,l_x), (y, l_y)}^{Z'} 
] \\
&      &
{} +
4 (2^{R_1} - 1) (2^{R_2} - 1)
|\cL|^{-5} \\
&     &
~~~~~~
\sum_{u, l_u, x, l_x, x', l'_x, y, l_y, y', l'_y}
p(u) p(x|u) p(x'|u) p(y|u) p(y'|u) \\
&     &
~~~~~~~~~~~~~~~~~~~~~~~
\Tr [
(\Pi')_{(u,l_u), (x',l'_x), (y', l'_y)}^{Z'})
(\rho')_{(u,l_u), (x,l_x), (y, l_y)}^{Z'} 
] \\
&   =  &
2 \Tr [
(\one^{U' X' Y' Z'} - (\Pi')^{U' X' Y' Z'})
(\rho')^{U' X' Y' Z'}
] \\
&      &
{} +
4 (2^{R_1} - 1)
\Tr [
(\Pi')^{U' X' Y' Z'}
(\rho')^{U' X' Y' Z'}_{(\{U'\}, \{X'\}, \{Y'\})}
] \\
&      &
{} +
4 (2^{R_2} - 1)
\Tr [
(\Pi')^{U' X' Y' Z'}
(\rho')^{U' X' Y' Z'}_{(\{U'\}, \{Y'\}, \{X'\})}
] \\
&      &
{} +
4 (2^{R_1} - 1) (2^{R_2} - 1)
\Tr [
(\Pi')^{U' X' Y' Z'}
(\rho')^{U' X' Y' Z'}_{(\{U'\}, \{X',Y'\}, \{\})}
] \\
& \leq &
2 (\delta^{-2} 2^{134} \epsilon + 2^{3} \delta) +
2^{R_1 + 2 - I^\epsilon_H(X : Y Z | U )_\rho} +
2^{R_2 + 2 - I^\epsilon_H(Y : X Z | U )_\rho} \\
&      &
{} +
2^{R_1 + R_2 + 2 -  I^\epsilon_H(XY : Z | U)_\rho},
\end{eqnarray*}
where we used Claims~\arabic{lemqtypicalcompleteness1}, 
\arabic{lemqtypicalsoundness1} of Corollary~\ref{cor:cqtypical} in the
last inequality above.

Taking $\delta = \epsilon^{1/3}$ and
choosing a rate pair $(R_1, R_2)$ satisfying
\begin{eqnarray*}
R_1 
& \leq & 
I^\epsilon_H(X : Y Z | U)_\rho - 2 - \log \frac{1}{\epsilon}, \\
R_2 
& \leq & 
I^\epsilon_H(Y : X Z | U)_\rho - 2 - \log \frac{1}{\epsilon}, \\
R_1 + R_2 
& \leq & 
I^\epsilon_H(XY : Z | U)_\rho - 2 - \log \frac{1}{\epsilon}, 
\end{eqnarray*} 
ensures that the expected average decoding error for channel
$\chan''$ is at most $2^{135} \epsilon^{1/3}$. This implies that
the expected average decoding error for the original channel $\chan$
is at most $2^{135} \epsilon^{1/3}$. Thus there exists a codebook
$\cC$ with average decoding error for $\chan$ at most 
$2^{135} \epsilon^{1/3}$. By a standard technique of taking maps from
classical symbols to arbitrary quantum states, we can then prove the 
following theorem.
\begin{theorem}
\label{thm:cqMAC}
Let $\chan: X' Y' \rightarrow Z$ be a quantum multiple access channel.
Let $\cU$, $\cX$, $\cY$ be three new sample spaces.
For every element $x \in \cX$, let 
$\sigma_{x}^{X'}$ be a quantum state in the input Hilbert space 
$X'$ of $\chan$.
Similarly, for every element $y \in \cY$, let 
$\sigma_{y}^{Y'}$ be a quantum state in the input Hilbert space 
$Y'$ of $\chan$.
Let $p(u) p(x|u) p(y|u)$ be a probability distribution on 
$\cU \times \cX \times \cY$.
Consider the classical quantum state
\begin{eqnarray*}
\rho^{U X Y Z}
& := &
\sum_{u, x, y}
p(u) p(x|u) p(y|u)
\ketbra{u, x, y}^{U X Y} \\
&     &
~~~~~~~~~~~
{} \otimes
(\chan(\sigma_{x}^{X'} \otimes \sigma_{y}^{Y'}))^{Z}.
\end{eqnarray*}
Let $R_1$, $R_2$, $\epsilon$, 
be such that
\begin{eqnarray*}
R_1 
& \leq & 
I^\epsilon_H(X : Y Z | U)_\rho - 2 - \log \frac{1}{\epsilon}, \\
R_2 
& \leq & 
I^\epsilon_H(Y : X Z | U)_\rho - 2 - \log \frac{1}{\epsilon}, \\
R_1 + R_2 
& \leq & 
I^\epsilon_H(XY : Z | U)_\rho - 2 - \log \frac{1}{\epsilon}. 
\end{eqnarray*} 
Then there exists an $(R_1, R_2, 2^{135} \epsilon^{1/3})$-quantum 
MAC code for sending classical information through $\chan$.
\end{theorem}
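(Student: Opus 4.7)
The plan is to run the Section~\ref{sec:cqMACBaby} argument but now with the time-sharing register $U$ included, invoking Corollary~\ref{cor:cqtypical} with $c=3$ and the three classical registers taken to be $U,X,Y$. That corollary hands us, on a suitably augmented input alphabet and an enlarged output space $\cZ'$, a perturbed state $(\rho')^{Z'}_{(u,l_u),(x,l_x),(y,l_y),\delta}$ and a single POVM element $(\Pi')^{Z'}_{(u,l_u),(x,l_x),(y,l_y),\delta}$ satisfying: (i) $\ell_1$-closeness to the ideal controlling state $\rho^{UXYZ}\otimes\ketbra{0}^{(\C^2)^{\otimes 3}}\otimes\frac{\one^{\cL^{\otimes 3}}}{|\cL|^3}$ at cost $2^{3}\delta$; (ii) completeness at least $1-\delta^{-2}2^{134}\epsilon-2^{3}\delta$ against the true joint state; and (iii) three soundness bounds when tested against the product states indexed by the partitions $(\{U'\},\{X'\},\{Y'\})$, $(\{U'\},\{Y'\},\{X'\})$, $(\{U'\},\{X',Y'\},\{\})$, which by the definition of $\rho_{(S_1,S_2,S_3)}^{\cX_{[c]}A}$ coincide respectively with $2^{-I^\epsilon_H(X:YZ|U)}$, $2^{-I^\epsilon_H(Y:XZ|U)}$ and $2^{-I^\epsilon_H(XY:Z|U)}$.

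I would then define the perturbed channel $\chan''$ to output $(\rho')^{Z'}_{(u,l_u),(x,l_x),(y,l_y),\delta}$ on input $((u,l_u),(x,l_x),(y,l_y))$ and argue, via (i), that any decoding scheme for $\chan''$ lifts to one for $\chan$ at an extra cost of $2^{3}\delta$ in expected average error. The random codebook is built by first sampling a single shared time-sharing letter $(u,l_u)$ according to $p(u)\otimes\mathrm{Unif}(\cL)$, and then, conditionally on it, drawing $(x,l_x)(m_1)$ and $(y,l_y)(m_2)$ independently from $p(x|u)\otimes\mathrm{Unif}(\cL)$ and $p(y|u)\otimes\mathrm{Unif}(\cL)$ respectively. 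Charlie then decodes with the pretty-good measurement constructed from the family $\{(\Pi')^{Z'}_{(u,l_u),(x,l_x)(\hat m_1),(y,l_y)(\hat m_2),\delta}\}_{(\hat m_1,\hat m_2)}$.

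Hayashi--Nagaoka now splits the conditional decoding error for a true pair $(m_1,m_2)$ into a type-I term plus three cross sums for the error events $\hat m_1\neq m_1$, $\hat m_2\neq m_2$, and $\hat m_1\neq m_1\wedge\hat m_2\neq m_2$. Taking expectation over the codebook and exploiting conditional independence of distinct codewords given $(u,l_u)$, these three sums collapse exactly into the traces of $(\Pi')^{U'X'Y'Z'}$ against the three product states $(\rho')^{U'X'Y'Z'}_{(S_1,S_2,S_3)}$ constructed in Claim~\arabic{lemqtypicalsoundness1} of Corollary~\ref{cor:cqtypical}. Applying Claims~\arabic{lemqtypicalcompleteness1}--\arabic{lemqtypicalsoundness1} then bounds the total expected error by
\[
2(\delta^{-2}2^{134}\epsilon + 2^{3}\delta) + 2^{R_1+2-I^\epsilon_H(X:YZ|U)_\rho} + 2^{R_2+2-I^\epsilon_H(Y:XZ|U)_\rho} + 2^{R_1+R_2+2-I^\epsilon_H(XY:Z|U)_\rho}.
\]

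Choosing $\delta:=\epsilon^{1/3}$ balances the completeness and perturbation contributions, while the three rate conditions of the theorem force each of the last three terms below $2^{2}\epsilon$. This yields expected average error at most $2^{135}\epsilon^{1/3}$ for $\chan''$ and hence for $\chan$, whence a deterministic codebook $\cC$ achieving this bound exists. The standard encoder reduction, in which classical symbols $x,y$ are replaced by their prescribed state preparations $\sigma_{x}^{X'}$ and $\sigma_{y}^{Y'}$, delivers the theorem as stated for arbitrary quantum-input MACs. The main technical hurdle is the bookkeeping underlying Claim~\arabic{lemqtypicalsoundness1}: one must verify that the three cross-term product states produced by the expectation argument line up exactly with $\rho^{\cX_{[c]}A}_{(S_1,S_2,S_3)}$ taken at $S_1=\{U\}$, so that the resulting hypothesis-testing relative entropies unwind precisely to the conditional mutual informations $I^\epsilon_H(X:YZ|U)$, $I^\epsilon_H(Y:XZ|U)$ and $I^\epsilon_H(XY:Z|U)$ appearing in the statement.
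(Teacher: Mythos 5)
Your proposal is correct and follows essentially the same route as the paper's own proof: invoke Corollary~\ref{cor:cqtypical} with $c=3$ for the three classical registers $U,X,Y$, pass to the perturbed channel $\chan''$, build the codebook with a single shared time-sharing letter $(u,l_u)$ and conditionally independent codewords, decode with the pretty good measurement, apply Hayashi--Nagaoka, and identify the three cross-term product states with $\rho^{\cX_{[c]}A}_{(S_1,S_2,S_3)}$ at $S_1=\{U\}$ so that the soundness bounds become exactly $2^{-I^\epsilon_H(X:YZ|U)}$, $2^{-I^\epsilon_H(Y:XZ|U)}$ and $2^{-I^\epsilon_H(XY:Z|U)}$; the choice $\delta=\epsilon^{1/3}$ and the rate hypotheses then give the stated $2^{135}\epsilon^{1/3}$ bound. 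This matches the paper's argument step by step, including the final reduction from classical letters to the quantum state preparations $\sigma_x^{X'},\sigma_y^{Y'}$.
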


A similar bound can be proved for sending classical information through
an entanglement assisted q-MAC by using the position based coding
technique of Anshu, Jain and Warsi~\cite{anshu:broadcast}. 
Earlier, Qi, Wang and Wilde~\cite{qi:simultaneous} had constructed
a one-shot simultaneous decoder for this problem which also used position
based coding, but their inner bound is suboptimal when reduced to the
asymptotic iid setting. In contrast, our one-shot inner bound reduces 
to the best known inner bound in the asymptotic iid setting which was
proved by Hsieh, Devetak and Winter~\cite{hsieh:qmac} 
using successive cancellation arguments.
\begin{theorem}
Let $\chan: X' Y' \rightarrow Z$ be a quantum multiple access channel.
Let $\cU$, $\cX$, $\cY$ be three new Hilbert spaces.
Let $\psi^{U X Y X' Y'}$ be a classical quantum state with the
following structure:
\[
\psi^{U X Y X' Y'} =
\sum_u p(u) \ketbra{u}^U \otimes
\psi^{X X'}_u \otimes \psi^{Y Y'}_u.
\]
Consider the classical quantum state
\[
\rho^{U X Y Z} := 
\sum_u p(u) \ketbra{u}^U \otimes
(\chan^{X' Y' \rightarrow Z}(\psi^{X X'}_u \otimes \psi^{Y Y'}_u))^{XYZ}.
\]
Let $R_1$, $R_2$, $\epsilon$, 
be such that
\begin{eqnarray*}
R_1 
& \leq & 
I^\epsilon_H(X : Y Z | U)_\rho - 2 - \log \frac{1}{\epsilon}, \\
R_2 
& \leq & 
I^\epsilon_H(Y : X Z | U)_\rho - 2 - \log \frac{1}{\epsilon}, \\
R_1 + R_2 
& \leq & 
I^\epsilon_H(XY : Z | U)_\rho - 2 - \log \frac{1}{\epsilon}. 
\end{eqnarray*} 
Then there exists an entanglement assisted 
$(R_1, R_2, 2^{135} \epsilon^{1/3})$-quantum 
MAC code for sending classical information through $\chan$.
\end{theorem}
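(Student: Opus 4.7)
The plan is to adapt position-based coding of Anshu, Jain and Warsi so that the joint typicality machinery already used in the proof of Theorem~\ref{thm:cqMAC} can be re-used essentially verbatim. Concretely, I would treat the time-sharing variable $U$ as classical shared randomness (extractable from the entanglement), and conditioned on each value $u$ have Alice and Charlie pre-share $2^{R_1}$ independent copies of $\psi^{XX'}_u$, with Alice holding the $X'$ parts and Charlie the $X$ parts; Bob and Charlie similarly pre-share $2^{R_2}$ copies of $\psi^{YY'}_u$. Writing $X^{(i)}$, $Y^{(j)}$ for Charlie's copies, the encoding of $(m_1,m_2)$ is for Alice to feed her $m_1$-th $X'$ register into the channel and for Bob to feed his $m_2$-th $Y'$ register. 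After the channel acts the triple $(X^{(m_1)},Y^{(m_2)},Z)$ at Charlie is in state $\rho^{XYZ}_u$, while every other $X^{(i)}$ is independently in $\psi^X_u$ and every other $Y^{(j)}$ is independently in $\psi^Y_u$.

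For the decoder I would apply Lemma~\ref{lem:cqtypical} with $c=1$ (classical $U$) and $k=3$ (quantum $X,Y,Z$) to the state $\rho^{UXYZ}$, using the three pseudosubpartitions $(\{U,X\},\{U,Y,Z\})$, $(\{U,Y\},\{U,X,Z\})$ and $(\{U,X,Y\},\{U,Z\})$ of $[c]\cupdot[k]$: these are legitimate pseudosubpartitions because the classical register $U$ may appear in several blocks provided the quantum parts are pairwise disjoint, and they produce negative hypotheses $\sum_u p(u)\ketbra{u}\otimes\rho^X_u\otimes\rho^{YZ}_u$, $\sum_u p(u)\ketbra{u}\otimes\rho^Y_u\otimes\rho^{XZ}_u$ and $\sum_u p(u)\ketbra{u}\otimes\rho^{XY}_u\otimes\rho^Z_u$, whose $D^\epsilon_H$-distances from $\rho^{UXYZ}$ are exactly the three conditional hypothesis-testing mutual informations in the theorem. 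This yields for each $(u,l_u)$ a perturbed state and POVM element $(\Pi')_{(u,l_u),\delta}$ on the appropriate augmented Hilbert space; Charlie's decoder for hypothesis $(m_1,m_2)$ is then $(\Pi')_{(u,l_u),\delta}$ applied to the embedded triple $(X^{(m_1)},Y^{(m_2)},Z')$ with identity on the remaining registers, wrapped in a pretty-good measurement and bounded via the Hayashi--Nagaoka inequality as in Section~\ref{sec:cqMAC}. The random-codebook averaging of that section is now replaced by the structural product property of the entanglement: for a wrong hypothesis with $\hat m_1\neq m_1$, $\hat m_2=m_2$, tracing the untouched copies puts the tested triple into $\psi^X_u\otimes\rho^{YZ}_u$, matching exactly the first negative hypothesis and contributing $2^{-I^\epsilon_H(X:YZ|U)}$, and analogously for the other two wrong-hypothesis types. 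Choosing $\delta=\epsilon^{1/3}$ and counting at most $2^{R_1}$, $2^{R_2}$ and $2^{R_1+R_2}$ wrong hypotheses of each type reproduces the three rate inequalities with total error at most $2^{135}\epsilon^{1/3}$, matching Theorem~\ref{thm:cqMAC}.

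The main obstacle will be the bookkeeping between Charlie's physical registers and the augmented/tilted Hilbert space in which the typicality POVM lives: tracing the inactive entanglement copies must land precisely on the perturbed negative hypothesis produced by Lemma~\ref{lem:cqtypical} (up to the $\ell_\infty$-smoothing error controlled by $|\cL|$). The position-based product structure of the entanglement makes this essentially automatic, but one must carefully verify that the natural embedding of $(X^{(m_1)},Y^{(m_2)},Z)$ into the augmented output space is compatible with the smoothing-and-augmentation procedure, so that the identity-on-the-other-registers extension of the POVM does not spoil the three soundness bounds. Once this compatibility is checked, the rest of the proof is a mechanical repetition of Section~\ref{sec:cqMAC} with the same constants.
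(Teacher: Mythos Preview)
The paper's own treatment of this theorem is a one-sentence pointer (``A similar bound can be proved \ldots\ by using the position based coding technique of Anshu, Jain and Warsi''), so your high-level strategy---position-based coding layered over the joint-typicality POVM---is exactly what the paper intends, and your choice $c=1$, $k=3$ in Lemma~\ref{lem:cqtypical} with the three pseudosubpartitions you list is the natural one.

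However, your handling of the soundness step contains a genuine gap. You write that tracing the inactive entanglement copies ``must land precisely on the perturbed negative hypothesis produced by Lemma~\ref{lem:cqtypical} (up to the $\ell_\infty$-smoothing error controlled by $|\cL|$)'' and that ``the position-based product structure of the entanglement makes this essentially automatic''. This is not correct. In position-based coding the inactive copy $X^{(\hat m_1)}$ sits in its untouched marginal $\psi^X_u=\rho^X_u$, and the active pair $(Y^{(m_2)},Z)$ sits in $\rho^{YZ}_u$; after embedding with $\ketbra{0}$ and the uniform $\cL$-ancillas you obtain $\rho^X_u\otimes\rho^{YZ}_u\otimes(\text{ancillas})$, the \emph{unperturbed} negative hypothesis. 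The state $(\rho')^{A''_{[k]}}_{\vecx,\vecl,(S_1,S_2),\delta}$ in Claim~4 of Lemma~\ref{lem:cqtypical} is instead a tensor product of the \emph{tilted} marginals $\cT_{S_i,\vecl_{S_i},\delta}(\rho^{A_{S_i\cap[k]}}\otimes\ketbra{0})$ plus the $M,N$ corrections; only the latter are $\ell_\infty$-small in $|\cL|$. The tilting isometry itself contributes an honest $O(\delta)$ in trace norm, and an additive $O(\delta)$ in the soundness, once multiplied by $2^{R_1}$, $2^{R_2}$ or $2^{R_1+R_2}$, destroys the rate region. This is the same obstruction that prevents one from simply transporting the $\ell_1$-closeness of Claim~2 to the wrong-hypothesis side.

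The resolution is not bookkeeping: it requires that the tilting of the active triple be produced in a way that survives tracing out the wrong block---exactly the role played in Section~\ref{sec:cqMAC} by averaging the output of the \emph{perturbed channel} $\chan''$ over the classical codeword $(x,l_x)$. In the entanglement-assisted setting one must replicate that mechanism with quantum $X,Y$: for instance by sharing the $\cL$-labels between each sender and Charlie and routing them through a correspondingly perturbed protocol, so that the marginal obtained by tracing out the correct $X^{(m_1)}$ really is the tilted $(\rho')^{A''_{S_2\cap[k]}}$ rather than the bare $\rho^{YZ}_u$. Either way, the step you flagged as the main obstacle is indeed the crux, and it is not automatic; the rest of your outline (pretty-good measurement, Hayashi--Nagaoka, the choice $\delta=\epsilon^{1/3}$, and the three rate constraints) is fine once this point is properly handled.
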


\section{Conclusion and open problems}
\label{sec:conclusions}
In this work, we have proved a one-shot classical quantum joint 
typicality lemma that not only extends the iid classical conditional 
joint typicality lemma to the one-shot and quantum settings, but also
extends intersection and union arguments that are ubiquitous in classical
network information theory to the quantum setting. We introducted
two novel tools in the process of proving our joint
typicality lemma viz. {\em tilting}, and {\em smoothing and augmentation},
which should be useful
elsewhere. Our lemma allows us to transport many packing arguments  
arising in proofs of inner bounds from the classical to the quantum 
setting. We illustrated this by constructing a simultaneous decoder
for the one-shot classical quantum MAC. More of such applications
can be found in the companion paper~\cite{sen:simultaneous}.

However, the statement of our quantum joint typicality lemma is not
as strong as the classical statement. It can only handle negative
hypothesis states that are a tensor product of marginals and at most
one arbitrary quantum state. Proving a 
quantum lemma that can handle arbitrary negative hypothesis states,
as in the classical setting, is an important open problem.

Though the main advance of our quantum joint typicality lemma is 
a robust notion of intersection of projectors, there are some technical
issues which prevent us from proving a chain rule for hypothesis
testing mutual information which an ideal notion of intersection of
projectors would have allowed one to do. The main issue here is that
the smoothing and augmentation procedure ensures that tensor products
of marginals corresponding to different partitions of a multipartite
quantum state essentially lie in `well-separated' subspaces. This feature
which was crucial for proving our joint typicality lemma turns out to
be a bottleneck for proving a chain rule. A similar bottleneck arises
if one were to attempt to prove the simultaneous smoothing conjecture
using our techniques.

Another drawback of our quantum lemma is that it cannot handle covering
arguments and their  unions and intersections that often arise in
source coding applications. This is unlike the classical case. Again
this issue seems to be related to the simultaneous smoothing conjecture
and the bottleneck mentioned above. Addressing
this deficiency is another important open problem.

\section*{Acknowledgements}
I profusely thank Mark Wilde for pointing out the need for a 
simultaneous decoder for the quantum multiple access channel
many years back while I was visitng McGill, and 
underscoring its importance for quantum network information theory. 
He also provided pointers to several important references.
Patrick Hayden, David Ding and
Hrant Gharibyan took great pains in reading a preliminary draft of this 
paper and provided useful feedback. I thank Rahul Jain for 
encouraging me to write up this result in the face of a long delay.
I thank an anonymous referee for suggesting to give a full self
contained solution
for the pathological example of failure of `union by span' in the
introduction, and more detailed intuition behind smoothing and
augmentation.
I am grateful to three anonymous referees for suggesting to 
give a self-contained
proof of the `baby case' of `intersection' of two projectors / two 
sender classical quantum multiple access channel.
I thank Naqueeb Warsi for pointing me to his paper on 
the entanglement
assisted compound quantum channel and suggesting a possible 
application of tilted span of projectors to that problem. Finally, I thank
an anonymous referee who pointed out a possible application of our
tilted span to the `quantum OR bound' of Harrow, Lin and Montanaro
and asked  whether our notion of intersection of projectors can lead
to the proof of a chain rule for one shot mutual information quantities.

\bibliography{oneshot}

\appendix

\section{Proof of Proposition~\ref{prop:cqtypical}}
\label{sec:proofcqtypical}
We prove Proposition~\ref{prop:cqtypical} in this section. We first show
how to construct the objects whose existence is promised by the
proposition. We
then proceed to prove its claims.
All the while, we use the notation of the proposition: 
$\vecx$, $\vecl$ will denote  computational basis vectors of 
$\cX^{\otimes [c]}$,
$\cL^{\otimes ([c] \cupdot [k])}$, 
$0 \leq \delta \leq 1$, and
$0 \leq \epsilon_{\vecx, (S_1, \ldots, S_l)} \leq 1$.

In \ref{subsec:rhoprime}, we show how to construct the perturbed
state $(\rho')_{\vecx,\vecl,\delta}^{A''_{[k]}}$ by tilting
the original state $\rho_{\vecx}^{A_{[k]}}$ appropriately.
In \ref{subsec:Piprime}, we construct the `intersection' projector
$(\Pi')_{\vecx,\vecl,\delta}^{A''_{[k]}}$ by using the technique of
matrix-tilted span.
In \ref{subsec:alphabetaT} we define two complex numbers and a tilting map,
and in \ref{subsec:MN} we define two operators
required for the statement of Proposition~\ref{prop:cqtypical} using
the combinatorial and algebraic framework of matrix tilting.
In \ref{subsec:Claim1}, we prove Claim~\arabic{propqtypicalellone}
of Proposition~\ref{prop:cqtypical} using simple operator inequalities.
In \ref{subsec:Claim2}, we prove Claim~\arabic{propqtypicalellinfty}
using the geometry behind the smoothing and augmentation technique 
described earlier.
In \ref{subsec:Claim3} we prove Claim~\arabic{propqtypicaldistance},
and in \ref{subsec:Claim4} we prove Claim~\arabic{propqtypicalcompleteness}
using combinatorial properties of the tilting operator.
In \ref{subsec:Claim5}, we observe that 
Claim~\arabic{propqtypicalsplitting} has already been proven in 
\ref{subsec:MN} earlier.
In \ref{subsec:Claim6}, we prove Claim~\arabic{propqtypicalsoundness}
by invoking the results of \ref{subsec:Piprime} regarding the geometry
of the `intersection' projector.

\subsection{Constructing $(\rho')_{\vecx,\vecl,\delta}^{A''_{[k]}}$}
\label{subsec:rhoprime}
Let $S \subseteq [c] \cupdot [k]$, $S \cap [k] \neq \{\}$. Define 
$\bar{S} := ([c] \cupdot [k]) \setminus S$.
Let $\vecl_S$ be a computational basis vector of $\cL^{\otimes S}$.
We define an isometric embedding $\cT_{S, \vecl_S}$ of 
$(\cH \otimes \C^2)^{\otimes (S \cap [k])}$ 
into 
$
((\cH \otimes \C^2) \otimes \cL^{\otimes |S|})^{\otimes (S \cap [k])} 
$
as follows: 
\[
(\cT_{S, \vecl_S}(\vech_{S \cap [k]}))_s :=
(\vech_s, \vecl_S)
\]
where $s \in S \cap [k]$, $\vech_{S \cap [k]}$ is a computational basis
vector of $(\cH \otimes \C^2)^{\otimes (S \cap [k])}$, $\vech_s$,
$(\cT_{S, \vecl_S}(\vech_{S \cap [k]}))_s$ are the entries 
in the $s$th coordinate of $\vech_{S \cap [k]}$,
$\cT_{S, \vecl_S}(\vech_{S \cap [k]})$. 
Observe that $\cT_{S, \vecl_S}$ maps computational
basis vectors of $(\cH \otimes \C^2)^{\otimes (S \cap [k])}$ to 
computational basis vectors of 
$
((\cH \otimes \C^2) \otimes \cL^{\otimes |S|})^{\otimes (S \cap [k])}.
$
Thus, $\cT_{S, \vecl_S}$ is an isometric embedding of 
$(\cH \otimes \C^2)^{\otimes (S \cap [k])}$ into 
$
((\cH \otimes \C^2) \otimes \cL^{\otimes |S|})^{\otimes (S \cap [k])}
$
which further embeds into $A''_{S \cap [k]}$ in the natural fashion.
Let $\one^{(\cH \otimes \C^2)^{\otimes (S \cap [k])}}$ denote the identity
embedding of $(\cH \otimes \C^2)^{\otimes (S \cap [k])}$ into 
$A''_{(S \cap [k])}$.
Observe that the range spaces of 
$
\cT_{S, \vecl_S} \otimes \one^{(\cH \otimes \C^2)^{
\otimes (\bar{S} \cap [k])
}},
$
as $S$ ranges over subsets of $[c] \cupdot [k]$ intersecting $[k]$
non-trivially,
embed orthogonally into
$A''_{[k]}$ under the natural embedding. Also, for a fixed  
$S \subseteq [c] \cupdot [k]$, $S \cap [k] \neq \{\}$ the range spaces of 
$\cT_{S, \vecl_S}$,
as $\vecl_S$ ranges over computational basis vectors of $\cL^{\otimes S}$,
embed orthogonally into $A''_{S \cap [k]}$ under the natural embedding.

We now define an isometric embedding 
$\cT_{S, \vecl_S, \delta}$ 
of $(\cH \otimes \C^2)^{\otimes (S \cap [k])}$ into 
$A''_{S \cap [k]}$ as follows:
\begin{equation}
\label{eq:isometricembedding}
\begin{array}{rcl}
\lefteqn{\cT_{S, \vecl_S, \delta}} \\
& := &
\frac{1}{\sqrt{N(S, \delta)}} 
\left(
\one^{(\cH \otimes \C^2)^{\otimes (S \cap [k])}} 
\right. \\
&    &
~~~~~~~~~~~~~
{} +
\sum_{(S_1, \ldots, S_l) \vdash \vdash S, l > 0}
\delta^{l} \, 
\cT_{S_1, \vecl_{S_1}} \otimes \cdots \otimes 
\cT_{S_l, \vecl_{S_l}} \\
&   &
~~~~~~~~~~~~~~~~~~~~~~~~~~~~~~~~~~~~~~~~
\left.
{} \otimes
\one^{(\cH \otimes \C^2)^{
\otimes ((S \cap [k]) \setminus (S_1 \cup \cdots \cup S_l))
}}
\right),
\end{array}
\end{equation}
where the normalisation factor $N(S, \delta)$ is
put in to make the embedding preserve length of vectors. Observe
that $N(S, \delta)$ is independent of the vector in 
$(\cH \otimes \C^2)^{\otimes (S \cap [k])}$ on which 
$\cT_{S, \vecl_S, \delta}$ acts since the terms in the 
above summation have orthogonal range spaces. It can be
estimated as follows: 
\begin{equation}
\label{eq:increasedlength1}
\begin{array}{rcl}
N(S, \delta) 
& := &
\displaystyle
1 + \sum_{(S_1, \ldots, S_l) \vdash \vdash S, l > 0} \delta^{2l} \\
& \leq &
\displaystyle
1 + 
\sum_{l=1}^{|S \cap [k]|} \delta^{2l} 
                 \frac{2^{l|S \cap [c]|} (l+1)^{|S \cap [k]|}}{l!} \\
& \leq &
\displaystyle
1 + \sum_{l=1}^{|S \cap [k]|} \frac{\delta^{2l} 2^{l|S|}}{l!} 
\;<\;
e^{\delta^2 2^{|S|}}.
\end{array}
\end{equation}
It is clear that
$N(S,\delta) \geq 1$ and 
$N(T_1, \delta) \cdots N(T_m, \delta) \leq N(S, \delta)$ if
$(T_1, \ldots, T_m) \vdash \vdash S$. 

The map $\cT_{S, \vecl_S, \delta}$ extends to subspaces
and density matrices in  $(\cH \otimes \C^2)^{\otimes (S \cap [k])}$ in 
the natural way. We now define the quantum state
\begin{equation}
\label{eq:rhoprimedefinition}
(\rho')_{\vecx, \vecl, \delta}^{A''_{[k]}} :=
\cT_{[c] \cupdot [k], \vecl, \delta}
\left(
\rho_\vecx^{A_{[k]}} \otimes (\ketbra{0}^{\C^2})^{\otimes k}
\right)
\end{equation}
in $A''_1 \cdots A''_k$,
where 
$
\rho_\vecx^{A_{[k]}} \otimes (\ketbra{0}^{\C^2})^{\otimes k}
$
can be thought of as a density matrix in 
$(\cH \otimes \C^2)^{\otimes [k]}$
in the natural fashion.

\subsection{Constructing $(\Pi')_{\vecx,\vecl,\delta}^{A''_{[k]}}$}
\label{subsec:Piprime}
For each $\vecx$, $(S_1, \ldots, S_l) \vdash \vdash [c] \cupdot [k]$, 
$l > 0$, 
let $\Pi''_{\vecx, (S_1, \ldots, S_l)}$ be the POVM element in 
$A_1 \cdots A_k$
with the property that
\begin{eqnarray*}
\Tr [(\Pi'')_{\vecx, (S_1, \ldots, S_l)}^{A_{[k]}} \rho_\vecx^{A_{[k]}}]
& \geq & 
1 - \epsilon_{\vecx, (S_1, \ldots, S_l)}, \\
\Tr [(\Pi'')_{\vecx, (S_1, \ldots, S_l)}^{A_{[k]}}
     \rho_{\vecx, (S_1, \ldots, S_l)}^{A_{[k]}}
    ] 
& \leq &
 2^{
    -D_H^{\epsilon_{\vecx, (S_1, \ldots, S_l)}}
     (
      \rho_\vecx^{A_{[k]}} \| 
      \rho_{\vecx, (S_1, \ldots, S_l)}^{A_{[k]}}
     )
   }. 
\end{eqnarray*}
By Fact~\ref{fact:gelfandnaimark}, there exists an orthogonal projection
$\Pi_{\vecx, (S_1, \ldots, S_l)}$ in 
$
(A_1 \cdots A_k) \otimes (\C^2)^{\otimes k} \cong 
(\cH \otimes \C^2)^{\otimes [k]}
$
such that 
\begin{equation}
\label{eq:optimalPi}
\begin{array}{rcl}
\lefteqn{
\Tr [\Pi^{(\cH \otimes \C^2)^{\otimes [k]}}_{\vecx, (S_1,\ldots, S_l)} 
     (\rho_\vecx^{A_{[k]}} \otimes (\ketbra{0}^{\C^2})^{\otimes k})
    ] 
} \\
& \geq &
1 - \epsilon_{\vecx, (S_1, \ldots, S_l)}, \\
\lefteqn{
\Tr [\Pi^{(\cH \otimes \C^2)^{\otimes [k]}}_{\vecx, (S_1,\ldots, S_l)}
     (\rho_{\vecx, (S_1, \ldots, S_l)}^{A_{[k]}}
      \otimes (\ketbra{0}^{\C^2})^{\otimes k}
     )
    ] 
} \\
& \leq &
 2^{
    -D_H^{\epsilon_{\vecx, (S_1, \ldots, S_l)}}
     (
      \rho_\vecx^{A_{[k]}} \| 
      \rho_{\vecx, (S_1, \ldots, S_l)}^{A_{[k]}}
     )
   }. 
\end{array}
\end{equation}

Let $Y_{\vecx, (S_1, \ldots, S_l)}$ denote the orthogonal complement 
of the support of $\Pi_{\vecx, (S_1, \ldots, S_l)}$
in $(\cH \otimes \C^2)^{\otimes [k]}$.
Identify $(\cH \otimes \C^2)^{\otimes [k]}$ with the Hilbert space 
$\cH$ of Proposition~\ref{prop:generalisedtiltedspan}. 
Arrange all the non-empty pseudosubpartitions 
$(T_1, \ldots, T_m) \vdash \vdash [c] \cupdot [k]$, $m > 0$
into a linear order extending the refinement partial order $\preceq$.
Define the tilting matrix $A$, whose rows and columns are indexed
by non-empty pseudosubpartitions of $[c] \cupdot [k]$ that intersect
$[k]$ non-trivially, as follows:
\begin{equation}
\label{eq:Adefinition}
\begin{array}{rcl}
\lefteqn{
A_{(S_1, \ldots, S_l), (T_1, \ldots, T_m)} 
} \\
& := &
\begin{array}{l l}
\frac{\delta^{2l}}{N(T_1, \delta) \cdots N(T_m, \delta)} &
\mbox{\ if $(S_1, \ldots, S_l) \preceq (T_1, \ldots, T_m)$}, \\
0 & 
\mbox{\ otherwise}.
\end{array}
\end{array}
\end{equation}
Observe that $A$ is upper triangular,
diagonal dominated and substochastic. The diagonal dominated property
of $A$ follows from the fact that 
$
N(T_1, \delta) \cdots N(T_m, \delta) \leq 
N(W_1, \delta) \cdots N(W_n, \delta)
$
if $(T_1, \ldots, T_m) \preceq (W_1, \ldots, W_n)$.
The reason $A$ is substochastic
in general, and not stochastic, is because the empty pseudosubpartition 
is not included amongst the rows and columns of $A$.
More precisely,
\[
\sum_{
(S_1, \ldots, S_l): (S_1, \ldots, S_l) \preceq (T_1, \ldots, T_m), l > 0
} \delta^{2l}
=
N(T_1, \delta) \cdots N(T_m, \delta) - 1.
\]

For $(S_1, \ldots, S_l) \vdash \vdash [c] \cupdot [k]$, $l > 0$ define 
an isometric 
embedding $\cT_{(S_1, \ldots, S_l), \vecl, \delta}$
of $(\cH \otimes \C^2)^{\otimes [k]}$ into $A''_{[k]}$ as follows:
\begin{equation}
\label{eq:isometricembedding1}
\cT_{(S_1, \ldots, S_l), \vecl, \delta} :=
\cT_{S_1, \vecl_{S_1}, \delta} \otimes \cdots \otimes
\cT_{S_l, \vecl_{S_l}, \delta} \otimes
\one^{(\cH \otimes \C^2)^{\otimes [k] \setminus (S_1 \cup \cdots S_l)}}.
\end{equation}
Observe that the $A$-tilt of 
$Y_{\vecx, (S_1, \ldots, S_l)}$ along the 
$(S_1, \ldots, S_l)$th direction is nothing but the action of
the isometric embedding $\cT_{(S_1, \ldots, S_l), \vecl, \delta}$:
\begin{equation}
\label{eq:Yprimedefinition}
\begin{array}{rcl}
\lefteqn{
Y'_{\vecx, (S_1, \ldots, S_l), \vecl, \delta} 
} \\
& := &
(\cT_\vecl)_{(S_1, \ldots, S_l), A}(Y_{\vecx, (S_1, \ldots, S_l)}) =
\cT_{(S_1, \ldots, S_l), \vecl, \delta}(Y_{\vecx, (S_1, \ldots, S_l)}).
\end{array}
\end{equation}
Define the $A$-tilted span
\begin{equation}
\label{eq:Ydefinition}
Y'_{\vecx, \vecl, \delta} :=
(Y'_{\vecx, \vecl})_A =
\bigplus_{(S_1, \ldots, S_l) \vdash \vdash [c] \cupdot [k], l > 0}
Y'_{(S_1, \ldots, S_l), \vecl, \delta}.
\end{equation}
View $Y'_{\vecx, \vecl, \delta}$ as a subspace of $A''_1 \ldots A''_k$.
Let $(\Pi')^{A''_{[k]}}_{Y'_{\vecx, \vecl, \delta}}$ denote the orthogonal 
projection in 
$A''_1 \ldots A''_k$ onto $Y'_{\vecx, \vecl, \delta}$.

Let $(\Pi')^{A''_{[k]}}_{(\cH \otimes \C^2)^{\otimes [k]}}$ denote the
orthgonal projection in $A''_1 \ldots A''_k$ onto 
$(\cH \otimes \C^2)^{\otimes [k]}$. 
We finally define the POVM element 
$(\Pi')^{A''_{[k]}}_{\vecx,\vecl,\delta}$ in 
$A''_1 \ldots A''_k$ to be
\begin{equation}
\label{eq:Piprimedefinition}
(\Pi')^{A''_{[k]}}_{\vecx,\vecl,\delta} := 
(
\one^{A''_{[k]}} -
(\Pi')^{A''_{[k]}}_{Y'_{\vecx, \vecl, \delta}}
)
((\Pi')^{A''_{[k]}}_{(\cH \otimes \C^2)^{\otimes [k]}})
(
\one^{A''_{[k]}} -
(\Pi')^{A''_{[k]}}_{Y'_{\vecx, \vecl, \delta}}
)
\end{equation}

\subsection{Defining 
$\alpha_{(S_1, \ldots, S_l), \delta}$, 
$\beta_{(S_1, \ldots, S_l), \delta}$, 
$\cT_{(S_1, \ldots, S_l), \vecl, \delta}$
}
\label{subsec:alphabetaT}
Recall that the isometric embedding
$
\cT_{(S_1, \ldots, S_l), \vecl, \delta}
$
of $(\cH \otimes \C^2)^{\otimes [k]}$ into $A''_{[k]}$ has already been
defined in Equation~\ref{eq:isometricembedding1} above.
Also define
\begin{eqnarray*}
\lefteqn{
\alpha_{(S_1, \ldots, S_l), \delta} 
} \\
& := &
\frac{N(S_1, \delta) N(\bar{S_1} \cup [c], \delta)}
     {N([c] \cupdot [k], \delta)} \cdots
\frac{N(S_l, \delta) N(\bar{S_l} \cup [c], \delta)}
     {N([c] \cupdot [k], \delta)}, \\
\lefteqn{
\alpha_{(S_1, \ldots, S_l), \delta} +
\beta_{(S_1, \ldots, S_l), \delta} 
} \\
& := &
\frac{N(S_1 \cup [c], \delta) N(\bar{S_1} \cup [c], \delta)}
     {N([c] \cupdot [k], \delta)} \cdots {} \\
&    &
~~~~~~~
\frac{N(S_l \cup [c], \delta) N(\bar{S_l} \cup [c], \delta)}
     {N([c] \cupdot [k], \delta)}.
\end{eqnarray*}
Since 
\[
(S, \bar{S} \cup [c]) \preceq (S \cup [c], \bar{S} \cup [c]) 
\vdash \vdash [c] \cupdot [k]
\]
for any $S \subseteq [c] \cupdot [k]$,
$\{\} \neq S \cap [k] \subset [k]$, 
\[
N(S, \delta) N(\bar{S} \cup [c], \delta) \leq 
N(S \cup [c], \delta) N(\bar{S} \cup [c], \delta) \leq 
N([c] \cupdot [k], \delta).
\]
It follows that
$
0 \leq 
\alpha_{(S_1, \ldots, S_l), \delta}, \beta_{(S_1, \ldots, S_l), \delta} 
\leq 1.
$

\subsection{Constructing 
$M_{(S_1, \ldots, S_l), \vecl, \delta}^{A''_{[k]}}$,
$N_{(S_1, \ldots, S_l), \vecl, \delta}^{A''_{[k]}}$
}
\label{subsec:MN}
Let $S \subseteq [c] \cupdot [k]$, $\{\} \neq S \cap [k] \subset [k]$.
Define $\bar{S} := ([c] \cupdot [k]) \setminus S$.
For a subset $T \subseteq [c] \cupdot [k]$, $T \cap [k] \neq \{\}$ 
we say that $T$ {\em crosses $S$} if
$T \cap S \cap [k] \neq \{\}$ and $T \cap \bar{S} \cap [k] \neq \{\}$. 
We define a pseudosubpartition 
$(T_1, \ldots, T_l) \vdash \vdash [c] \cupdot [k]$ to {\em cross $S$}
if there exists an $i \in [l]$ such that $T_i$ crosses $S$.
We use the
notation $(T_1, \ldots, T_l) \models_{\times} S$ to denote that
pseudosubpartition $(T_1, \ldots, T_l)$ crosses $S$.
We define the {\em $S$-signature} of a pseudosubpartition 
$(T_1, \ldots, T_l)$ to be the pseudosubpartition 
$(T_1, \ldots, T_{l'})$ where $l' \leq l$ and $T_1, \ldots, T_{l'}$ 
are the subsets that actually cross $S$. We shall denote
pseudosubpartitions $(T_1, \ldots, T_{l}) \vdash \vdash [c] \cupdot [k]$ 
where for all $i \in [l]$, $T_i$ crosses $S$, by the notation
$(T_1, \ldots, T_{l}) \models_{\times \times} S$.
If pseudosubpartition $(T_1, \ldots, T_l)$ has $S$-signature
$(T_1, \ldots, T_{l'})$, we shall denote it by 
$(T_1, \ldots, T_l) \rightthreetimes_S (T_1, \ldots, T_{l'})$.
Observe that $(T_1, \ldots, T_{l'}) \models_{\times \times} S$.

From Equation \ref{eq:isometricembedding}, we observe that
\begin{equation}
\label{eq:tensordecomposition}
\begin{array}{rcl}
\lefteqn{
\cT_{[c] \cupdot [k], \vecl, \delta} 
} \\
& = & 
\displaystyle
\sqrt{
\frac{N(S \cup [c], \delta) N (\bar{S} \cup [c], \delta)}
     {N([c] \cupdot [k], \delta)}
}
\cT_{S \cup [c], \vecl_{S \cup [c]}, \delta} \otimes 
\cT_{\bar{S} \cup [c], \vecl_{\bar{S} \cup [c]}, \delta} \\
&   &
\displaystyle
{} +
\frac{1}{\sqrt{N([c] \cupdot [k], \delta)}} 
\sum_{
(T_1, \ldots, T_l) \vdash \vdash [c] \cupdot [k],
(T_1, \ldots, T_l) \models_{\times} S
} \\
&  &
~~~~~~~~
\delta^{l} \, 
(
\cT_{T_1, \vecl_{T_1}} \otimes \cdots \otimes  
\cT_{T_l, \vecl_{T_l}} \otimes
\one^{
(\cH \otimes \C^2)^{\otimes ([k] \setminus (T_1 \cup \cdots \cup T_l))}
}
).
\end{array}
\end{equation}
We will denote the second term of the summation above by
$\cT_{\models_{\times} S, \vecl, \delta}$. Observe that 
$\cT_{\models_{\times} S, \vecl, \delta}$ is a scaled 
isometric embedding of
$(\cH \otimes \C^2)^{\otimes [k]}$ into $A''_{[k]}$.

Let $\ket{h}$ be a unit length vector in 
$(\cH \otimes \C^2)^{\otimes [k]}$. Let the Schmidt
decomposition of $\ket{h}$ with respect to 
$(S \cap [k], \bar{S} \cap [k])$ be
\[
\ket{h}^{(\cH \otimes \C^2)^{\otimes [k]}} =
\sum_i \sqrt{p_i} 
\ket{\alpha_i}^{(\cH \otimes \C^2)^{\otimes (S \cap [k])}} \otimes
\ket{\beta_i}^{(\cH \otimes \C^2)^{\otimes (\bar{S} \cap [k])}},
\]
where $\sum_i p_i = 1$.
Let the Schmidt 
decomposition of $\cT_{\models_{\times} S, \vecl, \delta}(\ket{h})$ 
with respect to $(S \cap [k], \bar{S} \cap [k])$ be
\begin{equation}
\label{eq:Mdecomposition1}
\begin{array}{rcl}
\lefteqn{
(\cT_{\models_{\times} S, \vecl, \delta}(\ket{h}))^{A''_{[k]}}
} \\
& = &
\sum_{
(T_1, \ldots, T_{l'}) \models_{\times \times} S
}
\sum_{j_{(T_1, \ldots, T_{l'})}} \sqrt{q_{j_{(T_1, \ldots, T_{l'})}}} \\
&   &
~~~~~~~
\ket{\gamma_{j_{(T_1, \ldots, T_{l'})}}}^{A''_{S \cap [k]}} \otimes
\ket{\delta_{j_{(T_1, \ldots, T_{l'})}}}^{A''_{\bar{S} \cap [k]}},
\end{array}
\end{equation}
where 
$
\sum_{
(T_1, \ldots, T_{l'}) \models_{\times \times} S
}
\sum_{j_{(T_1, \ldots, T_{l'})}} q_{j_{(T_1, \ldots, T_{l'})}}
$ 
may be less than one reflecting the fact that the
length of $\cT_{\models_{\times} S, \vecl, \delta}(\ket{h})$ 
may be less than one.
For a pseudosubpartition $(T_1, \ldots, T_{l'}) \models_{\times \times} S$,
the expression 
$
\sum_{j_{(T_1, \ldots, T_{l'})}} \sqrt{q_{j_{(T_1, \ldots, T_{l'})}}} 
\ket{\gamma_{j_{(T_1, \ldots, T_{l'})}}}^{A''_{S \cap [k]}} \otimes
\ket{\delta_{j_{(T_1, \ldots, T_{l'})}}}^{A''_{\bar{S} \cap [k]}}
$
is the Schmidt decomposition  of
\begin{eqnarray*}
\lefteqn{
\frac{1}{\sqrt{N([c] \cupdot [k], \delta)}} 
\sum_{
(T_1, \ldots, T_l) \models_{\times} S,
(T_1, \ldots, T_{l}) \rightthreetimes_S (T_1, \ldots, T_{l'})
}
} \\
&  &
\delta^{l} \, 
(
(
\cT_{T_1, \vecl_{T_1}} \otimes \cdots \otimes 
\cT_{T_l, \vecl_{T_l}} \otimes
\one^{
(\cH \otimes \C^2)^{\otimes ([k] \setminus (T_1 \cup \cdots \cup T_l))}
}
)
(\ket{h})
)^{A''_{[k]}}
\end{eqnarray*}
with respect to $(S \cap [k], \bar{S} \cap [k])$.
We observe that the span of the vectors
$
\{\ket{\gamma_{j_{(T_1, \ldots, T_{l'})}}}^{A''_{S \cap [k]}}\}_{
j_{(T_1, \ldots, T_{l'})}
}
$
is orthogonal to the span of the vectors 
$
\{\ket{\gamma_{j_{(S_1, \ldots, S_{m'})}}}^{A''_{S \cap [k]}}\}_{
j_{(S_1, \ldots, S_{m'})}
}
$
for different pseudosubpartitions 
$
(T_1, \ldots, T_{l'}),
(S_1, \ldots, S_{m'}) \models_{\times \times} S.
$
Similarly, the span of
$
\{\ket{\delta_{j_{(T_1, \ldots, T_{l'})}}}^{A''_{\bar{S} \cap [k]}}\}_{
j_{(T_1, \ldots, T_{l'})}
}
$
is orthogonal to the span of
$
\{\ket{\delta_{j_{(S_1, \ldots, S_{m'})}}}^{A''_{\bar{S} \cap [k]}}\}_{
j_{(S_1, \ldots, S_{m'})}
}.
$
Thus, Equation~\ref{eq:Mdecomposition1} is indeed a valid Schmidt
decomposition for 
$
\cT_{\models_{\times} S, \vecl, \delta}(\ket{h}).
$

Now,
\begin{eqnarray*}
\lefteqn{
(\cT_{[c] \cupdot [k], \vecl, \delta}(\ket{h}))^{A''_{[k]}} 
} \\
& = &
\sqrt{
\frac{N(S \cup [c], \delta) N (\bar{S} \cup [c], \delta)}
     {N([c] \cupdot [k], \delta)}
}
\sum_i \sqrt{p_i} \\
&   &
~~~~~~~~
(
\cT_{S \cup [c], \vecl_{S \cup [c]}, \delta}
(\ket{\alpha_i})
)^{A''_{S \cap [k]}} \otimes
(
\cT_{\bar{S} \cup [c], \vecl_{\bar{S} \cup [c]}, \delta}
(\ket{\beta_i})
)^{A''_{\bar{S} \cap [k]}} \\
&   &
{} +
\sum_{
(T_1, \ldots, T_{l'}) \vdash_{\times \times} S
}
\sum_{j_{(T_1, \ldots, T_{l'})}} \sqrt{q_{j_{(T_1, \ldots, T_{l'})}}} \\
&    &
~~~~~~~~~~~~~~
\ket{\gamma_{j_{(T_1, \ldots, T_{l'})}}}^{A''_{S \cap [k]}} \otimes
\ket{\delta_{j_{(T_1, \ldots, T_{l'})}}}^{A''_{\bar{S} \cap [k]}}.
\end{eqnarray*}
Observe that this is a Schmidt decomposition of 
$
(\cT_{[c] \cupdot [k], \vecl, \delta}(\ket{h}))^{A''_{[k]}},
$
that is, for all $i$, 
$(T_1, \ldots, T_{l'}) \models_{\times \times} S$, 
$j_{(T_1, \ldots, T_{l'})}$,
\[
\cT_{S \cup [c], \vecl_{S \cup [c]}, \delta}(\ket{\alpha_i}) 
\perp
\ket{\gamma_{j_{(T_1, \ldots, T_{l'})}}}
\]
and
$
\cT_{\bar{S} \cup [c], \vecl_{\bar{S} \cup [c]}, \delta}(\ket{\beta_i}) 
\perp
\ket{\delta_{j_{(T_1, \ldots, T_{l'})}}}.
$

Thus,
\begin{eqnarray*}
\lefteqn{
\Tr_{A''_{\bar{S} \cap [k]}} [
(\cT_{[c] \cupdot [k], \vecl, \delta}(\ketbra{h}))^{A''_{[k]}} 
] 
} \\
& = &
\frac{N(S \cup [c], \delta) N (\bar{S} \cup [c], \delta)}
     {N([c] \cupdot [k], \delta)} \\
&   &
\sum_i p_i
(
\cT_{S \cup [c], \vecl_{S \cup [c]}, \delta}
(\ketbra{\alpha_i})
)^{A''_{S \cap [k]}} \\
&   &
{} +
\sum_{
(T_1, \ldots, T_{l'}) \models_{\times \times} S
}
\sum_{j_{(T_1, \ldots, T_{l'})}} q_{j_{(T_1, \ldots, T_{l'})}}
\ketbra{\gamma_{j_{(T_1, \ldots, T_{l'})}}}^{A''_{S \cap [k]}} \\
& = &
\frac{N(S \cup [c], \delta) N (\bar{S} \cup [c], \delta)}
     {N([c] \cupdot [k], \delta)} \\
&   &
\left(
\cT_{S \cup [c], \vecl_{S \cup [c]}, \delta}
\left(
\Tr_{(\cH \otimes \C^2)^{\otimes {\bar{S} \cap [k]}}} [\ketbra{h}]
\right)
\right)^{A''_{S \cap [k]}} \\
&   &
{} +
\sum_{
(T_1, \ldots, T_{l'}) \models_{\times \times} S
}
\sum_{j_{(T_1, \ldots, T_{l'})}} q_{j_{(T_1, \ldots, T_{l'})}}
\ketbra{\gamma_{j_{(T_1, \ldots, T_{l'})}}}^{A''_{S \cap [k]}}.
\end{eqnarray*}

Express 
$\rho_\vecx^{A_{[k]}} \otimes (\ketbra{0}^{\C^2})^{\otimes k}$
in terms of its eigenbasis
\begin{equation}
\label{eq:rhovecx}
\rho_\vecx^{A_{[k]}} \otimes (\ketbra{0}^{\C^2})^{\otimes k} =
\sum_i s_i \ketbra{(h(i))}^{(\cH \otimes \C^2)^{\otimes k}},
\end{equation}
where $\sum_i s_i = 1$.
Let $\ket{\gamma_{j_{(T_1, \ldots, T_{l'})}}(i)}$, 
$q_{j_{(T_1, \ldots, T_{l'})}}(i)$ be the appropriate vectors and
coefficients for $\ket{(h(i))}$ as defined in 
Equation~\ref{eq:Mdecomposition1}.
Then using Equation~\ref{eq:rhoprimedefinition}, we get
\begin{eqnarray*}
\lefteqn{
\Tr_{A''_{\bar{S} \cap [k]}} [
(\rho')_{\vecx,\vecl,\delta}^{A''_{[k]}}
]
} \\
& = &
\frac{N(S \cup [c], \delta) N (\bar{S} \cup [c], \delta)}
     {N([c] \cupdot [k], \delta)} \\
&    &
\left(
\cT_{S \cup [c], \vecl_{S \cup [c]}, \delta}
\left(
\Tr_{A_{\bar{S} \cap [k]}} [
\rho_\vecx^{A_{[k]}} \otimes (\ketbra{0}^{\C^2})^{\otimes k}
]
\right)
\right)^{A''_{S \cap [k]}} \\
&   &
{} +
\sum_i s_i
\sum_{
(T_1, \ldots, T_{l'}) \models_{\times \times} S
}
\sum_{j_{(T_1, \ldots, T_{l'})}} q_{j_{(T_1, \ldots, T_{l'})}}(i) \\
&   &
~~~~~~~~~~~
\ketbra{\gamma_{j_{(T_1, \ldots, T_{l'})}}(i)}^{A''_{S \cap [k]}} \\
& = &
\frac{N(S \cup [c], \delta) N (\bar{S} \cup [c], \delta)}
     {N([c] \cupdot [k], \delta)} \\
&   &
\left(
\cT_{S \cup [c], \vecl_{S \cup [c]}, \delta}
\left(
\rho_\vecx^{A_{S \cap [k]}} \otimes 
(\ketbra{0}^{\C^2})^{\otimes |S \cap [k]|}
\right)
\right)^{A''_{S \cap [k]}} \\
&   &
{} +
(M'')_{S, \vecx, \vecl, \delta}^{A''_{S \cap [k]}},
\end{eqnarray*}
where $(M'')_{S, \vecx, \vecl, \delta}^{A''_{S \cap [k]}}$ is defined
to be the second term in the summation in the first equality above. 
Note that
$(M'')_{S, \vecx, \vecl, \delta}^{A''_{S \cap [k]}}$ is 
a positive semidefinite matrix with support orthogonal to
the support of the first matrix in the summation above.

We say that a pseudosubpartition 
$(T_1, \ldots, T_l) \vdash \vdash S \cup [c]$
{\em leaks out of $S$} if there exists an $i \in [l]$ such that
$T_i \cap \bar{S} \cap [c] \neq \{\}$. We use the notation
$(T_1, \ldots, T_l) \models_{\rightsquigarrow} S$ to denote that 
pseudosubpartition
$(T_1, \ldots, T_l)$ leaks out of $S$.
Observe that 
\begin{eqnarray*}
\lefteqn{
\cT_{S \cup [c], \vecl_{S \cup [c]}, \delta}
} \\
& = &
\sqrt{\frac{N(S, \delta)}{N(S \cup [c], \delta)}}
\cT_{S, \vecl_S, \delta} \\
&   &
\displaystyle
{} +
\frac{1}{\sqrt{N(S \cup [c], \delta)}}
\sum_{
(T_1, \ldots, T_l) \vdash \vdash S \cup [c], 
(T_1, \ldots, T_l) \models_{\rightsquigarrow} S
} \\
&   &
~~~~~~~~~~
\delta^l \,
(
\cT_{T_1, \vecl_{T_1}} \otimes \cdots \otimes  
\cT_{T_l, \vecl_{T_l}} \otimes
\one^{
(\cH \otimes \C^2)^{
\otimes ((S \cap [k]) \setminus (T_1 \cup \cdots \cup T_l))
}
}
).
\end{eqnarray*}
We use $\cT_{\models_{\rightsquigarrow} S, \vecl_{S \cup [c]}, \delta}$ 
to denote the second term in the sum above. The map 
$\cT_{\models_{\rightsquigarrow} S, \vecl_{S \cup [c]}, \delta}$ is 
a scaled
isometric embedding of $(\cH \otimes \C^2)^{\otimes (S \cap [k])}$ into
$A''_{S \cap [k]}$.
Let $\ket{h} \in (\cH \otimes \C^2)^{\otimes (S \cap [k])}$ be
a unit length vector.
Define the Hermitian matrix
\begin{eqnarray*}
\lefteqn{
(N''_{\vecl_{S \cup [c]}, \delta}(\ketbra{h}))^{A''_{S \cap [k]}}
} \\
& := &
(\cT_{S \cup [c], \vecl_{S \cup [c]}, \delta}(\ketbra{h}))^{
A''_{S \cap [k]}
} \\
&   &
{} -
\frac{N(S, \delta)}{N(S \cup [c], \delta)}
(\cT_{S, \vecl_S, \delta}(\ketbra{h}))^{A''_{S \cap [k]}}.
\end{eqnarray*}
Then,
\begin{equation}
\label{eq:leak}
\begin{array}{rcl}
\lefteqn{
(N''_{\vecl_{S \cup [c]}, \delta}(\ketbra{h}))^{A''_{S \cap [k]}} 
} \\
& = &
\displaystyle
\sqrt{\frac{N(S, \delta)}{N(S \cup [c], \delta)}}
(
(
\cT_{S, \vecl_S, \delta}(\ket{h})
\cT_{\models_{\rightsquigarrow} S, \vecl_{S \cup [c]}, \delta}(\bra{h})
)^{A''_{S \cap [k]}} \\
&  &
{} +
(
\cT_{\models_{\rightsquigarrow} S, \vecl_{S \cup [c]}, \delta}(\ket{h})
\cT_{S, \vecl_S, \delta}(\bra{h})
)^{A''_{S \cap [k]}} 
) \\
&   &
{} +
(
\cT_{\models_{\rightsquigarrow} S, \vecl_{S \cup [c]}, \delta}
(\ketbra{h})
)^{A''_{S \cap [k]}}.
\end{array}
\end{equation}

Looking at Equation~\ref{eq:rhovecx}, we define the Hermitian matrix
\begin{equation}
\label{eq:Nprimeprime}
(N''_{S, \vecx, \vecl, \delta})^{A''_{S \cap [k]}} :=
\sum_i s_i
(N''_{\vecl_{S \cup [c]}, \delta}(\ketbra{h(i)}))^{A''_{S \cap [k]}}.
\end{equation}
Note that 
$
(N''_{S, \vecx, \vecl, \delta})^{A''_{S \cap [k]}} = 0
$
if $c = 0$.
Thus,
\begin{eqnarray*}
\lefteqn{
\left(
\cT_{S \cup [c], \vecl_{S \cup [c]}, \delta}
\left(
\rho_\vecx^{A_{S \cap [k]}} \otimes 
(\ketbra{0}^{\C^2})^{\otimes |S \cap [k]|}
\right)
\right)^{A''_{S \cap [k]}} 
} \\
& = &
\frac{N(S, \delta)}{N(S \cup [c], \delta)}
\left(
\cT_{S, \vecl_S, \delta}
\left(
\rho_\vecx^{A_{S \cap [k]}} \otimes 
(\ketbra{0}^{\C^2})^{\otimes |S \cap [k]|}
\right)
\right)^{A''_{S \cap [k]}} \\
&   &
{} + 
(N''_{S, \vecx, \vecl, \delta})^{A''_{S \cap [k]}},
\end{eqnarray*}
and
\begin{eqnarray*}
\lefteqn{
\Tr_{A''_{\bar{S} \cap [k]}} [
(\rho')_{\vecx,\vecl,\delta}^{A''_{[k]}}
]
} \\
& = &
\frac{N(S, \delta) N(\bar{S} \cup [c], \delta)}
     {N([c] \cupdot [k], \delta)} \\
&   &
\left(
\cT_{S, \vecl_S, \delta}
\left(
\rho_\vecx^{A_{S \cap [k]}} \otimes 
(\ketbra{0}^{\C^2})^{\otimes |S \cap [k]|}
\right)
\right)^{A''_{S \cap [k]}} \\
&   &
{} + 
\frac{N(S \cup [c], \delta) N(\bar{S} \cup [c], \delta)}
     {N([c] \cupdot [k], \delta)}
(N''_{S, \vecx, \vecl, \delta})^{A''_{S \cap [k]}} \\
&   &
{} +
(M''_{S, \vecx, \vecl, \delta})^{A''_{S \cap [k]}}.
\end{eqnarray*}
Observe that 
$
(M''_{S, \vecx, \vecl, \delta})^{A''_{S \cap [k]}}
$
has support orthogonal to the sum of the first two terms in the
above equation and
\[
\Tr [(M''_{S, \vecx, \vecl, \delta})^{A''_{S \cap [k]}}] =
1 - 
\frac{N(S \cup [c], \delta) N(\bar{S} \cup [c], \delta)}
     {N([c] \cupdot [k], \delta)}.
\]
Also note that the sum of the first two terms is a positive semidefinite
matrix.

Now define
\begin{equation}
\label{eq:Mprimedefinition}
\begin{array}{rcl}
\lefteqn{
(M')_{S, \vecx_{S \cap [c]}, \vecl_S, \delta}^{A''_{S \cap [k]}} 
} \\
& := &
|\cL|^{-|\bar{S}|} 
\sum_{\vecx'_{[c] \setminus S}, \vecl'_{\bar{S}}} 
p_{[c] \setminus S}(\vecx'_{[c] \setminus S})
(M'')_{S, \vecx_{S \cap [c]} \vecx'_{[c] \setminus S}, 
       \vecl_{S} \vecl'_{\bar{S}}, \delta}^{A''_{S \cap [k]}}, \\
\lefteqn{
(N')_{S, \vecx_{S \cap [c]}, \vecl_S, \delta}^{A''_{S \cap [k]}} 
} \\
& := &
|\cL|^{-|\bar{S}|} 
\frac{N(S \cup [c], \delta) N(\bar{S} \cup [c], \delta)}
     {N([c] \cupdot [k], \delta)}
\sum_{\vecx'_{[c] \setminus S}, \vecl'_{\bar{S}}} 
p_{[c] \setminus S}(\vecx'_{[c] \setminus S}) \\
&   &
~~~~~~~~~~~~~~~~~~~~~~~~~~~~~~~~~~~~~~~~~~
(N'')_{S, \vecx_{S \cap [c]} \vecx'_{[c] \setminus S}, 
       \vecl_{S} \vecl'_{\bar{S}}, \delta}^{A''_{S \cap [k]}}.
\end{array}
\end{equation}
Obviously, 
$
(M')_{S, \vecx_{S \cap [c]}, \vecl_S, \delta}^{A''_{S \cap [k]}}
$
is a positive semidefinite matrix and
$
(N')_{S, \vecx_{S \cap [c]}, \vecl_S, \delta}^{A''_{S \cap [k]}}
$
is a Hermitian matrix. Moreover,
$
(N')_{S, \vecx_{S \cap [c]}, \vecl_S, \delta}^{A''_{S \cap [k]}} = 0
$
if $c = 0$.
Recalling the definition of
$
(\rho')_{\vecx_{S \cap [c]}, \vecl_{S}, \delta}^{A''_{S \cap [k]}} 
$
in Claim~\ref{prop@qtypicalsplitting} of 
Proposition~\ref{prop:cqtypical},
we see that
\begin{eqnarray*}
\lefteqn{
(\rho')_{\vecx_{S \cap [c]}, \vecl_{S}, \delta}^{A''_{S \cap [k]}}
} \\
& = &
\frac{N(S, \delta) N (\bar{S} \cup [c], \delta)}
     {N([c] \cupdot [k], \delta)} \\
&   &
(
\cT_{S, \vecl_S, \delta}(
\rho_{\vecx_{S \cap [c]}}^{A_{S \cap [k]}} \otimes 
(\ketbra{0}^{\C^2})^{\otimes |S \cap [k]|}
)
)^{A''_{S \cap [k]}} \\
&   &
{} + 
(N')_{S, \vecx_{S \cap [c]}, \vecl_S, \delta}^{A''_{S \cap [k]}} +
(M')_{S, \vecx_{S \cap [c]}, \vecl_S, \delta}^{A''_{S \cap [k]}}.
\end{eqnarray*}
Observe that
$
(M')_{S, \vecx_{S \cap [c]}, \vecl_S, \delta}^{A''_{S \cap [k]}}
$
has support orthogonal to that of
the sum of the first two terms in the above equation and
$
\Tr [
(M')_{S, \vecx_{S \cap [c]}, \vecl_S, \delta}^{A''_{S \cap [k]}}
] =
1 - 
\frac{N(S \cup [c], \delta) N(\bar{S} \cup [c], \delta)}
     {N([c] \cupdot [k], \delta)}.
$
Also note that the sum of the first two terms is a positive semidefinite
matrix.

Now recalling the definition of 
$
(\rho')_{\vecx, \vecl,(S_1, \ldots, S_l),\delta}^{A''_{S \cap [k]}} 
$
from Claim~\ref{prop@qtypicalsplitting} of 
Proposition~\ref{prop:cqtypical}, we see that
\begin{equation}
\label{eq:Mprimebreakup}
\begin{array}{rcl}
\lefteqn{
(\rho')_{\vecx, \vecl,(S_1, \ldots, S_l),\delta}^{A''_{[k]}} 
} \\
& = &
\left(
\frac{N(S_1, \delta) N (\overline{S_1} \cup [c], \delta)}
     {N([c] \cupdot [k], \delta)} \cdots
\frac{N(S_l, \delta) N (\overline{S_l} \cup [c], \delta)}
     {N([c] \cupdot [k], \delta)} 
\right. \\
&   &
~~~~~~~~
\cT_{S_1, \vecl_{S_1}, \delta}(
\rho_{\vecx_{S_1 \cap [c]}}^{A_{S_1 \cap [k]}} \otimes 
(\ketbra{0}^{\C^2})^{\otimes |S_1 \cap [k]|}
)
)^{A''_{S_1 \cap [k]}} 
\otimes \cdots \otimes \\
&    &
~~~~~~~~
\cT_{S_l, \vecl_{S_l}, \delta}(
\rho_{\vecx_{S_l \cap [c]}}^{A_{S_l \cap [k]}} \otimes 
(\ketbra{0}^{\C^2})^{\otimes |S_l \cap [k]|}
)
)^{A''_{S_l \cap [k]}} \otimes \\
&    &
~~~~~~~~
\left.
(
\sigma_\vecx^{A_{[k] \setminus (S_1 \cup \cdots \cup S_l)}} \otimes
(\ketbra{0})^{(\C^2)^{\otimes |[k] \setminus (S_1 \cup \cdots \cup S_l)|}}
)
\right) \\
&   &
{} + \mbox{Other Terms I} + \mbox{Other Terms II} \\
& = &
\alpha_{(S_1, \ldots, S_l), \delta}
(
\cT_{(S_1, \ldots, S_l), \vecl, \delta}(
\rho_{\vecx, (S_1, \ldots, S_l)}^{A_{[k]}} \otimes 
(\ketbra{0}^{\C^2})^{\otimes k}
)
)^{A''_{[k]}} \\
&   &
{} +
\mbox{Other Terms I} + \mbox{Other Terms II}. \\
\end{array}
\end{equation}
Above, the notation ``Other Terms II'' denotes a 
$(2^l - 1)$-fold sum of tensor products of
$(l+1)$ matrices where one multiplicand is 
$
\sigma_\vecx^{A_{[k] \setminus (S_1 \cup \cdots \cup S_l)}} \otimes
(\ketbra{0})^{(\C^2)^{\otimes |[k] \setminus (S_1 \cup \cdots \cup S_l)|}}
$,
at least one multiplicand is of the form
$
(M')^{A''_{S_i \cap [k]}}_{
S_i, \vecx_{S_i \cap [c]}, \vecl_{S_i}, \delta
},
$
and the remaining multiplicands are of the form
\begin{eqnarray*}
\lefteqn{
\frac{N(S_j, \delta) N (\overline{S_j} \cup [c], \delta)}
     {N([c] \cupdot [k], \delta)} 
} \\
&  &
(
\cT_{S_j, \vecl_{S_j}, \delta}(
\rho_{\vecx_{S_j \cap [c]}}^{A_{S_j \cap [k]}} \otimes 
(\ketbra{0}^{\C^2})^{\otimes |S_j \cap [k]|}
)
)^{A''_{S_j \cap [k]}} \\
&  &
~~~~~~
{} +
(N')_{S_j, \vecx_{S_j \cap [c]}, \vecl_{S_j}, \delta}^{A''_{S_j \cap [k]}}
\end{eqnarray*}
having $\ell_1$-norm at most one.
We use $(M')^{A''_{[k]}}_{(S_1, \ldots, S_l),\vecx,\vecl,\delta}$ to
denote the ``Other Terms II''. It is clear that 
$
(M')^{A''_{[k]}}_{(S_1, \ldots, S_l),\vecx,\vecl,\delta}
$
is a positive semidefinite matrix with trace 
$
1 
- \alpha_{(S_1, \ldots, S_l), \delta} 
- \beta_{(S_1, \ldots, S_l), \delta}.
$
Define
\begin{equation}
\label{eq:Mdefinition}
M^{A''_{[k]}}_{(S_1, \ldots, S_l),\vecx,\vecl,\delta} :=
\frac{(M')^{A''_{[k]}}_{(S_1, \ldots, S_l),\vecx,\vecl,\delta}}
     {1 
      - \alpha_{(S_1, \ldots, S_l), \delta}
      - \beta_{(S_1, \ldots, S_l), \delta}
     }.
\end{equation}
It is now clear that
$
M^{A''_{[k]}}_{(S_1, \ldots, S_l),\vecx,\vecl,\delta} 
$
is a positive semidefinite matrix with unit trace with support orthogonal
to that of the sum of the first two terms in 
Equation~\ref{eq:Mprimebreakup}.
The notation ``Other Terms I'' denotes a 
$(2^l - 1)$-fold sum of tensor products of
$(l+1)$ matrices where one multiplicand is 
$
\sigma_\vecx^{A_{[k] \setminus (S_1 \cup \cdots \cup S_l)}} \otimes
(\ketbra{0})^{(\C^2)^{\otimes |[k] \setminus (S_1 \cup \cdots \cup S_l)|}}
$,
at least one multiplicand is of the form
$
(N')^{A''_{S_i \cap [k]}}_{
S_i, \vecx_{S_i \cap [c]}, \vecl_{S_i}, \delta
},
$
and the remaining multiplicands are of the form
\[
\frac{N(S_j, \delta) N (\overline{S_j} \cup [c], \delta)}
     {N([c] \cupdot [k], \delta)} 
(
\cT_{S_j, \vecl_{S_j}, \delta}(
\rho_{\vecx_{S_j \cap [c]}}^{A_{S_j \cap [k]}} \otimes 
(\ketbra{0}^{\C^2})^{\otimes |S_j \cap [k]|}
)
)^{A''_{S_j \cap [k]}}
\]
with $\ell_1$-norm at most one.
We use $(N')^{A''_{[k]}}_{(S_1, \ldots, S_l),\vecx,\vecl,\delta}$ to
denote the ``Other Terms I''. It is clear that 
$
(N')^{A''_{[k]}}_{(S_1, \ldots, S_l),\vecx,\vecl,\delta}
$
is a Hermitian matrix with trace 
$
\beta_{(S_1, \ldots, S_l), \delta}.
$
Define
\begin{equation}
\label{eq:Ndefinition}
N^{A''_{[k]}}_{(S_1, \ldots, S_l),\vecx,\vecl,\delta} :=
\begin{array}{l l}
\frac{(N')^{A''_{[k]}}_{(S_1, \ldots, S_l),\vecx,\vecl,\delta}}
     {\beta_{(S_1, \ldots, S_l), \delta}} 
&
\mbox{if $\beta_{(S_1, \ldots, S_l), \delta} = 0$} \\
\frac{\one^{A''_{[k]}}}{|A''_{[k]}|}
&
\mbox{otherwise}
\end{array}.
\end{equation}
It is now clear that
$
N^{A''_{[k]}}_{(S_1, \ldots, S_l),\vecx,\vecl,\delta} 
$
is a Hermitian matrix with unit trace. Also,
$\beta_{(S_1, \ldots, S_l), \delta} = 0$ if $c = 0$.

Thus,
\begin{eqnarray*}
\lefteqn{
(\rho')_{\vecx,\vecl,(S_1, \ldots, S_l),\delta}^{A''_{[k]}} 
} \\
& = &
\alpha_{(S_1, \ldots, S_l), \delta}
(
\cT_{(S_1, \ldots, S_l), \vecl, \delta}(
\rho_{\vecx,(S_1, \ldots, S_l)}^{A_{[k]}} \otimes 
(\ketbra{0}^{\C^2})^{\otimes k}
)
)^{A''_{[k]}} \\
&   &
{} +
\beta_{(S_1, \ldots, S_l), \delta} 
N^{A''_{[k]}}_{(S_1, \ldots, S_l),\vecx,\vecl,\delta} \\
&  &
{} +
(
1 
- \alpha_{(S_1, \ldots, S_l), \delta}
- \beta_{(S_1, \ldots, S_l), \delta}
)
M^{A''_{[k]}}_{(S_1, \ldots, S_l),\vecx,\vecl,\delta}.
\end{eqnarray*}

\subsection{Proving Claim~\arabic{propqtypicalellone}}
\label{subsec:Claim1}
Using Equation~\ref{eq:Piprimedefinition},
we have
\begin{eqnarray*}
\lefteqn{
\ellone{(\Pi')_{\vecx,\vecl,\delta}^{A''_{[k]}}} 
} \\
& \leq &
\ellinfty{
(
\one^{A''_{[k]}} -
(\Pi')^{A''_{[k]}}_{Y'_{\vecx,\vecl, \delta}}
)
}
\ellone{
((\Pi')^{A''_{[k]}}_{(\cH \otimes \C^2)^{\otimes [k]}})
} \\
&    &
\ellinfty{
(
\one^{A''_{[k]}} -
(\Pi')^{A''_{[k]}}_{Y'_{\vecx,\vecl, \delta}}
)
} \\
& \leq &
\ellone{(\Pi')_{(\cH \otimes \C^2)^{\otimes [k]}}^{A''_{[k]}}} 
\; =  \;
|(\cH \otimes \C^2)^{\otimes k}| \\
&  =   &
(2 |\cH|)^k,
\end{eqnarray*}
where we use the fact that
$
\ellinfty{
(
\one^{A''_{[k]}} -
(\Pi')^{A''_{[k]}}_{Y'_{\vecx,\vecl, \delta}}
)
}
\leq 1
$ 
as
$
(
\one^{A''_{[k]}} -
(\Pi')^{A''_{[k]}}_{Y'_{\vecx,\vecl, \delta}}
)
$
is the orthogonal projection onto the complement of the subspace
$
Y'_{\vecx,\vecl, \delta}.
$

\subsection{Proving Claim~\arabic{propqtypicalellinfty}}
\label{subsec:Claim2}
Let $S \subseteq [c] \cupdot [k]$, $\{\} \neq S \cap [k] \neq [k]$.
Let $\ket{h_\vecx(i)} \in (\cH \otimes \C^2)^{\otimes k}$ be the $i$th 
eigenvector of 
$\rho_\vecx^{A_{[k]}} \otimes (\ketbra{0}^{\C^2})^{\otimes k}$.
Fix a subpartition $(T_1, \ldots, T_{l'}) \models_{\times \times} S$
and an index $j_{(T_1, \ldots, T_{l'})}$ 
in the Schmidt decomposition of
$\cT_{\models_\times S, \vecl, \delta}(\ket{h_\vecx(i)})$ with 
respect to $(S \cap [k], \bar{S} \cap [k])$ given in 
Equation~\ref{eq:Mdecomposition1}. 
Define 
\begin{eqnarray*}
\lefteqn{
(M')^{A''_{S \cap [k]}}_{
S,\vecx, \vecl_S,\delta,
(T_1,\ldots,T_{l'}),j_{(T_1, \ldots, T_{l'})}
}(i)
} \\
& := &
|\cL|^{-|\bar{S}|} \sum_{\vecl'_{\bar{S}}}
\ketbra{
\gamma_{j_{(T_1, \ldots, T_{l'})}, \vecx,
\vecl_S \vecl'_{\bar{S}}}(i)
}^{A''_{S \cap [k]}},
\end{eqnarray*}
where we write 
$
\ket{\gamma_{j_{(T_1, \ldots, T_{l'})}, \vecx,
\vecl_S \vecl'_{\bar{S}}}(i)
}
$
in order to emphasise the dependence on $\vecx$ and 
$\vecl := \vecl_S \vecl'_{\bar{S}}$.

From the definition of
$
(M')_{S, \vecx_{S \cap [c]}, \vecl_S, \delta}^{A''_{S \cap [k]}}
$
in Equation~\ref{eq:Mprimedefinition},
it is easy to see that proving 
\[
\ellinfty{
(M')^{A''_{S \cap [k]}}_{
S,\vecx, \vecl_S,\delta,(T_1,\ldots,T_{l'}),j_{(T_1, \ldots, T_{l'})}
}(i)
} \leq \frac{1}{|\cL|}
\]
for all subpartitions $(T_1,\ldots,T_{l'}) \vdash_{\times \times} S$,
indices $j_{(T_1, \ldots, T_{l'})}$ and $i$ suffices to show that
\[
\ellinfty{
(M')_{S, \vecx_{S \cap [c]}, \vecl_S, \delta}^{A''_{S \cap [k]}} 
} \leq 
\frac{1}{|\cL|} 
\Tr [
(M')_{S, \vecx_{S \cap [c]}, \vecl_S, \delta}^{A''_{S \cap [k]}} 
].
\]
Now it is
easy to see using Equations~\ref{eq:Mprimebreakup}, \ref{eq:Mdefinition} 
that this implies that
\begin{eqnarray*}
\ellinfty{(M')_{(S_1, \ldots, S_l), \vecx, \vecl, \delta}^{A''_{[k]}}} 
& \leq &
\frac{1}{|\cL|} \Tr [
(M')_{(S_1, \ldots, S_l), \vecx, \vecl, \delta}^{A''_{[k]}} 
] \\
{} \implies
\ellinfty{M_{(S_1, \ldots, S_l), \vecx, \vecl, \delta}^{A''_{[k]}}} 
& \leq &
\frac{1}{|\cL|}.
\end{eqnarray*}

It only remains to show
$
\ellinfty{
(M')^{A''_{S \cap [k]}}_{
S,\vecx, \vecl_S,\delta,(T_1,\ldots,T_{l'}),j_{(T_1, \ldots, T_{l'})}
}(i)
} \leq \frac{1}{|\cL|}
$
for any subpartition $(T_1,\ldots,T_{l'}) \vdash_{\times \times} S$,
index $j_{(T_1, \ldots, T_{l'})}$ and $i$. In fact, we will prove the
stronger statement that
\[
\ellinfty{
(M')^{A''_{S \cap [k]}}_{
S,\vecx, \vecl_S,\delta,(T_1,\ldots,T_{l'}),j_{(T_1, \ldots, T_{l'})}
}(i)
} \leq 
\frac{1}{|\cL|^{|(T_1 \cup \cdots \cup T_{l'}) \cap \bar{S}|}}.
\]
Since $(T_1 \cup \cdots \cup T_{l'}) \cap \bar{S} \neq \{\}$, this
would complete the proof of the first part of 
Claim~\arabic{propqtypicalellinfty} of Proposition~\ref{prop:cqtypical}.

By triangle inequality, we have
\begin{eqnarray*}
\lefteqn{
\ellinfty{
(M')^{A''_{S \cap [k]}}_{
S,\vecx,\vecl_S,\delta,(T_1,\ldots,T_{l'}),j_{(T_1, \ldots, T_{l'})}
}(i)
} 
} \\
& \leq &
|\cL|^{-|\bar{S} \cap (T_1 \cup \cdots T_{l'})|} 
|\cL|^{-|\bar{S} \setminus (T_1 \cup \cdots T_{l'})|} 
\sum_{\vecl'_{\bar{S} \setminus (T_1 \cup \cdots T_{l'})}} \\
&      &
~~~~
\left\|
\sum_{\vecl'_{\bar{S} \cap (T_1 \cup \cdots T_{l'})}} 
\ket{
\gamma_{
j_{T_1, \ldots, T_{l'})}, 
\vecx,
\vecl_S 
\vecl'_{\bar{S} \setminus (T_1 \cup \cdots T_{l'})}
\vecl'_{\bar{S} \cap (T_1 \cup \cdots T_{l'})}
}(i)
} 
\right. \\
&   &
~~~~~~~~~~~~~~~~~~~~~~~~
\left.
\bra{
\gamma_{
j_{T_1, \ldots, T_{l'})}, 
\vecx,
\vecl_S 
\vecl'_{\bar{S} \setminus (T_1 \cup \cdots T_{l'})}
\vecl'_{\bar{S} \cap (T_1 \cup \cdots T_{l'})}
}(i)
}^{A''_{S \cap [k]}}
\right\|_\infty \\
&   =  &
|\cL|^{-|\bar{S} \cap (T_1 \cup \cdots T_{l'})|} 
|\cL|^{-|\bar{S} \setminus (T_1 \cup \cdots T_{l'})|} 
\sum_{\vecl'_{\bar{S} \setminus (T_1 \cup \cdots T_{l'})}} 
1 \\
&   =  &
|\cL|^{-|\bar{S} \cap (T_1 \cup \cdots T_{l'})|},
\end{eqnarray*}
where we use the fact that 
\begin{eqnarray*}
\lefteqn{
\ket{
\gamma_{
j_{(T_1, \ldots, T_l)},
\vecx,
\vecl_S 
\vecl'_{\bar{S} \setminus (S_1 \cup \cdots \cup S_{l'})}
\vecl'_{\bar{S} \cap (T_1 \cup \cdots T_{l'})}
}(i)
}
} \\
& \perp &
\ket{
\gamma_{
j_{(T_1, \ldots, T_l)},
\vecx,
\vecl_S 
\vecl'_{\bar{S} \setminus (S_1 \cup \cdots \cup S_{l'})}
\vecl''_{\bar{S} \cap (T_1 \cup \cdots T_{l'})}
}(i)
}
\end{eqnarray*}
for 
$
\vecl'_{\bar{S} \cap (T_1 \cup \cdots T_{l'})} \neq
\vecl''_{\bar{S} \cap (T_1 \cup \cdots T_{l'})}
$
in the equality above. This follows from the observation that
for distinct computational basis vectors 
$\vecl'_{\bar{S} \cap (T_1 \cup \cdots T_{l'})}$,
$\vecl''_{\bar{S} \cap (T_1 \cup \cdots T_{l'})}$,
there exists an $i \in [l']$, a coordinate $a \in \bar{S} \cap T_i$ such
that $\vecl'_a \neq \vecl''_a$
which implies that
$
\ket{
\gamma_{
j_{(T_1, \ldots, T_l)},
\vecx,
\vecl_S 
\vecl'_{\bar{S} \setminus (S_1 \cup \cdots \cup S_{l'})}
\vecl'_{\bar{S} \cap (T_1 \cup \cdots T_{l'})}
}(i)
},
$ \\
$
\ket{
\gamma_{
j_{(T_1, \ldots, T_l)},
\vecx,
\vecl_S 
\vecl'_{\bar{S} \setminus (S_1 \cup \cdots \cup S_{l'})}
\vecl''_{\bar{S} \cap (T_1 \cup \cdots T_{l'})}
}(i)
}
$
lie in the orthogonal subspaces 
$
(
(\cH \otimes \C^2) \otimes 
\ket{\vecl_{T_i \cap S} \vecl'_{T_i \cap \bar{S}}}
)_b \otimes 
A''_{(S \cap [k]) \setminus \{b\}},
$ 
$
(
(\cH \otimes \C^2) \otimes 
\ket{\vecl_{T_i \cap S} \vecl''_{T_i \cap \bar{S}}}
)_b \otimes 
A''_{(S \cap [k]) \setminus \{b\}},
$
$b \in S \cap [k] \cap T_i$,
where the first multiplicands in the two tensor products are embedded
into $A''_b$.

This completes the proof of the first part of 
Claim~\arabic{propqtypicalellinfty} of
Proposition~\ref{prop:cqtypical}.

Again, let 
$S \subseteq [c] \cupdot [k]$, $\{\} \neq S \cap [k] \neq [k]$.
If $[c] \subset S$, then 
$
N''_{\vecl_{S \cup [c]}, \delta}(\ketbra{h}) = 0.
$
Suppose $S \cap [c] \neq [c]$.
Suppose one were to show that
\[
\ellinfty{
\sum_{\vecl'_{[c] \setminus S}}
(
N''_{\vecl_{S} \vecl'_{[c] \setminus S}, \delta}(\ketbra{h})
)^{A''_{S \cap [k]}}
} \leq 3 |\cL|^{|[c] \setminus S| - 1/2}
\]
for all unit length vectors 
$\ket{h} \in (\cH \otimes \C^2)^{\otimes (S \cap [k])}$. 
From Equations~\ref{eq:Nprimeprime}, \ref{eq:Mprimedefinition} and the
triangle inequality, in
either case, it will follow that
\[
\ellinfty{
(N')_{S, \vecx_{S \cap [c]}, \vecl_S, \delta}^{A''_{S \cap [k]}} 
} \leq 
\frac{N(S \cup [c], \delta) N (\overline{S} \cup [c], \delta)}
     {N([c] \cupdot [k], \delta)}
\cdot
\frac{3}{\sqrt{|\cL|}}.
\]
Now it is
easy to see using Equations~\ref{eq:Mprimebreakup}, \ref{eq:Ndefinition} 
that this implies that
\[
\beta_{(S_1, \ldots, S_l), \delta}
\ellinfty{(N')_{(S_1, \ldots, S_l), \vecx, \vecl, \delta}^{A''_{[k]}}} 
\leq
\frac{3}{\sqrt{|\cL|}}.
\]

It only remains to show for $S \cap [c] \neq [c]$ that
\[
\ellinfty{
\sum_{\vecl'_{[c] \setminus S}}
(
N''_{\vecl_{S} \vecl'_{[c] \setminus S}, \delta}(\ketbra{h})
)^{A''_{S \cap [k]}}
} \leq 3 |\cL|^{|[c] \setminus S| - 1/2}
\]
for any unit length vector 
$\ket{h} \in (\cH \otimes \C^2)^{\otimes (S \cap [k])}$. By 
Equation~\ref{eq:leak}, it suffices to show that
\[
\elltwo{
\sum_{\vecl'_{[c] \setminus S}}
\cT_{\models_\rightsquigarrow S, \vecl_S \vecl'_{[c] \setminus S}, \delta}
(\ket{h})
} \leq |\cL|^{|[c] \setminus S| - 1/2},
\]
\[
\ellinfty{
\sum_{\vecl'_{[c] \setminus S}}
\cT_{\models_\rightsquigarrow S, \vecl_S \vecl'_{[c] \setminus S}, \delta}
(\ketbra{h})
} \leq |\cL|^{|[c] \setminus S| - 1}.
\]
Since the range spaces of the summands in the definition of
$\cT_{\models_\rightsquigarrow S, \vecl_{S \cup [c]} , \delta}$
are orthogonal, it suffices to show, for any
$(T_1, \ldots, T_l) \vdash \vdash S \cup [c]$, 
$(T_1, \ldots, T_l) \models_\rightsquigarrow S$,
$\vecl_S$, 
$\vecl'_{[c] \setminus (S \cup T_1 \cup \cdots \cup T_l)}$
that 
\begin{eqnarray*}
\lefteqn{
\left\|
\sum_{\vecl''_{([c] \setminus S) \cap (T_1 \cup \cdots \cup T_l)}}
(
\cT_{T_1, \vecl_{T_1}} 
\otimes \cdots \otimes
\cT_{T_l, \vecl_{T_l}} 
\right.
} \\
&    &
~~~
\left.
{} \otimes
\one^{(\cH \otimes \C^2)^{
\otimes ((S \cap [k]) \setminus (T_1 \cup \cdots \cup T_l))
}
}
)
(\ket{h})
\right\|_2 \\
& = &
\sqrt{|\cL|^{|([c] \setminus S) \cap (T_1 \cup \cdots \cup T_l)|}}
\end{eqnarray*}
and
\begin{eqnarray*}
\lefteqn{
\left\|
\sum_{\vecl''_{([c] \setminus S) \cap (T_1 \cup \cdots \cup T_l)}}
(
\cT_{T_1, \vecl_{T_1}} 
\otimes \cdots \otimes
\cT_{T_l, \vecl_{T_l}} 
\right. 
} \\
&   &
\left.
~~~
{} \otimes
\one^{(\cH \otimes \C^2)^{
\otimes ((S \cap [k]) \setminus (T_1 \cup \cdots \cup T_l))
}
}
)
(\ketbra{h})
\right\|_\infty \\
&  = &
1.
\end{eqnarray*}
The last two equalities arise from the fact that for any two 
distinct computational basis vectors
\[
\vecl''_{([c] \setminus S) \cap (T_1 \cup \cdots \cup T_l)} \neq
\vecl'''_{([c] \setminus S) \cap (T_1 \cup \cdots \cup T_l)} 
\]
the range spaces of
\[
\cT_{T_1, \vecl_{T_1}} 
\otimes \cdots \otimes
\cT_{T_l, \vecl_{T_l}} \otimes
\one^{(\cH \otimes \C^2)^{
\otimes ((S \cap [k]) \setminus (T_1 \cup \cdots \cup T_l))
}
}
\]
are orthogonal.
This follows from the observation that
there exists an $i \in [l]$, a coordinate 
$a \in [c] \cap \bar{S} \cap T_i$ such
that $\vecl''_a \neq \vecl'''_a$
which implies that the two range spaces embed into the orthogonal spaces
$
(
(\cH \otimes \C^2) \otimes 
\ket{
\vecl_{T_i \cap S} 
\vecl''_{T_i \cap \bar{S} \cap [c]}
}
)_b \otimes 
A''_{(S \cap [k]) \setminus \{b\}},
$
$
(
(\cH \otimes \C^2) \otimes 
\ket{
\vecl_{T_i \cap S} 
\vecl'''_{T_i \cap \bar{S} \cap [c]}
}
)_b \otimes 
A''_{(S \cap [k]) \setminus \{b\}},
$
$b \in S \cap [k] \cap T_i$,
where the first multiplicands in the two tensor products are embedded
into $A''_b$.

This completes the proof of the second part of 
Claim~\arabic{propqtypicalellinfty} of
Proposition~\ref{prop:cqtypical}.

\subsection{Proving Claim~\arabic{propqtypicaldistance}}
\label{subsec:Claim3}
Let $\ket{h}$ be a unit length vector in $(\cH \otimes \C^2)^{\otimes k}$.
From Equation~\ref{eq:isometricembedding} and 
Inequality~\ref{eq:increasedlength1}, we get
\[
\braket{\cT_{[c] \cupdot [k], \vecl, \delta}(\ket{h})}{h} = 
\frac{1}{\sqrt{N([c] \cupdot [k], \delta)}} >
e^{-\delta^2 2^{c+k-1}}.
\]
Thus,
\begin{eqnarray*}
\lefteqn{
\ellone{\cT_{[c] \cupdot [k], \vecl, \delta}(\ketbra{h}) - \ketbra{h}} 
} \\
& \leq &
2 \elltwo{\cT_{[c] \cupdot [k], \vecl, \delta}(\ket{h}) - \ket{h}} \\
& \leq &
2 \sqrt{2 - 2 e^{-\delta^2 2^{c+k-1}}}
\; < \;
2^{\frac{c+k}{2} + 1} \delta,
\end{eqnarray*}
where we used the fact that 
$e^{-x} \geq 1 - x$ in the last inequality above.
Recalling Equation~\ref{eq:rhoprimedefinition} and
applying the above inequality to the eigenvectors of 
$\rho^{A_{[k]}} \otimes (\ketbra{0}^{\C^2})^{\otimes k}$ allows us
to prove the desired Claim~\arabic{propqtypicaldistance} of
Proposition~\ref{prop:cqtypical}.

\subsection{Proving Claim~\arabic{propqtypicalcompleteness}}
\label{subsec:Claim4}
Let $\ket{h}$ be an eigenvector of 
$\rho^{A_{[k]}} \otimes (\ketbra{0}^{\C^2})^{\otimes k}$.
For a pseudosubpartition 
$(T_1, \ldots, T_m) \vdash \vdash [c] \cupdot [k]$, define 
\[
\epsilon_{\vecx, (T_1, \ldots, T_m)}(\ket{h}) :=
\elltwo{
\Pi^{(\cH \otimes \C^2)^{\otimes [k]}}_{Y_{\vecx, (T_1, \ldots, T_m)}} \ket{h}
}^2 
\]
where the subspace 
$Y_{\vecx, (T_1, \ldots, T_m)} \leq (\cH \otimes \C^2)^{\otimes [k]}$ 
is defined just after Equation~\ref{eq:optimalPi} above.
Recall that the subspace $Y'_{\vecx, \vecl,\delta}$ defined in
Equation~\ref{eq:Ydefinition} is the $A$-tilted span of 
$
\{Y_{\vecx, (T_1, \ldots, T_m)}:
(T_1, \ldots, T_m) \vdash \vdash [c] \cupdot [k], m > 0
\}
$
where the tilting matrix $A$ is defined in 
Equation~\ref{eq:Adefinition} above.
Recall that $A$ is upper triangular, substochastic and 
diagonal dominated, and
the diagonal entries of $A$ satisfy
\begin{eqnarray*}
\lefteqn{
A_{(T_1, \ldots, T_m), (T_1, \ldots, T_m)}
} \\
& = &
\frac{\delta^{2m}}{N(T_1, \delta) \cdots N(T_m, \delta)} \\
& \geq &
\frac{\delta^{2k}}{N([c] \cupdot [k], \delta)} 
\;\geq\;
e^{-\delta^2 2^{c+k}} \delta^{2k},
\end{eqnarray*}
for all $(T_1, \ldots, T_m) \vdash \vdash [c] \cupdot [k]$.
By Proposition~\ref{prop:generalisedtiltedspan},
\begin{eqnarray*}
\lefteqn{
\elltwo{
(\Pi')_{Y'_{\vecx, \vecl,\delta}}^{A''_{[k]}} \ket{h}
}^2
} \\
& \leq &
e^{\delta^2 2^{c+k}} \delta^{-2k} 2^{2^{ck+1} (k+1)^k + 1}  
\sum_{
(T_1, \ldots, T_m) \vdash \vdash [c] \cupdot [k], m > 0
}
\epsilon_{\vecx, (T_1, \ldots, T_m)}(\ket{h}).
\end{eqnarray*}
Using Equation~\ref{eq:optimalPi} and
applying this inequality to the eigenvectors of
$\rho^{A_{[k]}} \otimes (\ketbra{0}^{\C^2})^{\otimes k}$,
we get
\begin{eqnarray*}
\lefteqn{
\Tr [
(\Pi')_{Y'_{\vecx,\vecl,\delta}}^{A''_{[k]}} 
(
\rho^{A_{[k]}} \otimes (\ketbra{0}^{\C^2})^{\otimes k}
)
]
} \\
& \leq &
e^{\delta^2 2^{c+k}} \delta^{-2k} 2^{2^{ck+1} (k+1)^k + 1}  
\sum_{
(T_1, \ldots, T_m) \vdash \vdash [c] \cupdot [k], m > 0
}
\epsilon_{\vecx, (T_1, \ldots, T_m)}  \\
& <  &
\delta^{-2k} 2^{2^{ck+3} (k+1)^k}
\epsilon_{\vecx}.
\end{eqnarray*}
Finally, using Fact~\ref{fact:noncommutativeunionbound} and
Equation~\ref{eq:Piprimedefinition}, we get
\begin{eqnarray*}
\lefteqn{
\Tr [
(\Pi')_{\vecx, \vecl,\delta}^{A''_{[k]}}
(\rho^{A_{[k]}} \otimes (\ketbra{0}^{\C^2})^{\otimes k})
]
} \\
&   =  &
\Tr [
(\Pi')^{A''_{[k]}}_{(\cH \otimes \C^2)^{\otimes [k]}}
(
\one^{A''_{[k]}} -
(\Pi')^{A''_{[k]}}_{Y'_{\vecx, \vecl, \delta}}
) \\
&      &
~~~~~~~~~~
(\rho^{A_{[k]}} \otimes (\ketbra{0}^{\C^2})^{\otimes k})
(
\one^{A''_{[k]}} -
(\Pi')^{A''_{[k]}}_{Y'_{\vecx,\vecl, \delta}}
) \\
&      &
~~~~~~~~~~~~~~~~
(\Pi')^{A''_{[k]}}_{(\cH \otimes \C^2)^{\otimes [k]}}
] \\ 
& \geq &
1 - 
4 (
\Tr [
(\Pi')^{A''_{[k]}}_{Y'_{\vecx,\vecl, \delta}}
(\rho^{A_{[k]}} \otimes (\ketbra{0}^{\C^2})^{\otimes k})
] \\
&    &
{} +
1 -
\Tr [
(\Pi')^{A''_{[k]}}_{(\cH \otimes \C^2)^{\otimes [k]}}
(\rho^{A_{[k]}} \otimes (\ketbra{0}^{\C^2})^{\otimes k})
]
) \\
&  >   &
1 - 
4 (
\delta^{-2k} 2^{2^{ck+3} (k+1)^k} 
\epsilon_{\vecx}
) 
\; \geq \;
1 - 
\delta^{-2k} 2^{2^{ck+4} (k+1)^k} 
\epsilon_{\vecx}.
\end{eqnarray*}

\subsection{Proving Claim~\arabic{propqtypicalsplitting}}
\label{subsec:Claim5}
This claim and the accompanying claims of orthogonality and 
vanishing of $\beta_{(S_1, \ldots, S_l),\delta}$ are proved at
the end of \ref{subsec:MN} above.

\subsection{Proving Claim~\arabic{propqtypicalsoundness}}
\label{subsec:Claim6}
Using Equations~\ref{eq:Piprimedefinition}, \ref{eq:Ydefinition},
\ref{eq:Yprimedefinition}, \ref{eq:optimalPi}, we get
\begin{eqnarray*}
\lefteqn{
\Tr [
(\Pi')_{\vecx,\vecl,\delta}^{A''_{[k]}}
(
\cT_{(S_1, \ldots S_l), \vecl, \delta}
(
\rho_{\vecx,(S_1, \ldots, S_l)}^{A_{[k]}}
\otimes (\ketbra{0}^{\C^2})^{\otimes k}
)
)^{A''_{[k]}}
] 
} \\
&   =  &
\Tr [
(\Pi')^{A''_{[k]}}_{(\cH \otimes \C^2)^{\otimes k}}
(
\one^{A''_{[k]}} -
(\Pi')^{A''_{[k]}}_{Y'_{\vecx,\vecl, \delta}}
) \\
&     &
~~~~~~~~~~
(
\cT_{(S_1, \ldots S_l), \vecl, \delta}
(
\rho_{\vecx,(S_1, \ldots, S_l)}^{A_{[k]}}
\otimes (\ketbra{0}^{\C^2})^{\otimes k}
)
)^{A''_{[k]}} \\
&       &
~~~~~~~~~~~~~~~~~~~~~~~
(
\one^{A''_{[k]}} -
(\Pi')^{A''_{[k]}}_{Y'_{\vecx,\vecl, \delta}}
)
(\Pi')^{A''_{[k]}}_{(\cH \otimes \C^2)^{\otimes k}}
] \\
& \leq &
\Tr [
(
\one^{A''_{[k]}} -
(\Pi')^{A''_{[k]}}_{Y'_{\vecx,\vecl, \delta}}
) \\
&      &
~~~~~~~~~~
(
\cT_{(S_1, \ldots S_l), \vecl, \delta}
(
\rho_{\vecx,(S_1, \ldots, S_l)}^{A_{[k]}}
\otimes (\ketbra{0}^{\C^2})^{\otimes k}
)
)^{A''_{[k]}} \\
&      &
~~~~~~~~~~~~~~
(
\one^{A''_{[k]}} -
(\Pi')^{A''_{[k]}}_{Y'_{\vecx,\vecl, \delta}}
)
] \\
& \leq &
\Tr [
(
\one^{A''_{[k]}} -
\Pi^{A''_{[k]}}_{
\cT_{(S_1,\ldots,S_l),\vecl,\delta}(Y_{\vecx,(S_1, \ldots, \cup S_l)}) 
}
) \\
&       &
~~~~~~~~~~
(
\cT_{(S_1, \ldots S_l), \vecl, \delta}
(
\rho_{\vecx,(S_1, \ldots, S_l)}^{A_{[k]}}
\otimes (\ketbra{0}^{\C^2})^{\otimes k}
)
)^{A''_{[k]}} \\
&      &
~~~~~~~~~~~~~~
(
\one^{A''_{[k]}} -
\Pi^{A''_{[k]}}_{
\cT_{(S_1,\ldots,S_l),\vecl,\delta}(Y_{\vecx,(S_1, \ldots, \cup S_l)}) 
}
)
] \\
&   =  &
\Tr [
(
\one^{A''_{[k]}} -
\Pi^{A''_{[k]}}_{
\cT_{(S_1,\ldots,S_l),\vecl,\delta}(Y_{\vecx,(S_1, \ldots, \cup S_l)}) 
}
)
\Pi^{A''_{[k]}}_{
\cT_{(S_1,\ldots,S_l),\vecl,\delta}((\cH \otimes \C^2)^{\otimes k})
} \\
&      &
~~~~~~
(
\cT_{(S_1, \ldots S_l), \vecl, \delta}
(
\rho_{\vecx,(S_1, \ldots, S_l)}^{A_{[k]}}
\otimes (\ketbra{0}^{\C^2})^{\otimes k}
)
)^{A''_{[k]}} \\
&      &
~~~~~~
\Pi^{A''_{[k]}}_{
\cT_{(S_1,\ldots,S_l),\vecl,\delta}((\cH \otimes \C^2)^{\otimes k})
}
(
\one^{A''_{[k]}} -
\Pi^{A''_{[k]}}_{
\cT_{(S_1,\ldots,S_l),\vecl,\delta}(Y_{\vecx,(S_1, \ldots, \cup S_l)}) 
}
)
] \\
&   =  &
\Tr [
(
\Pi^{A''_{[k]}}_{
\cT_{(S_1,\ldots,S_l),\vecl,\delta}((\cH \otimes \C^2)^{\otimes k})
} - 
\Pi^{A''_{[k]}}_{
\cT_{(S_1,\ldots,S_l),\vecl,\delta}(Y_{\vecx,(S_1, \ldots, \cup S_l)}) 
}
) \\
&      &
~~~~~~~~~~
(
\cT_{(S_1, \ldots S_l), \vecl, \delta}
(
\rho_{\vecx,(S_1, \ldots, S_l)}^{A_{[k]}}
\otimes (\ketbra{0}^{\C^2})^{\otimes k}
)
) \\
&      &
~~~~~~~~~~
(
\Pi^{A''_{[k]}}_{
\cT_{(S_1,\ldots,S_l),\vecl,\delta}((\cH \otimes \C^2)^{\otimes k})
} - 
\Pi^{A''_{[k]}}_{
\cT_{(S_1,\ldots,S_l),\vecl,\delta}(Y_{\vecx,(S_1, \ldots, S_l)}) 
}
)
] \\
&   =  &
\Tr [
\cT_{(S_1, \ldots S_l), \vecl, \delta}
(
(
\one^{(\cH \otimes \C^2)^{\otimes k}} -
\Pi^{(\cH \otimes \C^2)^{\otimes k}}_{
Y_{\vecx,(S_1, \ldots, \cup S_l)} 
}
) \\
&      &
~~~~~~~~~~~~~~~~~~~~~~~~~~~~~~
(
\rho_{\vecx,(S_1, \ldots, S_l)}^{A_{[k]}}
\otimes (\ketbra{0}^{\C^2})^{\otimes k}
) \\
&       &
~~~~~~~~~~~~~~~~~~~~~~~~~~~~~~~~~~
(
\one^{(\cH \otimes \C^2)^{\otimes k}} -
\Pi^{(\cH \otimes \C^2)^{\otimes k}}_{
Y_{\vecx,(S_1, \ldots, \cup S_l)} 
}
) 
)
] \\
&   =  &
\Tr [
(
\one^{(\cH \otimes \C^2)^{\otimes k}} -
\Pi^{(\cH \otimes \C^2)^{\otimes k}}_{
Y_{\vecx,(S_1, \ldots, \cup S_l)} 
}
) \\
&     &
~~~~~~~~~~~
(
\rho_{\vecx,(S_1, \ldots, S_l)}^{A_{[k]}}
\otimes (\ketbra{0}^{\C^2})^{\otimes k}
) \\
&     &
~~~~~~~~~~~~~~~
(
\one^{(\cH \otimes \C^2)^{\otimes k}} -
\Pi^{(\cH \otimes \C^2)^{\otimes k}}_{
Y_{\vecx,(S_1, \ldots, \cup S_l)} 
}
) 
] \\
&   =  &
\Tr [
\Pi^{(\cH \otimes \C^2)^{\otimes k}}_{\vecx, (S_1,\ldots,S_l)}
(
\rho_{\vecx,(S_1, \ldots, S_l)}^{A_{[k]}}
\otimes (\ketbra{0}^{\C^2})^{\otimes k}
) 
\Pi^{(\cH \otimes \C^2)^{\otimes k}}_{\vecx,(S_1,\ldots,S_l)}
] \\
& \leq &
2^{-D_H^{\epsilon_{\vecx,(S_1, \ldots, S_l)}}
    (
     \rho_\vecx^{A_{[k]}} \| 
     \rho_{\vecx, (S_1, \ldots, S_l)}^{A_{[k]}}
    )
}.
\end{eqnarray*}
In the second inequality above, we used the fact that
\[
\cT_{(S_1,\ldots,S_l),\vecl,\delta}
(Y_{\vecx, (S_1, \ldots, \cup S_l)}) \leq
Y'_{\vecx, \vecl,\delta}.
\]
In the second equality above, we used the property that
the support of
$
\cT_{S_1, \ldots S_l), \vecl, \delta}
(
\rho_{\vecx, (S_1, \ldots, S_l)}^{A_{[k]}}
\otimes (\ketbra{0}^{\C^2})^{\otimes k}
)
$
lies in the vector space
$
\cT_{(S_1,\ldots,S_l),\vecl,\delta}((\cH \otimes \C^2)^{\otimes k}).
$
We used the property that the map
$
\cT_{(S_1, \ldots S_l), \vecl, \delta}
$
is an isometric embedding of $(\cH \otimes \C^2)^{\otimes k})$ in
the fourth and fifth equalities. Finally, we used the definition
of 
$
Y_{\vecx,(S_1, \ldots, \cup S_l)}
$
given just below Equation~\ref{eq:optimalPi}
for obtaining the last equality.

This completes the proof of Claim~\arabic{propqtypicalsoundness}
of Proposition~\ref{prop:cqtypical} and thus finishes the proof of
Proposition~\ref{prop:cqtypical}.

\section{Proof of Equation~\ref{eq:cqMAC}}
\label{sec:proofcqMAC}
The probability of incorrectly decoding the sent message pair 
$(m_1,m_2)$, for a given codebook
$\cC$, is upper bounded by
\begin{eqnarray*}
\lefteqn{
p_e(\cC; m_1, m_2) 
} \\
& \leq &
\sum_z p(z | x(m_1), y(m_2)) \\
&      &
~~~~
\sum_{
(\hat{m}_1, \hat{m}_2):
(\hat{m}_1, \hat{m}_2) \prec (m_1, m_2)
} 
f(x(\hat{m}_1), y(\hat{m}_2), z) \\
&      &
{} +
\sum_z p(z | x(m_1), y(m_2))
(1 - f(x(m_1), y(m_2), z)) \\
& \leq &
\sum_z p(z | x(m_1), y(m_2)) \\
&      &
~~~~
\sum_{
(\hat{m}_1, \hat{m}_2):
(\hat{m}_1, \hat{m}_2) \neq (m_1, m_2)
} 
f(x(\hat{m}_1), y(\hat{m}_2), z) \\
&      &
{} +
\sum_z p(z | x(m_1), y(m_2))
(1 - f(x(m_1), y(m_2), z)) \\
&   =  &
\sum_z p(z | x(m_1), y(m_2))
\sum_{
(\hat{m}_1, \hat{m}_2):
\hat{m}_1 \neq m_1, \hat{m}_2 \neq m_2
} 
f(x(\hat{m}_1), y(\hat{m}_2), z) \\
&      &
{} +
\sum_z p(z | x(m_1), y(m_2))
\sum_{
\hat{m}_2:
\hat{m}_2 \neq  m_2
} 
f(x(m_1), y(\hat{m}_2), z) \\
&      &
{} +
\sum_z p(z | x(m_1), y(m_2))
\sum_{
\hat{m}_1:
\hat{m}_1 \neq  m_1
} 
f(x(\hat{m}_1), y(m_2), z) \\
&      &
{} +
\sum_z p(z | x(m_1), y(m_2))
(1 - f(x(m_1), y(m_2), z)).
\end{eqnarray*}
The expectation, over the choice of the random codebook $\cC$, of 
the decoding error is then upper bounded by  
\begin{eqnarray*}
\lefteqn{
\E_{\cC}[
p_e(\cC; m_1, m_2) 
]
} \\
& \leq &
(2^{R_1} - 1) (2^{R_2} - 1) \\
&      &
\sum_{x,y,z} p(x) p(y) p(z | x, y)
\sum_{x', y'} p(x') p(y') f(x', y', z) \\
&      &
{} +
(2^{R_2} - 1)
\sum_{x,y,z} p(x) p(y) p(z | x, y)
\sum_{y'} p(y') f(x, y', z) \\
&      &
{} +
(2^{R_1} - 1)
\sum_{x,y,z} p(x) p(y) p(z | x, y)
\sum_{x'} p(x') f(x', y, z) \\
&      &
{} +
\sum_{x,y,z} p(x) p(y) p(z | x, y)
(1 - f(x, y, z)) \\
& \overset{a}{\leq} &
(2^{R_1} - 1) (2^{R_2} - 1) \\
&      &
\sum_{x,y,z} p(x) p(y) p(z | x, y)
\sum_{x', y'} p(x') p(y') f^{X,Y}(x', y', z) \\
&      &
{} +
(2^{R_2} - 1)
\sum_{x,y,z} p(x) p(y) p(z | x, y)
\sum_{y'} p(y') f^Y(x, y', z) \\
&      &
{} +
(2^{R_1} - 1)
\sum_{x,y,z} p(x) p(y) p(z | x, y)
\sum_{x'} p(x') f^X(x', y, z) \\
&      &
{} +
\sum_{x,y,z} p(x) p(y) p(z | x, y)
(
(1 - f^X(x, y, z)) \\
&       &
~~~~~~~~~~~~~~~~~~
{} +
(1 - f^Y(x, y, z)) +
(1 - f^{X,Y}(x, y, z))
) \\
&   =  &
(2^{R_1} - 1) (2^{R_2} - 1)
\sum_{x', y', z} p(x') p(y') p(z) f^{X,Y}(x', y', z) \\
&      &
{} +
(2^{R_2} - 1)
\sum_{x, y', z} p(y') p(x) p(z | x) f^Y(x, y', z) \\
&      &
{} +
(2^{R_1} - 1)
\sum_{x', y, z} p(x') p(y) p(z | y) f^X(x', y, z) \\
&      &
{} +
\sum_{x,y,z} p(x) p(y) p(z | x, y)
(
(1 - f^X(x, y, z)) \\
&      &
~~~~~~~~~~~~~~~~~~
{} +
(1 - f^Y(x, y, z)) +
(1 - f^{X,Y}(x, y, z))
) \\
& \overset{b}{\leq} &
2^{R_1 + R_2} 2^{-I^\epsilon_H(X Y : Z)} +
2^{R_2} 2^{-I^\epsilon_H(Y : X Z)} +
2^{R_1} 2^{-I^\epsilon_H(X : Y Z)} +
3 \epsilon.
\end{eqnarray*}

\balance 

Above, we use the property that
\begin{eqnarray*}
f(x,y,z) 
& \leq &
\{f^X(x,y,z), f^Y(x,y,z), f^{X,Y}(x,y,z)\}, \\
1 - f(x,y,z) 
& \leq &
(1 - f^X(x,y,z)) +
(1 - f^Y(x,y,z)) \\
&     &
{} +
(1 - f^{X,Y}(x,y,z))
\end{eqnarray*}
for all triples $(x,y,z)$ in Step~(a), and Equation~\ref{eq:cmacDeps}
in Step~(b).

\end{document}